\newcommand{\kvnnew}[1]{\textcolor{black}{#1}}
\newcommand{\kvn}[1]{#1}
\newcommand{\ari}[1]{#1}
\newcommand{\deb}[1]{#1}
\newcommand{\mt}[1]{#1}
\newcommand{\kvnr}[1]{}
\newcommand{\akr}[1]{}
\newcommand{\djr}[1]{}
\newcommand{\mtr}[1]{}
\newcommand{\kjr}[1]{}
\newcommand{\abs}[1]{\left|#1\right|}               
\newcommand{\ceil}[1]{\left\lceil#1\right\rceil}    
\newcommand{\eps}{\varepsilon}
\renewcommand{\epsilon}{\varepsilon} 
\renewcommand{\tilde}{\widetilde}
\newcommand{\AM}{\mathcal{A}}
\newcommand{\CC}{\mathcal{C}}
\newcommand{\SM}{\mathcal{S}}
\newcommand{\tdk}{\textsc{3DK}}
\newcommand{\tdkr}{\textsc{3DKR}}
\newcommand{\boxes}{containers\xspace}
\newcommand{\Container}{Container~}
\newcommand{\container}{container\xspace}
\newcommand{\NFDH}{\texttt{NFDH}\xspace}
\newcommand{\tNFDH}{\texttt{3D-NFDH}\xspace}
\newtheorem{theorem}{Theorem}[section]
\newtheorem{lemma}[theorem]{Lemma}
\newtheorem{claim}[theorem]{Claim}
\newtheorem{remark}[theorem]{Remark}
\newtheorem{corollary}[theorem]{Corollary}
\theoremstyle{definition}
\newtheorem{definition}{Definition}[section]
\newcommand{\hmax}{\mu}
\newcommand{\alg}{\mathtt{opt}_{\mathrm{gs}}}
\newcommand{\opt}{\mathtt{opt}}
\newcommand{\optgs}{\mathtt{opt}_{\mathrm{gs}}}
\newcommand{\lca}{LC\textsuperscript{a}}
\newcommand{\lcb}{LC\textsuperscript{b}}
\newcommand{\HH}{\mathrm{H}}
\newcommand{\LContainer}{$\mathsf{L}$-Container}
\newcommand{\LContainers}{$\mathsf{L}$-Containers}
\newcommand{\optonel}{\mathtt{opt}_{1\ell}}
\newcommand{\optones}{\mathtt{opt}_{1s}}
\newcommand{\opttwol}{\mathtt{opt}_{2\ell}}
\newcommand{\opttwos}{\mathtt{opt}_{2s}}
\newcommand{\optthreel}{\mathtt{opt}_{3\ell}}
\newcommand{\optthrees}{\mathtt{opt}_{3s}}
\newcommand{\honel}{h_{1\ell}}
\newcommand{\vones}{v_{1s}}
\newcommand{\vtwos}{v_{2s}}
\newcommand{\vthrees}{v_{3s}}
\newcommand{\opttwot}{\mathtt{opt}_{2t}}
\newcommand{\opttwoh}{\mathtt{opt}_{2h}}
\newcommand{\optthreet}{\mathtt{opt}_{3t}}
\newcommand{\optthreeh}{\mathtt{opt}_{3h}}
\newcommand{\optlt}{\mathtt{opt}_{Lt}}
\newcommand{\optlh}{\mathtt{opt}_{Lh}}
\newcommand{\OPT}{\mathrm{OPT}}
\newcommand{\OPTonel}{\text{OPT}_{1\ell}}
\newcommand{\OPTL}{\text{OPT}_L}
\title{Improved Approximation Algorithms for\\ Three-Dimensional Knapsack}
\author{Klaus Jansen\footnote{Kiel University, Kiel, Germany, \texttt{kj@informatik.uni-kiel.de}} $\qquad$ Debajyoti Kar\footnote{Indian Institute of Science, Bengaluru, India, \texttt{debajyotikar@iisc.ac.in}} $\qquad$ Arindam Khan\footnote{Indian Institute of Science, Bengaluru, India, \texttt{arindamkhan@iisc.ac.in}}\\[1em]$\qquad$K. V. N. Sreenivas\footnote{Indian Institute of Science, Bengaluru, India, \texttt{venkatanaga@iisc.ac.in}}$\qquad$Malte Tutas\footnote{Kiel University, Kiel, Germany, \texttt{mtu@informatik.uni-kiel.de}}}
\date{\empty}
\begin{document}
\maketitle
\begin{abstract}
We study the three-dimensional Knapsack (3DK) problem, in which we are given a set of axis-aligned cuboids with associated profits and an axis-aligned cube knapsack. The objective is to find a non-overlapping axis-aligned packing (by translation) of the maximum profit subset of cuboids into the cube. 
The previous best approximation algorithm is due to Diedrich,  Harren, Jansen, Th{\"{o}}le, and Thomas (2008), who gave a $(7+\varepsilon)$-approximation algorithm for 3DK and a $(5+\varepsilon)$-approximation algorithm for the variant when the items can be rotated by 90 degrees around any axis, for any constant $\varepsilon>0$. Chleb{\'{\i}}k and Chleb{\'{\i}}kov{\'{a}} (2009) showed that the problem does not admit an asymptotic polynomial-time approximation scheme. 

We provide an improved polynomial-time $(139/29+\varepsilon) \approx 4.794$-approximation algorithm for 3DK and $(30/7+\varepsilon) \approx 4.286$-approximation when rotations by 90 degrees are allowed. We also provide improved approximation algorithms for several variants such as the cardinality case (when all items have the same profit) and uniform profit-density case (when the profit of an item is equal to its volume).
Our key technical contribution is {\em container packing} -- a structured packing in 3D such that all items are assigned into a constant number of containers, and each container is packed using a specific strategy based on its type.
We first show the existence of highly profitable container packings. Thereafter, we show that one can find near-optimal container packing efficiently using a variant of the Generalized Assignment Problem (GAP). 
\end{abstract}

\usetikzlibrary{patterns}
\usetikzlibrary{decorations,arrows}
\usetikzlibrary{decorations.pathmorphing}
\usepgflibrary{decorations.pathreplacing} 
\usetikzlibrary{decorations.text}

\definecolor{orangeborder}{rgb}{0.33,0.33,0.33}
\definecolor{greenItem}{rgb}{0.4,0.4,0.4}
\definecolor{yellowItem}{rgb}{0.2,0.2,0.2}
\definecolor{redItem}{rgb}{0.86,0.86,0.86}
\definecolor{formerLB}{rgb}{0.9,0.9,0.9}

\definecolor{tbBgOdd}{rgb}{0.82,0.82,0.82}

\newcommand{\drawGreenItem}[5][$ $]{\draw[color=greenItem,  thick, fill =greenItem, opacity=0.75] (#2,#3) rectangle node[midway, opacity = 1]{#1} (#4,#5)}
\newcommand{\drawGreenItemDashed}[5][$ $]{\draw[color=greenItem,very thick,dashed, fill =greenItem, opacity=0.5] (#2,#3) rectangle node[midway, opacity = 1]{#1} (#4,#5)}
\newcommand{\drawYellowItem}[5][$ $]{\draw[color=yellowItem,  thick,fill =yellowItem, opacity=0.7] (#2,#3) rectangle node [midway, opacity = 1]{#1} (#4,#5)}
\newcommand{\drawRedItem}[5][$ $]{\draw[color=redItem,  thick, fill=redItem, opacity=0.7] (#2,#3) rectangle node[midway, opacity = 1]{#1} (#4,#5)}
\newcommand{\drawBlueFilledItem}[5][$ $]{\draw[color=orangeborder,  thick,fill=tbBgOdd, fill opacity =0.7] (#2,#3) rectangle node[midway, opacity = 1]{#1} (#4,#5)}
\newcommand{\drawLightBlueFilledItem}[5][$ $]{\draw[color=orangeborder,  thick,fill=tbBgOdd, fill opacity =0.4] (#2,#3) rectangle node[midway, opacity = 1]{#1} (#4,#5)}
\newcommand{\drawGreenFilledArea}[5][$ $]{\draw[color=black!45!white,  thick,fill=greenItem!70!black, fill opacity =0.3] (#2,#3) rectangle node[midway, opacity = 1]{#1} (#4,#5)}

\newcommand{\drawBlackCuboid}[7][$ $]{\draw[color=black] (#2,#3,#4) -- (#2+#5,#3,#4);
\draw[color=black] (#2,#3,#4)-- (#2,#3+#6,#4);
\draw[color=black] (#2,#3,#4) -- (#2,#3,#4+#7);
\draw[color=black] (#2+#5,#3,#4) -- (#2+#5,#3+#6,#4);
\draw[color=black] (#2+#5,#3,#4) -- (#2+#5,#3,#4+#7);
\draw[color=black] (#2,#3+#6,#4) -- (#2,#3+#6,#4+#7);
\draw[color=black] (#2,#3+#6,#4) -- (#2+#5,#3+#6,#4);
\draw[color=black] (#2+#5,#3+#6,#4) -- (#2+#5,#3+#6,#4+#7);
\draw[color=black] (#2,#3+#6,#4+#7) -- (#2+#5,#3+#6,#4+#7);
\draw[color=black] (#2+#5,#3,#4+#7) -- (#2+#5,#3+#6,#4+#7);
\draw[color=black] (#2,#3,#4+#7) -- (#2+#5,#3,#4+#7);
\draw[color=black] (#2,#3,#4+#7) -- (#2,#3+#6,#4+#7);
\node at (#2+0.5*#5, #3+0.5*#6, #4+0.5*#7) {#1};
}
\newcommand{\drawRedCuboid}[7][$ $]{\draw[color=redItem!80!black] (#2,#3,#4) -- (#2+#5,#3,#4);
	\draw[color=redItem!80!black] (#2,#3,#4)-- (#2,#3+#6,#4);
	\draw[color=redItem!80!black] (#2,#3,#4) -- (#2,#3,#4+#7);
	\draw[color=redItem!80!black] (#2+#5,#3,#4) -- (#2+#5,#3+#6,#4);
	\draw[color=redItem!80!black] (#2+#5,#3,#4) -- (#2+#5,#3,#4+#7);
	\draw[color=redItem!80!black] (#2,#3+#6,#4) -- (#2,#3+#6,#4+#7);
	\draw[color=redItem!80!black] (#2,#3+#6,#4) -- (#2+#5,#3+#6,#4);
	\draw[color=redItem!80!black] (#2+#5,#3+#6,#4) -- (#2+#5,#3+#6,#4+#7);
	\draw[color=redItem!80!black] (#2,#3+#6,#4+#7) -- (#2+#5,#3+#6,#4+#7);
	\draw[color=redItem!80!black] (#2+#5,#3,#4+#7) -- (#2+#5,#3+#6,#4+#7);
	\draw[color=redItem!80!black] (#2,#3,#4+#7) -- (#2+#5,#3,#4+#7);
	\draw[color=redItem!80!black] (#2,#3,#4+#7) -- (#2,#3+#6,#4+#7);
	\node at (#2+0.5*#5, #3+0.5*#6, #4+0.5*#7) {#1};
}
\newcommand{\drawLightBlueCuboid}[7][$ $]{\draw[color=formerLB] (#2,#3,#4) -- (#2+#5,#3,#4);
	\draw[color=formerLB] (#2,#3,#4)-- (#2,#3+#6,#4);
	\draw[color=formerLB] (#2,#3,#4) -- (#2,#3,#4+#7);
	\draw[color=formerLB] (#2+#5,#3,#4) -- (#2+#5,#3+#6,#4);
	\draw[color=formerLB] (#2+#5,#3,#4) -- (#2+#5,#3,#4+#7);
	\draw[color=formerLB] (#2,#3+#6,#4) -- (#2,#3+#6,#4+#7);
	\draw[color=formerLB] (#2,#3+#6,#4) -- (#2+#5,#3+#6,#4);
	\draw[color=formerLB] (#2+#5,#3+#6,#4) -- (#2+#5,#3+#6,#4+#7);
	\draw[color=formerLB] (#2,#3+#6,#4+#7) -- (#2+#5,#3+#6,#4+#7);
	\draw[color=formerLB] (#2+#5,#3,#4+#7) -- (#2+#5,#3+#6,#4+#7);
	\draw[color=formerLB] (#2,#3,#4+#7) -- (#2+#5,#3,#4+#7);
	\draw[color=formerLB] (#2,#3,#4+#7) -- (#2,#3+#6,#4+#7);
	\node at (#2+0.5*#5, #3+0.5*#6, #4+0.5*#7) {#1};
}
\newcommand{\drawGreenCuboid}[7][$ $]{\draw[color=greenItem!75!black] (#2,#3,#4) -- (#2+#5,#3,#4);
	\draw[color=greenItem!75!black] (#2,#3,#4)-- (#2,#3+#6,#4);
	\draw[color=greenItem!75!black] (#2,#3,#4) -- (#2,#3,#4+#7);
	\draw[color=greenItem!75!black] (#2+#5,#3,#4) -- (#2+#5,#3+#6,#4);
	\draw[color=greenItem!75!black] (#2+#5,#3,#4) -- (#2+#5,#3,#4+#7);
	\draw[color=greenItem!75!black] (#2,#3+#6,#4) -- (#2,#3+#6,#4+#7);
	\draw[color=greenItem!75!black] (#2,#3+#6,#4) -- (#2+#5,#3+#6,#4);
	\draw[color=greenItem!75!black] (#2+#5,#3+#6,#4) -- (#2+#5,#3+#6,#4+#7);
	\draw[color=greenItem!75!black] (#2,#3+#6,#4+#7) -- (#2+#5,#3+#6,#4+#7);
	\draw[color=greenItem!75!black] (#2+#5,#3,#4+#7) -- (#2+#5,#3+#6,#4+#7);
	\draw[color=greenItem!75!black] (#2,#3,#4+#7) -- (#2+#5,#3,#4+#7);
	\draw[color=greenItem!75!black] (#2,#3,#4+#7) -- (#2,#3+#6,#4+#7);
	\node at (#2+0.5*#5, #3+0.5*#6, #4+0.5*#7) {#1};
}
\newcommand{\drawRedCuboidTC}[7][$ $]{\draw[color=redItem!80!black] (#2,#3,#4) -- (#2+#5,#3,#4);
	\draw[pattern=north east lines, pattern color=redItem!80!black,opacity = 0.3,]  (#2,#3+#6,#4) -- (#2+#5,#3+#6,#4) -- (#2+#5,#3+#6,#4+#7) -- (#2,#3+#6,#4+#7) -- cycle;
	\draw[color=redItem!80!black] (#2,#3,#4)-- (#2,#3+#6,#4);
	\draw[color=redItem!80!black] (#2,#3,#4) -- (#2,#3,#4+#7);
	\draw[color=redItem!80!black] (#2+#5,#3,#4) -- (#2+#5,#3+#6,#4);
	\draw[color=redItem!80!black] (#2+#5,#3,#4) -- (#2+#5,#3,#4+#7);
	\draw[color=redItem!80!black] (#2,#3+#6,#4) -- (#2,#3+#6,#4+#7);
	\draw[color=redItem!80!black] (#2,#3+#6,#4) -- (#2+#5,#3+#6,#4);
	\draw[color=redItem!80!black] (#2+#5,#3+#6,#4) -- (#2+#5,#3+#6,#4+#7);
	\draw[color=redItem!80!black] (#2,#3+#6,#4+#7) -- (#2+#5,#3+#6,#4+#7);
	\draw[color=redItem!80!black] (#2+#5,#3,#4+#7) -- (#2+#5,#3+#6,#4+#7);
	\draw[color=redItem!80!black] (#2,#3,#4+#7) -- (#2+#5,#3,#4+#7);
	\draw[color=redItem!80!black] (#2,#3,#4+#7) -- (#2,#3+#6,#4+#7);
	\node at (#2+0.5*#5, #3+0.5*#6, #4+0.5*#7) {#1};
}
\newcommand{\drawGreenCuboidTC}[7][$ $]{\draw[color=greenItem!75!black] (#2,#3,#4) -- (#2+#5,#3,#4);
	\draw[pattern=north east lines, pattern color=greenItem!75!black,opacity = 0.3,]  (#2,#3+#6,#4) -- (#2+#5,#3+#6,#4) -- (#2+#5,#3+#6,#4+#7) -- (#2,#3+#6,#4+#7) -- cycle;
	\draw[color=greenItem!75!black] (#2,#3,#4)-- (#2,#3+#6,#4);
	\draw[color=greenItem!75!black] (#2,#3,#4) -- (#2,#3,#4+#7);
	\draw[color=greenItem!75!black] (#2+#5,#3,#4) -- (#2+#5,#3+#6,#4);
	\draw[color=greenItem!75!black] (#2+#5,#3,#4) -- (#2+#5,#3,#4+#7);
	\draw[color=greenItem!75!black] (#2,#3+#6,#4) -- (#2,#3+#6,#4+#7);
	\draw[color=greenItem!75!black] (#2,#3+#6,#4) -- (#2+#5,#3+#6,#4);
	\draw[color=greenItem!75!black] (#2+#5,#3+#6,#4) -- (#2+#5,#3+#6,#4+#7);
	\draw[color=greenItem!75!black] (#2,#3+#6,#4+#7) -- (#2+#5,#3+#6,#4+#7);
	\draw[color=greenItem!75!black] (#2+#5,#3,#4+#7) -- (#2+#5,#3+#6,#4+#7);
	\draw[color=greenItem!75!black] (#2,#3,#4+#7) -- (#2+#5,#3,#4+#7);
	\draw[color=greenItem!75!black] (#2,#3,#4+#7) -- (#2,#3+#6,#4+#7);
	\node at (#2+0.5*#5, #3+0.5*#6, #4+0.5*#7) {#1};
}
\newcommand{\drawLightBlueCuboidTC}[7][$ $]{\draw[color=formerLB] (#2,#3,#4) -- (#2+#5,#3,#4);
	\draw[pattern=north east lines, pattern color=formerLB,opacity = 0.3]  (#2,#3+#6,#4) -- (#2+#5,#3+#6,#4) -- (#2+#5,#3+#6,#4+#7) -- (#2,#3+#6,#4+#7) -- cycle;
	\draw[color=formerLB] (#2,#3,#4)-- (#2,#3+#6,#4);
	\draw[color=formerLB] (#2,#3,#4) -- (#2,#3,#4+#7);
	\draw[color=formerLB] (#2+#5,#3,#4) -- (#2+#5,#3+#6,#4);
	\draw[color=formerLB] (#2+#5,#3,#4) -- (#2+#5,#3,#4+#7);
	\draw[color=formerLB] (#2,#3+#6,#4) -- (#2,#3+#6,#4+#7);
	\draw[color=formerLB] (#2,#3+#6,#4) -- (#2+#5,#3+#6,#4);
	\draw[color=formerLB] (#2+#5,#3+#6,#4) -- (#2+#5,#3+#6,#4+#7);
	\draw[color=formerLB] (#2,#3+#6,#4+#7) -- (#2+#5,#3+#6,#4+#7);
	\draw[color=formerLB] (#2+#5,#3,#4+#7) -- (#2+#5,#3+#6,#4+#7);
	\draw[color=formerLB] (#2,#3,#4+#7) -- (#2+#5,#3,#4+#7);
	\draw[color=formerLB] (#2,#3,#4+#7) -- (#2,#3+#6,#4+#7);
	\node at (#2+0.5*#5, #3+0.5*#6, #4+0.5*#7) {#1};
}
\newcommand{\drawLightBlueCuboidFC}[7][$ $]{\draw[color=formerLB] (#2,#3,#4) -- (#2+#5,#3,#4);
	\draw[pattern=north east lines, pattern color=formerLB,opacity = 0.3]  (#2,#3+#6,#4+#7) -- (#2+#5,#3+#6,#4+#7) -- (#2+#5,#3,#4+#7) -- (#2,#3,#4+#7) -- cycle;
	\draw[color=formerLB] (#2,#3,#4)-- (#2,#3+#6,#4);
	\draw[color=formerLB] (#2,#3,#4) -- (#2,#3,#4+#7);
	\draw[color=formerLB] (#2+#5,#3,#4) -- (#2+#5,#3+#6,#4);
	\draw[color=formerLB] (#2+#5,#3,#4) -- (#2+#5,#3,#4+#7);
	\draw[color=formerLB] (#2,#3+#6,#4) -- (#2,#3+#6,#4+#7);
	\draw[color=formerLB] (#2,#3+#6,#4) -- (#2+#5,#3+#6,#4);
	\draw[color=formerLB] (#2+#5,#3+#6,#4) -- (#2+#5,#3+#6,#4+#7);
	\draw[color=formerLB] (#2,#3+#6,#4+#7) -- (#2+#5,#3+#6,#4+#7);
	\draw[color=formerLB] (#2+#5,#3,#4+#7) -- (#2+#5,#3+#6,#4+#7);
	\draw[color=formerLB] (#2,#3,#4+#7) -- (#2+#5,#3,#4+#7);
	\draw[color=formerLB] (#2,#3,#4+#7) -- (#2,#3+#6,#4+#7);
	\node at (#2+0.5*#5, #3+0.5*#6, #4+0.5*#7) {#1};
}
\newcommand{\drawGreenCuboidFC}[7][$ $]{\draw[color=greenItem!75!black] (#2,#3,#4) -- (#2+#5,#3,#4);
	\draw[pattern=north east lines, pattern color=greenItem!75!black,opacity = 0.3]  (#2,#3+#6,#4+#7) -- (#2+#5,#3+#6,#4+#7) -- (#2+#5,#3,#4+#7) -- (#2,#3,#4+#7) -- cycle;
	\draw[color=greenItem!75!black] (#2,#3,#4)-- (#2,#3+#6,#4);
	\draw[color=greenItem!75!black] (#2,#3,#4) -- (#2,#3,#4+#7);
	\draw[color=greenItem!75!black] (#2+#5,#3,#4) -- (#2+#5,#3+#6,#4);
	\draw[color=greenItem!75!black] (#2+#5,#3,#4) -- (#2+#5,#3,#4+#7);
	\draw[color=greenItem!75!black] (#2,#3+#6,#4) -- (#2,#3+#6,#4+#7);
	\draw[color=greenItem!75!black] (#2,#3+#6,#4) -- (#2+#5,#3+#6,#4);
	\draw[color=greenItem!75!black] (#2+#5,#3+#6,#4) -- (#2+#5,#3+#6,#4+#7);
	\draw[color=greenItem!75!black] (#2,#3+#6,#4+#7) -- (#2+#5,#3+#6,#4+#7);
	\draw[color=greenItem!75!black] (#2+#5,#3,#4+#7) -- (#2+#5,#3+#6,#4+#7);
	\draw[color=greenItem!75!black] (#2,#3,#4+#7) -- (#2+#5,#3,#4+#7);
	\draw[color=greenItem!75!black] (#2,#3,#4+#7) -- (#2,#3+#6,#4+#7);
	\node at (#2+0.5*#5, #3+0.5*#6, #4+0.5*#7) {#1};
}
\newcommand{\drawRedCuboidFC}[7][$ $]{\draw[color=redItem!80!black] (#2,#3,#4) -- (#2+#5,#3,#4);
	\draw[pattern=north east lines, pattern color=redItem!80!black,opacity = 0.3]  (#2,#3+#6,#4+#7) -- (#2+#5,#3+#6,#4+#7) -- (#2+#5,#3,#4+#7) -- (#2,#3,#4+#7) -- cycle;
	\draw[color=redItem!80!black] (#2,#3,#4)-- (#2,#3+#6,#4);
	\draw[color=redItem!80!black] (#2,#3,#4) -- (#2,#3,#4+#7);
	\draw[color=redItem!80!black] (#2+#5,#3,#4) -- (#2+#5,#3+#6,#4);
	\draw[color=redItem!80!black] (#2+#5,#3,#4) -- (#2+#5,#3,#4+#7);
	\draw[color=redItem!80!black] (#2,#3+#6,#4) -- (#2,#3+#6,#4+#7);
	\draw[color=redItem!80!black] (#2,#3+#6,#4) -- (#2+#5,#3+#6,#4);
	\draw[color=redItem!80!black] (#2+#5,#3+#6,#4) -- (#2+#5,#3+#6,#4+#7);
	\draw[color=redItem!80!black] (#2,#3+#6,#4+#7) -- (#2+#5,#3+#6,#4+#7);
	\draw[color=redItem!80!black] (#2+#5,#3,#4+#7) -- (#2+#5,#3+#6,#4+#7);
	\draw[color=redItem!80!black] (#2,#3,#4+#7) -- (#2+#5,#3,#4+#7);
	\draw[color=redItem!80!black] (#2,#3,#4+#7) -- (#2,#3+#6,#4+#7);
	\node at (#2+0.5*#5, #3+0.5*#6, #4+0.5*#7) {#1};
}
\newcommand{\drawBlackCuboidDashed}[7][$ $]{\draw[color=black,dashed] (#2,#3,#4) -- (#2+#5,#3,#4);
	\draw[color=black,dashed] (#2,#3,#4)-- (#2,#3+#6,#4);
	\draw[color=black,dashed] (#2,#3,#4) -- (#2,#3,#4+#7);
	\draw[color=black] (#2+#5,#3,#4) -- (#2+#5,#3+#6,#4);
	\draw[color=black] (#2+#5,#3,#4) -- (#2+#5,#3,#4+#7);
	\draw[color=black] (#2,#3+#6,#4) -- (#2,#3+#6,#4+#7);
	\draw[color=black] (#2,#3+#6,#4) -- (#2+#5,#3+#6,#4);
	\draw[color=black] (#2+#5,#3+#6,#4) -- (#2+#5,#3+#6,#4+#7);
	\draw[color=black] (#2,#3+#6,#4+#7) -- (#2+#5,#3+#6,#4+#7);
	\draw[color=black] (#2+#5,#3,#4+#7) -- (#2+#5,#3+#6,#4+#7);
	\draw[color=black] (#2,#3,#4+#7) -- (#2+#5,#3,#4+#7);
	\draw[color=black] (#2,#3,#4+#7) -- (#2,#3+#6,#4+#7);
	\node at (#2+0.5*#5, #3+0.5*#6, #4+0.5*#7) {#1};
}
\newcommand{\drawRedCuboidDashed}[7][$ $]{\draw[color=redItem!60!black,dashed] (#2,#3,#4) -- (#2+#5,#3,#4);
	\draw[color=redItem!60!black,dashed] (#2,#3,#4)-- (#2,#3+#6,#4);
	\draw[color=redItem!60!black,dashed] (#2,#3,#4) -- (#2,#3,#4+#7);
	\draw[color=redItem!60!black] (#2+#5,#3,#4) -- (#2+#5,#3+#6,#4);
	\draw[color=redItem!60!black] (#2+#5,#3,#4) -- (#2+#5,#3,#4+#7);
	\draw[color=redItem!60!black] (#2,#3+#6,#4) -- (#2,#3+#6,#4+#7);
	\draw[color=redItem!60!black] (#2,#3+#6,#4) -- (#2+#5,#3+#6,#4);
	\draw[color=redItem!60!black] (#2+#5,#3+#6,#4) -- (#2+#5,#3+#6,#4+#7);
	\draw[color=redItem!60!black] (#2,#3+#6,#4+#7) -- (#2+#5,#3+#6,#4+#7);
	\draw[color=redItem!60!black] (#2+#5,#3,#4+#7) -- (#2+#5,#3+#6,#4+#7);
	\draw[color=redItem!60!black] (#2,#3,#4+#7) -- (#2+#5,#3,#4+#7);
	\draw[color=redItem!60!black] (#2,#3,#4+#7) -- (#2,#3+#6,#4+#7);
	\node at (#2+0.5*#5, #3+0.5*#6, #4+0.5*#7) {#1};
}
\newcommand{\drawLightBlueCuboidDashed}[7][$ $]{\draw[color=formerLB,dashed] (#2,#3,#4) -- (#2+#5,#3,#4);
	\draw[color=formerLB,dashed] (#2,#3,#4)-- (#2,#3+#6,#4);
	\draw[color=formerLB,dashed] (#2,#3,#4) -- (#2,#3,#4+#7);
	\draw[color=formerLB] (#2+#5,#3,#4) -- (#2+#5,#3+#6,#4);
	\draw[color=formerLB] (#2+#5,#3,#4) -- (#2+#5,#3,#4+#7);
	\draw[color=formerLB] (#2,#3+#6,#4) -- (#2,#3+#6,#4+#7);
	\draw[color=formerLB] (#2,#3+#6,#4) -- (#2+#5,#3+#6,#4);
	\draw[color=formerLB] (#2+#5,#3+#6,#4) -- (#2+#5,#3+#6,#4+#7);
	\draw[color=formerLB] (#2,#3+#6,#4+#7) -- (#2+#5,#3+#6,#4+#7);
	\draw[color=formerLB] (#2+#5,#3,#4+#7) -- (#2+#5,#3+#6,#4+#7);
	\draw[color=formerLB] (#2,#3,#4+#7) -- (#2+#5,#3,#4+#7);
	\draw[color=formerLB] (#2,#3,#4+#7) -- (#2,#3+#6,#4+#7);
	\node at (#2+0.5*#5, #3+0.5*#6, #4+0.5*#7) {#1};
}
\newcommand{\drawGreenCuboidDashed}[7][$ $]{\draw[color=greenItem!75!black,dashed] (#2,#3,#4) -- (#2+#5,#3,#4);
	\draw[color=greenItem!75!black,dashed] (#2,#3,#4)-- (#2,#3+#6,#4);
	\draw[color=greenItem!75!black,dashed] (#2,#3,#4) -- (#2,#3,#4+#7);
	\draw[color=greenItem!75!black] (#2+#5,#3,#4) -- (#2+#5,#3+#6,#4);
	\draw[color=greenItem!75!black] (#2+#5,#3,#4) -- (#2+#5,#3,#4+#7);
	\draw[color=greenItem!75!black] (#2,#3+#6,#4) -- (#2,#3+#6,#4+#7);
	\draw[color=greenItem!75!black] (#2,#3+#6,#4) -- (#2+#5,#3+#6,#4);
	\draw[color=greenItem!75!black] (#2+#5,#3+#6,#4) -- (#2+#5,#3+#6,#4+#7);
	\draw[color=greenItem!75!black] (#2,#3+#6,#4+#7) -- (#2+#5,#3+#6,#4+#7);
	\draw[color=greenItem!75!black] (#2+#5,#3,#4+#7) -- (#2+#5,#3+#6,#4+#7);
	\draw[color=greenItem!75!black] (#2,#3,#4+#7) -- (#2+#5,#3,#4+#7);
	\draw[color=greenItem!75!black] (#2,#3,#4+#7) -- (#2,#3+#6,#4+#7);
	\node at (#2+0.5*#5, #3+0.5*#6, #4+0.5*#7) {#1};
}

\section{Introduction}
Three-dimensional Knapsack (\tdk) is a fundamental problem in computational geometry, operations research, and approximation algorithms. 
In 3DK, we are given a set of items which are axis-aligned cuboids (defined by their width, depth, and height) with associated profits, and a three-dimensional (3D) knapsack which is an axis-aligned unit cube. The goal is to find a nonoverlapping axis-aligned packing (by translation) of a subset of items into the knapsack such that the profit of the packed items is maximized. 

In 1965, Gilmore and Gomory \cite{gilmore1965multistage} introduced the 3D Knapsack problem -- motivated by cutting stock problems (e.g., cutting up of graphite blocks for anodes). They provided heuristics for the special case when the items need to be cut along axis-aligned planes.
Since then, the problem and its several variants (under practical constraints) have been well-studied in operations research -- typically from the viewpoints of meta-heuristics, MILP, tree search, machine learning methods, and greedy algorithms \cite{cacchiani2022knapsack, silva2019exact}.
In the last decade, \tdk~has become a central problem in logistics to increase operational efficiency by efficiently utilizing the space in various domains, such as airline cargo management, warehouse management, and robotic container loading \cite{ali2022line}.
The prominence of this problem in the industry is evidenced by the comprehensive survey on 3D packing by Ali, Ramos, Carravilla, and Oliveira \cite{ali2022line}, who have mentioned hundreds of papers in recent years.  
There have been multiple programming challenges and competitions focusing on variants of \tdk, e.g., OPTIL.io \cite{optil}, ICRA palletization competition \cite{vmac}, etc.   
Several companies like Boxify \cite{boxify}, Symbotic \cite{symbotic}, etc. are specializing in 3D packing for their operations.  

Packing problems are fundamental and popular among mathematicians (e.g., see Kepler's sphere packing conjecture \cite{lagarias2011kepler} and {\em Soma cube} puzzle \cite{goodman2019soma}). 
Surprisingly, \tdk~is relatively less studied in computational geometry and approximation algorithms. 
Only in 2008, Diedrich, Harren, Jansen, Th{\"{o}}le, and Thomas \cite{3d-knapsack} provided the first provable approximation guarantees for the problem. They first gave a simple polynomial-time $(9+\eps)$-approximation algorithms for any constant $\eps>0$. Thereafter, they gave a more sophisticated $(7+\eps)$-approximation algorithm for \tdk. For the case when the items can be rotated by 90 degrees around any axis, they provided a $(5+\eps)$-approximation algorithm. 
Afterward, Chleb{\'{\i}}k and Chleb{\'{\i}}kov{\'{a}} \cite{ChlebikC09} showed that the problem admits no asymptotic polynomial-time approximation schemes (APTAS) even for the cardinality case (when each item has the same profit).\footnote{See Appendix \ref{sec:approx} for definitions related to approximation algorithms and approximation schemes.} Lu, Chen, and Cha \cite{lu2015packing1} showed that even packing cubes into a cube is strongly NP-hard.

There has been progress in some special cases.
When all items are $d$-dimensional hypercubes, Harren \cite{harren-journal} gave a $(1+2^{-d}+\eps)$-approximation algorithm. 
Recently, Jansen, Khan, Lira, and Sreenivas \cite{Jansencubepacking22} gave a PTAS for this special case. 
Afterwards, Buchem, Dueker, and Wiese \cite{BuchemDW24}  obtained an EPTAS for the problem. 
Furthermore, they gave dynamic algorithms with polylogarithmic query and update times, matching the same approximation guarantees.

For the 2D variant when all items are rectangles (2DK), Jansen and Zhang \cite{jansen-zhang} gave a $(2+\eps)$-approximation algorithm. 
G{\'{a}}lvez, Grandoni, Ingala, Heydrich, Khan, and Wiese \cite{2dks-Lpacking} 
gave an improved 1.89-approximation algorithm. Later, G{\'{a}}lvez, Grandoni, Khan, Ram{\'{\i}}rez{-}Romero, and Wiese \cite{Galvez00RW21} gave a pseudopolynomial-time 4/3-approximation algorithm. 
For the variant of {\em guillotine packing} (where items need to be packed such that they can be separated by a series of end-to-end cuts), Khan, Maiti, Sharma, and Wiese \cite{KMSW21} gave a pseudopolynomial-time approximation scheme (PPTAS) for the problem. 
 

Compared to the classical (1D) Knapsack, multidimensional Knapsack is much harder as there are different types of items and they can interact in a complicated way. 
For example, 2D items can be big (large in both dimensions), small (tiny in both dimensions), vertical (large in height, small in width), or horizontal (large in width, small in height).
Most approximation algorithms for 2DK show the existence of a highly profitable structured packing where the knapsack is partitioned into a constant number of boxes such that each box contains only items of one type and can be packed easily by simple greedy algorithms. 
Large boxes contain a single large item, vertical (resp. horizontal) boxes contain only vertical (resp. horizontal) items (thus one can essentially ignore the long dimension of the item and the box becomes a 1D knapsack), and small boxes contain small items (one can consider the area of an item as a proxy and use 1D knapsack solution). 
However, this is difficult for 3D packing.  
As pointed out by Bansal, Correa, Kenyon, and Sviridenko \cite{bansal2006bin}, one of the obstacles to generalize methods from rectangle packing to cuboid packing is that ``{\em the interaction of say 3-dimensional rectangles which are long in one direction but short in the other two seems much more complicated than in the two-dimensional case.}'' 

In 2008, the authors in \cite{3d-knapsack} mentioned that ``{\em it is of interest whether here (for \tdk) an algorithm with ratio $(6+\eps)$ or less exists.}''
However, despite significant progress in the 2D case, there has been no improved approximation algorithm for \tdk~in almost two decades. 
\subsection{Our results}
In this paper, we introduce {\em container packing} for 3D Knapsack. We show that we can divide the 3D knapsack into $O(1)$ number of regions  (called {\em containers}) such that the dimensions of these containers come from a polynomial-sized set. 
For each container $C$ we define its capacity  $\mathtt{cap}(C) \in \mathbb{R}_{\ge 0}$ and for each item $i$ we define its size in $C$ to be $f_C(i) \in \mathbb{R}_{\ge 0}$. 
We also associate an algorithm $\AM_C$ for the container.
We require $\AM_C, f_C$  such that for any $C$ and any set of items $T$ satisfying the packing constraint $\sum_{i\in T} f_C(i) \le \mathtt{cap}(C)$, almost all items in $T$ (barring a small profit subset) can be packed into $C$ by algorithm $\AM_C$. 
The assignment of items into containers thus becomes a variant of the Multiple Knapsack problem (however, the knapsacks can have different capacities and items can have sizes specific to the knapsacks) --- also a special case of Generalized Assignment Problem (GAP). 
For $O(1)$ knapsacks, there exists a PTAS for GAP \cite{2dks-Lpacking}.
After the assignment of items to the container $C$, the corresponding algorithm $\AM_C$ ensures that the items admit a feasible profitable packing.

Container packing is a powerful and general technique. For example, by choosing appropriate functions $f_C$, our containers can handle complicated packings, beyond simple greedy algorithms. 
For example, we use six different types of containers in this work (see Figures  \ref{fig:large-stack-containers} and \ref{fig:l-container-2}).
Apart from the Stack containers and Volume containers (similar to 2D packing), we introduce Area containers (packed using a variant of Next-Fit-Decreasing-Height (\NFDH) algorithm \cite{coffman1980performance}), Steinberg containers (where different types of items are packed together using our new volume-based packing algorithm \texttt{3D-Vol-Pack}
-- it  packs items in layers where each layer is packed using either a greedy algorithm or a 2D packing algorithm called Steinberg's algorithm \cite{steinberg}), and \LContainer{}s (where two different types of stacks are packed together when rotations are allowed). 
Note that Steinberg container or \LContainer~goes beyond simple stack or shelf-based algorithms and equips us to handle more complicated packings of different types of items. 
\ari{Finally, we show that given an optimal packing, we can modify it to some \emph{container packing} having significant profit.} The functions $f_C$ can be considered analogous to the {\em weighting functions} in bin packing \cite{coffman1980performance}, and we expect \kvn{that} more complicated packing algorithms can be incorporated into this framework by suitably defining $f_C$.
We also believe that our techniques might find usage in other multidimensional packing and scheduling problems.

Using this, we obtain improved approximation guarantees for 3DK (see Table \ref{tab:summary}).
In particular, we obtain $(139/29+\eps)$-approximation for \tdk~and $(30/7+\eps)$-approximation for \tdkr~(when rotations by 90 degrees are allowed around all axes). 
For the cardinality case (when all items have the same profit), we obtain $17/4 +\eps$ and $24/7+\eps$, respectively, for the cases without and with rotations. Finally, in the case when the profit of an item is equal to its volume (also called uniform profit-density), we obtain approximation guarantees of $4+\eps$ and $3+\eps$, without and with rotations, respectively. We also show that one cannot obtain a polynomial-time $(3-\eps)$-approximation \ari{for \tdk~even in the cardinality case} through container packing, for any constant $\eps>0$,  using the containers mentioned in this paper (see \cref{sec:hardnesscontainer}).

\begin{table}
\centering{}%
\bgroup
\def\arraystretch{3}
\begin{tabular}{|c|c|c|c|}
\hline 
  & \textbf{General case} & \textbf{Cardinality case}  & \textbf{Profit = Volume} \tabularnewline
\hline 
Without rotations & $\displaystyle\frac{139}{29}+\epsilon < 4.794$ & $\displaystyle\frac{17}{4}+\epsilon = 4.25 + \epsilon$ & $4+\epsilon$  
\tabularnewline\hline
With rotations & $\displaystyle\frac{30}{7}+\epsilon < 4.286$  & $\displaystyle\frac{24}{7}+\epsilon < 3.429$  & $3+\epsilon$ 
\tabularnewline
\hline
\end{tabular}
\egroup
\caption{Summary of our results}
\label{tab:summary}
\end{table}



\subsection{Related work}
Two other related problems are 3D Bin Packing (BP) and Strip Packing (SP). 
In 3D BP, given a set of cuboids and unit cubes as bins, the goal is to find a nonoverlapping axis-aligned packing of all items into the minimum number of bins.
In 3D SP,  given a set of  cuboids, the goal is to pack them into a single 3D rectangular box of unlimited height such that the height of the packing is minimized.
For 3D BP and 3D SP, the best-known (asymptotic) approximation ratios are $T_\infty^{2}\approx 2.86$ \cite{caprara2008packing} and  $3/2+\eps$ \cite{3d-strip-packing}, resp.
There has also been a series of work in 2D BP \cite{BansalCJPS09, bansal2014improved, jansen-pradel-bin-packing, Khan16c} and 
2D SP \cite{harren20145, KenyonR00}.
Another related problem is the maximum independent set of rrectangles, where given a set of possibly overlapping hyperrectangles the goal is to find the maximum number of disjoint hyperrectangles
\cite{AHW19, misr-arxiv, Mitchell21, GalvezKMMPW22, GargK024}.
In 2D, many variants of the Knapsack problem have been studied where the knapsack is a unit square and objects are polygons \cite{MerinoW20} or circles \cite{AcharyaBG0MW24, ChagasNew}. 
For more details on related problems, we refer the readers to surveys \cite{alt2016computational, CKPT17} on approximation and online algorithms for multidimensional packing.

\section{Notation and Preliminaries}
\label{sec:prelim}
We are given a set of $n$ items $I=\{1, 2, \dots, n\}$.
Each item $i \in I$ is a cuboid with width $w_i \in (0,1]$, depth $d_i \in (0,1]$, height $h_i \in (0,1]$, and  an associated profit $p_i \in \mathbb{Q}$.
The volume of an item $i$ is $v(i)=w_i \times d_i \times h_i$,
and its base area is defined as $w_id_i$.
We are also given a knapsack $K:=[0,1]\times [0,1] \times [0,1]$.
If an item $i$ is placed (by translation) in an axis-aligned way at $(i_x, i_y, i_z)$ then it occupies the region: $[i_x, i_x+w_i]\times [i_y, i_y+d_i] \times [i_z, i_z+h_i]$.
For a feasible packing we need $i_x \in [0, 1-w_i], i_y \in [0, 1-d_i], i_z \in [0, 1-h_i]$.
In a packing, two items are nonoverlapping if their interiors are disjoint. 
Our goal is to find a nonoverlapping packing of a subset of these items that maximizes the total profit.
In \tdk, we allow only translation, whereas \tdkr~allows translation and rotation by 90 degrees around all axes.
For any set of items $T$, we shall let $p(T), h(T)$ and $v(T)$ denote the total profit, sum of heights, and volume of $T$, respectively. Let OPT denote an optimal packing and $\opt := p(\text{OPT})$.
\subsection{\kvnnew{Packing} subroutines}
\label{sec:subroutines}
We now describe a few procedures that we repeatedly use throughout the paper to pack items into \boxes. For a cuboidal box $B$, we let $w_B, d_B, h_B$ denote the width, depth, and height of $B$, respectively. The volume of the box is then $v(B) = w_B \times d_B \times h_B$.


\noindent \textbf{Next-Fit-Decreasing-Height (\NFDH) Algorithm \cite{coffman1980performance}.}
\NFDH~is a shelf-based packing algorithm for packing rectangles. \tNFDH~is its generalization in three dimensions.
See \Cref{subsec:nfdh} for details on these algorithms and proof of the following lemmas.

\begin{lemma}[\cite{2dks-Lpacking}]
\label{lem:2d-hfdh}
    Given a rectangular box of length $\ell$ (horizontal dimension) and breadth $b$ (vertical dimension), 
    and a set $S$ \kvn{of $n$} (2D) \kvn{rectangles}, where each $i\in S$ has length $\ell_i \le \epsilon\ell$ 
    and breadth $b_i \le \epsilon b$. If $\sum_{i\in S} \ell_i b_i \le (1-2\epsilon)\ell b$, then the whole set $S$ can be packed into the box using \NFDH in $O(n \log n)$ time.
\end{lemma}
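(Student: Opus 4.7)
The plan is to give the standard area-vs.-height analysis of \NFDH. First I would recall the algorithm: sort the rectangles of $S$ in non-increasing order of breadth, and pack them into horizontal shelves stacked bottom-to-top. Each shelf is opened by the next unpacked rectangle (which is then the tallest on that shelf, since the list is sorted), and subsequent rectangles are placed left-to-right on the current shelf until the next one does not fit horizontally; then a new shelf is opened whose height equals the breadth of its first rectangle. I will show that, under the volume hypothesis, the total stacked height is at most $b$, hence every rectangle is packed.

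Next I would set up the key inequality. Let the shelves be $T_1, T_2, \ldots, T_k$ with heights $H_1 \ge H_2 \ge \cdots \ge H_k$ (monotone because the input was sorted by breadth). Fix $j \ge 2$ and let $r$ be the first rectangle of $T_j$. Since $r$ did not fit on $T_{j-1}$, the total width of rectangles on $T_{j-1}$ exceeds $\ell - \ell_r \ge (1-\epsilon)\ell$, using $\ell_r \le \epsilon\ell$. Every rectangle on $T_{j-1}$ has breadth at least $H_j$ (it was placed earlier in the sorted order than $r$). Therefore the area $A_{j-1}$ packed on shelf $T_{j-1}$ satisfies
\[
A_{j-1} \;>\; H_j \cdot (1-\epsilon)\ell.
\]
Summing this over $j = 2, \ldots, k$, and using $H_1 \le \epsilon b$ (because $b_i \le \epsilon b$ for every $i \in S$), the total height used is
\[
\sum_{j=1}^{k} H_j \;\le\; \epsilon b \;+\; \frac{1}{(1-\epsilon)\ell}\sum_{j=1}^{k-1} A_j \;\le\; \epsilon b \;+\; \frac{\sum_{i \in S} \ell_i b_i}{(1-\epsilon)\ell}.
\]

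Finally I would plug in the volume hypothesis $\sum_{i\in S} \ell_i b_i \le (1-2\epsilon)\ell b$ and verify
\[
\epsilon b + \frac{(1-2\epsilon)b}{1-\epsilon} \;\le\; b
\qquad \Longleftrightarrow \qquad
\epsilon(1-\epsilon) + (1-2\epsilon) \;\le\; 1-\epsilon,
\]
which simplifies to $-\epsilon^2 \le 0$ and hence always holds. Consequently $\sum_j H_j \le b$, so every shelf (and hence every rectangle) fits inside the box. The running time is dominated by the initial sort in $O(n \log n)$; placement of items within shelves is $O(n)$.

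I do not anticipate a real obstacle — the argument is the textbook \NFDH~area bound, and the hypothesis $(1-2\epsilon)$ is precisely calibrated to make both the vertical slack (the $\epsilon b$ from the first shelf) and the horizontal slack (the $(1-\epsilon)$ factor on each shelf width) absorbable. The only mildly delicate step is the inequality $H_j \le A_{j-1}/((1-\epsilon)\ell)$, which requires remembering that sorting by breadth is what guarantees that every rectangle on $T_{j-1}$ has breadth at least $H_j$.
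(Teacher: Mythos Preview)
Your proof is correct and is the standard \NFDH\ area bound. The paper does not actually prove this lemma --- it is quoted verbatim from \cite{2dks-Lpacking} and used as a black box in the proof of \Cref{lem:3d-hfdh} --- so there is no ``paper's own proof'' to compare against; your argument is exactly the classical one that underlies the cited result.
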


\begin{restatable}{lemma}{threednfdh}
\label{lem:3d-hfdh}
    Given a cuboidal box $B$, and a set $T$ \kvn{of $n$} items where each $i\in T$ satisfies $w_i\le \epsilon w_B$, $d_i \le \epsilon d_B$ and $h_i \le \epsilon h_B$. If $v(T) \le (1-3\epsilon)v(B)$, then the whole set $T$ can be packed into $B$ using \tNFDH in $O(n \log^2 n)$ time.
\end{restatable}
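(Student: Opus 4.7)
My plan is to describe the natural slab-based 3D-NFDH and then carry out a volume/area accounting argument that reduces to the 2D statement already cited.

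First, I would describe the algorithm precisely: sort the items of $T$ in decreasing order of height; process them in this order, maintaining a ``current slab'' which is a horizontal slice of $B$ of base $w_B \times d_B$. When starting slab $j$, its height $H_j$ is set to the height of the first (hence tallest remaining) item placed into it, call this item $i_j$. Inside the current slab, items are packed using the 2D-NFDH procedure of \Cref{lem:2d-hfdh} on the $w_B \times d_B$ base; when 2D-NFDH fails to place the next item, we close the current slab, stack a new slab on top of it, and continue. The algorithm succeeds iff $\sum_j H_j \le h_B$.

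Next, the main accounting. Let $k$ be the number of slabs produced, with heights $H_1 \ge H_2 \ge \cdots \ge H_k$ (this ordering follows because items are considered in decreasing height, so $i_{j+1}$, which triggered the new slab, has height at most $H_j$). For $j \in \{1,\dots,k-1\}$, slab $j$ was closed because 2D-NFDH failed on item $i_{j+1}$. By the contrapositive of \Cref{lem:2d-hfdh} applied to the $w_B \times d_B$ base (and noting that every item satisfies $w_i \le \epsilon w_B$, $d_i \le \epsilon d_B$), the total base area $A_j$ of items placed in slab $j$ must satisfy
\[
A_j + w_{i_{j+1}} d_{i_{j+1}} \;>\; (1-2\epsilon)\, w_B d_B,
\]
and since $w_{i_{j+1}} d_{i_{j+1}} \le \epsilon^2\, w_B d_B$ we obtain $A_j \ge (1-2\epsilon-\epsilon^2)\, w_B d_B$. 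Now the key step: every item in slab $j$ has height at least $H_{j+1}$ (because it was considered before $i_{j+1}$ in decreasing-height order), so the volume occupied inside slab $j$ is at least $H_{j+1}\,A_j$.

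Summing over $j = 1, \dots, k-1$ and using $v(T) \ge \sum_{j=1}^{k-1} H_{j+1}\,A_j$,
\[
v(T) \;\ge\; (1-2\epsilon-\epsilon^2)\, w_B d_B \sum_{j=2}^{k} H_j.
\]
Combined with the trivial bound $H_1 \le \epsilon h_B$, this yields
\[
\sum_{j=1}^{k} H_j \;\le\; \epsilon h_B + \frac{v(T)}{(1-2\epsilon-\epsilon^2)\, w_B d_B} \;\le\; \epsilon h_B + \frac{(1-3\epsilon)\,h_B}{1-2\epsilon-\epsilon^2}.
\]
A short calculation shows $(1-\epsilon)(1-2\epsilon-\epsilon^2) = 1-3\epsilon+\epsilon^2+\epsilon^3 \ge 1-3\epsilon$, hence the right-hand side is at most $h_B$, so all items fit in $B$.

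Finally, for the running time I would argue $O(n \log n)$ for the initial sort by height, and amortize the 2D-NFDH calls across slabs: each item is placed exactly once, each placement requires $O(\log n)$ work inside 2D-NFDH (to locate the current shelf/position), giving $O(n\log^2 n)$ overall as stated. The main conceptual hurdle is the volume lower bound for a single slab; the trick of pairing $H_{j+1}$ with $A_j$ (rather than $H_j$ with $A_j$) is what makes the telescoping sum work, and handling the leftover $H_1$ term is what forces the extra $\epsilon$ slack and explains the $(1-3\epsilon)$ threshold.
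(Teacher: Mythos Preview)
Your proposal is correct and takes essentially the same approach as the paper: both sort by height, pack slabs via 2D-\NFDH, invoke the contrapositive of \Cref{lem:2d-hfdh} to lower-bound each closed slab's base area, and pair the area of slab $j$ with the height $H_{j+1}$ of the next slab's defining item to telescope. The only cosmetic differences are that the paper phrases it as a proof by contradiction (assuming some item fails to fit) and is slightly looser about the $\epsilon^2$ correction from the triggering item's base area, whereas you argue directly and track that term explicitly; both computations close with the same inequality $(1-\epsilon)(1-2\epsilon-\epsilon^2)\ge 1-3\epsilon$ (the paper uses the marginally stronger $(1-\epsilon)(1-2\epsilon)\ge 1-3\epsilon$).
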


\noindent \textbf{Steinberg's Algorithm \cite{steinberg}.} 
Steinberg's Algorithm is a commonly used 2D packing algorithm with the following performance guarantee.
\begin{restatable}[\cite{steinberg}]{lemma}{steinb}
\label{lem:stein}
We are given a set $S$ of $n$ \kvn{rectangles} and a \kvn{rectangular box} of size $\ell \times b$. 
Let $\ell_{\max} \le \ell$ and $b_{\max} \le b$ be the maximum
\kvn{length} and maximum \kvn{breadth} among the rectangles in $S$, respectively, and let
$a(S)$ be the total area of the rectangles in $S$. Also, we denote $x_+:=
\max(x, 0)$. 
If
$2a(S) \le \ell b - (2\ell_{\max} - \ell)_+(2b_{\max}-b)_+$,
then there is an algorithm that can pack \kvn{the whole set $S$ into the rectangular box} in $O\left(\frac{n \log^2 n}{\log \log n}\right)$ time.
\end{restatable}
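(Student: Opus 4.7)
The plan is to prove the lemma by strong induction on $n = |S|$, following Steinberg's classical recursive argument. First, by rotating individual rectangles if needed, I would assume WLOG that every rectangle has its longer side horizontal, so that $\ell_i \geq b_i$ for all $i \in S$. Let $r \in S$ denote the rectangle attaining $\ell_{\max}$, with ties broken by the largest $b_i$.

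When all rectangles are ``small,'' that is, $2\ell_{\max} \leq \ell$ and $2b_{\max} \leq b$, the penalty term $(2\ell_{\max}-\ell)_+(2b_{\max}-b)_+$ vanishes and the hypothesis reduces to $2a(S) \leq \ell b$. Here a shelf-based algorithm in the spirit of \NFDH suffices: sort by height, greedily build shelves, and the standard area-loss argument bounds the wasted space by a fraction of the container, so the packing fits.

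In the general case, I would place $r$ in the bottom-left corner of the container, producing two disjoint rectangular sub-regions: a right strip $R_1$ of dimensions $(\ell - \ell_r) \times b_r$ and a top strip $R_2$ of dimensions $\ell \times (b - b_r)$. Then I would partition $S \setminus \{r\}$ into $S_1, S_2$ to be packed into $R_1, R_2$ respectively, and recurse on each sub-instance. The partition is guided by the longer side: any rectangle with $\ell_i > \ell - \ell_r$ cannot fit in $R_1$ and must be routed to $R_2$; the rest are distributed greedily to balance the area-to-capacity ratios and the penalty terms in the two sub-regions.

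The crux of the proof, and the main obstacle, is showing that both sub-instances satisfy the Steinberg hypothesis with respect to $(\ell - \ell_r, b_r)$ and $(\ell, b - b_r)$. This requires a careful case split on whether $\ell_r \geq \ell/2$ and $b_r \geq b/2$, along with a somewhat delicate algebraic manipulation showing that the ``area credit'' $\ell_r b_r$ from removing $r$, combined with the transformation of the penalty term under restriction to each sub-region, dominates the new deficits that appear in the inductive hypothesis. For the runtime, I would maintain the rectangles in an order-maintenance data structure supporting extract-max and rank queries in $O(\log n / \log \log n)$ amortized (e.g., via fusion-tree-style structures); combined with the $O(n)$ recursive subproblems each doing $O(\log n)$ reorganizations, this yields the claimed $O(n \log^2 n / \log \log n)$ bound.
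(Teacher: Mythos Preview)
The paper does not prove this lemma at all: it is stated with the citation \cite{steinberg} and used as a black box throughout. So there is no ``paper's own proof'' to compare against; you are attempting to reprove Steinberg's theorem from scratch, which the authors explicitly chose not to do.

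That said, your sketch has real gaps as a standalone proof. First, you write ``by rotating individual rectangles if needed, I would assume WLOG that every rectangle has its longer side horizontal.'' This is not WLOG: the lemma is stated for axis-aligned packing \emph{without} rotation, and rotating can change $\ell_{\max}$ and $b_{\max}$ in a way that breaks the hypothesis (and in any case the conclusion must hold for the original orientations). Second, your base case is wrong: when $2\ell_{\max}\le \ell$ and $2b_{\max}\le b$, the hypothesis is $2a(S)\le \ell b$, but \NFDH\ does \emph{not} guarantee a feasible packing under just this area bound --- the usual \NFDH\ area guarantee (e.g., \Cref{lem:2d-hfdh}) needs each side to be an $\epsilon$-fraction of the box, not merely at most half. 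Third, the heart of Steinberg's argument is a case analysis over seven distinct reduction procedures, each with its own applicability condition, together with a proof that at least one procedure always applies whenever the hypothesis holds; your single ``place the widest rectangle in the corner and split into two strips'' move does not cover all cases, and the claim that the two sub-instances inherit the Steinberg inequality is exactly the delicate part you have deferred. Finally, the runtime in Steinberg's paper comes from bounding the recursion depth and using a specific priority structure, not from fusion trees; your sketch of the data structure is not tied to the actual recursion you describe.

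In short: the paper treats this as a cited result, and your outline, while in the right spirit, would need substantial repair (dropping the rotation step, replacing the \NFDH\ base case, and supplying the full case analysis) before it could stand on its own.
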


We utilize the above algorithm to pack items in layers inside a box. Below, we present two algorithms \texttt{3D-Vol-Pack} and \texttt{3DR-Vol-Pack} for packing 3D items for the cases without and with rotations, respectively. As Steinberg's algorithm is used as a black box in many 2D packing problems, we believe our volume-based bounds and algorithms should find usage in other 3D packing problems. In both cases, we are given a box $B$ and a set of items $T$.

\noindent \textbf{\texttt{3D-Vol-Pack.}} We assume that each $i\in T$ satisfies either $w_i \le w_B/2$ or $d_i \le d_B/2$. 
Let $T_w$ be the items of $T$ having width at most $w_B/2$, and let $T_d = T\setminus T_w$. Then, similar to \cite{3d-knapsack}, we further classify $T_w$ as follows: let $T_{w\ell}\subseteq T_w$ be the items whose base area (area of the bottom face) exceeds $\frac{1}{6}w_Bd_B$, and let $T_{ws} = T_w \setminus T_{w\ell}$. Next, we sort the sets $T_{w\ell}$ and $T_{ws}$ in non-increasing order of heights. We group the items of $T_{w\ell}$ in pairs (except possibly the last item), and pack each pair in a single layer. See \cref{fig:3d-vol-pack}. For packing items in $T_{ws}$, we group them into maximal collections of total base area not exceeding $\frac{1}{2}w_Bd_B$, and pack each collection in a single layer using Steinberg's algorithm (\Cref{lem:stein}). Analogously, we obtain a packing of the items in $T_d$ in layers. Finally, we stack the layers one above the other inside the box $B$ as long as the height of the box is not exceeded. The following lemma provides a guarantee on the packed volume 
when the item heights are very small compared to the box $B$. 

\begin{figure}
    \centering
    \includesvg[width=0.75\linewidth]{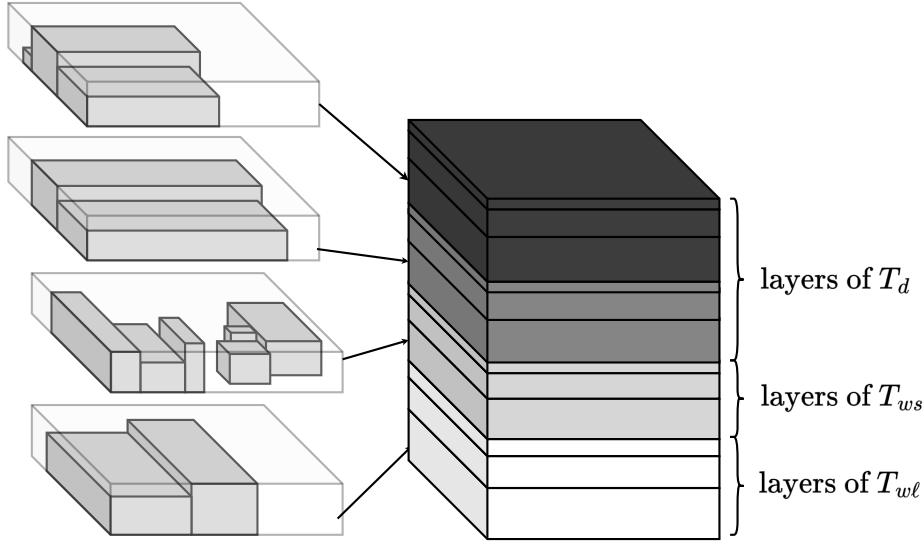}
    \caption{An example packing by Algorithm \texttt{3D-Vol-Pack}.}
    \label{fig:3d-vol-pack}
\end{figure}


\begin{lemma}
\label{lem:steinberg}
    Given a box $B$, and a set $T$ \kvn{of $n$} items where each $i\in T$ satisfies $w_i \le w_B$, $d_i \le d_B$ and $h_i \le \epsilon h_B$. Further, either $w_i \le w_B/2$ 
    or $d_i \le d_B/2$ holds for every $i\in T$. If $v(T) \le \left(\frac{1}{3}-2\epsilon\right)v(B)$, then \texttt{3D-Vol-Pack} packs the whole set $T$ into $B$ in $O\left(\frac{n \log^2 n}{\log \log n}\right)$ time.
\end{lemma}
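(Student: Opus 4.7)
The plan is to show that \texttt{3D-Vol-Pack} succeeds by bounding the total stacked-layer height and verifying that it fits inside $h_B$ whenever the volume hypothesis holds. I would split the analysis over the four sub-classes $T_{w\ell}$, $T_{ws}$ and the $T_d$-analogues $T_{d\ell}$, $T_{ds}$; for each, every layer except the tallest one will be shown to have base-area density strictly greater than $w_Bd_B/3$, and an \NFDH-style telescoping over the sorted heights converts this into a height bound.

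First I would verify feasibility of one individual layer. A pair-layer for $T_{w\ell}$ has two items $i,j$ with widths at most $w_B/2$; placing them side by side along the width axis is possible since $w_i+w_j\le w_B$ while $d_i,d_j\le d_B$, and the layer height is $\max(h_i,h_j)$. A collection-layer for $T_{ws}$ has total base area at most $w_Bd_B/2$ and every item has width at most $w_B/2$, so $(2\ell_{\max}-w_B)_+=0$ in \Cref{lem:stein} and Steinberg's condition reduces to $a(S)\le w_Bd_B/2$, which is satisfied. The symmetric argument (pair/stack along depth, apply Steinberg with $(2b_{\max}-d_B)_+=0$) handles $T_{d\ell}$ and $T_{ds}$.

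Next I would establish the key density statement: every layer that is \emph{not} the last in its class has base area strictly above $w_Bd_B/3$. For a $T_{w\ell}$-pair this is immediate, since each item has base area exceeding $w_Bd_B/6$. For a closed $T_{ws}$-collection, maximality of the greedy grouping means adding the next item---of base area at most $w_Bd_B/6$---would push the total above $w_Bd_B/2$, so the collection has area strictly above $w_Bd_B/2 - w_Bd_B/6 = w_Bd_B/3$. Since items are processed in decreasing height and the tallest item of layer $L_{k+1}$ has height at most the shortest item of $L_k$, this density gives
\[
v(L_k)\;\ge\;\frac{w_Bd_B}{3}\cdot H_{k+1}
\]
for every non-terminal $L_k$, where $H_{k+1}$ is the height of the next layer. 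Summing within a class yields $\sum_{k\ge 2}H_k \le 3\,v(\text{class})/(w_Bd_B)$, and the first (tallest) layer contributes $H_1 \le \epsilon h_B$ because every item has height at most $\epsilon h_B$.

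Adding over the (at most) four classes, the total stacked height is bounded by $3v(T)/(w_Bd_B) + 4\epsilon h_B$, and under the hypothesis $v(T)\le (\tfrac{1}{3}-2\epsilon)v(B)$ this becomes $(1-6\epsilon)h_B + 4\epsilon h_B = (1-2\epsilon)h_B \le h_B$, so every item of $T$ fits inside $B$. The running time is dominated by an initial $O(n\log n)$ sort and by Steinberg invocations on collections of total size $n$, giving the claimed $O\!\left(\tfrac{n\log^2 n}{\log\log n}\right)$ bound. The main delicate point is obtaining the uniform $w_Bd_B/3$ density from two structurally different layer types---pair-layers (using the per-item lower bound $w_Bd_B/6$) and Steinberg-layers (using greedy maximality)---because it is precisely the matching constant $1/3$ that drives the $(1/3-2\epsilon)$ threshold in the hypothesis.
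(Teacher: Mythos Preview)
Your proposal is correct and follows essentially the same approach as the paper: both split into the four subclasses, use the pair-packing for large-base items and Steinberg layers for small-base items to guarantee per-layer base area at least $w_Bd_B/3$ on all non-terminal layers, and then telescope over sorted heights to obtain the bound $4\epsilon h_B + 3v(T)/(w_Bd_B)\le h_B$. The paper packages the height bound as a separate claim ($4h_{\max}+\tfrac{3}{wd}v(S)$) before specializing, but the substance is identical.
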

\begin{proof}
We first establish an intermediate claim.
\begin{claim}
    \label{lem:steinberg-layers-height}
    Let $R$ be a rectangular base of dimensions $w\times d$, and let $S$ be a set of cuboids such that for each $i\in S$, either $w_i \le w/2$ or $d_i \le d/2$. 
    Let $h_{\max}$ denote the maximum height of an item in $S$. Then, $S$ can be packed on the base $R$ within a height of $4 h_{\max}+\frac{3}{wd}v(S)$. 
    \kvn{The runtime is upper bounded by $O(\frac{m \log^2 m}{\log \log m})$ time, where $m=\abs{S}$.}
\end{claim}
\begin{proof}
    We divide the set $S$ into four sets as follows.
    \begin{itemize}
        \item $S_{w1}$ denotes all the items whose width is at most $w/2$ but whose individual base area is {\em at most} $\frac16wd$.
        \item $S_{w2}$ denotes all the items whose width is at most $w/2$ but whose individual base area is {\em more than} $\frac16wd$.
        \item $S_{d1}$ denotes all the items whose depth is at most $d/2$ but whose individual base area is {\em at most} $\frac16wd$.
        \item $S_{d2}$ denotes all the items whose depth is at most $d/2$ but whose individual base area is {\em more than} $\frac16wd$.
    \end{itemize}
    We show how to pack the sets $S_{w1},S_{w2}$. The packing of $S_{d1},S_{d2}$ is similar. First, let us consider $S_{w1}$. 
    Assume that the items in $S_{w1}$ are arranged in non-increasing order of heights. Now, pick the largest prefix $P_1$
    of this order whose total base area does not exceed $\frac12wd$.
    \kvn{If $P_1=S$, then we can use Steinberg's algorithm to pack $S$ directly.}
    \kvn{Otherwise, }the maximality of $P_1$, together with the fact that each item in $S_{w1}$ has a base area at most $\frac16wd$, implies that
    the total base area of $P_1$ is at least $\frac13wd$. Using Steinberg's algorithm, we can pack the entire prefix $P_1$
    in a layer of dimensions $w\times d\times h'_1$, where $h'_1$ is the height of the tallest item in $P_1$. Now, we look at $S_{w1}\setminus P_1$ and repeat the process of choosing a maximal prefix $P_2$ of base area at most $\frac12wd$. In this way, we obtain a partition of
    $S_{w1}$ into sets $P_1,P_2,\dots,P_s$. For $j\in[s]$, let $h_j,h'_j$ denote the respective heights of the shortest, tallest items of $P_j$. We have the following guarantees for each $j\in[s]$.
    \begin{itemize}
        \item Each $P_j$ can be packed in a layer $\mathcal L_j$ of dimensions $w\times d\times h'_j$ (by the same arguments used for $P_1$).
        \item If $j<s$, the total base area of items in $P_j$ is at least $\frac13wd$ (by the same arguments used for $P_1$).
        \item If $j<s$, then $h_j>h'_{j+1}$ (since is sorted $S_{w1}$ in non-increasing order of heights).
    \end{itemize}
    We can stack up the layers $\mathcal L_1,\mathcal L_2,\dots,\mathcal L_s$ on top of each other, resulting in a height of~$h'_1+h'_2+\dots+h'_s$. Now,
    \begin{align}
        &h'_1+h'_2+\dots+h'_s\nonumber\\
        \le\:\:& h_{\max}+h_1+\dots+h_{s-1}\nonumber\\
                        =\:\:&  h_{\max}+3\left(\frac{h_1}{3}+\frac{h_2}{3}+\dots+\frac{h_{s-1}}{3}\right)\nonumber\\ 
                        \le\:\:& h_{\max}+\frac{3}{wd}\left(v(P_1)+v(P_2)+\dots+v(P_{s-1})\right)\tag{since each $P_j$, except the last, has a base area of at least $1/3$}\nonumber\\
                        =\:\:& h_{\max}+\frac{3}{wd}v(S_{w1})\label{eq:telescope}.
    \end{align}
    Now, we move to packing $S_{w2}$. This is easy. Again, we assume that the items in $S_{w2}$ are arranged in non-increasing order of heights.
    Since each item in $S_{w2}$ has a base area at least $\frac16wd$ and width at most $w/2$, we can pack the first two of them side by side in a single layer, ensuring that the total base area of the layer is at least $\frac16wd$. \kvn{(If $S_{w2}$ contains only one item, then we utilize only one layer to pack $S_{w2}$.)} We repeat this process for the rest of the items as well.
    Using an analysis similar to \cref{eq:telescope}, we obtain that the total height of the layers is at most $h_{\max}+\frac{3}{wd}v(S_{w2})$. We can pack the sets $S_{d1},S_{d2}$ analogously. The total height of
    all four packings is thus at most
    \begin{align*}
        &4h_{\max}+\frac{3}{wd}(v(S_{w1})+v(S_{w2})+v(S_{d1})+v(S_{d2}))\\
       \le\:\:&4h_{\max}+\frac{3}{wd}(v(S)).
    \end{align*}
    This ends the proof \kvn{of the packing guarantee.
    To see the runtime, observe that it is dominated by the time to run Steinberg's algorithm. Let $m_j$ denote the number of items in prefix $P_j$ defined above. Then,
    the time to pack $P_j$ is given by $O(m_j\log^2 m/\log\log m)$. Summing these times up gives us the claimed runtime guarantee}.
\end{proof}

Using \cref{lem:steinberg-layers-height}, we obtain that the set $T$ can be pack on the base of $C$ within a height of at most $4(\eps h_C)+\frac{3}{w_Cd_C}(\frac13-2\eps)w_Cd_Ch_C$,
which on simplification can be seen to be $h_C-2\eps h_C$. The runtime also follows from \cref{lem:steinberg-layers-height}.
\end{proof}



Next, we give an algorithm when the items can be rotated by 90 degrees about any axis.

\noindent \textbf{\texttt{3DR-Vol-Pack.}}
Let $T_{\ell} = \{i\in T \mid w_i > w_B/2 \text{ and } d_i > d_B/2\}$, and let $T_s = T\setminus T_{\ell}$. First, we pack $T_s$ in layers 
using \texttt{3D-Vol-Pack}. Next, we sort the items of $T_{\ell}$ in non-increasing order of their widths and pack them in layers, 
one in each layer, touching the left face of the box, as long as the height of the box is not exceeded. Following this, 
we rotate the remaining items of $T_{\ell}$ about the depth dimension so that their widths and heights are interchanged, and as 
long as possible, pack them \kvn{next to each other} starting from the right face of the box, such that each item touches the top face of the box (see \Cref{fig:3d-r-vol-pack}). 

The next lemma states that when \kvn{$B$ is a cube and when} the height of each item is very small compared to box $B$, we can pack a significant volume.
See \Cref{sec:3dr-vol-pack} for the proof.

\begin{restatable}{lemma}{tDRVolPack}
\label{lem:3DRVolPack}
    Given a cubical box $B$ of side length $w$,
    and a set $T$ of \kvn{$n$ items} where for each $i\in T$, there exists an orientation so that $w_i \le w$, $d_i \le w$ and $h_i \le \epsilon^2 w$ holds. 
    If $v(T) \le \left(\frac{7}{24}-5\epsilon\right)v(B)$, then $T$ can be packed into $B$ by \texttt{3DR-Vol-Pack} in $O\left(\frac{n \log^2 n}{\log \log n}\right)$ time.
\end{restatable}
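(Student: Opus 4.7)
The plan is to analyze the three phases of \texttt{3DR-Vol-Pack} and combine volumetric lower bounds. Write $w$ for the side of $B$, $V_s = v(T_s)$, and $V_\ell = v(T_\ell)$.

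For Phase~1 (packing $T_s$), every item has $\min(w_i, d_i) \le w/2$ and $h_i \le \eps^2 w$, so invoking \cref{lem:steinberg-layers-height} packs $T_s$ on the $w\times w$ base within height $H_s \le 4\eps^2 w + 3V_s/w^2$, equivalently $V_s \ge (H_s - 4\eps^2 w)\,w^2/3$. For Phase~2 (normal $T_\ell$), each item of $T_\ell$ occupies a full layer by itself (since $w_i, d_i > w/2$ forces one per layer), each layer has base area $\ge w^2/4$, and thus $V_{\ell N} \ge (w^2/4)H_{\ell N}$; the greedy phase in non-increasing width order stops exactly when the next item $i^*$ would satisfy $H_s + H_{\ell N} + h_{i^*} > w$, yielding $H_{\ell N} \ge w - H_s - \eps^2 w$. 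For Phase~3 (rotation), the remaining items become slabs of width $h_i \le \eps^2 w$, depth $d_i$, and new height $w_i > w/2$, placed side-by-side from the right face, each touching the top, so that rotated and normal items lie in disjoint $x$-ranges (provided $W_R := \sum_{i \in R} h_i \le w - w_1$, with $w_1$ the largest width in $T_\ell$) and sit above the $T_s$-block (provided $w_i \le w - H_s$). Once more, $V_{\ell R} \ge (w^2/4)W_R$.

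Suppose for contradiction the algorithm fails to place some item. Splitting into sub-cases based on whether rotation is blocked by the right-strip width ($W_R + h_{i^*} > w - w_1$), by the rotated height ($w_{i^*} > w - H_s$), or is structurally unavailable ($w_1 = w$, so the right strip has zero width), and summing the inequalities above, one obtains
\begin{align*}
V \;\ge\; V_s + V_{\ell N} + V_{\ell R} \;\ge\; \tfrac{H_s w^2}{3} + \tfrac{(w-H_s)w^2}{4} + \tfrac{(w-w_1)w^2}{4} \;-\; O(\eps w^3),
\end{align*}
where the last summand vanishes in the structurally-unavailable sub-case. Writing $x := H_s/w$ and $a := w_1/w$, this is $w^3[x/12 + 1/2 - a/4] - O(\eps w^3)$ when rotation is present, or $w^3[x/12 + 1/4] - O(\eps w^3)$ when $a = 1$ and one must instead rely on the normal phase alone, using that a normal failure with $w_1 = w$ forces $H_s \ge w/2$. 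In both sub-cases the joint minimum equals $7/24$, attained at the knife-edge $x = 1/2, a = 1$; the $5\eps$ slack absorbs the accumulated Steinberg threshold and rounding errors, giving the desired contradiction with $v(T) \le (7/24 - 5\eps)\,v(B)$.

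The main obstacle will be the knife-edge gluing at $a = 1$: when the right strip collapses, rotation contributes nothing, and the entire bound must come from the normal phase, so one must show that a failed normal placement with $w_1 = w$ forces $H_s \ge w/2$ -- which in turn forces $V_s \ge w^3/6$ and $V_{\ell N} \ge w^3/8$, together $7w^3/24$. This requires tying the height used by $T_s$ to the maximum item width and carefully checking that the intermediate $\eps$- and $\eps^2$-errors do not break the threshold. The runtime $O(n\log^2 n /\log\log n)$ follows directly from \cref{lem:steinberg-layers-height} applied to $T_s$ in Phase~1, with only $O(n\log n)$ sorting and $O(n)$ placement overhead in Phases~2 and~3.
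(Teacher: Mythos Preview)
Your volumetric case analysis has a genuine gap: the displayed lower bound
\[
V \;\ge\; \frac{H_s w^2}{3} + \frac{(w-H_s)w^2}{4} + \frac{(w-w_1)w^2}{4} - O(\eps w^3) \;=\; w^3\Bigl[\frac{x}{12} + \frac12 - \frac{a}{4}\Bigr] - O(\eps w^3)
\]
does \emph{not} have minimum $7/24$ over the relevant domain. Take $x\to 0$ and $a\to 1$: the expression tends to $1/4 < 7/24$. Your knife-edge claim that ``$w_1=w$ forces $H_s\ge w/2$'' is also false --- one can have $T_s=\varnothing$ (so $H_s=0$) while the first $T_\ell$ item has width exactly $w$; Phase~3 is then \emph{not} structurally unavailable, since a rotated item of original width $w_{i^*}\le w - H_s - h_1$ avoids the bottom Phase-2 layer in the $z$-direction and can still be placed to its right at higher $z$-levels. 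More fundamentally, your sub-case~1 condition $W_R + h_{i^*} > w - w_1$ is only \emph{necessary} for a Phase-3/Phase-2 collision, not sufficient: because Phase-2 items are sorted by decreasing width, the items high in the stack are narrower than $w_1$, so Phase~3 can typically pack far more than $w-w_1$ in the $x$-direction before failure. Your bound throws this away, and that is exactly where the gap between $1/4$ and $7/24$ lives.

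The paper closes this gap by a different route: it splits on whether $v(T_s)\ge (1/6-3\eps)w^3$. In the large-$v(T_s)$ case a pure stacking of $T_\ell$ above $T_s$ already fits (so Phase~3 is never needed). In the small-$v(T_s)$ case the paper reduces the packing of $T_\ell$ in the remaining $w\times h'_\ell$ slab to a 2D problem on the front face and invokes \cref{pack-sheets}, whose proof carries out precisely the refined collision analysis you are missing: when the first rotated rectangle $i^*$ fails at horizontal position $x$ with (pre-rotation) length $y$, the stacked rectangles below height $b-y$ all have length $\ge x$ and those above have length $\ge y$, yielding the area bound $\ell b - \ell^2/4$. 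Translated back to volumes via $d_i>w/2$, this gives exactly $v(T)>w^3[3/8 - x/6] > 7w^3/24$ on the range $x<1/2$. A purely ``one term per phase'' volumetric decomposition, with $V_{\ell N}\ge (w^2/4)H_{\ell N}$ and $V_{\ell R}\ge (w^2/4)W_R$, cannot recover this; the interaction term coming from the sorted widths is essential.
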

\begin{figure}
    \centering
    \includesvg[width=0.8\linewidth]{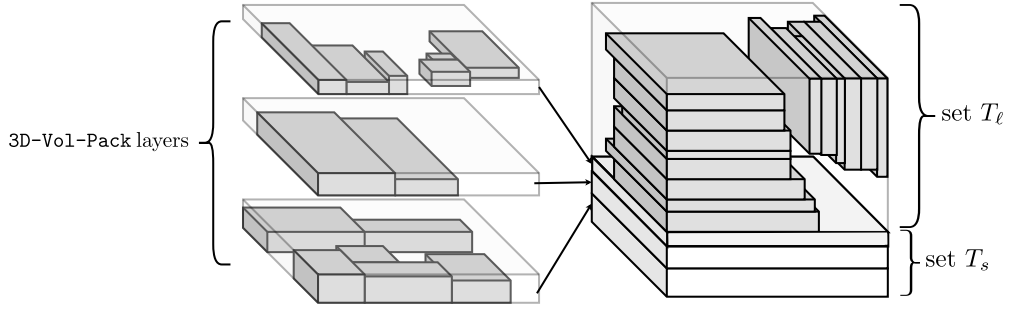}
    \caption{An example packing by Algorithm \texttt{3DR-Vol-Pack}.}
    \label{fig:3d-r-vol-pack}
\end{figure}

\section{Container Packing}
\label{subsec:cp}
\kvnnew{In this section, we first define container packing and describe the containers used in our work.
Then, we state our main structural result which shows that there exist container packings with high profit that can be searched for in polynomial time.}
\begin{definition}[\Container Packing]
    A packing of a set of items $I' \subseteq I$ is said to be a \Container Packing if \kvnr{changed $T, T', T''$}
    \begin{itemize}
        \item the knapsack can be partitioned into a collection $\CC$ of non-overlapping regions called \boxes and some empty spaces. Each container $C \in \CC$ has an associated capacity $\mathtt{cap}(C)$, a function $f_C \colon I \rightarrow \mathbb{R}_{\ge 0}$, and a polynomial-time algorithm $\AM_C$,
        \item there is a function $g \colon I' \rightarrow \CC$, such that for each $C\in \CC$, items in $g^{-1}(C)$ are packed into $C$ using algorithm $\AM_C$,
        \item for any $C\in \CC$ and $T\subseteq I$ such that $\sum_{i\in T} f_C(i) \le \mathtt{cap}(C)$, for any constant $\eps > 0$, there exists a polynomial-time computable set $T' \subseteq T$ with $p(T')\ge (1-O(\epsilon))p(T)$ such that $T'$ can be packed into $C$ by the algorithm $\AM_C$. 
    \end{itemize}
\end{definition}

A Container Packing is said to be \textit{guessable} if 
\begin{itemize}
    \item the number of \boxes inside the knapsack is bounded by a constant $O_\eps(1)$,\footnote{The notation $O_{\eps}(f(n))$ means that the implicit constant hidden in big-$O$ notation can depend on $\eps$.}
    \item the number of distinct possible types of \boxes (defined by their sizes along each dimension) is bounded by \kvn{$n^{O_\eps(1)}$, a polynomial in $n$},
    \item whether a given collection of $O_{\epsilon}(1)$ \boxes can be placed non-overlappingly inside the knapsack, can be checked in \kvn{$n^{O_{\eps}(1)}$ (polynomial) time}.
\end{itemize}

We shall call a container \emph{guessable} if it comes from some polynomially-sized set of distinct container types. Though, in this paper we only use cuboids as \boxes, the idea of container packing can be generalized to other shapes in 3D or higher dimensions. 

We show that we can first use a PTAS for GAP (see \cref{subsec:GAP} for details) to assign items into \boxes and
then use the corresponding algorithms to pack items inside the \boxes\ -- this gives us a near-optimal container packing.

\begin{restatable}{theorem}{gapptas}
\label{thm:ptasboxpacking}
    There is a PTAS for maximum profit guessable container packing.
\end{restatable}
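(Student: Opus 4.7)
\medskip

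\textbf{Proof proposal for \Cref{thm:ptasboxpacking}.}
The plan is to reduce the problem, after enumerating the container layout, to an instance of the Generalized Assignment Problem on $O_\eps(1)$ knapsacks, for which a PTAS is known, and then pack each container via its associated algorithm $\AM_C$, losing only an $O(\eps)$ factor at each step.

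First I would enumerate the container layout. By the guessability conditions, the number of distinct container types is bounded by $n^{O_\eps(1)}$, and the number of containers in any feasible packing is $O_\eps(1)$. Thus the number of candidate collections of containers is at most $n^{O_\eps(1)} \cdot O_\eps(1) = n^{O_\eps(1)}$, and for each such collection we can check in $n^{O_\eps(1)}$ time whether it admits a non-overlapping placement inside the knapsack. We enumerate all feasible collections $\CC$, and for each solve the assignment subproblem below, finally returning the best solution over all $\CC$. Since the optimal container packing uses some such $\CC^\star$, it suffices to show that for each fixed $\CC$ we can find an assignment within a $(1+O(\eps))$ factor of the best assignment into $\CC$.

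Next, fix a candidate $\CC = \{C_1, \dots, C_k\}$ with $k = O_\eps(1)$. The task of assigning items to containers becomes the following GAP-like problem: each item $i$ has a profit $p_i$ and, for each container $C \in \CC$, a size $f_C(i)$; each container $C$ has capacity $\mathtt{cap}(C)$; we seek an assignment $g$ of a subset of items to containers maximizing total profit subject to $\sum_{i \in g^{-1}(C)} f_C(i) \le \mathtt{cap}(C)$ for every $C$. Since $k = O_\eps(1)$, this is an instance of GAP on a constant number of knapsacks, and by the PTAS cited earlier (see \Cref{subsec:GAP}) we can in polynomial time find an assignment $g$ whose total profit is at least $(1-\eps)$ times the optimal assignment into $\CC$.

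Finally, for each container $C$, invoke the algorithm $\AM_C$ on the items $g^{-1}(C)$. By the defining property of a container, $\AM_C$ packs a subset $T'_C \subseteq g^{-1}(C)$ with $p(T'_C) \ge (1-O(\eps)) p(g^{-1}(C))$ into $C$ in polynomial time. Summing over the $O_\eps(1)$ containers and combining with the $(1-\eps)$ GAP guarantee yields total packed profit at least $(1-O(\eps))$ times the best profit achievable by a container packing over $\CC^\star$, which is $\OPT_{\mathrm{cp}}$. Rescaling $\eps$ gives the desired PTAS.

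The main obstacle I anticipate is the interface between the GAP PTAS and the container packing property: the GAP PTAS only guarantees feasibility of the assignment $\sum_{i} f_C(i) \le \mathtt{cap}(C)$, whereas the definition of a container packing permits $\AM_C$ to drop an $O(\eps)$ fraction of the profit of $g^{-1}(C)$. These two $(1-O(\eps))$ losses must be composed carefully, but since both are multiplicative and the number of containers is constant (so the aggregate loss is still $O(\eps)$ after union bound over $C \in \CC$), the final ratio remains $1+O(\eps)$. A secondary subtlety is ensuring that the enumeration of container placements inside the knapsack is genuinely polynomial, which is precisely what the third guessability condition asserts.
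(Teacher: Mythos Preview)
Your proposal is correct and follows essentially the same approach as the paper's proof: enumerate all feasible $O_\eps(1)$-sized container collections, reduce assignment to GAP on a constant number of knapsacks (using the PTAS from \Cref{subsec:GAP}), and then apply each $\AM_C$ with an additional $O(\eps)$ loss. The only minor slip is the count of candidate collections, which should be $\bigl(n^{O_\eps(1)}\bigr)^{O_\eps(1)} = n^{O_\eps(1)}$ rather than $n^{O_\eps(1)} \cdot O_\eps(1)$; the conclusion is unaffected.
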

\begin{proof}
    Since the number of distinct dimensions of containers is polynomially bounded, and a guessable container packing only has $O_{\epsilon}(1)$ containers, it is possible to enumerate all feasible choices of containers that fit into the knapsack in polynomial-time. For each such choice, we create an instance of GAP with one (1D) knapsack per container, where the (1D) knapsack corresponding to a container $C$ has capacity $\mathtt{cap}(C)$ and the size of an item $i$ for the (1D) knapsack is given by $f_C(i)$. The profit of an item for any (1D) knapsack is set to be the same as the actual profit of the item. We now use the PTAS for GAP with a constant number of knapsacks
    (\cref{lem:gap})
    to obtain an assignment of the input items into the containers that do not violate the container capacities. Finally, for each container $C$, we use the algorithm $\AM_C$ to pack a subset of the items assigned to it, by losing only an $\epsilon$-fraction of profit. Overall, this gives a PTAS to compute the maximum profitable guessable container packing. 
\end{proof}


\subsection{Classification of \boxes}
\label{sec:cont-class}

\begin{figure}
    \centering
    \includesvg[scale=0.7]{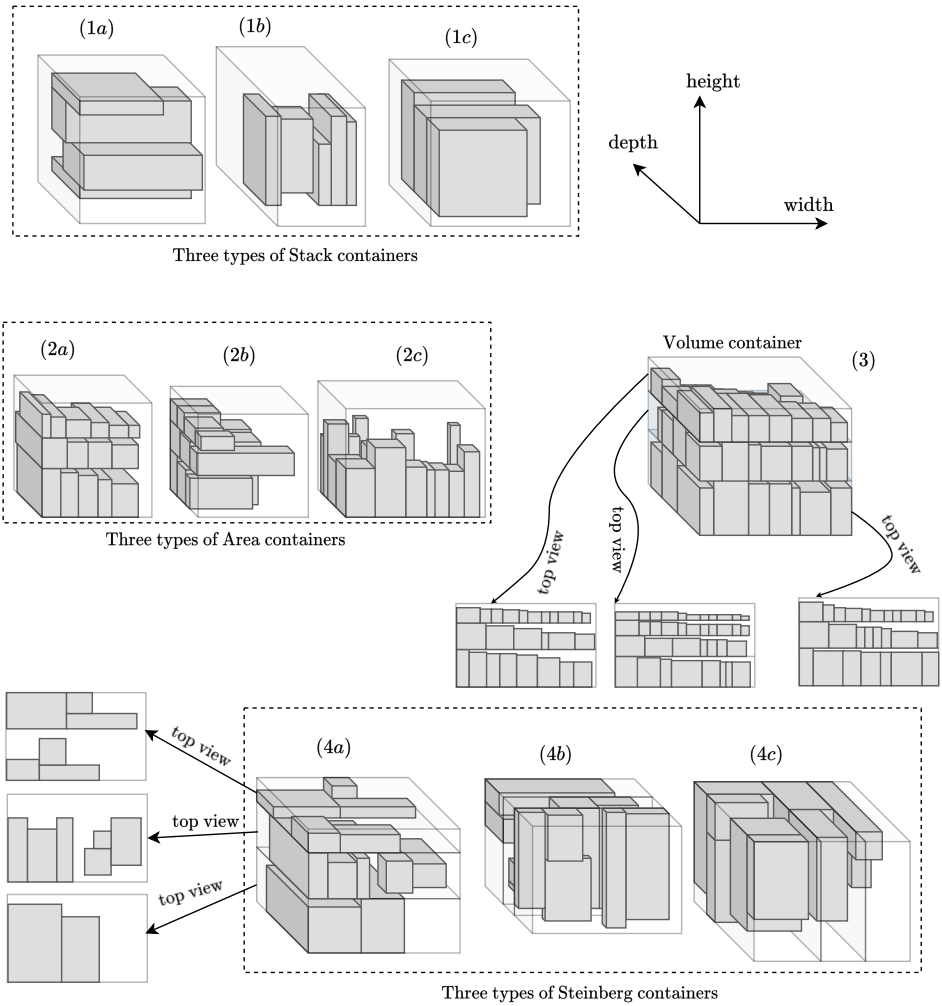}
    \caption{Different kinds of \boxes that we use. \kvnnew{We have one type of} Volume container and three types of Stack, Area, and Steinberg \boxes (one for each dimension).}
    \label{fig:large-stack-containers}
\end{figure}

We now describe the various types of \boxes\ that we use to pack items. For a container $C$, its width, depth, and height are denoted by $w_C, d_C$, and $h_C$, respectively, and its volume $v(C) := w_C \times d_C \times h_C$.
For each container $C$, we specify its capacity $\mathtt{cap}(C)$, the function $f_C,$ and the packing algorithm $\AM_C$.
Let $T$ denote the set of items assigned to $C$ such that $\sum_{i\in T}f_C(i)\le \mathtt{cap}(C)$.
We suggest the reader to follow the descriptions alongside \cref{fig:large-stack-containers}.
\begin{itemize}
    
    
    \item \textbf{Stack \boxes:} These \boxes\ pack items in layers, with one item in each layer. We consider a container $C$ that stacks items along the height.  See container ($1a$) in \cref{fig:large-stack-containers}. The capacity $\mathtt{cap}(C)$ of such a container is set to its height $h_C$, and for any item $i$, $f_C(i):=h_i$ if $w_i \le w_C$ and $d_i \le d_C$, and $f_C(i):=\infty$ otherwise.

    $\AM_C$ just stacks the items of set $T$ in the container $C$ one above the other. Analogously containers ($1b$) and ($1c$) show containers that stack along width \kvn{and} depth, respectively.
    
    \item \textbf{Area \boxes:} Inside these \boxes, the items are packed using the \NFDH~algorithm, by ignoring one of the dimensions. We describe Area \boxes\ that use \NFDH~along the front face of the container (see container ($2a$)). For such a container $C$, we set $\mathtt{cap}(C):= w_Ch_C$, and for any item $i$, we set $f_C(i):= w_i h_i$ if $w_i \le \epsilon w_C$, $h_i \le \epsilon h_C$ and $d_i \le d_C$, and $f_C(i):= \infty$ otherwise. 

     \kvn{$\AM_C$ first sorts the items} of set $T$ in non-increasing order of profit/front area ratio.
    Then, we select the largest prefix whose total front area does not exceed $(1-2\eps)w_Ch_C$. We pack this prefix using \NFDH on the front face.
    Analogously, container ($2b$) and ($2c$) show packing when \NFDH is used on the left and bottom face of $C$, respectively.

    \item \textbf{Volume \boxes:} The items are packed using \tNFDH~inside these \boxes (see container (3)). The capacity of such a container $C$ is set to be its volume $v(C)$, and $f_C(i):= v(i)$ if $w_i \le \epsilon w_C$, $d_i \le \epsilon d_C$ and $h_i \le \epsilon h_C$, and $f_C(i) := \infty$ otherwise.

    $\AM_C$ sorts the items of $T$ in non-decreasing order of profit/volume and selects the largest prefix with volume not exceeding $(1-3\eps)w_Ch_Cd_C$. We pack this prefix using \tNFDH.
    
    \item \textbf{Steinberg \boxes:} These \boxes pack items in layers using \texttt{3D-Vol-Pack}. We describe Steinberg \boxes that pack items in layers along the height  (analogous for other cases). We set $\mathtt{cap}(C) := v(C)/3$. For any item $i$, we set $f_C(i):= v(i)$ if $h_i \le \epsilon h_C$, and either $w_i \le w_C/2$ or $d_i \le d_C/2$ holds. For all other items, we set $f_C(i) := \infty$.

    \kvn{$\AM_C$ first sorts} the items in $T$ in non-increasing order of profit/volume ratio.
    Then, we select the largest prefix whose volume does not exceed $\left(\frac13-2\eps\right)v(C)$, and pack this prefix using \texttt{3D-Vol-Pack}.
    Analogously, container ($4b$) and ($4c$) show packing when layers are stacked along the depth and width, respectively.
\end{itemize}

Next, we establish the packing guarantee of algorithm $\AM_C$ for each container $C$.

\begin{lemma}
    For any $\eps >0$, any container $C$, and any subset $T\subseteq I$ that satisfies $\sum_{i\in T} f_C(i) \le \mathtt{cap}(C)$, the algorithm $\AM_C$ packs a subset $T' \subseteq T$ with $p(T') \ge (1-O(\eps))p(T)$ into the container $C$ in polynomial-time.
\end{lemma}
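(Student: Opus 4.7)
My plan is to verify the lemma by a case analysis over the four container families (Stack, Area, Volume, Steinberg) introduced in \cref{sec:cont-class}. The Area, Volume, and Steinberg cases share essentially the same structure: each algorithm $\AM_C$ sorts items of $T$ by a profit-to-size density and returns the largest feasible prefix, and I can carry out the three corresponding analyses in a uniform way. The Stack case is essentially trivial.

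For a Stack container (say, along the height), the capacity constraint $\sum_{i\in T} f_C(i) = \sum_{i\in T} h_i \le h_C$ together with the fact that every $i \in T$ has $f_C(i) < \infty$ forces $w_i \le w_C$ and $d_i \le d_C$ for all $i \in T$. Thus stacking the whole set $T$ one above another is feasible, and $\AM_C$ returns $T' = T$, losing no profit. The other two Stack orientations are symmetric.

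For the Area, Volume, and Steinberg cases, the plan is first to invoke the corresponding packing lemma to certify feasibility of the prefix, and then to use a fractional-knapsack style argument to bound the profit loss. Concretely, $\AM_C$ sorts items of $T$ in non-increasing order of $p_i / f_C(i)$ and returns the largest prefix $T' = \{1,\ldots,k\}$ of total size at most a relaxed capacity $(1-\gamma)\,\mathtt{cap}(C)$, where $\gamma = 2\epsilon$ in the Area case, $\gamma = 3\epsilon$ in the Volume case, and $\gamma = 6\epsilon$ in the Steinberg case (after writing $\mathtt{cap}(C) = v(C)/3$). The packability of $T'$ is then immediate from \cref{lem:2d-hfdh}, \cref{lem:3d-hfdh}, and \cref{lem:steinberg}, respectively, since the per-item side/height conditions those subroutines demand are already baked into $f_C$: any item violating them has $f_C(i) = \infty$ and so cannot appear in~$T$.

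The main step, and the one that requires some care, is the profit bound. If $T' = T$ there is nothing to prove, so assume $k < |T|$ and let item $k+1$ be the first rejected one, with density $r := p_{k+1}/f_C(k+1)$. The key observation I will use is that every finite-size item satisfies $f_C(i) \le \beta \cdot \mathtt{cap}(C)$ for some $\beta = O(\epsilon)$: for Area this is $\beta = \epsilon^2$ from $w_i h_i \le \epsilon^2 w_C h_C$, for Volume $\beta = \epsilon^3$ from the three per-axis $\epsilon$-bounds, and for Steinberg $\beta = 3\epsilon$ from $v(i) \le \epsilon\, v(C) = 3\epsilon\,\mathtt{cap}(C)$. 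Combined with the maximality of the prefix, this gives $\sum_{i \le k} f_C(i) \ge (1-\gamma)\mathtt{cap}(C) - f_C(k+1) \ge (1-\gamma-\beta)\mathtt{cap}(C)$. Since the densities are non-increasing, $p(T') \ge r \sum_{i\le k} f_C(i)$ while $p(T) \le r \sum_{i\in T} f_C(i) \le r\,\mathtt{cap}(C)$, and therefore $p(T')/p(T) \ge 1 - \gamma - \beta = 1 - O(\epsilon)$. The runtime in each case is dominated by sorting together with the corresponding packing subroutine, which is polynomial in $n$.
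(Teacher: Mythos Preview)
Your case split and the feasibility checks are exactly what the paper does, and the per-item size bounds $f_C(i)\le\beta\,\mathtt{cap}(C)$ with $\beta=\eps^2,\eps^3,3\eps$ are correct. However, the profit step contains a genuine error: with $r=p_{k+1}/f_C(k+1)$ the density of the \emph{first rejected} item, the inequality $p(T)\le r\sum_{i\in T}f_C(i)$ is false in general, since items $1,\dots,k$ have density at least $r$ (so $p_i\ge r f_C(i)$, not $\le$). A two-item example with $f_C(1)=f_C(2)=1/2$, $p_1=100$, $p_2=1/2$ already breaks it.

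The fix is the standard prefix-average argument the paper uses implicitly: because the items are sorted by non-increasing density, the prefix $T'$ has average density at least that of the whole set, i.e.\ $p(T')/\sum_{i\le k}f_C(i)\ge p(T)/\sum_{i\in T}f_C(i)$. Combined with your lower bound $\sum_{i\le k}f_C(i)\ge(1-\gamma-\beta)\,\mathtt{cap}(C)$ and the hypothesis $\sum_{i\in T}f_C(i)\le\mathtt{cap}(C)$, this gives $p(T')\ge(1-\gamma-\beta)\,p(T)$ directly. With that one line replaced, your argument is correct and matches the paper's.
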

\begin{proof}
    The case when $C$ is \kvnnew{a Stack container is} easy because the entire set $T$ will be packed in $C$ by $\AM_C$.
For the other types of containers, if $\AM_C$ selects the entire set $T$, we are done. Hence, assume this is not the case.

Consider the case when $C$ is an Area container which uses \NFDH on its front face (the other two types are analogous).
Recall that $\AM_C$ sorts $T$ in order of non-increasing profit/front area ratio and picks the largest prefix $T'$
whose total front area does not exceed $(1-2\eps)w_Ch_C$. Otherwise, since each item in $T$ has a front area at most $\eps^2w_Ch_C$,
we obtain that the total front area of $T'$ is at least $(1-3\eps)w_Ch_C$. Further, since $T$ has a total front area of at most $w_Ch_C$ and since $T'$ was obtained
after sorting $T$ in the profit/front area ratio, it follows that $p(T')\ge (1-3\eps)p(T)$.

Now, we consider the case when $C$ is a Volume container. Again, recall that $\AM_C$ selects the largest prefix $T'$, after arranging $T$ according to non-increasing profit density,
such that $v(T')\le (1-3\eps)v(T)$. Since each item in $T$ has a volume at most $\eps^3w_Ch_Cd_C$, it follows that $v(T')\ge (1-4\eps)v(T)$,
which further implies that $p(T')\ge (1-4\eps)p(T)$.

Finally, we consider the case when $C$ is a Steinberg container that packs layers stacked along the height (other two types are similar). 
The set $T'\subseteq T$ picked by $\AM_C$ is the largest prefix of $T$, arranged in non-increasing order of profit/base area ratio, whose
total volume does not exceed $(1/3-2\eps)v(C)$. Since each item in $T$ has a volume at most $\eps v(C)$,
we obtain that $v(T')\ge (1/3-3\eps)v(C)$, which in turn implies that $p(T')\ge (1-9\eps)p(T)$.
\end{proof}


\subsection{Our Structural Lemma}
Let $\optgs$ be the maximum profit of a guessable container packing. The main structural result of our paper is the following.

\begin{restatable}{lemma}{mainlemma}
\label{thm:mainthm}
For any $\epsilon > 0$, the optimal profit $\opt$ and the optimal profit of a guessable container packing $\optgs$ satisfies $\opt \le \left(\frac{139}{29}+\epsilon\right)\optgs$.
\end{restatable}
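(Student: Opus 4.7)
The plan is to start from an optimal packing and partition its items into a constant number of geometric classes. A standard shifting argument over a geometric sequence of $O(1/\epsilon)$ thresholds lets me pick two scales $\mu_s \ll \delta$ such that items having some dimension in the \emph{medium} interval $(\mu_s,\delta]$ contribute only $O(\epsilon)\opt$ profit and can be deleted. The remaining items split into eight pure classes: \emph{large} items (all three sides $>\delta$); three \emph{flat} classes $F_x,F_y,F_z$ (two sides $>\delta$, one side $\le\mu_s$, the subscript naming the thin axis); three \emph{long} classes $N_x,N_y,N_z$ (one side $>\delta$, two sides $\le\mu_s$, the subscript naming the long axis); and \emph{small} items (all sides $\le\mu_s$). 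I denote the corresponding $\opt$-profits by $\opt_L,\opt_{F_\star},\opt_{N_\star},\opt_S$. The few flat items whose two thick sides both exceed $1/2$ (at most $O_\epsilon(1)$ of them) are handled separately by giving each its own Stack container, so I may assume that every remaining flat item satisfies the Steinberg eligibility condition $w_i\le w_C/2$ or $d_i\le d_C/2$ in the appropriate orientation.

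For each class I exhibit a recipe that builds a guessable container packing from Section~\ref{sec:cont-class} recovering a prescribed fraction of the class's profit. Large items number $O_\epsilon(1)$ and each goes into its own Stack container, recovering $\opt_L$ entirely. Long items of class $N_z$ can be grouped into $O_\epsilon(1)$ Stack-height containers whose base footprints come from a polynomial-sized discretisation, recovering $\opt_{N_z}$ in full (analogously for $N_x,N_y$). For $F_z$, a Steinberg-height container with $\mathtt{cap}(C)=v(C)/3$ combined with the profit-to-volume sort built into $\AM_C$ recovers at least a $(1/3-O(\epsilon))$-fraction of $\opt_{F_z}$ via Lemma~\ref{lem:steinberg}; analogously for $F_x,F_y$. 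Small items go into a Volume container and are recovered almost entirely via Lemma~\ref{lem:3d-hfdh}. Crucially, several of these recipes can be combined inside a single knapsack decomposition: after carving off the large-item stacks, one can simultaneously reserve a Steinberg slab per flat orientation, stack columns per long orientation, and a Volume container for smalls, with a total of $O_\epsilon(1)$ containers all drawn from a polynomial-sized collection of dimensions, hence guessable.

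The last step balances these per-class recoveries against the worst distribution of $\opt$ over the eight classes. Normalising $x_L+\sum_\star x_{F_\star}+\sum_\star x_{N_\star}+x_S=1$, each combined partition yields a recovered profit of the form $\sum_\bullet c_\bullet x_\bullet$, and the ratio $\opt/\optgs$ equals the value of the zero-sum game in which the adversary chooses $(x_\bullet)$ and we choose the best partition. This is a small LP whose minimax value comes out to exactly $29/139$; the dual extremal solution identifies the tight class mixture along with the tight combined strategy. The main obstacle is the flat-item bound: since Steinberg containers only guarantee $(1/3-O(\epsilon))$-volume packing, a naive per-orientation recipe would yield only a $3$-approximation on flat-dominated instances. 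Pushing the ratio down to $139/29$ forces one to exhibit a \emph{combined} partition in which Steinberg slabs for $F_x,F_y,F_z$ geometrically coexist with stack containers for $N_x,N_y,N_z$ and a Volume container for smalls inside a single feasible guessable decomposition; verifying the geometry of this mixed decomposition is the delicate step. Once that is in place, the LP balancing immediately yields the claimed inequality $\opt\le(139/29+\epsilon)\optgs$.
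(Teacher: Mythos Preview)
Your outline has a genuine gap: the per-class recovery fractions you list are too weak to yield $139/29$, and the step where you claim they ``combine'' into one knapsack is exactly where all the work lies and is left unexplained.

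Concretely, for a flat class (say $F_z$: both base sides $>\delta$, height $\le\mu_s$) you only invoke a single Steinberg container to recover a $(1/3-O(\epsilon))$-fraction. With that bound alone the adversary concentrates all profit in one flat class and you cannot do better than a $3$-approximation on flats; since three flat orientations cannot each be served by a full-knapsack Steinberg slab simultaneously (their layering directions are orthogonal), no ``combined partition'' of the sort you sketch rescues this. The paper gets around this by a completely different route: its First Lower Bound (Section~4.1) restructures each flat-type set using the $3$D Strip Packing structural theorem of Jansen--Pr\"adel to obtain $\optgs\ge(2/3-O(\epsilon))\max\{\opt_1,\opt_2,\opt_3\}$, which is twice your Steinberg bound and is indispensable in the final balancing. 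You never invoke anything analogous.

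Moreover, the paper needs two further ingredients that have no counterpart in your plan. The Fourth Lower Bound (Section~4.4) packs items of one flat orientation together with \emph{short} items (height $\le 1/2$) from the other two orientations and from $L$, requiring a careful geometric argument that a Stack/Steinberg pair for $I_1$ of total height $\le 1/2$ leaves room for a Stack/Steinberg pair for $S_2$ or $S_3$ above it. The Fifth Lower Bound (Section~4.5) observes that items with height $>1/2$ from $I_2\cup I_3\cup L$ form an instance of $2$D Knapsack and imports the $(1/2-O(\epsilon))$ container-packing theorem for $2$DK. The $139/29$ ratio emerges only from a case analysis on the volumes $v_1,v_2,v_3$ that mixes all five bounds with specific integer weights (see Claim~4.18); the tight instance the paper exhibits has nonzero mass on the height-$\le 1/2$ / height-$>1/2$ split, so an analysis that ignores that split cannot be tight. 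Your assertion that ``the LP comes out to exactly $29/139$'' is not supported by the inequalities you have actually stated; with only the bounds in your write-up the minimax value is at best $1/5$ (matching the paper's simple $(5+\epsilon)$-approximation in Section~4.2), not $29/139$.
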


Together with \Cref{thm:ptasboxpacking}, this gives us the following theorem.

\begin{theorem}
    \ari{For any constant $\epsilon >0$, there is a polynomial-time $\left(\frac{139}{29}+\epsilon\right)$-approximation algorithm for \tdk.}
\end{theorem}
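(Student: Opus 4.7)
The plan is to chain together the two results that immediately precede the theorem statement: the PTAS for guessable container packing (\Cref{thm:ptasboxpacking}) and the structural lemma (\Cref{thm:mainthm}) bounding $\opt$ in terms of $\optgs$. Since both are stated as black boxes, the argument is essentially a one-line composition, modulo a standard $\epsilon$-rescaling.

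First, I would fix the target constant $\epsilon > 0$ and choose a smaller $\epsilon' = \Theta(\epsilon)$ to feed into the two subroutines. Then I would invoke \Cref{thm:mainthm} with parameter $\epsilon'$ to conclude that there exists a guessable container packing of profit at least
\[
\optgs \ \ge\ \frac{\opt}{139/29 + \epsilon'}.
\]
Next, I would run the PTAS of \Cref{thm:ptasboxpacking} with parameter $\epsilon'$ on the input. This returns, in polynomial time, a feasible container packing whose profit is at least $(1 - \epsilon')\optgs$. Chaining the two inequalities,
\[
\textsf{ALG} \ \ge\ (1-\epsilon')\,\optgs \ \ge\ \frac{(1-\epsilon')\,\opt}{139/29 + \epsilon'}.
\]
A direct calculation shows that for $\epsilon'$ chosen as a sufficiently small constant multiple of $\epsilon$, the resulting ratio $\textsf{ALG}/\opt$ exceeds $1/(139/29 + \epsilon)$, which is the desired approximation guarantee. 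Since the feasible container packing can be converted to an actual nonoverlapping packing of items in the knapsack by invoking each container's algorithm $\AM_C$ (which is polynomial-time by definition of guessable container packing), the overall algorithm runs in polynomial time.

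There is no real obstacle here, because all the mathematical content has been pushed into \Cref{thm:mainthm} and \Cref{thm:ptasboxpacking}. The only thing to be slightly careful about is to ensure that the $(1-O(\epsilon))$ loss incurred when packing items inside each container (stated in the definition of container packing and in the packing guarantee lemma for the $\AM_C$'s) is already absorbed into the statement of \Cref{thm:ptasboxpacking}; otherwise one would have to account for an additional multiplicative $(1-O(\epsilon))$ factor here. Assuming this is already absorbed (as the proof of \Cref{thm:ptasboxpacking} suggests), the theorem follows immediately.
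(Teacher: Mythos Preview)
Your proposal is correct and matches the paper's own argument: the theorem is stated immediately after \Cref{thm:mainthm} with the one-line justification ``Together with \Cref{thm:ptasboxpacking}, this gives us the following theorem.'' Your $\epsilon$-rescaling and the observation that the per-container $(1-O(\epsilon))$ loss is already absorbed into \Cref{thm:ptasboxpacking} are exactly right.
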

We prove \Cref{thm:mainthm} in \cref{sec:structurallemma} by establishing several lower bounds on $\optgs$. Specifically, we present five different ways to restructure OPT into a guessable container packing. Our first lower bound is based on a structural result of the 3D Strip Packing problem due to \cite{3d-strip-packing}.
The second lower bound is obtained by restructuring the packing of items in $\text{OPT}\cap (I_{1\ell}\cup L)$ (or similarly $\text{OPT}\cap (I_{2\ell}\cup L)$ or $\text{OPT}\cap (I_{3\ell} \cup L$)). The third and fourth lower bounds are based on volume-based arguments, \ari{using our simple packing algorithm \texttt{3D-Vol-Pack}.} Our final lower bound is based on a structural result from the 2D Knapsack problem \cite{2dks-Lpacking}.

\section{Proof of Structural Lemma}
\label{sec:structurallemma}
Now we prove the structural lemma. 
\paragraph{Classification of items.}
Let $\epsilon>0$ be an accuracy parameter. Let $\mu$ be a sufficiently small constant depending on $\epsilon$ (setting $\mu:= \epsilon^{2^{1/\epsilon^2}}$ suffices). We now classify the input items based on their dimensions. Let $L\subseteq I$ be the set of items whose width, depth and height all exceed $\mu$. Let $I_1\subseteq I$ be the set of items having height at most $\mu$. Similarly, let $I_2\subseteq I\setminus I_1$ be those items whose widths are at most $\mu$, and $I_3 := I\setminus (L\cup I_1 \cup I_2)$ be the remaining items. Thus each item in $I_3$ has depth at most $\mu$.   
Let $I_{1\ell}$ consist of the items in $I_1$ whose width and depth are both more than 1/2, and let $I_{1s} = I_1 \setminus I_{1\ell}$.
See \cref{fig:input-items}.
\begin{figure}
    \centering
    \includesvg[scale=0.4]{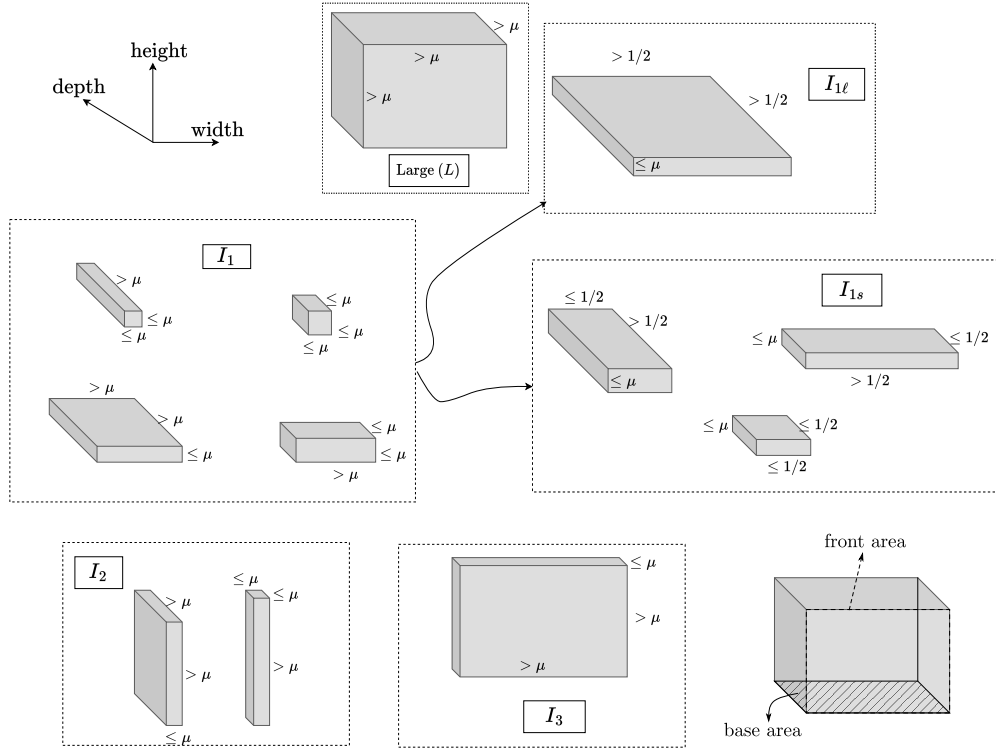}
    \caption{Classification of input items. Front area and base area are shown at the bottom right.}
    \label{fig:input-items}
\end{figure}
The sets $I_{2\ell}, I_{2s}, I_{3\ell}, I_{3s}$ are defined analogously.

Let $\opt_i := p(\text{OPT}\cap I_i)$ for $i\in [3]$, and $\opt_L := p(\text{OPT}\cap L)$, so that $\opt = \opt_1+\opt_2+\opt_3+\opt_L$.
Let $\optonel := p(\text{OPT}\cap I_{1\ell})$ and $\optones := p(\text{OPT}\cap I_{1s})$, so that $\optonel+\optones=\opt_1$. The quantities $\opttwol, \opttwos$ and $\optthreel, \optthrees$ are defined analogously for the sets $I_2$ and $I_3$, respectively.

Finally, let $v_1 := v(\text{OPT}\cap I_1)$, and define $v_2, v_3$ analogously.  Clearly, $v_1, v_2, v_3$ sum up to at most the volume of the knapsack which is 1. We define $\honel := h(\text{OPT}\cap I_{1\ell})$ and $\vones := v(\text{OPT}\cap I_{1s})$. The quantities $\vtwos,\vthrees$ are defined analogously.

\subsection{First lower bound on \texorpdfstring{$\optgs$}{optgs}}
\label{sec:firstlb}
We consider the packing of the items in $\text{OPT}\cap I_1$ (i.e., items with height at most $\mu$) whose total profit is $\opt_1$. We build on the structure derived from the 3D Strip Packing algorithm of Jansen and Pr{\"a}del \cite{3d-strip-packing}. 
However, 
we need several refinements to transform the packing of \cite{3d-strip-packing} into a guessable container packing. 




First, we delete a set of (medium) items from the packing at a small loss in profit.

\begin{lemma}
\label{lem:deletemediumitems}
    \ari{There exists a $\delta \in \left(\epsilon^{2^{O(1/\epsilon)}}, \epsilon\right]$ such that all items in $\text{OPT}\cap I_1$ whose widths or depths lie in the range $[\delta^{10},\delta)$ have a total profit of at most $\epsilon\opt_1$.}
\end{lemma}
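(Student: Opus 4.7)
The plan is a standard averaging argument over a geometrically decreasing sequence of candidate thresholds. Define $\delta_k := \epsilon^{10^k}$ for $k = 0, 1, \ldots, K-1$, where $K := \lceil 2/\epsilon \rceil$. By construction $\delta_0 = \epsilon$ and $\delta_{K-1} \ge \epsilon^{10^{2/\epsilon}} = \epsilon^{2^{O(1/\epsilon)}}$, so every $\delta_k$ lies in the target range $\bigl(\epsilon^{2^{O(1/\epsilon)}}, \epsilon\bigr]$. Moreover, by telescoping, the intervals $[\delta_{k+1}, \delta_k) = [\delta_k^{10}, \delta_k)$ are pairwise disjoint.

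For each $k \in \{0, \ldots, K-1\}$, let $S_k \subseteq \text{OPT} \cap I_1$ consist of those items whose width \emph{or} depth lies in $[\delta_{k+1}, \delta_k)$; if I can show that some $S_{k^*}$ has profit at most $\epsilon \opt_1$, then $\delta := \delta_{k^*}$ witnesses the lemma. The key observation is that an individual item $i$ belongs to at most two of the sets $\{S_k\}$: $w_i$ can fall in at most one of the disjoint intervals, and similarly for $d_i$. This gives
\[
\sum_{k=0}^{K-1} p(S_k) \;\le\; 2\,p(\text{OPT}\cap I_1) \;=\; 2\opt_1.
\]
Averaging over the $K \ge 2/\epsilon$ indices produces some $k^*$ with $p(S_{k^*}) \le 2\opt_1/K \le \epsilon\opt_1$, completing the argument.

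I do not expect a serious obstacle here: the proof is essentially a two-line pigeon-hole, and the only subtle point is correctly accounting for the fact that a single item can be \emph{medium} through either of its two horizontal dimensions, which just inflates the number of probe thresholds by a factor of two and consequently shifts the doubly-exponential lower bound on $\delta$ by a constant inside the $O(\cdot)$. The explicit form $\delta \ge \epsilon^{2^{O(1/\epsilon)}}$ in the statement is exactly what this choice of $K$ yields, since $10^{K-1} = 2^{(K-1)\log_2 10} = 2^{O(1/\epsilon)}$.
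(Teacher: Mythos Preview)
Your proposal is correct and is essentially identical to the paper's proof: the paper also sets $k = 2/\epsilon$, defines the geometric sequence $\delta_1 = \epsilon$, $\delta_i = \delta_{i-1}^{10}$, observes that each item contributes to at most two of the intervals (once via its width and once via its depth), and applies pigeonhole to find an interval with profit at most $\frac{2}{k}\opt_1 = \epsilon\opt_1$. The only differences are notational (indexing from $0$ versus $1$, writing the closed form $\delta_k = \epsilon^{10^k}$ versus the recursion).
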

\begin{proof}
    Set $k=2/\varepsilon$ and define constants $\delta_1,\ldots \delta_{k+1}$ such that $\delta_1=\varepsilon$ and $\delta_i=\delta_{i-1}^{10}.$ Consider the $k$ intervals $[\delta_{i+1}, \delta_i)$ and assign items to them based on their width or depth.  Each item can be in at most two of these intervals, one for its width and one for its depth. Thus, the sum of profits in all of the intervals is at most $2\opt_1.$  Therefore, by pigeonhole principle, there exists a pair $(j+1, j)$ such that the items in $OPT\bigcap I_1$ with widths or depths in the interval $[\delta_{j+1}, \delta_j)$ have a total profit of at most $\frac{1}{k}2\opt_1=2\cdot \frac{\varepsilon}{2}\opt_1=\eps \opt_1$ Setting $\delta=\delta_{j}$ yields $\delta^{10}=\delta_{j+1},$ thereby completing the proof. We can lower bound $\delta$ by $\eps^{(10)^{k}}=\eps^{10^{O(\frac{1}{\eps})}}.$
\end{proof}

We guess the value of $\delta$ guaranteed by the above lemma. Then the items are divided into four classes depending on their dimensions. An item $i$ is \emph{big} if both $w_i \ge \delta$ and $d_i \ge \delta$, \emph{wide} if $w_i \ge \delta$ and $d_i < \delta^{10}$, \emph{long} if $w_i < \delta^{10}$ and $d_i \ge \delta$, and \emph{small} if both $w_i < \delta^{10}$ and $d_i < \delta^{10}$. The following lemma is obtained by an adaptation of the result of \cite{3d-strip-packing}.

\begin{restatable}{lemma}{jansenpradelSP}
\label{lem:jansenpradelSP}
    There exist disjoint sets $T_1,T_2\subseteq \text{OPT}\cap I_1$ having a profit of at least $\left(\frac{2}{3}-O(\epsilon)\right)\opt_1$, such that
    \begin{itemize}
        \item Items of $T_1$ are packed into a collection of $O(1/\delta^6)$ configurations, where each configuration is a box having a $1\times 1$ base, that is partitioned 
    into  $O(1/\delta^2)$ (cuboidal) slots for packing big items, $O(1/\delta^6)$ slots for packing wide and long items (using \NFDH), and $O(1/\delta^5)$ slots for packing small items (using \tNFDH). The height of a slot is the same as the height of the corresponding configuration,
        \item The configurations are stacked one on top of the other inside the knapsack and their heights sum up to at most $1-2\epsilon$,
        \item The set $T_2$ contains only wide, long and small items and has a total volume of $O(\delta^4)$. 
    \end{itemize}
\end{restatable}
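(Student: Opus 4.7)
The plan is to invoke the structural decomposition underlying the $(3/2+\epsilon)$-approximation algorithm of Jansen and Pradel for 3D Strip Packing~\cite{3d-strip-packing}, followed by a greedy shifting step that selects a subcollection of the resulting configurations fitting inside the unit cube.

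First, I apply \cref{lem:deletemediumitems} to obtain the threshold $\delta$ and discard from $\text{OPT}\cap I_1$ the items whose width or depth lies in $[\delta^{10},\delta)$, at a profit cost of at most $\epsilon\opt_1$. The surviving items are classified into the four classes big, wide, long, and small according to the paper's conventions.

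Next, I observe that the items of $\text{OPT}\cap I_1$ pack inside the unit cube, so their 3D strip-packing optimum is at most $1$. Invoking the structural theorem that underlies the Jansen--Pradel $(3/2+\epsilon)$-approximation on these (classified) items yields a packing into $O(1/\delta^6)$ stacked configurations of total height $H \le 3/2 + O(\epsilon)$. Each configuration is a box with a $1\times 1$ base, subdivided into $O(1/\delta^2)$ cuboidal slots holding big items, $O(1/\delta^6)$ slots in which wide and long items are packed by \NFDH, and $O(1/\delta^5)$ slots in which small items are packed by \tNFDH, the height of each slot coinciding with the height of its containing configuration. The structural result also leaves aside a residual set $T_2\subseteq\text{OPT}\cap I_1$ of wide, long, and small items of total volume $O(\delta^4)$; denote the remaining packed items by $T_1^\ast$.

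The final step reduces the stack height from $H$ down to $1-2\epsilon$ while losing only a $(1/3+O(\epsilon))$-fraction of $p(T_1^\ast)$. I sort the configurations in non-increasing order of profit density (profit per unit height) and greedily add them to $T_1$ until including the next configuration would violate the height budget $1-2\epsilon$. The fractional relaxation of this selection achieves value at least
\begin{equation*}
\tfrac{1-2\epsilon}{H}\cdot p(T_1^\ast) = \left(\tfrac{2}{3}-O(\epsilon)\right)p(T_1^\ast),
\end{equation*}
and the greedy integer solution differs by at most the profit of one configuration. A standard preprocessing that enumerates the (at most $O(1/\epsilon)$) configurations whose individual profit exceeds $\epsilon\opt_1$ makes this rounding loss at most $O(\epsilon)\opt_1$. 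Combining the bounds, $p(T_1)+p(T_2) \ge \tfrac{2}{3}(p(T_1^\ast)+p(T_2)) - O(\epsilon)\opt_1 \ge (2/3 - O(\epsilon))\opt_1$, as required.

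The principal technical obstacle is faithfully adapting the decomposition of~\cite{3d-strip-packing} to produce exactly the slot structure stated here: verifying the slot counts per configuration and, crucially, the $O(\delta^4)$ volume bound on $T_2$. Jansen--Pradel's proof is written in the (unprofited) strip-packing setting, so the adaptation must carefully track which items are shifted into $T_2$ to ensure that no profit is implicitly discarded beyond what is accounted for. The shifting step afterwards is routine once the structural decomposition is in hand.
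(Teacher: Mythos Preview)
Your approach reverses the order of two steps relative to the paper, and this reversal creates a genuine gap in the selection step that your ``standard preprocessing'' does not close.

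The paper first removes a $(1/3+O(\epsilon))$-fraction of profit from the \emph{original} packing of $\text{OPT}\cap I_1$: cut the unit cube by horizontal planes at heights $1/3$ and $2/3$, discard the items lying entirely in the least profitable of the three slabs, then shave off one thin horizontal strip of height $O(\epsilon)$ to absorb items straddling the cuts. This yields a set $\text{OPT}'_1$ of profit at least $(2/3-O(\epsilon))\opt_1$ that already fits in height $2/3-4\epsilon$. Only then is Jansen--Pr\"adel invoked, and the output configurations have total height at most $(3/2+\epsilon)(2/3-4\epsilon)+\epsilon+f(1/\epsilon)\mu\le 1-2\epsilon$. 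No selection among configurations is needed.

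Your plan applies Jansen--Pr\"adel to all of $\text{OPT}\cap I_1$ (total configuration height $H\le 3/2+O(\epsilon)$) and then tries to select a subcollection of configurations fitting in height $1-2\epsilon$. The difficulty is that nothing in the Jansen--Pr\"adel construction bounds the height of an \emph{individual} configuration; heights come from an LP and a single configuration can occupy a large fraction of $H$. Concretely, suppose the output consists of two configurations with heights $0.8$ and $0.7$, each carrying profit $\tfrac12\,p(T_1^\ast)$. The fractional knapsack value for budget $1-2\epsilon$ is indeed about $(2/3)p(T_1^\ast)$, but \emph{every} integral selection (not just greedy) takes at most one of them, achieving only $\tfrac12\,p(T_1^\ast)$. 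Your fix---enumerating the configurations of profit exceeding $\epsilon\opt_1$---does not help here: both configurations are high-profit, the enumeration ranges over $\{\emptyset,\{A\},\{B\},\{A,B\}\}$, and the best feasible choice still has profit $\tfrac12\,p(T_1^\ast)$. The obstruction is the integral optimum of the selection subproblem, not the greedy rounding loss.

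To rescue your order of operations you would have to horizontally subdivide tall configurations into pieces of height $O(\epsilon)$ and argue that the slot structure survives with only $O(\epsilon)$ profit lost to the cuts (using that item heights are at most $\mu$). That can be made to work, but it is essentially the paper's slab-removal argument transplanted onto the Jansen--Pr\"adel output; it is cleaner to perform the slab removal on the original packing first, as the paper does.
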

\begin{proof}
    First we delete a set of items by losing an $(\frac{1}{3}+O(\epsilon))$-fraction of profit so that the remaining ones fit within a height of $\frac{2}{3}-4\epsilon$.

\begin{claim}
\label{lem:optspsmall}
     By discarding items having a total profit of at most $\left(\frac{1}{3}+O(\epsilon)\right)\opt_1$, the remaining items of $\text{OPT}\cap I_1$ can be packed within a height of at most $\frac{2}{3}-4\epsilon$.
\end{claim}
\begin{proof}
    We consider two horizontal planes at heights of $1/3$ and $2/3$ from the base of the knapsack, which partition it into three regions as shown in \Cref{fig:strip-removal}. Then one of these regions must satisfy the property that the items completely lying inside the region have a total profit of at most $\opt_1/3$. We discard all such items that lie completely in the interior of the region. Thus the remaining items can be packed within a height of at most $\frac{2}{3}+2\mu\le \frac{2}{3}+2\epsilon$, where the extra $2\mu$ accounts for the heights of the items that are cut by the two planes. Next we consider horizontal strips of height $6\epsilon$ each, and discard all items intersecting the minimum profitable strip (see \Cref{fig:strip-removal}). The profit lost by this is only $O(\epsilon)\opt_1$, and the surviving items are all packed within a height of at most $\frac{2}{3}-4\epsilon$.
    \begin{figure}
    \centering
    \includesvg[width=\linewidth]{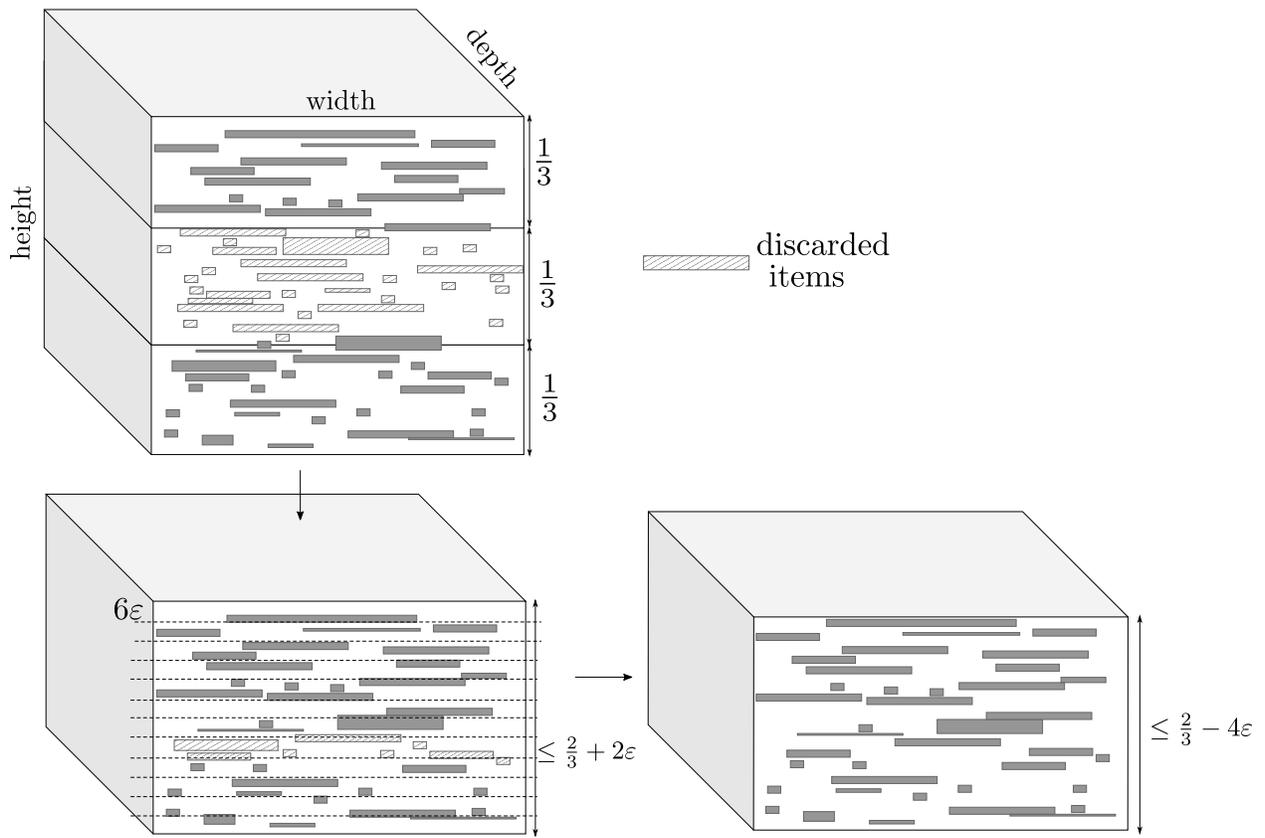}
    \caption{Removing low profitable strips to obtain a packing within a height of $2/3-4\eps$.}
    \label{fig:strip-removal}
\end{figure}
\end{proof}
\begin{figure}
    \centering
		\includesvg[width=0.9\linewidth]{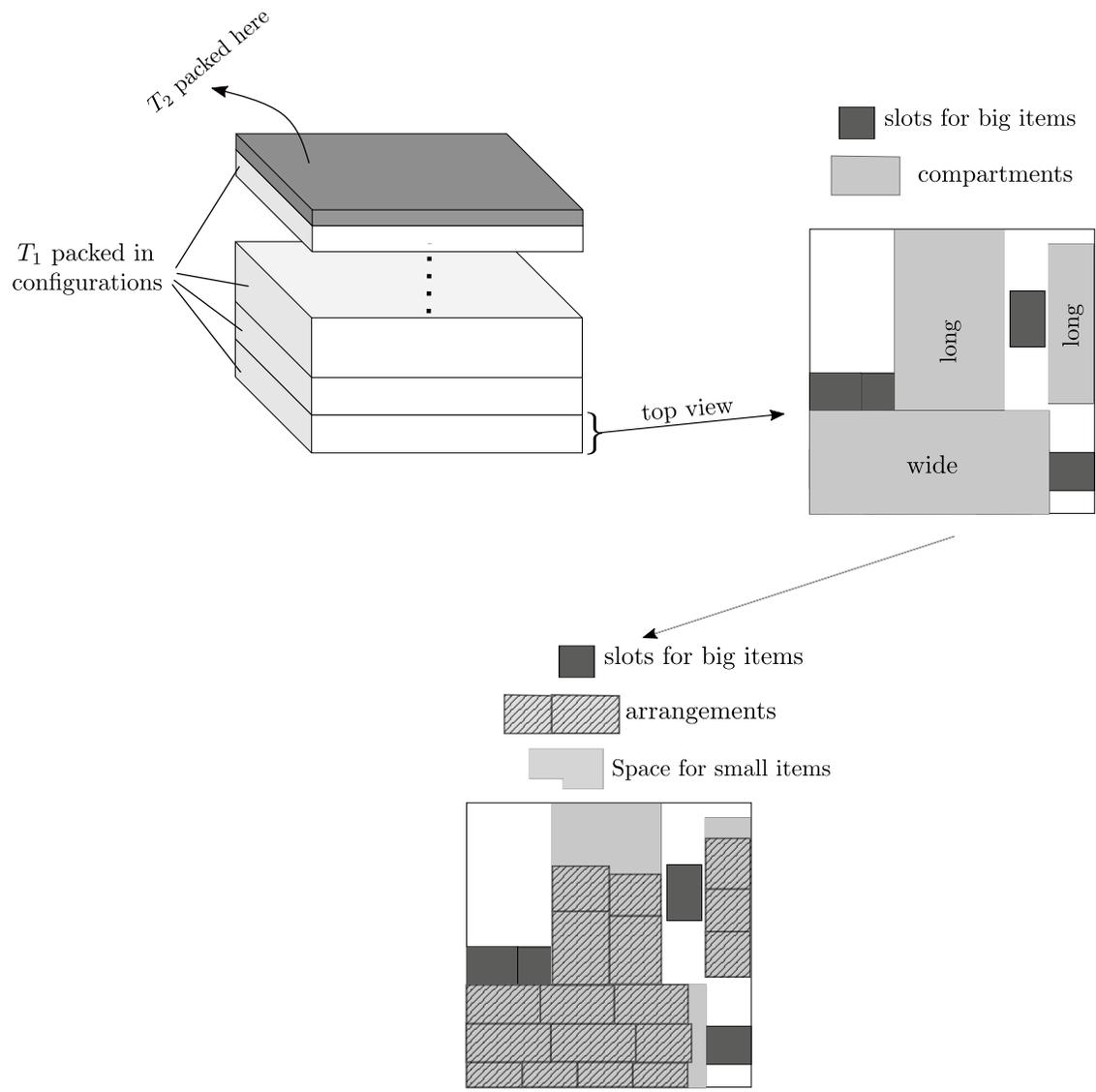}
    \caption{The division of configurations into containers which are further divided into arrangements.}
    \label{fig:jp-configurations}
\end{figure}

Let $\text{OPT}'_1$ be the items that remain after applying \Cref{lem:optspsmall}. 
The algorithm of \cite{3d-strip-packing} repacks a subset $T_1 \subseteq \text{OPT}'_1$ into a collection of $m = O(1/\delta^6)$ {\em configurations}. We first describe the packing inside the configurations. Each such configuration $C$ is a box of height $h_C$ with a base of dimensions $1\times 1$ that is partitioned into at most $1/\delta^2$ {\em compartments} for big items, and at most $O(1/\delta^3)$ {\em compartments} for packing the wide, long, and small items (see \cref{fig:jp-configurations}).\footnote{Compartment is just a rectangular region.
In \cite{3d-strip-packing}, the term container was used to refer to compartment. In this work, we use compartment to avoid ambiguity.} 
\kvn{\cite{jansen-pradel-bin-packing,3d-strip-packing} also ensure that the depth (resp. width) of a wide (resp. long) compartment is at least $\delta^4$.}
Since the items of $\text{OPT}'_1$ were packed within a height of at most $\frac{2}{3}-4\epsilon$ due to \Cref{lem:optspsmall}, the result of \cite{3d-strip-packing} implies that the sum of the heights of all the $m$ configurations is at most $\left(\frac{3}{2}+\epsilon\right)\left(\frac{2}{3}-4\epsilon\right)+\epsilon+
f(1/\eps)\mu$, where $f$ is a function that is independent of $n$.  
This is at most $1-3\epsilon$, for $\mu \le \eps / f(1/\eps)$ and sufficiently small $\epsilon$.


For packing items inside the configurations, first the height of each configuration is extended by $\mu$. The big items are packed into their respective compartments by stacking them one on top of the other, using a result of Lenstra, Shmoys and Tardos \cite{lenstra-scheduling}.
We now describe the packing inside the wide compartments. The packing in the long compartments is analogous.


\begin{figure}
    \centering
        \includesvg[width=0.5\linewidth]{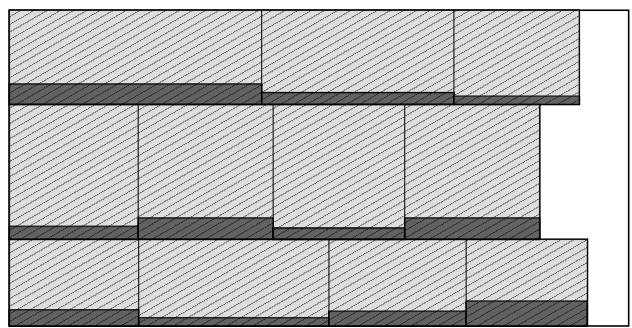}
    \caption{Showing three arrangements (hatched) inside a compartment. Notice how the width of each arrangement is exactly equal to the total width of a multiset of wide items.}
\label{fig:arrangements}
\end{figure}

\begin{figure}
    \centering
    \includesvg[width=\linewidth]{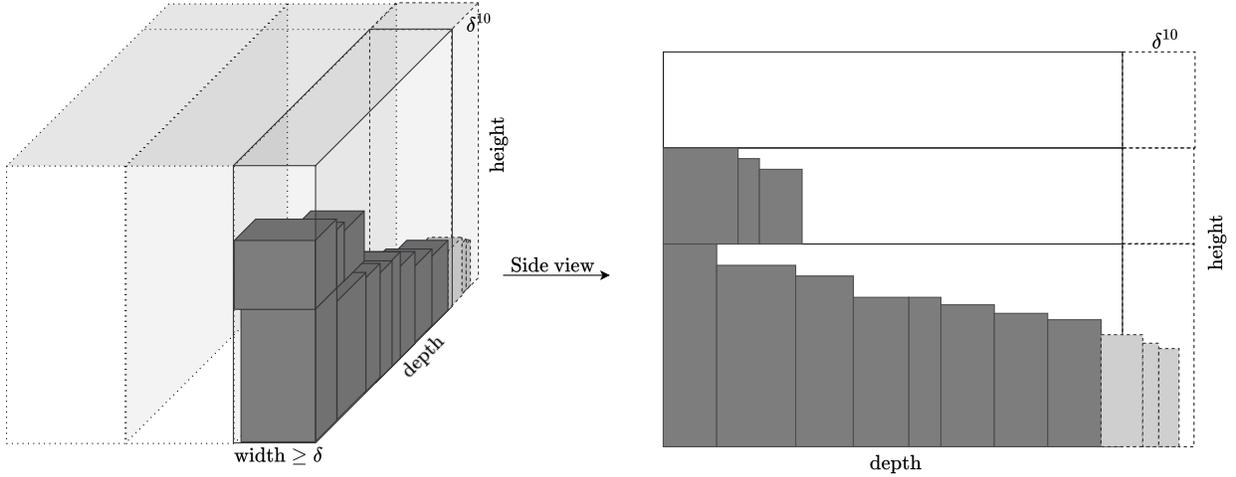}
    \caption{Packing wide items using \NFDH~inside an extended slot}
\label{fig:nfdhcrossing}
\end{figure}

Each wide compartment is divided into at most $O(1/\delta^2)$ \textit{arrangements}.\footnote{Again, in \cite{3d-strip-packing, jansen-pradel-bin-packing}, the term configuration was used instead of arrangement.} Each arrangement is a region whose width is equal to some multi-set of wide items that have a total width of at most the width of the compartment (see \cref{fig:arrangements}). Since the width of any wide item is at least $\delta$, each arrangement consists of at most $1/\delta$ slots. 
To pack the wide items inside the slots, in the algorithm of \cite{3d-strip-packing}, first the depth of the slot is extended by $\delta^{10}$, and then the items are packed using \NFDH~inside the extended slot (see \Cref{fig:nfdhcrossing}). 
For each slot width, it happens exactly once that we run out of items while packing using \NFDH. In that case, we swap the corresponding slot with the rightmost slot in the arrangement. If this happens for multiple slots inside an arrangement, then we sort the slots from left to right in order of non-increasing packing heights. Also each time this happens, the whole configuration is split into two at the corresponding height, and the height of the lower configuration is extended by $\mu$ so that the cut items still fit inside the configuration. The number of configurations only increases by $O(1/\delta^2)$ as a result of this. The sum of heights of all the resulting configurations is bounded by $(1-3\eps)+\mu\cdot O(1/\delta^6) \le 1-2\eps$, for sufficiently small $\mu$.

Next, the space on the right side of the arrangements is partitioned into \emph{slots} for packing the small items. Since each compartment contains $O(1/\delta^2)$ arrangements and a configuration has $O(1/\delta^3)$ compartments, there are $O(1/\delta^5)$ slots for the small items inside each configuration. For each such slot, the width and depth are extended by $\delta^{10}$ and the items are packed using \tNFDH~inside the extended slot. The packing inside the long compartments is done in an analogous way. We next show that the overshooting items have negligible volume.

\begin{figure}[t]
    \centering
    \begin{tikzpicture}[scale=0.5]
                \draw[color=white,dashed,fill=white!75!black] (0,5)--(0,6.5)--(10.5,6.5)--(10.5,0)--(9,0)--(9,5)--cycle;
                \draw [very thick] (0,0) rectangle (9,5);
                \draw[very thick, dashed] (0,5)--(0,6.5)--(10.5,6.5)--(10.5,0)--(9,0);
                \node at (-0.75, 6) {$\delta^{10}$};
                \node at (10, -0.75) {$\delta^{10}$};  
                \foreach \x/\y/\xx/\yy in {
                    0/0/1.5/1.5,
                    1.5/0/3.1/1.45,
                    3.1/0/4.5/1.43,
                    4.5/0/5.8/1.40,
                    5.8/0/7.2/1.38,                   
                    0/1.5/1.3/2.8,
                    1.3/1.5/2.5/2.75,
                    2.5/1.5/4/2.73,
                    4/1.5/6/2.7,
                    6/1.5/7.5/2.68,
                    7.5/1.5/9/2.67,
                    0/2.8/1.4/4.65,
                    1.4/2.8/2.7/4.63,
                    2.7/2.8/4.1/4.61,
                    4.1/2.8/5.5/4.58,
                    5.5/2.8/6.8/4.55,
                    6.8/2.8/8.2/4.5
                }
                {
					\drawGreenItem{\x}{\y}{\xx}{\yy};
				}
                    \foreach \x/\y/\xx/\yy in {
                    7.2/0/9.1/1.35,
                    9.1/0/9.8/1.33,
                    8.2/2.8/9.5/4.45,
                    0/4.65/1.4/5.8,
                    1.4/4.65/2.6/5.78,
                    2.6/4.65/4.1/5.76,
                    4.1/4.65/5.3/5.74,
                    5.3/4.65/6.6/5.72,
                    6.6/4.65/7.3/5.6,
                    7.3/4.65/8.1/5.55,
                    8.1/4.65/9.1/5.52,
                    9.1/4.65/9.8/5.4
                }
                {
                    \drawYellowItem{\x}{\y}{\xx}{\yy};
                }                           
            \end{tikzpicture}
            \caption{Items overshooting the slot are shown as darker. These items can be repacked using \tNFDH.}
\label{fig:smallovershootrepack}
\end{figure}
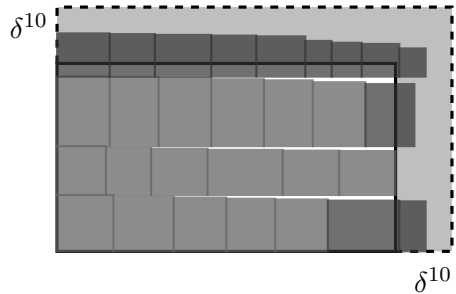

The set $T_2$ consists of the wide, long and small items that overshoot their respective slots, together with all the small items that are packed in slots of width (resp. depth) smaller than $\delta^9$ inside the wide (resp. long) compartments. Consider first a (extended) slot for the wide items. Since the depth of the extension as well as the depth of any wide item are bounded by $\delta^{10}$, the total depth of the items not completely lying inside the original slot is at most $2\delta^{10}$. Since there are $O(1/\delta^6)$ such slots inside each configuration and the configurations have a total height not exceeding 1, the volume of the overshooting wide items is bounded by $2\delta^{10}\cdot O(1/\delta^6) = O(\delta^4)$. Analogously, the volume of the overshooting long items is also at most $O(\delta^4)$. Next consider a slot for small items. The total base area of the overshooting small items of any layer is at most $4\delta^{10}$ (see \Cref{fig:smallovershootrepack}). Thus their total volume is bounded by $4\delta^{10}\cdot O(1/\delta^5)=O(\delta^5)$. Finally, consider a slot of width (resp. depth) less than $\delta^9$ in which small items are packed inside a wide (resp. long) compartment. The total volume of all such slots is again bounded by $\delta^9\cdot O(1/\delta^5)=O(\delta^4)$. Hence we have $v(T_2)\le O(\delta^4)$ and are done.
\end{proof}


Our objective is to build Area and Volume containers out of the slots for packing the wide/long and small items, respectively. The trouble is that the slot dimensions might not be large enough compared to the dimensions of the packed items, which is a necessary condition for the construction of such containers (see \Cref{sec:cont-class}). 
We obtain a guessable container packing by the following lemma.


\begin{restatable}{lemma}{roundingconfigTone}
\label{lem:roundingconfigTone}
    By losing a profit of at most $\delta \opt_1$, and rounding the height of each configuration to the next integer multiple of $\mu/\epsilon$, each configuration for $T_1$ can be further partitioned into $O(1/\delta^6)$ guessable containers. Moreover, the items of $T_2$ can be packed inside a Steinberg container of height $\eps$.
\end{restatable}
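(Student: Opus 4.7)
The plan has three stages: (i) round the configuration heights so that the smallness conditions for our container types are satisfied, (ii) reinterpret each internal slot of each configuration as a guessable container of the appropriate type, and (iii) pack $T_2$ into a single Steinberg container placed above all configurations.

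I would start by rounding the height of each of the $O(1/\delta^6)$ configurations up to the next integer multiple of $\mu/\eps$. Since $\mu = \eps^{2^{1/\eps^2}}$ is chosen much smaller than $\eps^2\delta^6$, the total added height across all configurations is at most $O(1/\delta^6)\cdot(\mu/\eps)\le \eps$. Combined with the bound $\sum_C h_C \le 1-2\eps$ from Lemma~\ref{lem:jansenpradelSP}, this leaves a vertical slab of height at least $\eps$ above the stack of configurations. Crucially, after the rounding every configuration satisfies $h_C \ge \mu/\eps$, so any item of height at most $\mu$ has $h_i \le \eps h_C$.

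Next, I would convert each slot of each configuration into a container. Each compartment holding big items becomes a Stack container along height; big items fit by construction ($w_i\le w_C$, $d_i \le d_C$) and are stacked one per layer. Standard linear grouping on big-item widths and depths yields $O(1/\delta^2)$ distinct rounded sizes at a profit loss of at most $\delta \opt_1$, keeping the set of compartment dimensions polynomial. Each slot holding wide items becomes an Area container using \NFDH on its left face. I would verify the smallness conditions: $w_i \le w_C$ (the slot width equals the item width by construction), $d_i < \delta^{10} \le \delta^6 d_C \le \eps d_C$ (using the guarantee $d_C \ge \delta^4$ from \cite{3d-strip-packing,jansen-pradel-bin-packing}), and $h_i \le \mu \le \eps h_C$. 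The total front area of wide items in the slot is at most $w_C h_C$, so by the Area-container guarantee of \cref{sec:cont-class}, a $(1-O(\eps))$-profit prefix packs via \NFDH. A symmetric argument treats long items via Area containers using \NFDH on the front face. Each slot holding small items becomes a Volume container: $w_i,d_i < \delta^{10} \le \delta\cdot\delta^9 \le \eps\min(w_C,d_C)$ because only slots of width and depth at least $\delta^9$ survive (smaller ones are already in $T_2$), and $h_i \le \mu \le \eps h_C$. For guessability, container heights lie in an $O_\eps(1)$-sized set of multiples of $\mu/\eps$, while widths and depths come from polynomial-sized sets: big-compartment dimensions after linear grouping, wide/long-slot widths are drawn from input item widths or from sums of at most $1/\delta$ of them, and small-slot dimensions are dictated by the arrangement geometry. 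Each configuration contributes $O(1/\delta^6)$ containers in total.

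Finally, for $T_2$: I would place a single Steinberg container $C^\star$ of dimensions $1\times 1\times \eps$ above the stack of configurations. Its Steinberg-capacity $v(C^\star)/3 = \eps/3$ comfortably dominates $v(T_2) = O(\delta^4)$ for sufficiently small $\delta$. Every item of $T_2$ is wide, long, or small, so either $w_i < \delta^{10} \le 1/2$ or $d_i < \delta^{10} \le 1/2$ holds, satisfying the half-dimension requirement; moreover $h_i \le \mu \le \eps^2 = \eps h_{C^\star}$. Therefore Lemma~\ref{lem:steinberg} packs all of $T_2$ into $C^\star$ via \texttt{3D-Vol-Pack}. The main obstacle will be the bookkeeping required to verify the smallness conditions for Area containers holding wide or long items---in particular, correctly invoking the guarantee from \cite{3d-strip-packing,jansen-pradel-bin-packing} that the ``small'' horizontal dimension of a wide (resp.\ long) compartment is at least $\delta^4$---and ensuring that the set of distinct slot widths is polynomial in $n$, which follows from each arrangement consisting of at most $1/\delta$ input wide-item widths.
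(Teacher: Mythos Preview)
Your proposal has a genuine gap in stage (ii): you conflate the depth of a wide \emph{compartment} with the depth of an individual \emph{slot} (arrangement) inside it. The guarantee you cite from \cite{3d-strip-packing,jansen-pradel-bin-packing}, namely $d_C \ge \delta^4$, applies only to the compartment as a whole. Inside a wide compartment the arrangements are stacked along the depth direction, and each arrangement---hence each slot for wide items and each adjoining slot for small items---can have arbitrarily small depth (these depths come from a fractional LP solution in the underlying 2D bin-packing structure). So your inequality $d_i < \delta^{10} \le \delta^6 d_C$ does not hold for the slot, and the Area/Volume container smallness condition $d_i \le \eps d_C$ is unjustified. For the same reason, the slot depths are not a priori drawn from a polynomial-sized set, so guessability of these containers is also unestablished.

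The paper fixes both issues simultaneously with a step you omit: inside each wide compartment it removes a least-profitable horizontal strip of depth $\delta^5$ (this is the actual source of the $O(\delta)\opt_1$ profit loss in the lemma statement, not linear grouping on big items) and uses the freed space to round every arrangement depth up to the next multiple of $\delta^9$. After this rounding, every slot has depth at least $\delta^9$, giving $d_i < \delta^{10} \le \delta\cdot d_C$, and the slot depths lie in a finite set of multiples of $\delta^9$. Your linear-grouping step on big-item widths and depths is unnecessary---the paper simply sets each big-item Stack container's width and depth equal to that of some input item, which is already polynomial---and it does not substitute for the missing slot-depth discretization. Stages (i) and (iii) of your plan are fine and match the paper.
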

\begin{proof}
We begin by discretizing the depths of the slots inside the wide compartments.

\begin{claim}
\label{lem:slotsandregions}
     By losing a profit of at most $O(\delta)\opt_1$, the depths of all the slots for wide and small items inside a wide compartment can be assumed to be a multiple of $\delta^9$.
\end{claim}
\begin{proof}
 We partition the compartment into strips each of depth $\delta^5$ and discard the minimum profitable strip, see \cref{fig:minStrip} for an illustration. Note that the depth of the compartment is a multiple of $\delta^4$, and hence the number of strips is at least $1/\delta$. Since each item has depth at most $\delta^{10}$, any item intersects at most two strips, and hence the total profit of the deleted items is at most an $O(\delta)$-fraction of the profit of the compartment. For any slot that is intersected by the deleted strip, we push the remaining items towards the front face of the slot, and notice that
 in the resulting packing, the wide and small items are packed using \NFDH (ignoring widths) and \tNFDH, respectively.

 \begin{figure}
    \centering
    \includesvg[width=0.7\linewidth]{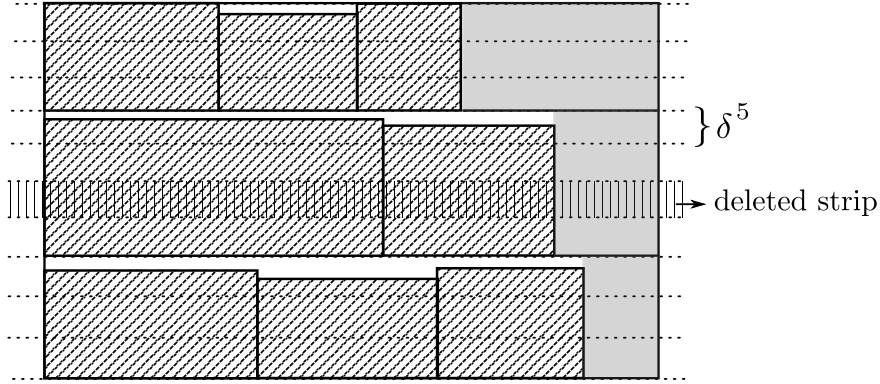}
    \caption{Rounding the depths of slots in a wide compartment. \kvnnew{The gray regions contain the small items.} All items intersecting the deleted strip are discarded.}
    \label{fig:minStrip}
\end{figure}
    
The resulting free space of depth $\delta^5$ inside the compartment will now be used to increase the depth of each arrangement. Specifically, we round up the depth of each arrangement (and hence each slot) to the next integral multiple of $\delta^9$. Since the total number of arrangements is $O(1/\delta^2)$, the total increase in depth is bounded by $O(\frac{1}{\delta^2})\cdot \delta^9 < \delta^5$.  
\end{proof}

In an analogous way, the widths of the slots inside long compartments can be made integral multiples of $\delta^9$. We next round up the heights of each of the configurations to integer multiples of $\mu/\epsilon$. Finally, we create the containers out of the configurations.

Recall that each configuration comprises $1/\delta^2$ compartments for big items, and $O(1/\delta^3)$ compartments for packing the wide, long, and small items. We create a Stack container corresponding to each of the compartments for the big items. The width (resp. depth) of this container is the same as the width (resp. depth) of some big item. Each wide compartment was further partitioned into $O(1/\delta^3)$ slots for packing the wide and small items. Note that due to \Cref{lem:slotsandregions}, the depth of a wide/small item is at most a $\delta$-fraction of the depth of the corresponding slot. Also, since we put the small items packed inside slots of width smaller than $\delta^9$ in the set $T_2$, the width of any small item in the set $T_1$ is at most a $\delta$-fraction of the width of the slot in which it is packed. Finally, the height of any configuration is at least $\mu/\epsilon$, which is at least an $1/\epsilon$-factor larger than the height of any item in $I_1$. We create an Area container corresponding to each slot for the wide items and a Volume container for each slot for the small items. The width of an Area container is the same as the width of some item in $I_1$. Also since the width of a wide compartment can have  $O(1/\delta^2)$ distinct values, and the width of a Volume container is either the same as the width of the wide compartment or equals the difference of the width of the compartment and the width of some slot (Area container), the width of a Volume container also lies in a polynomial-sized set. An analogous thing is done for the slots inside the long compartments. Finally, the heights of each of the  Stack, Area, and Volume containers are set to be the height of the configuration. Overall, we obtain $O(1/\delta^6)$ guessable containers corresponding to each configuration.

Since $T_2$ does not contain any big item and $v(T_2)\le O(\delta^4)$ by \Cref{lem:jansenpradelSP}, the conditions of \Cref{lem:steinberg} are trivially satisfied, and thus the whole set $T_2$ can be packed into a Steinberg container of height $\eps$.
\end{proof}

After applying the above lemma, the rounded heights of all the configurations for $T_1$ sum up to at most $(1-2\epsilon)+\frac{\mu}{\epsilon}\cdot O(\frac{1}{\delta^{6}}) \le 1-\eps$, for sufficiently small $\mu$. Since items of $T_2$ are packed inside a Steinberg container of height $\eps$, all the containers fit inside the knapsack and we obtain the following theorem overall.

\begin{theorem}
\label{thm:guessableconfig}
    We can repack a subset of $\text{OPT}\cap I_1$ with profit at least $\left(\frac{2}{3}-O(\epsilon)\right)\opt_1$ into \ari{$O(1/{\delta^{12}})$} guessable Stack, Area, and Volume containers. 
\end{theorem}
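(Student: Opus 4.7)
The plan is to combine Lemma~\ref{lem:jansenpradelSP} and Lemma~\ref{lem:roundingconfigTone} in a straightforward bookkeeping fashion, since the heavy lifting is already done by those two results. First I would invoke Lemma~\ref{lem:jansenpradelSP} on $\text{OPT}\cap I_1$ to obtain disjoint subsets $T_1, T_2 \subseteq \text{OPT}\cap I_1$ whose combined profit is at least $\left(\frac{2}{3}-O(\epsilon)\right)\opt_1$, where $T_1$ sits inside $O(1/\delta^6)$ configurations (each a box with a $1\times 1$ base) whose heights sum to at most $1-2\epsilon$, and where $T_2$ consists of wide, long, and small items with total volume $O(\delta^4)$.

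Next I would apply Lemma~\ref{lem:roundingconfigTone}, which at an additional profit cost of $\delta\,\opt_1 = O(\epsilon)\opt_1$ does two things: (i) it rounds each configuration's height up to an integer multiple of $\mu/\epsilon$ and partitions the configuration into $O(1/\delta^6)$ guessable containers --- one Stack container per big-item compartment, one Area container per slot inside a wide or long compartment (for the wide/long items packed via \NFDH), and one Volume container per slot for the small items (packed via \tNFDH) --- and (ii) places the whole of $T_2$ inside a single Steinberg container of height $\epsilon$. Multiplying $O(1/\delta^6)$ configurations by $O(1/\delta^6)$ containers each yields the claimed $O(1/\delta^{12})$ guessable Stack, Area, and Volume containers, plus one extra (negligible) Steinberg container carrying $T_2$.

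It remains only to confirm that every container fits inside the unit knapsack. The rounded configuration heights sum to at most $(1-2\epsilon)+(\mu/\epsilon)\cdot O(1/\delta^6)\le 1-\epsilon$ once $\mu$ is chosen sufficiently small relative to $\epsilon$ and $\delta$, leaving a horizontal slab of height $\epsilon$ at the top in which the Steinberg container for $T_2$ is stacked. The total profit lost is $O(\epsilon)\opt_1+O(\delta)\opt_1=O(\epsilon)\opt_1$ since $\delta\le\epsilon$, so the packed subset retains profit at least $\left(\frac{2}{3}-O(\epsilon)\right)\opt_1$ as required.

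The only point that merits a sentence of care is the \emph{guessability} of each produced container: every container's width and depth is either the dimension of some input item, a slot dimension (now an integer multiple of $\delta^9$ thanks to the rounding in the proof of Lemma~\ref{lem:roundingconfigTone}), or a difference of two such quantities, and every container's height is the same multiple of $\mu/\epsilon$ as its enclosing configuration. All of these quantities live in polynomially sized sets, so the containers are guessable in the sense of Section~\ref{subsec:cp}. No genuinely new argument is needed --- the theorem is essentially a corollary of the two preceding lemmas.
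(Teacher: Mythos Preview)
Your proposal is correct and mirrors the paper's own argument almost verbatim: the paper likewise derives the theorem as an immediate corollary of Lemma~\ref{lem:jansenpradelSP} and Lemma~\ref{lem:roundingconfigTone}, checking that the rounded configuration heights sum to at most $(1-2\epsilon)+(\mu/\epsilon)\cdot O(1/\delta^{6})\le 1-\epsilon$ so the Steinberg container for $T_2$ fits on top. Your added remarks on guessability and the container count $O(1/\delta^6)\cdot O(1/\delta^6)=O(1/\delta^{12})$ are accurate elaborations of what the paper leaves implicit.
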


An equivalent restructuring of $\text{OPT}\cap I_2$ and $\text{OPT}\cap I_3$ gives the following corollary.

\begin{restatable}{corollary}{structuredpackingalgo}
\label{thm:structuredpackingalgo}
    $\optgs \ge \left(\frac{2}{3}-O(\epsilon)\right) \max\{\opt_1, \opt_2, \opt_3\}$.
\end{restatable}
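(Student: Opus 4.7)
The plan is to apply \Cref{thm:guessableconfig} three times, once for each of the three ``small-dimension'' classes $I_1$, $I_2$, $I_3$, exploiting the inherent symmetry of the construction. The theorem itself gives $\optgs \ge (2/3 - O(\eps))\opt_1$ directly, so the only work is to argue that the same restructuring argument goes through verbatim after swapping the roles of the axes.

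First, I would observe that nothing in the proofs of \Cref{lem:deletemediumitems}, \Cref{lem:jansenpradelSP}, \Cref{lem:roundingconfigTone}, and \Cref{thm:guessableconfig} uses any property of the height axis beyond the fact that every item in $I_1$ has its ``short'' dimension bounded by $\mu$. The configurations in \Cref{lem:jansenpradelSP} are stacked along the short dimension of the items; the Stack, Area, and Volume containers that come out of \Cref{lem:roundingconfigTone} are oriented accordingly; and all three kinds of containers have symmetric definitions along each of the three axes in \Cref{sec:cont-class}. Hence, by permuting coordinates so that the width (respectively depth) plays the role formerly played by the height, exactly the same sequence of steps yields guessable container packings using the alternate orientations of the containers shown in \Cref{fig:large-stack-containers}.

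Concretely, I would state: repeating the proof of \Cref{thm:guessableconfig} with the height axis replaced by the width axis, and with $I_1$, $I_{1\ell}$, $I_{1s}$ replaced by $I_2$, $I_{2\ell}$, $I_{2s}$, we obtain a guessable container packing of a subset of $\text{OPT}\cap I_2$ with profit at least $(2/3 - O(\eps))\opt_2$; the configurations are now stacked along the width, and the resulting Stack, Area, Volume, and Steinberg containers are the ones oriented along the width axis. An identical argument for $I_3$ (stacking configurations along the depth) gives a guessable container packing of profit at least $(2/3 - O(\eps))\opt_3$.

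Each of these three constructions produces a valid guessable container packing on its own, so
\[
\optgs \ge \max\left\{\left(\tfrac{2}{3}-O(\eps)\right)\opt_1,\ \left(\tfrac{2}{3}-O(\eps)\right)\opt_2,\ \left(\tfrac{2}{3}-O(\eps)\right)\opt_3\right\} = \left(\tfrac{2}{3}-O(\eps)\right)\max\{\opt_1,\opt_2,\opt_3\},
\]
which is the claimed bound. There is no real obstacle here; the only thing worth being careful about is checking that the container definitions in \Cref{sec:cont-class} are truly axis-symmetric (which \Cref{fig:large-stack-containers} confirms, since Stack, Area, and Steinberg containers are each provided in three axis-aligned variants), so that the containers produced in the width- and depth-stacked variants are still guessable and still admit the packing guarantees used in the proof of \Cref{thm:guessableconfig}.
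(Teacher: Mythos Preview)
Your proposal is correct and follows exactly the same approach as the paper: the paper simply states that ``An equivalent restructuring of $\text{OPT}\cap I_2$ and $\text{OPT}\cap I_3$ gives the following corollary,'' and you have spelled out this symmetry argument in more detail than the paper itself does.
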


\subsection{A simple \texorpdfstring{$(5+\eps)$}{5+ε}-approximation}
\label{sec:5approx}
We now present a simple $(5+\epsilon)$-approximation for \tdk, which already improves upon the $(7+\epsilon)$-approximation \cite{3d-knapsack}. For this, first, we provide the following packing guarantee that will also be useful later.

\begin{lemma}
\label{lem:stackandsteincombine}
    Given a box $B$ whose base has dimensions $w_B\times d_B$ and height 1. Let $T$ be a set of items with $v(T)\le v(B)/4$ such that each $i\in T$ has $w_i \le w_B$, $d_i \le d_B$ and  $h_i\le \epsilon^4$. Let $T_{\ell} := \{i\in T \mid w_i > w_B/2 \text{ and } d_i > d_B/2\}$ and $T_s := T\setminus T_{\ell}$. Then it is possible to find a subset $T'_{\ell}\subseteq T_{\ell}$ with $p(T'_{\ell}) \ge (1-\epsilon)p(T_{\ell})$ and partition $B$ into a Stack and a Steinberg container by a horizontal plane parallel to the base, such that $T'_{\ell}$ and $T_s$ can be fully packed into the Stack and Steinberg containers, respectively. Furthermore, the heights of the two containers are integer multiples of $\epsilon^2$.
\end{lemma}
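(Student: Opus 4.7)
The plan is to size the Steinberg container just large enough to accommodate $T_s$ and use the remaining height as a Stack container for (most of) $T_\ell$, with $T'_\ell$ selected by a standard profit-per-height greedy.

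Concretely, let $h_2$ be the smallest nonnegative integer multiple of $\epsilon^2$ satisfying $(1/3-2\epsilon)w_B d_B h_2\ge v(T_s)$, and set $h_1=1-h_2$; assuming (WLOG) that $1/\epsilon^2$ is an integer, both heights are integer multiples of $\epsilon^2$. If $T_s\neq\emptyset$ then $h_2\ge\epsilon^2$, so $h_i\le\epsilon^4\le\epsilon h_2$ for every $i\in T_s$; combined with the fact (from the definition of $T_s$) that each $i\in T_s$ has $w_i\le w_B/2$ or $d_i\le d_B/2$, \Cref{lem:steinberg} ensures \texttt{3D-Vol-Pack} packs $T_s$ into the Steinberg container of height $h_2$. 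For the Stack container, sort $T_\ell$ in non-increasing order of $p_i/h_i$ and let $T'_\ell$ be the longest prefix with $h(T'_\ell)\le h_1$; these items stack along the height inside a Stack container of height $h_1$ since each has $w_i\le w_B$ and $d_i\le d_B$.

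The profit bound $p(T'_\ell)\ge(1-\epsilon)p(T_\ell)$ is immediate when $T'_\ell=T_\ell$; otherwise the greedy step gives $h(T'_\ell)>h_1-h_{\max}\ge h_1-\epsilon^4$ where $h_{\max}:=\max_{i\in T_\ell}h_i\le\epsilon^4$, and the $p/h$-sortedness of the prefix yields $p(T'_\ell)/h(T'_\ell)\ge p(T_\ell)/h(T_\ell)$, reducing the task to showing
\[
h_1\ge(1-\epsilon)h(T_\ell)+\epsilon^4.
\]
To establish this, note that every $i\in T_\ell$ has base area $>w_Bd_B/4$, so $v(T_\ell)\ge(w_Bd_B/4)h(T_\ell)$; together with $v(T)\le w_Bd_B/4$ this gives $v(T_s)\le w_Bd_B(1-h(T_\ell))/4$, hence $h_2\le 3(1-h(T_\ell))/(4(1-6\epsilon))+\epsilon^2$. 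Using $1/(1-6\epsilon)\le 1+7\epsilon$, the desired inequality becomes linear in $h(T_\ell)\in[0,1]$ with a positive $\Omega(1)$ slack coefficient, and is easily verified by checking the endpoints for $\epsilon$ sufficiently small.

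The main obstacle is precisely this last inequality, and the structural reason it holds in all regimes of $h(T_\ell)\in[0,1]$ is the constant gap $1/3-1/4=1/12$ between the $(1/3-O(\epsilon))$ packing density achieved by \texttt{3D-Vol-Pack} on $T_s$ and the $1/4$ volume-per-height lower bound on $T_\ell$ (each $T_\ell$-item occupies at least a quarter of the base area); this $\Omega(1)$ slack absorbs the $O(\epsilon)$ rounding error coming from rounding $h_2$ up to a multiple of $\epsilon^2$.
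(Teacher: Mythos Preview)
Your proof is correct and follows essentially the same approach as the paper's: size the Steinberg container just large enough for $T_s$, use the remaining height as a Stack container, and select $T'_\ell$ greedily by profit-per-height, then use the volume constraint $v(T)\le v(B)/4$ to bound $h(T'_\ell)/h(T_\ell)$. The paper's verification of the last step is organized slightly differently---it shows $h(T'_\ell)\ge h(T_\ell)-O(\epsilon^2)$ directly and then uses the observation that $h(T_\ell)\ge 1/5$ whenever $T'_\ell\ne T_\ell$---but your endpoint check of the linear inequality is an equivalent route (note, though, that the slack at $h(T_\ell)=1$ is only $\Theta(\epsilon)$, so it is the $(1-\epsilon)$ target rather than the $1/12$ density gap that carries that endpoint).
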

\begin{proof}
    To begin with, we create a Steinberg container having height $3(1+7\epsilon^2)\frac{v(T_s)}{w_Bd_B}+\epsilon^2$. Since $h_i \le \epsilon^4$, we have that the height of any item is at most an $\epsilon^2$-fraction of the height of the container. Clearly, the volume of the container is at least $3(1+7\epsilon^2)v(T_s)$. This implies that the whole set $T_s$ can be packed into the container since 
    \[ \left(\frac{1}{3}-2\epsilon^2\right)\cdot 3(1+7\epsilon^2)v(T_s) \ge v(T_s)\]
    for sufficiently small $\epsilon$. The height of the Steinberg container is then rounded up to the nearest integral multiple of $\epsilon^2$ - the resulting height is bounded by $\frac{3v(T_s)}{w_Bd_B}+O(\epsilon^2)$. In the remaining height of the box, we create a Stack container. Next, we sort the items of $T_{\ell}$ in non-increasing order of their profit to height ratio and pick a maximal prefix $T'_{\ell}$ that fits inside the Stack container. If $T'_{\ell} = T_{\ell}$, we already obtain a profit of $p(T_{\ell})$ and are done. Else, $h(T_{\ell})$ must be at least $1-\frac{3v(T_s)}{w_Bd_B}-O(\epsilon^2) \ge \frac{1}{5}$, since $v(T_s) \le v(B)/4$. Also, since the height of each item is bounded by $\epsilon^4$, we have 
    \begin{equation}
    \label{eqn:heightpacked}
        h(T'_{\ell}) \ge 1-\frac{3v(T_s)}{w_Bd_B}-O(\epsilon^2) - \epsilon^4 = 1-\frac{3v(T_s)}{w_Bd_B}-O(\epsilon^2).
    \end{equation}
    Now since $v(T) \le v(B)/4$, we have $\frac{1}{4}w_Bd_Bh(T_{\ell}) + v(T_s) \le \frac{1}{4}w_Bd_B$, and thus
    \begin{equation}
    \label{eqn:heightopt}
        h(T_{\ell}) \le 1 - \frac{4v(T_s)}{w_Bd_B}\le 1-\frac{3v(T_s)}{w_Bd_B}.
    \end{equation}
    From \eqref{eqn:heightpacked} and $\eqref{eqn:heightopt}$, we get $h(T'_{\ell}) \ge h(T_{\ell})-O(\epsilon^2) \ge (1-\epsilon)h(T_{\ell})$, since $h(T_{\ell})\ge 1/5$. Hence $p(T'_{\ell})\ge (1-\epsilon)p(T_{\ell})$ and we are done.   
    \kvnr{Low priority. Display math punctuation is not correct.}
\end{proof}




Then the following lemma, together with \Cref{thm:ptasboxpacking}, gives a $(5+\epsilon)$-approximation.

\begin{lemma}
$\opt\le (5+\epsilon)\alg$.    
\end{lemma}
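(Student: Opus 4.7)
The plan is to combine three families of lower bounds on $\alg=\optgs$. From \Cref{thm:structuredpackingalgo}, $\alg\ge(2/3-O(\eps))\opt_i$ for each $i\in\{1,2,3\}$. Next, since every item of $L$ has all three dimensions exceeding $\mu$, any feasible packing contains at most $1/\mu^3=O_{\eps}(1)$ items of $L$; enumerating the items of $L\cap\OPT$ together with their placements and realising each as its own Stack container yields $\alg\ge\opt_L$.

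The third (and crucial) bound comes from \Cref{lem:stackandsteincombine} applied to a density-selected subset of each $\OPT\cap I_i$. Sort $\OPT\cap I_i$ in non-increasing order of profit-to-volume ratio and let $T_i$ be the largest prefix of total volume at most $1/4$. If $v_i\le 1/4$ then $T_i=\OPT\cap I_i$ with $p(T_i)=\opt_i$. Otherwise, since every item of $I_i$ has volume at most $\mu$ (from the short-dimension bound), $v(T_i)\in(1/4-\mu,1/4]$, and the averaging inequality yields $p(T_i)\ge\bigl(1/(4v_i)-O(\eps)\bigr)\opt_i$. Because every item in $I_i$ has its relevant short dimension at most $\mu\ll\eps^4$, \Cref{lem:stackandsteincombine} (or its analog obtained by relabeling axes when $i\in\{2,3\}$) applies with the whole knapsack as the box $B$ and produces a guessable Stack + Steinberg packing of $T_i$ with profit $(1-\eps)p(T_i)$. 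Combining with the first bound gives $\alg\ge c_i\,\opt_i$ with $c_i^{-1}\le(1+O(\eps))\,\phi(v_i)$, where $\phi(v):=\max\bigl(1,\,\min(4v,\,3/2)\bigr)$.

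Summing all bounds,
\[
\opt=\opt_1+\opt_2+\opt_3+\opt_L \;\le\; \Bigl(\textstyle\sum_{i=1}^{3}\phi(v_i)+1\Bigr)(1+O(\eps))\,\alg.
\]
It therefore suffices to show $\sum_{i=1}^{3}\phi(v_i)\le 4$ whenever $v_1+v_2+v_3\le 1$. Writing $g(v):=\phi(v)-1$, we have $g(v)\le 1/2$ for all $v\ge 0$ and $g(v)\le 4\,(v-1/4)_{+}$. Set $S:=\{i:v_i>1/4\}$: if $|S|\le 2$ then $\sum_i g(v_i)\le |S|/2\le 1$; if $|S|=3$ then $\sum_i g(v_i)\le 4\bigl(\sum_i v_i-3/4\bigr)\le 1$. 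In either case $\sum_i\phi(v_i)\le 4$, so $\opt\le(5+\eps)\alg$.

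The main subtlety is the density-selection argument: one must verify the averaging bound $p(T_i)\ge \bigl(1/(4v_i)-O(\eps)\bigr)\opt_i$ with the correct control on integrality, using that every item has volume $\le\mu\ll\eps$. The LP-style inequality $\sum_i\phi(v_i)\le 4$ is the short two-case split above, and the translation into a guessable container packing follows from \Cref{thm:ptasboxpacking}.
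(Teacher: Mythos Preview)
Your proof is correct and uses the same ingredients as the paper (the trivial bound $\alg\ge\opt_L$, \Cref{thm:structuredpackingalgo}, and \Cref{lem:stackandsteincombine} applied to a density-sorted prefix of $\OPT\cap I_i$), but the \emph{assembly} is genuinely different. The paper argues by nested case analysis: first dispose of the case $\opt_L>\opt/5$; then pick the index $i$ maximizing $\opt_i/v_i$ (so that $\opt_i/v_i\ge\tfrac45\opt$), and split on $v_i\le 1/4$ versus $v_i>1/4$, in the first subcase falling back to \Cref{thm:structuredpackingalgo} for the remaining two indices. You instead derive, for every $i$ simultaneously, the uniform bound $\alg\ge\opt_i/\phi(v_i)$ with $\phi(v)=\max\bigl(1,\min(4v,3/2)\bigr)$, add the bound $\alg\ge\opt_L$, and reduce to the clean combinatorial inequality $\sum_i\phi(v_i)\le 4$ under $\sum_i v_i\le 1$, which you dispatch with the short two-case split on $|S|$. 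This is a tidier and more transparent packaging: it replaces the paper's ad hoc case tree by a single LP-type estimate, and makes clear exactly which quantities are being traded off. The paper's route, on the other hand, is slightly more direct once the ``best-density'' index is isolated and avoids introducing the auxiliary function $\phi$. Two small remarks: your final sentence pointing to \Cref{thm:ptasboxpacking} is unnecessary here (the lemma is purely a lower bound on $\alg=\optgs$, not an algorithmic statement), and the phrase ``enumerating the items of $L\cap\OPT$'' should really just say that placing each of the $\le 1/\mu^3$ items of $\OPT\cap L$ in its own Stack container exhibits a guessable container packing of profit $\opt_L$.
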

\begin{proof}
Trivially we have $\optgs \ge \opt_L$, since we can create a \kvnnew{Stack} container for every item in $\text{OPT}\cap L$ (at most $1/\mu^3$ such items). Hence if $\opt_L > \opt/5$, we are already done. Otherwise we have $\opt_1+\opt_2+\opt_3 \ge \frac{4}{5}\opt$. Now since $v_1 + v_2 + v_3 \le 1$, there must exist an $i \in [3]$ such that $\frac{\opt_i}{v_i} \ge \frac{4}{5}\opt$ -- w.l.o.g.~assume $i=1$. We have two cases.


\textbf{Case 1: $v_1 \le \frac{1}{4}$.} In this case, \Cref{lem:stackandsteincombine} (with the box $B$ corresponding to the whole knapsack) implies that $\optgs \ge (1-\epsilon)\opt_1$ (since each item in $I_1$ has height at most $\mu < \epsilon^4$). Hence if $\opt_1 > \opt/5$, we are done. Otherwise we have $\opt_2 + \opt_3 \ge \frac{3}{5}\opt$.  W.l.o.g.~ assume that $\opt_2 \ge \frac{3}{10}\opt$, \Cref{thm:structuredpackingalgo} gives us $\optgs \ge \left(\frac{2}{3}-O(\epsilon)\right)\cdot \frac{3}{10}\opt = \left(\frac{1}{5}-O(\epsilon)\right)\opt$.


\textbf{Case 2: $v_1 > \frac{1}{4}$.} In this case, we sort the items of $\text{OPT}\cap I_1$ in non-increasing order of their profit to volume ratio, and pick the maximal prefix $T$ whose volume does not exceed $\frac{1}{4}$. Then we have $v(T) \ge \frac{1}{4}-\mu$. Applying \Cref{lem:stackandsteincombine} to $T$, we obtain
$\optgs \ge (1-\epsilon)\left(\frac{1/4-\mu}{v_1}\right)\opt_1 \ge \left(\frac{1}{5}-O(\epsilon)\right)\opt$,
thus completing the proof.
\end{proof}


\mt{The ratio of $(5+\eps)$ is tight. One such instance is given when $v_1 =\frac{1}{8}, v_2=v_3=\frac{3}{8} $ and $\opt_1=\opt_L=\frac{1}{5}\opt$ and $\opt_2=\opt_3=\frac{3}{10}\opt.$ }

We will obtain further improvements by using the lower bounds in the following sections.

\subsection{Second lower bound on \texorpdfstring{$\optgs$}{optgs}}
\label{sec:secondlb}
In this subsection, we consider the packing of $\OPT\cap (I_{1\ell}\cup L)$, and restructure it to obtain a guessable container packing, by losing only an $O(\epsilon)$-fraction of the profit. 
Let $\OPTonel := \OPT\cap I_{1\ell}$ and $\OPTL := \OPT\cap L$. Since each item in $L$ has all side lengths at least $\hmax$, the number of such items in the packing is $|\OPTL| \le 1/\hmax^3$. 
Also, since the width and depth
of each item in $I_{1\ell}$ is at least 1/2, the items of $\OPTonel$ must be packed in layers, i.e., the top and bottom faces of any item in $\OPTonel$ when extended parallel to the bottom face of the knapsack will not intersect another item from $\OPTonel$. We now discretize the positions of the items in the packing.

\begin{restatable}{lemma}{discretepos}
\label{lem:discretepos}
    The distance of the left (resp. front) face of any item from the left (resp. front) face of the knapsack can be assumed to lie in a polynomial-sized set.
\end{restatable}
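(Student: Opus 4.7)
The plan is to run a standard left-push argument and then bound the resulting chain length. First, I would discard from the packing all items not in $\OPT \cap (I_{1\ell} \cup L)$, since only these items participate in the restructuring. For each $i \in \OPT \cap (I_{1\ell} \cup L)$, I would then shift $i$ in the $-x$ direction (keeping $y_i$ and $z_i$ fixed) as far as feasibility allows; this preserves a valid packing since a leftward shift can never create new overlaps. After this operation, either $x_i = 0$ or there is some $j \in \OPT \cap (I_{1\ell} \cup L)$ with $x_j + w_j = x_i$ whose $(y,z)$-projection intersects that of $i$. Iterating, one can write $x_i = \sum_{t=1}^{k} w_{j_t}$ for a chain $j_1, \ldots, j_k$ ending at the left face of the knapsack.

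The main obstacle (which turns out to be mild) is to bound the chain length $k$ by $O_\epsilon(1)$. Since the items $j_1, \ldots, j_k$ occupy mutually disjoint sub-intervals of $[0,1]$ along the $x$-axis (each consecutive pair being face-adjacent by construction), their widths satisfy $\sum_{t=1}^{k} w_{j_t} \le 1$. Every item of $I_{1\ell}$ has width exceeding $1/2$, so at most one such item can appear in the chain; every item of $L$ has width exceeding $\mu$, so at most $1/\mu$ of them can appear. Hence $k \le 1 + 1/\mu = O_\epsilon(1)$. Since $|\OPTL| \le 1/\mu^3 = O_\epsilon(1)$ and $|\OPTonel| \le n$, each chain position admits at most $n$ candidate items (widths), giving $n^{O_\epsilon(1)}$ possible chains, and hence at most $n^{O_\epsilon(1)}$ possible values for $x_i$.

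Applying the entirely symmetric argument after shifting each item in the $-y$ direction discretizes $y_i$ into a polynomial-sized set as well. One small point worth flagging is that after performing the $x$-push, I want to carry out the $y$-push without disturbing the discretization already achieved for $x$-coordinates; since a $y$-shift changes only $y_i$ and leaves $x_i$ untouched, the two enumerations are independent. I do not foresee any further subtleties: the argument needs only the width lower bounds for items in $I_{1\ell}$ and $L$, together with the elementary fact that the chain items face-tile a prefix of $[0,1]$ along the $x$-axis.
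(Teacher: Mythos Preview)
Your argument is correct and follows essentially the same approach as the paper: a left-push (then front-push) followed by a chain-length bound using the width lower bounds $w_i > 1/2$ for $I_{1\ell}$ and $w_i > \mu$ for $L$. The paper's proof is marginally sharper in that it observes the chain for an item $i \in \OPTonel$ contains \emph{no} $\OPTonel$ items at all (since $x_i < 1/2$ forces the total chain width below $1/2$), whereas you allow one; but your uniform bound $k \le 1 + 1/\mu$ is equally sufficient for the polynomial-size conclusion, and your remark that the $y$-push leaves the already-discretized $x$-coordinates untouched is exactly the right justification for independence of the two enumerations.
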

\begin{proof}
We push each item as much to the left as possible. Then for each item in $\OPTonel$, the distance of its left face (from the left face of the knapsack) can be written as the sum of the widths of at most $1/\hmax$ items from $L$. Similarly, the distance of the left face of any item in $\OPTL$ is the sum of the widths of at most $1/\hmax$ items of $L$ and at most one item from $\OPTonel$. Analogous claims hold by pushing the items as much to the front as possible.
\end{proof}

\begin{figure}
    \centering
    \includesvg[width=0.7\linewidth]{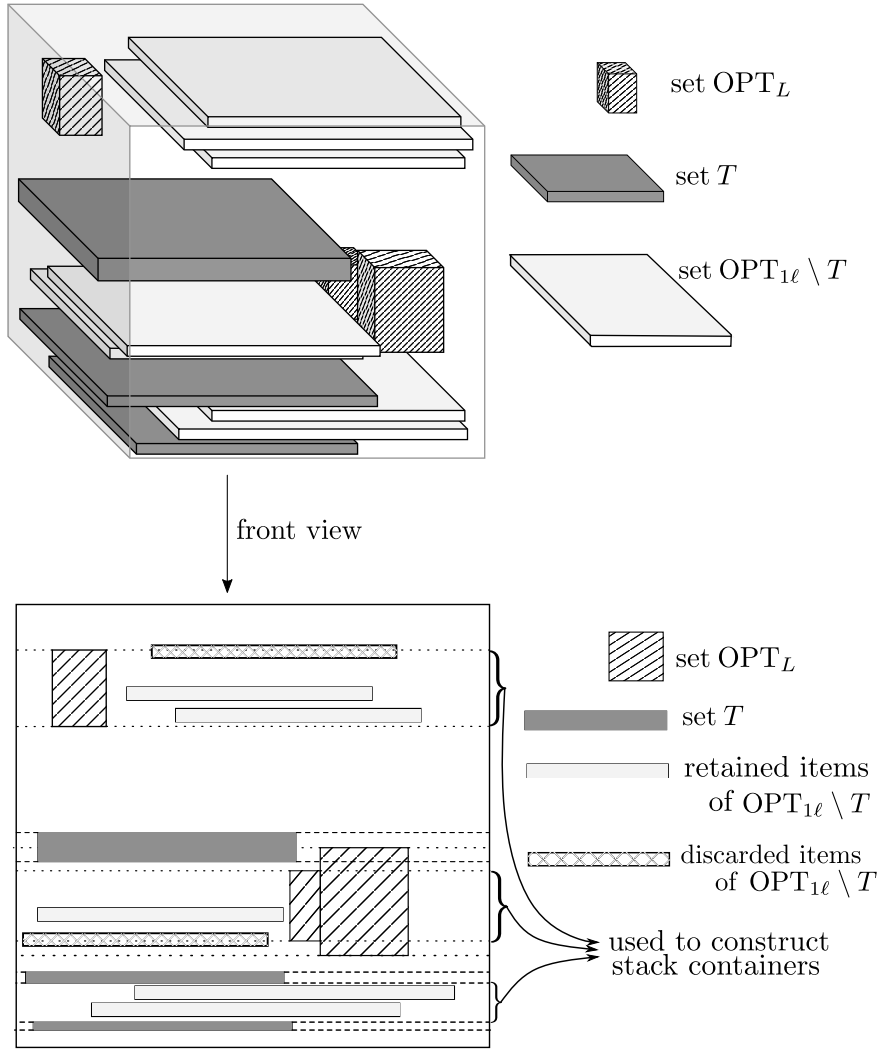}
    \caption{Packing of items in $OPT \cap (L\cup I_{1\ell})$: The dotted (resp. dashed) lines show (front view of) horizontal planes passing through the top and bottom faces of items in $\OPTL$ (resp.  $T$). We discard the items in $\OPTonel \setminus T$ that are cut by these planes.}
    \label{fig:Planes cutting}
\end{figure}

Let $T\subseteq \OPTonel$ be the $k$ items of largest profit in $\OPTonel$, for some $k$ to be chosen later. Our goal is to obtain a container packing of all items in $\OPTL$, together with a subset of $\OPTonel$ having a profit of at least $(1-\epsilon)\optonel$. To this end, we shall pack all items in $T$ and ensure that the number of discarded items of $\OPTonel$ is bounded by $ck$, for some constant $c$. By a result of \cite{jansen1999improved}, the profit of the discarded items can be bounded by $\epsilon \optonel$, for appropriately chosen $k$. We now outline the details of our restructuring procedure. See \cref{fig:Planes cutting}.

We draw horizontal planes passing through the top and bottom faces of the items in $T\cup \OPTL$, and discard the items of $\OPTonel \setminus T$  intersecting these planes (see \cref{fig:Planes cutting}). The number of discarded items is bounded by $2|\OPTL|=O(1/\mu^3)$. The remaining items of $\OPTonel\setminus T$ are now completely packed inside at most $k+2|\OPTL|+1$ horizontal strips demarcated by the planes. Each such strip is a cuboidal box that is pierced from top to bottom by at most $1/\hmax^2$ Large \kvnnew{items}. Further, there are only polynomially-many positions of these Large \kvnnew{items} inside the strip owing to \Cref{lem:discretepos}. We focus on one such strip $\SM$.

\begin{restatable}{lemma}{boundingStackcontainer}
\label{lem:boundingStackcontainer}
    The items of $\OPTonel \setminus T$ present inside the strip $\SM$ can be repacked into at most $O(1/\mu^8)$ Stack containers, whose widths and depths lie in a polynomial-sized set.
\end{restatable}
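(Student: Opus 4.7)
The plan is to exploit the geometric constraint that every item in $I_{1\ell}$ has width and depth strictly greater than $1/2$. Since any axis-aligned rectangle of width and depth exceeding $1/2$ in the unit square must contain the point $(1/2,1/2)$, the horizontal projections of all items of $\OPTonel \setminus T$ lying in $\SM$ pairwise intersect. Combined with the non-overlap of the original packing, this forces these items to occupy pairwise disjoint $z$-intervals inside $\SM$. In particular, their total height is at most the height $h_\SM$ of $\SM$, which will be crucial for stacking the containers at the end.

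Next I would analyze the horizontal cross-section of $\SM$. By the construction of $\SM$ (from planes through faces of items in $T \cup \OPTL$), the only items intersecting $\SM$'s interior, other than those of $\OPTonel \setminus T$, are at most $1/\mu^2$ Large items piercing $\SM$ from top to bottom. Their projections form at most $1/\mu^2$ axis-aligned rectangular obstacles whose face coordinates lie in a polynomial-sized set by \Cref{lem:discretepos}. I would then consider the set $\mathcal{M}$ of maximal empty rectangles (MERs) in the complement. Standard combinatorial arguments give $|\mathcal{M}| = O(1/\mu^4)$, and each MER has its four corner coordinates coming from the face coordinates of the Large items (or from $\{0,1\}$), so its width and depth both belong to a polynomial-sized set.

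For each surviving item $P_i \in \OPTonel \setminus T$ inside $\SM$, its projection $R_i$ lies entirely in the free region, and so is contained in at least one MER $M_i \in \mathcal{M}$ (any empty rectangle extends to a maximal one). I would assign $P_i$ to one such $M_i$. For every $M \in \mathcal{M}$ that receives at least one item, I would create a Stack container of width $w_M$, depth $d_M$, and height $H_M := \sum_{i : M_i = M} h_i$; each assigned item satisfies $w_i \le w_M$ and $d_i \le d_M$ because $R_i \subseteq M$, so a Stack container (which only requires these two inequalities) suffices. Since two MERs may have overlapping $(w,d)$ footprints, I would arrange the resulting at most $O(1/\mu^4) = O(1/\mu^8)$ Stack containers by stacking them along the $z$-axis inside $\SM$. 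The total $z$-extent used is $\sum_M H_M$, which equals the total height of all items of $\OPTonel \setminus T$ in $\SM$, and by the observation in the first paragraph this is at most $h_\SM$; hence all containers fit inside $\SM$.

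The only potential obstacle is ensuring that the Stack containers do not overlap Large items or each other. The first is automatic because each container has an MER footprint, which is empty by definition; the second is avoided by the $z$-stacking, since containers with the same $z$-range would otherwise intersect when their footprints overlap. I expect the only remaining care to be taken is the accounting for the polynomial-sized set of container dimensions, which follows directly from \Cref{lem:discretepos} applied to the faces of Large items piercing $\SM$.
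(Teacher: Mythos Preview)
Your proposal is correct and follows essentially the same approach as the paper: both identify maximal empty axis-aligned rectangles in the complement of the (at most $1/\mu^2$) Large-item projections, assign each $\OPTonel\setminus T$ item to a containing maximal rectangle, create one Stack container per rectangle, and then reorder the items along the $z$-axis so that containers occupy disjoint height ranges.

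The minor differences are cosmetic: the paper constructs its ``maximal regions'' by extending obstacle edges into a grid and bounds their number crudely by $O((1/\mu^2)^4)=O(1/\mu^8)$, whereas you invoke the standard $O(k^2)$ bound on maximal empty rectangles to get $O(1/\mu^4)$ (which of course is still $O(1/\mu^8)$). You are also more explicit than the paper about \emph{why} the $z$-stacking fits---your observation that all projections contain $(1/2,1/2)$, forcing pairwise disjoint $z$-intervals and hence total item height at most $h_\SM$, is exactly what the paper relies on when it says ``permute the items\ldots so that all items assigned to a particular maximal region are stacked consecutively,'' but the paper leaves that height bound implicit (it was stated just before the lemma that $\OPTonel$ items are packed in layers).
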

\begin{proof}
Consider the top face of the strip $\SM$. We extend the boundary edges of the at most $1/\mu^2$ Large \kvnnew{items}, till they hit the boundary of another Large \kvnnew{item} or the boundary of the knapsack. Consider all the rectangular regions formed by these lines. We call a region to be \textit{maximal} if it does not contain any Large \kvnnew{item}, and is not completely contained inside a larger region (see \cref{fig:maximal-regions}).
    \begin{figure}
        \centering

        \includesvg[width=0.7\linewidth]{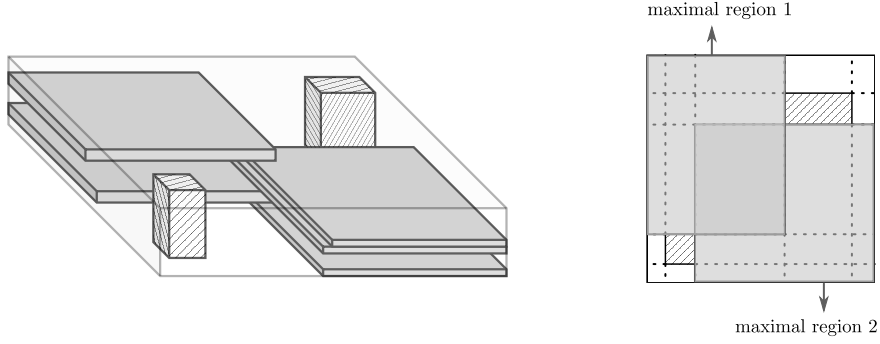}
        \caption{\kvn{Maximal regions. Left: Strip between two consecutive planes. Large items (hatched) span the entire height of the strip. Right: Top view of the strip.}}
        \label{fig:maximal-regions}
    \end{figure}
    
    
    Then note that the top face of each item of $\OPTonel$ that is packed inside $\SM$ must lie completely inside at least one of these maximal regions, and we assign the item to one of these regions arbitrarily. Afterward, we permute the items of $\OPTonel$ so that all items assigned to a particular maximal region are stacked consecutively one above the other. Hence, for each maximal region, we obtain a Stack container containing all items of $\OPTonel$ assigned to it, whose width and depth are the same as those of the region. Since there were at most $1/\mu^2$ Large \kvnnew{items} to begin with, the number of maximal regions and therefore the number of Stack containers obtained is bounded by $O((1/\mu^2)^4) = O(1/\mu^8)$. Also, from \Cref{lem:discretepos}, the possible locations of the Large \kvnnew{items} were polynomially many, and thus there is only a polynomial number of choices for the widths and depths of the Stack containers.
\end{proof}

For each item in \kvnnew{$T\cup \OPTL$}, we create a Stack container containing the single item itself. Together with \Cref{lem:boundingStackcontainer}, we get a constant number of Stack containers inside the knapsack. We now discretize the heights of the Stack containers, so that they lie in a polynomial-sized set.

\begin{restatable}{lemma}{discard}
\label{lem:discard}
    By discarding $O(1/\mu^{11})k$ items from $\OPTonel\setminus T$, the height of each Stack container can be made of the form $j\Bar{h}$, where $j\in [n]$ and $\Bar{h}$ is the height of some item in \kvnnew{$I_{1\ell}\cup L$}.
\end{restatable}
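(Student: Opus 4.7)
The plan is to pick $\Bar{h}$ as the smallest height among all items in $\OPTonel\cup\OPTL$; since every such item lies in $I_{1\ell}\cup L$, this $\Bar{h}$ is the height of some item in $I_{1\ell}\cup L$ as required. The crucial property is that every item in $\OPTonel$ (and in $T\cup\OPTL$) then has height at least $\Bar{h}$, which enables a clean rounding argument with very few discards. First I would count the Stack containers: by \cref{lem:boundingStackcontainer}, each of the $k + 2|\OPTL|+1 = O(k + 1/\mu^3)$ strips contributes $O(1/\mu^8)$ multi-item Stack containers (packing items of $\OPTonel\setminus T$), totaling $O(k/\mu^8+1/\mu^{11})$ such containers; in addition there are $|T|+|\OPTL|\le k+O(1/\mu^3)$ single-item Stack containers, one per item of $T\cup\OPTL$.

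For each multi-item Stack container with original height $h_C$, I would round its height down to $\lfloor h_C/\Bar{h}\rfloor \Bar{h}$ and discard the topmost item from its stack. Because every item in $\OPTonel\setminus T$ has height at least $\Bar{h}$, a single discard frees at least $\Bar{h}$ of vertical space, comfortably covering the $<\Bar{h}$ rounding deficit. For each single-item Stack container holding an item of height $h_i\ge \Bar{h}$, I would round the container height up to $\lceil h_i/\Bar{h}\rceil \Bar{h}$, adding strictly less than $\Bar{h}$ of extra vertical space per container. To absorb the total additional space $<\Bar{h}(|T|+|\OPTL|)\le \Bar{h}(k+1/\mu^3)$ within the knapsack, I would further discard items from the multi-item containers lying in the vertical column of $I_{1\ell}$ items (this is a single column because every $I_{1\ell}$ item has width and depth exceeding $1/2$), with each discard freeing at least $\Bar{h}$ of contiguous slack; at most $O(k+1/\mu^3)$ such additional discards are needed.

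Summing across the two sources, the total number of discards from $\OPTonel\setminus T$ is $O(k/\mu^8+1/\mu^{11})+O(k+1/\mu^3)=O(k/\mu^{11})=O(1/\mu^{11})k$ for $k\ge 1$, matching the claimed bound. The main obstacle will be making the fitting argument precise: an $\OPTL$-item single-container may sit in a vertical line far from the $I_{1\ell}$ column, and one has to explain how the vertical slack freed by shrinking multi-item containers in the $I_{1\ell}$ column can be translated into matching vertical room at the $\OPTL$ item's location. I expect to handle this via a vertical shifting argument that exploits the polynomially-many discrete positions guaranteed by \cref{lem:discretepos}, gliding the $I_{1\ell}$ column and the $\OPTL$ items along the vertical axis so that the created slack realigns with each expanded container in turn.
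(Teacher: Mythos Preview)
Your approach has a genuine gap, and it stems from a misreading of the statement: the lemma does not require a single global $\Bar{h}$. The phrase ``$\Bar{h}$ is the height of some item in $I_{1\ell}\cup L$'' allows $\Bar{h}$ to be chosen \emph{separately for each Stack container}. The paper exploits this freedom in a way that sidesteps both of your difficulties.

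The paper's argument is much simpler than yours. For a single-item Stack container (one per item of $T\cup\OPTL$), it sets the container height equal to the item's own height; then $j=1$ and $\Bar{h}$ is that item's height, so nothing needs to be discarded or shifted. For a multi-item Stack container (holding items of $\OPTonel\setminus T$), it takes $\Bar{h}$ to be the \emph{second}-largest height among the items in that container, discards the single item of largest height, and rounds the container height \emph{down} to a multiple of $\Bar{h}$; the freed space (at least $\Bar{h}$) absorbs the rounding deficit, and since every remaining item has height at most $\Bar{h}$, the rounded height is at most $n\Bar{h}$, so $j\le n$. One discard per multi-item container, times $O(1/\mu^8)$ containers per strip, times $O(1/\mu^3)k$ strips, gives the $O(1/\mu^{11})k$ bound directly---no round-up, no shifting.

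Your global-minimum choice of $\Bar{h}$ creates two problems you did not resolve. First, as you yourself flag, rounding single-item containers \emph{up} requires manufacturing extra vertical room at the horizontal locations of $\OPTL$ items, and your ``vertical shifting argument'' is only a hope, not a proof; the $\OPTL$ items are not in the $I_{1\ell}$ column, so slack created there does not automatically transfer. Second, and more directly fatal to the stated conclusion, with $\Bar{h}$ equal to the global minimum height there is no reason for $\lfloor h_C/\Bar{h}\rfloor$ or $\lceil h_i/\Bar{h}\rceil$ to be at most $n$: an item of $\OPTonel$ can have arbitrarily small height relative to the strip heights or to the heights of items in $\OPTL$, so $j$ can blow up and the container heights no longer lie in a polynomial-size set.
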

\begin{proof}
If a Stack container contains a single item, we make the height of the container equal to the height of that item. Consider now a Stack container having at least two items of $I_{1\ell}$. Let $\Bar{h}$ be the height of an item with the second largest height. We discard the item with the largest height and use the resulting empty space to round down the height of the container to the nearest integer multiple of $\Bar{h}$. Since the number of items packed inside the container is bounded by $n$, the rounded height of the container cannot exceed $n\Bar{h}$. Finally, since the number of Stack containers inside any strip is $O(1/\mu^8)$ (\Cref{lem:boundingStackcontainer}) and there are at most $k+2|\OPTL|+1 \le O(1/\mu^3)k$ strips, the total number of discarded items is bounded by $O(1/\mu^{11})k$.
\end{proof}

Together with the $O(1/\mu^3)$ items that were discarded previously, \Cref{lem:discard} gives us that the total number of discarded items so far is at most $O(1/\mu^{11})k$. We now use the following lemma from \cite{jansen1999improved}.
\begin{lemma}[\cite{jansen1999improved}]
    Let $p_1\ge p_2\ge \ldots \ge p_n>0$ be a sequence of real numbers and $P=\sum_{i=1}^{n} p_i$. Let $c$ be a nonnegative integer and $\epsilon >0$. If $n = c^{\Omega(1/\epsilon)}$, then there is an integer $k \le c^{O(1/\epsilon)}$ such that $p_{k+1}+\ldots p_{k+ck} \le \epsilon P$.
    
\end{lemma}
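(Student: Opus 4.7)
\medskip

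\noindent\textbf{Proof proposal.} The plan is the classical ``geometric grouping'' argument: I will find a gap in the sorted profit sequence by partitioning a prefix of indices into geometrically growing blocks and using an averaging argument to locate a block whose total profit is small.

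Concretely, let $m = \lceil 1/\epsilon \rceil$ and define the geometric sequence of indices $k_j := \lceil (1+c)^j \rceil$ for $j = 0, 1, \ldots, m$. The hypothesis $n = c^{\Omega(1/\epsilon)}$ is chosen precisely so that $k_m \le n$, i.e., the blocks $B_j := \{k_j + 1, k_j + 2, \ldots, k_{j+1}\}$ for $j = 0, 1, \ldots, m-1$ all lie within $\{1, 2, \ldots, n\}$. Since the $B_j$ are pairwise disjoint subsets of $\{1,\ldots,n\}$,
\[
\sum_{j=0}^{m-1} \sum_{i \in B_j} p_i \;\le\; \sum_{i=1}^{n} p_i \;=\; P.
\]
By an averaging (pigeonhole) argument, there exists some $j^* \in \{0,\ldots,m-1\}$ with $\sum_{i \in B_{j^*}} p_i \le P/m \le \epsilon P$.

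Now set $k := k_{j^*}$. By construction, $k_{j^*+1} = \lceil (1+c) k_{j^*} \rceil$, so $B_{j^*}$ consists exactly of the indices $k+1, k+2, \ldots, k+ck$ (up to the integer rounding, which can be absorbed by choosing $k$ slightly smaller, or by taking the base of the geometric sequence to be $(1+c)$ exactly and invoking the bound $p_{k+1} + \ldots + p_{k+ck} \le \sum_{i \in B_{j^*}} p_i$). Consequently
\[
p_{k+1} + p_{k+2} + \ldots + p_{k+ck} \;\le\; \epsilon P,
\]
as required. Moreover, $k = k_{j^*} \le k_{m-1} \le (1+c)^{m-1} = c^{O(1/\epsilon)}$.

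The argument has essentially no obstacle once the geometric partition is written down; the only subtlety is ensuring that the required index $k_m$ does not exceed $n$, which is exactly the content of the hypothesis $n = c^{\Omega(1/\epsilon)}$, and handling the trivial cases $c \in \{0,1\}$ separately (for $c=0$ the conclusion is vacuous, and for $c=1$ one uses base $2$ in the geometric sequence).
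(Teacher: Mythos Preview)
The paper does not prove this lemma; it is cited from \cite{jansen1999improved} and used as a black box. Your proposal is the standard geometric grouping argument and is correct. The one cosmetic wrinkle is the definition $k_j := \lceil (1+c)^j \rceil$, which does not quite satisfy $k_{j+1} = (1+c)k_j$ exactly; the cleanest way to avoid the rounding hand-wave is to define the indices recursively by $k_0 = 1$ and $k_{j+1} = (1+c)k_j$, so that $k_j = (1+c)^j$ are integers and the block $B_j$ is literally $\{k_j+1,\ldots,k_j+ck_j\}$.
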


The above lemma invoked with $c=O(1/\hmax^{11})$ and with the correct guess of $k$ gives us that the total profit of the discarded items is at most $\epsilon \optonel$, assuming that $|\OPTonel|$ was sufficiently large (at least $n_0 := (1/\mu)^{\Omega(1/\epsilon)}$) to begin with. Otherwise, if $|\OPTonel| \le n_0$, we can create a separate Stack container for each item in $\OPTonel$ having dimensions same as that of the item, which together with the Stack containers for items in $\OPTL$ directly yields a guessable container packing of profit $\opt_L + \optonel$. The following theorem summarizes the arguments of this section.

\begin{restatable}{theorem}{ptasforcombine}
    Consider the packing of $\OPT\cap (I_{1\ell}\cup L)$.
    Then by discarding items of $\OPT\cap I_{1\ell}$ having a total profit of at most $\epsilon\optonel$, the remaining items can be repacked into a guessable container packing, wherein items of $L$ are placed inside at most $1/\hmax^3$ \kvnnew{Stack} containers, and items of $I_{1\ell}$ are packed into at most $(1/\mu)^{O(1/\epsilon)}$ Stack containers.
\end{restatable}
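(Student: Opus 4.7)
My plan is to assemble the ingredients already developed in this subsection into a single restructuring argument. First, I would dispose of a degenerate case: if $|\OPTonel| \le n_0$ for the threshold $n_0 := (1/\hmax)^{\Omega(1/\eps)}$ appearing in the lemma from \cite{jansen1999improved}, then I simply create one Stack container per item of $\OPTonel \cup \OPTL$, each container having the same dimensions as the item it wraps. This loses no profit, uses at most $n_0 + 1/\hmax^3 = (1/\hmax)^{O(1/\eps)}$ containers, and is trivially guessable because every container dimension is the dimension of some input item.

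For the main case $|\OPTonel| > n_0$, the plan is to apply the lemma from \cite{jansen1999improved} with the constant $c = O(1/\hmax^{11})$ that will arise from \cref{lem:boundingStackcontainer} and \cref{lem:discard}, together with accuracy parameter $\eps$, yielding an integer $k \le (1/\hmax)^{O(1/\eps)}$ such that the items ranked $k+1$ to $k+ck$ by profit in $\OPTonel$ contribute at most $\eps\optonel$. Since $k$ lies in a polynomial-sized set, I can guess it and take $T$ to be the top-$k$ profit items of $\OPTonel$.

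Next, I would draw horizontal planes through the top and bottom faces of every item in $T \cup \OPTL$, partitioning the knapsack into at most $k + 2|\OPTL| + 1$ horizontal strips, and discard every item of $\OPTonel \setminus T$ that is cut by one of these planes; this loses at most $2|\OPTL| = O(1/\hmax^3)$ items. Each surviving item of $\OPTonel \setminus T$ lies entirely in a single strip, so for each strip $\SM$ I invoke \cref{lem:boundingStackcontainer} to repack its contents into $O(1/\hmax^8)$ Stack containers whose widths and depths come from the polynomial-sized set provided by \cref{lem:discretepos}. Then \cref{lem:discard} is applied to round the heights of these containers into the discretized form $j\bar h$, at the cost of at most $O(1/\hmax^{11})\,k$ additional discarded items.

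Summing the discards gives $O(1/\hmax^3) + O(1/\hmax^{11})\,k = O(1/\hmax^{11})\,k$ items removed from $\OPTonel$, and by the choice of $k$ their total profit is at most $\eps\optonel$. Each item of $\OPTL$ is wrapped in its own singleton Stack container, giving at most $|\OPTL| \le 1/\hmax^3$ such containers, while the total number of Stack containers used for $\OPTonel$ is at most $O(1/\hmax^8)\cdot(k + 2|\OPTL| + 1) = (1/\hmax)^{O(1/\eps)}$, and all their dimensions come from polynomial-sized sets; the packing is therefore guessable. The only subtlety I anticipate is ensuring that the discretizations in \cref{lem:discretepos} and \cref{lem:discard} combine consistently across strips, but since both use dimensions and positions of input items as their underlying ``grid,'' this follows automatically and completes the proof.
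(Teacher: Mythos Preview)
Your proposal is correct and follows essentially the same approach as the paper: the degenerate case, the choice of $k$ via the lemma of \cite{jansen1999improved}, the horizontal slicing through $T \cup \OPTL$, and the invocations of \cref{lem:boundingStackcontainer} and \cref{lem:discard} all match. The only detail you leave implicit is that the $k$ items of $T$ themselves (not just $\OPTL$) are each wrapped in their own singleton Stack container---\cref{lem:boundingStackcontainer} only repacks items of $\OPTonel \setminus T$ inside each strip---but this adds only $k$ further containers and is absorbed in your final bound.
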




\subsection{Third lower bound on \texorpdfstring{$\optgs$}{optgs}}
\label{sec:thirdlb}


In this subsection, we restructure the packing of items in $\text{OPT}\cap I_1$. Recall that $\optonel = p(\text{OPT}\cap I_{1\ell})$ and $\optones = p(\text{OPT}\cap I_{1s})$, so that $\optonel+\optones = \opt_1$. Also, $v_1 := v(\text{OPT}\cap I_1)$ and $\vones = v(\text{OPT}\cap I_{1s})$.






\begin{restatable}{theorem}{packingI}
\label{thm:packingI1}
    Consider the packing of $\text{OPT}\cap I_1$. It is possible to repack subsets of $\text{OPT}\cap I_{1\ell}$ and $\text{OPT}\cap I_{1s}$ into a guessable Stack and Steinberg container, respectively, whose profit, say $P$, satisfies the following properties.
    \begin{itemize}
        \item $\displaystyle P \ge (1-O(\epsilon))\frac{\opt_{1s}}{\max\{3\vones, 1\}}$.
        \item If $v_1\le 1/3$, then $\displaystyle P  \ge (1-O(\epsilon))\left(\frac{3}{4}\optonel+\optones\right)$.
        \item If $v_1 \le 1/4$, then $P\ge (1-O(\epsilon))\opt_1$.
    \end{itemize}
    
\end{restatable}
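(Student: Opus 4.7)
The plan is to construct a Steinberg container and a Stack container stacked vertically in the unit knapsack, using them to house subsets of $\text{OPT}\cap I_{1s}$ and $\text{OPT}\cap I_{1\ell}$ respectively; the heights are chosen differently for each of the three bounds. The main geometric fact I will rely on is that every item in $I_{1\ell}$ has base area larger than $1/4$, which forces $\honel < 4(v_1 - \vones)$.

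For the first inequality I simply take the whole knapsack as a Steinberg container (capacity $1/3$) and ignore the Stack container. If $\vones \le 1/3$ all of $\text{OPT}\cap I_{1s}$ fits. If $\vones > 1/3$ I sort $\text{OPT}\cap I_{1s}$ by non-increasing profit-to-volume ratio and take the maximal prefix of total volume at most $1/3$; since each individual volume is at most $\mu^3$, the standard greedy argument gives captured profit at least $(1-O(\mu))(1/(3\vones))\optones$. Together these give $P \ge (1-O(\eps))\optones/\max\{3\vones,1\}$.

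For the third inequality, $v_1 \le 1/4$ and $h_i \le \mu \le \eps^4$ for $i \in I_1$ together satisfy the hypotheses of Lemma \ref{lem:stackandsteincombine} with $B$ the whole knapsack and $T = \text{OPT}\cap I_1$. The lemma then packs all of $\text{OPT}\cap I_{1s}$ into a Steinberg container and a subset of $\text{OPT}\cap I_{1\ell}$ of profit at least $(1-\eps)\optonel$ into a Stack container, yielding $P \ge (1-O(\eps))\opt_1$.

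For the second (and most delicate) inequality under $v_1 \le 1/3$, the default layout is $H_s = 3\vones$, $H_\ell = 1 - 3\vones$: the Steinberg capacity equals $\vones$ so all of $\text{OPT}\cap I_{1s}$ fits, while the Stack container is filled with a greedy prefix of $\text{OPT}\cap I_{1\ell}$ sorted by profit-to-height ratio. Using $\honel \le (4/3)(1 - 3\vones) = (4/3)H_\ell$, the standard exchange argument gives packed profit at least $((H_\ell-\mu)/\honel)\optonel \ge (3/4 - \mu/\honel)\optonel$. If $1 - 3\vones$ is of order $\eps$ or larger, the error $\mu/\honel \le \mu/(1-3\vones) = O(\mu/\eps) = O(\eps)$ is absorbed. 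The remaining regime $1 - 3\vones = o(\eps)$ forces $\vones$ very close to $1/3$ and $\honel = o(\eps)$; here I would switch to the dual layout $H_\ell = \honel$, $H_s = 1 - \honel$. The Stack then fits all of $\text{OPT}\cap I_{1\ell}$ for profit $\optonel$, while the Steinberg capacity $(1-\honel)/3 \ge 1/3 - O(\eps)$ lets a greedy prefix of $\text{OPT}\cap I_{1s}$ capture at least $(1-O(\eps))\optones$. Taking the better of the two layouts (both enumerable by guessing $\vones$ and $\honel$ up to multiples of $1/\text{poly}(n)$) gives $P \ge (1-O(\eps))(\tfrac{3}{4}\optonel + \optones)$. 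The main obstacle is precisely this degenerate regime, where the naive $\mu/\honel$ loss would otherwise spoil the constant factor; the two-layout trick handles it cleanly.
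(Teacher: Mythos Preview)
Your approach is correct and close to the paper's: bounds one and three are handled identically (the paper also uses the full knapsack as a single Steinberg container for bound one, and cites Lemma~\ref{lem:stackandsteincombine} for bound three), and for bound two both you and the paper set the Steinberg height to roughly $3\vones$, fill the Stack with a greedy profit-to-height prefix, and invoke $\honel \le 4(v_1-\vones)$ to extract the $3/4$ factor.

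One technical point you skipped: when $\vones$ is very small (below $\mu/(3\eps)$), a Steinberg container of height exactly $3\vones$ violates the container constraint $h_i\le\eps h_C$ for items of height near $\mu$, so your claim ``Steinberg capacity equals $\vones$ so all of $\text{OPT}\cap I_{1s}$ fits'' fails. The paper patches this by adding $+\eps^2$ to the Steinberg height and by treating the extreme case $v_1\le 2\eps$ separately with two containers of height $8\eps$ each (Lemma~\ref{lem:stackandstein}). This is easy to graft onto your argument.

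For the degenerate regime of bound two you switch to a ``dual layout'' (pack all of $I_{1\ell}$ first in a Stack of height $\honel$, then fill the remaining height with $I_{1s}$). The paper instead splits on whether $v_1-\vones\gtrless\eps$ and, in the small case, deletes an $O(\eps)$-profit group from $\text{OPT}\cap I_{1s}$ to shrink the Steinberg by $\Theta(\eps^2)$ and hand that sliver to the Stack, which then absorbs the $\mu$ loss. Both devices work. The paper's version has the side benefit that the two containers occupy total height at most $3v_1$ rather than $1$; this is irrelevant for the present theorem but is exploited later in \S\ref{sec:fourthlb}, where the leftover $1-3v_1$ is used to pack items from $S_2\cup S_3\cup S$.
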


In order to prove the above, we first establish the following lemma that will also be used in the next subsection.

\begin{lemma}
\label{lem:stackandstein}
    The following statements hold.
    \begin{enumerate}
        \item If $v_1 \in (2\epsilon, 1/3]$, it is possible to obtain a container packing of a subset $T \subseteq \text{OPT}\cap I_1$ with $p(T) \ge (1-O(\epsilon))\left(\frac{3}{4}\optonel + \optones\right)$, such that the items of $T\cap I_{1\ell}$ and $T\cap I_{1s}$ are packed into a Stack and a Steinberg container, respectively. Both the containers have an $1\times 1$ base and further, their heights are integer multiples of $\epsilon^2$ and they sum to at most $3v_1$.
        \item If $v_1 \le 2\epsilon$, the items of $\text{OPT}\cap I_{1\ell}$ and $\text{OPT}\cap I_{1s}$ can be packed into a Stack and a Steinberg container respectively, both of whose bases have dimensions $1\times 1$, and whose heights are $8\epsilon$ each.
    \end{enumerate}    
\end{lemma}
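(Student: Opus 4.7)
My approach is to construct the two containers in each part by explicitly choosing the heights $h_\ell$ (Stack) and $h_s$ (Steinberg), both on the full $1\times 1$ base of the knapsack, and then invoke the packing guarantees of \Cref{sec:cont-class}. Two structural facts are key: (i) every $i \in I_{1\ell}$ has base area $w_i d_i > 1/4$, so the total height $h_{1\ell} \le 4v_{1\ell}$ --- which is the source of the $3/4$ factor; and (ii) a Steinberg container of height $h$ and $1\times 1$ base has capacity $h/3$, giving the factor $3$ in the height budget $3v_1$.

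For part 2 ($v_1 \le 2\epsilon$), take $h_\ell = h_s = 8\epsilon$. The Stack absorbs all of $\text{OPT}\cap I_{1\ell}$ since $h_{1\ell} \le 4v_1 \le 8\epsilon$. The Steinberg absorbs all of $\text{OPT}\cap I_{1s}$: each item has width or depth at most $1/2$ by definition of $I_{1s}$ (so the splitting hypothesis is met), each height $\mu$ is at most $\epsilon \cdot 8\epsilon$, and total volume $\vones \le v_1 \le 2\epsilon \le (8\epsilon)/3$, matching the volume condition of \Cref{lem:steinberg}.

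For part 1 ($v_1 \in (2\epsilon, 1/3]$), let $h_s$ be the smallest multiple of $\epsilon^2$ that is at least $3\vones + \epsilon^2$, and $h_\ell$ the largest multiple of $\epsilon^2$ with $h_\ell + h_s \le 3v_1$; this ensures $h_s \in [3\vones + \epsilon^2,\, 3\vones + 2\epsilon^2]$ and $h_\ell \ge 3v_{1\ell} - 3\epsilon^2$, while $h_\ell + h_s \le 3v_1 \le 1$ as required. The Steinberg has capacity $h_s/3 > \vones$ and item-height condition $\mu \le \epsilon h_s$ (since $h_s \ge \epsilon^2$ and $\mu \ll \epsilon^3$ by the choice of $\mu$), so $\text{OPT}\cap I_{1s}$ is a valid assignment and the container algorithm packs profit at least $(1-O(\epsilon))\optones$. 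For the Stack, sort $\text{OPT}\cap I_{1\ell}$ in decreasing order of profit/height and let $T_\ell$ be the largest prefix with $h(T_\ell)\le h_\ell$; the standard density-domination inequality yields $p(T_\ell) \ge \frac{h(T_\ell)}{h_{1\ell}}\,\optonel \ge \frac{h_\ell - \mu}{h_{1\ell}}\,\optonel$, which via $h_\ell \ge 3v_{1\ell}-O(\epsilon^2)$ and $h_{1\ell}\le 4v_{1\ell}$ evaluates to at least $(3/4 - O(\epsilon))\optonel$ whenever $v_{1\ell} = \Omega(\epsilon)$.

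The main obstacle is pinning down the $(1-O(\epsilon))$ factor when $v_{1\ell}$ is so small that the additive $O(\epsilon^2)$ rounding and $\mu$-integrality slack dominate the greedy Stack bound. I would handle this by also considering the alternative allocation $(h_\ell, h_s) = (h_{1\ell}, 3v_1 - h_{1\ell})$, which packs all of $\text{OPT}\cap I_{1\ell}$ and trims the Steinberg capacity by at most a $v_{1\ell}/(3\vones)$-fraction of $\optones$ (using $h_{1\ell} - 3v_{1\ell} \le v_{1\ell}$). A brief case split on whether $\optonel/\optones$ exceeds the threshold $\tfrac{4v_{1\ell}}{3\vones}$ that equates the two allocations, combined with the fact that $\vones > \epsilon$ in the critical regime (forced by $v_1 > 2\epsilon$ whenever $v_{1\ell}\le \epsilon$), shows that the better of the two allocations always achieves profit $(1-O(\epsilon))\bigl(\tfrac{3}{4}\optonel + \optones\bigr)$. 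By construction, both $h_\ell$ and $h_s$ are integer multiples of $\epsilon^2$ summing to at most $3v_1$, so the structural conditions of the lemma are satisfied.
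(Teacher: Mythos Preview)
Your Part~2 is identical to the paper's. For Part~1, your approach is correct but genuinely different from the paper's, and the key step is more delicate than your write-up suggests.

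The paper splits on whether $v_{1\ell}=v_1-\vones$ exceeds $\epsilon$. When $v_{1\ell}>\epsilon$ it does exactly your Allocation~1, absorbing the $O(\epsilon^2)$ rounding slack into $v_{1\ell}$. When $v_{1\ell}\le\epsilon$ (so $\vones>\epsilon$), it does \emph{not} switch to your Allocation~2; instead it groups $\text{OPT}\cap I_{1s}$ into $\Omega(1/\epsilon)$ volume-chunks of size $\Theta(\epsilon^2)$, deletes the least-profitable chunk, and thereby shrinks the Steinberg container to height $3\vones-3\epsilon^2$. This frees just enough height so that the Stack container gets $\ge 3v_{1\ell}+\epsilon^2$, which beats $h_{1\ell}$'s greedy threshold cleanly with a single allocation.

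Your two-allocation argument also works, but the reason is a cancellation you do not make explicit. Writing $r=\frac{3v_{1\ell}-3\epsilon^2}{4v_{1\ell}}$ for the Stack fraction in Allocation~1 and $s=\frac{v_{1\ell}}{3\vones}$ for the Steinberg loss-fraction in Allocation~2, the case split is: if $\optonel\ge 4s\,\optones$ then Allocation~2 already meets the target; otherwise the deficit of Allocation~1 from the target is $(\tfrac34-r)\optonel<(\tfrac34-r)\cdot 4s\,\optones$, and the point is that
\[
\Bigl(\tfrac34-r\Bigr)\cdot 4s=\frac{3\epsilon^2}{4v_{1\ell}}\cdot\frac{4v_{1\ell}}{3\vones}=\frac{\epsilon^2}{\vones}<\epsilon,
\]
using $\vones>\epsilon$ in the critical regime. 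This identity is what makes your max-of-two-allocations work; it should be stated. Two loose ends: in Allocation~2 you must still round $h_\ell$ up to a multiple of $\epsilon^2$ (costing a further $O(\epsilon^2/\vones)=O(\epsilon)$ from the Steinberg side), and you should note that when $v_{1\ell}<\epsilon^2$ Allocation~1 may yield $h_\ell=0$, but then $s<\epsilon/3$ and Allocation~2 alone suffices. Both are easy fixes. The trade-off: the paper's deletion trick produces a single explicit packing per case, while your approach avoids the grouping argument at the cost of needing the product identity above.
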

\begin{proof}
    We first give a proof of the second case, which is much simpler. Since $v_1\le 2\epsilon$, we must have $\honel\le 8\epsilon$, and therefore the items of $\text{OPT}\cap I_{1\ell}$ can be packed into a Stack container with $1\times 1$ base and height $8\epsilon$. Similarly, the items of $\text{OPT}\cap I_{1s}$ can be packed into a Steinberg container with $1\times 1$ base and height $8\epsilon$ using \Cref{lem:steinberg}, since $\mu \le \epsilon\cdot 8\epsilon$ and $\left(\frac{1}{3}-2\epsilon\right)8\epsilon > 2\epsilon \ge \vones$.

    For the first case, we further subdivide the proof into two cases depending on the relative values of $v_1$ and $\vones$.

    \textbf{Case 1 : $v_1-\vones >\epsilon$.} In this case, we first create a Steinberg container $C$ with $1\times 1$ base and height $3(1+7\epsilon^2)\vones + \epsilon^2$. Note that the height of any item $i\in \text{OPT}\cap I_{1s}$ satisfies $h_i \le \mu < \epsilon^4 \le \epsilon^2 h_C$. We show that $\vones \le \left(\frac{1}{3}-2\epsilon^2\right)v(C)$, which implies that the items of $\text{OPT}\cap I_{1s}$ can be fully packed into the container $C$ by \Cref{lem:steinberg}. We have
    \[ \left(\frac{1}{3}-2\epsilon^2\right)v(C) \ge (1-6\epsilon^2)(1+7\epsilon^2)\vones
    > \vones, \]
    for sufficiently small $\epsilon$. We now round up the height of $C$ to the next multiple of $\epsilon^2$. The resulting value of $h_C$ is at most $3\vones+ 9\epsilon^2$. Note that since all items of $\text{OPT}\cap I_{1s}$ are packed, we obtain the full profit of $\optones$ from them. Next, we sort items in $\text{OPT}\cap I_{1\ell}$ in non-increasing order of profit to height ratio, and pick a maximal prefix whose height does not exceed $\epsilon^2 \lfloor3v_1/\epsilon^2\rfloor -h_C$ and pack them into a Stack container of the same height and with an $1\times 1$ base. Note that the height of the Stack container is also an integer multiple of $\epsilon^2$. If we are able to pack all items of $\text{OPT}\cap I_{1\ell}$, we obtain a profit of $\optonel$, and are done. Otherwise, the height of the packed items is at least $3v_1-\epsilon^2 - h_C -\mu\ge 3v_1-3\vones-11\epsilon^2 \ge (1-4\epsilon)(3v_1 - 3\vones)$. The profit obtained from $\text{OPT}\cap I_{1\ell}$ is then at least
    \[ \frac{(1-4\epsilon)(3v_1-3\vones)}{\honel}\optonel \ge \left(\frac{3}{4}-O(\epsilon)\right)\optonel, \]
    where the last inequality follows from the fact that $v_1\ge \frac{1}{4}\honel+\vones$.

    \textbf{Case 2 : $v_1 - \vones \le \epsilon$.} Since $v_1 > 2\epsilon$, it follows that $\vones > \epsilon$ in this case. We group the items of $\text{OPT}\cap I_{1s}$ into maximal groups of volume at most $8\epsilon^2$ each. Since $\vones > \epsilon$, the number of groups is at least $\Omega(1/\epsilon)$ and each group (except possibly one) has a volume of at least $8\epsilon^2-\mu>7\epsilon^2\ge 7\epsilon^2 \vones$. We delete the group having minimum profit. The remaining items have a volume of at most $(1-7\epsilon^2)\vones$ and a profit of at least $(1-O(\epsilon))\optones$. We claim that all these items can be packed into a Steinberg container $C$ having an $1\times 1$ base and height $3\vones-3\epsilon^2$. First note that since $\vones >\epsilon$, $h_C >\epsilon$ and therefore, for any $i\in \text{OPT}\cap I_{1s}$, we have $h_i \le \mu < \epsilon^3 \le \epsilon^2 h_C$. Also
    \[ \left(\frac{1}{3}-2\epsilon^2\right)v(C) \ge (1-6\epsilon^2)(1-\epsilon^2)\vones > (1-7\epsilon^2)\vones\]
    and thus the claim follows using \Cref{lem:steinberg}. We round up the height of the Steinberg container $C$ to the nearest integer multiple of $\epsilon^2$. Next, we create a Stack container of $1\times 1$ base, and having a height such that the sum of the heights of the two containers is exactly $\epsilon^2\lfloor 3v_1/\epsilon^2\rfloor$. This directly ensures that the height of the Stack container is a multiple of $\epsilon^2$. Also note that the height is at least $3v_1-3\vones + \epsilon^2$. We sort the items of $\text{OPT}\cap I_{1\ell}$ in non-increasing order of their profit to height ratio and pick a maximal prefix that has height at most $3v_1-3\vones+\epsilon^2$, which we pack in layers inside the Stack container. If, by doing this, we are able to pack all items of $\text{OPT}\cap I_{1\ell}$, we already obtain a profit of $\optonel$, and are done. Else the total height of the packed items is at least $3v_1-3\vones+\epsilon^2-\mu > 3v_1-3\vones$. Hence, the profit obtained from items of $\text{OPT}\cap I_{1\ell}$ is at least $\frac{3v_1-3\vones}{\honel}\optonel \ge \frac{3}{4}\optonel$, and we are done.
\end{proof}

We are now ready to prove \Cref{thm:packingI1}.

\begin{proof}[Proof of \Cref{thm:packingI1}]
     For the first bound, we sort the items of $\text{OPT}\cap I_{1s}$ in non-increasing order of their profit to volume ratio. We then pick the maximal prefix whose volume does not exceed $\frac{1}{3}-2\epsilon$. Using \Cref{lem:steinberg}, these items can be packed into a Steinberg container of dimensions same as those of the knapsack. We now analyze the profit of this packing. If all items of $\text{OPT}\cap I_{1s}$ are packed, we trivially obtain a profit of $\optones$ from them, and are done. Otherwise, since the volume of each item of $I_{1s}$ is at most $\mu$, the total volume of packed items is at least $\frac{1}{3}- 2\epsilon -\mu > \frac{1}{3}-3\epsilon$. Hence, the profit obtained is at least 
    \[ \frac{\frac{1}{3}-3\epsilon}{\vones}\optones \ge (1-O(\epsilon))\frac{\optones}{3\vones}\]
    and we get the desired guarantee in the first case. 

    The proof of the second and third cases directly follows from \Cref{lem:stackandstein} and \Cref{lem:stackandsteincombine}, respectively.
\end{proof}

\subsection{Fourth lower bound on \texorpdfstring{$\optgs$}{optgs}}
\label{sec:fourthlb}
In the previous sections, we repacked items from exactly one set out of $I_1, I_2, I_3$. In this subsection, we restructure the packing of a subset of items from $I_1$, together with items from $I_2\cup I_3\cup L$. To this end, we introduce a classification of the items in $I_2\cup I_3\cup L$. 
Let $S_2$ (resp. $S_3,S$) be the set of items in $I_2$ (resp. $I_3,L$) with height at most 1/2. Define $\opttwot := p(\text{OPT}\cap S_2)$ and $\opttwoh := p(\text{OPT}\cap (I_2\setminus S_2))$, so that $\opttwot+\opttwoh = \opt_2$. The quantities $\optthreet$ and $\optthreeh$ are defined analogously. Let $\optlt := p(\text{OPT}\cap S)$ and $\optlh := p(\text{OPT}\cap (L\setminus S))$ so that $\optlt+\optlh = \opt_L$. The following theorem repacks items from $I_1\cup S_2\cup S_3\cup S$.

\begin{theorem}
    Consider the packing of $\text{OPT}\cap \left(I_1\cup S_2\cup S_3\cup S\right)$ and assume that $v_1\le 1/6$ and $v_2,v_3>1/4$. Then, there is a container packing of a subset $T$ of these items, having a profit of at least 
    \[ (1-O(\epsilon))\left(\frac{3}{4}\optonel+\optones+\max\left\{\frac{3}{20}(\opttwot+\optthreet),\frac{1}{3}\optlt\right\}\right). \] 
    Further, the packing satisfies the following properties.
    \begin{enumerate}
        \item Items in $T\cap I_{1\ell}$ are packed in a Stack container with $1\times 1$ base and whose height is an integer multiple of $\epsilon^2$.
        \item Items in $T\cap I_{1s}$ are packed in a Steinberg container having an $1\times 1$ base, whose height is also an integer multiple of $\epsilon^2$.
        \item \deb{Only one of the sets $T\cap S_2$, $T\cap S_3$ or $T\cap S$ is non-empty.}    
        \item If $T\cap S_2 \neq \Phi$, then the items of $T\cap S_2$ (analogously $T\cap S_3$) are packed inside a Stack and a Steinberg container both having depth equal to 1. The widths and heights of the containers are again integer multiples of $\epsilon^2$. An analogous result holds if $T\cap S_3 \neq \Phi$.
        \item If $T\cap S \neq \Phi$, then the items of $T\cap S$ are packed as singletons inside Stack containers.
    \end{enumerate}
\end{theorem}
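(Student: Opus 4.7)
The strategy is to first pack $\OPT \cap I_1$ using \cref{lem:stackandstein}, and then pack items from exactly one of $S_2$, $S_3$, or $S$ in the remaining vertical space, choosing the most profitable of the three options. Since $v_1 \le 1/6 \in (2\epsilon, 1/3]$, the first case of \cref{lem:stackandstein} (or the second case if $v_1 \le 2\epsilon$) produces a Stack and a Steinberg container, both of $1 \times 1$ base, with heights that are integer multiples of $\epsilon^2$ summing to at most $3v_1 \le 1/2$, and total packed profit at least $(1-O(\epsilon))\bigl(\tfrac{3}{4}\optonel + \optones\bigr)$. This immediately gives conditions (1) and (2). The residual free region is a box $B$ of dimensions $1 \times 1 \times h_B$ with $h_B := 1 - 3v_1 \ge 1/2$.

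For the $S$-option, every item in $S$ satisfies $h_i \le 1/2$. Partition $\OPT \cap S$ by vertical position into $P_\downarrow = \{i : z_i + h_i \le 1/2\}$, $P_\uparrow = \{i : z_i \ge 1/2\}$, and $P_\leftrightarrow = \{i : z_i < 1/2 < z_i + h_i\}$. Each subset individually fits in a $1 \times 1 \times 1/2$ box: $P_\downarrow$ and $P_\uparrow$ by direct translation, and $P_\leftrightarrow$ after translating every item to $z = 0$, since all items in $P_\leftrightarrow$ contain the plane $z = 1/2$ and hence have pairwise disjoint $xy$-footprints there. A pigeonhole argument selects the maximum-profit subset, which has profit at least $\tfrac{1}{3}\optlt$; since $|\OPT \cap S| \le 1/\mu^3$ is constant, we place each selected item in its own Stack container of matching dimensions, giving condition (5).

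For the $S_2$-option (the $S_3$-option being symmetric), items have $w_i \le \mu$, so width serves as the natural stacking direction. Inside $B$, create a Stack container and a Steinberg container both of depth $1$, arranged side-by-side along the width, with widths and heights that are integer multiples of $\epsilon^2$. Items in $\OPT \cap S_2$ with both $d_i > 1/2$ and $h_i > h_B/2$ go to the Stack (each such item has $d_i h_i > h_B/4$, so their widths sum to at most $\tfrac{4}{h_B}$ times their total volume); the remaining items satisfy a Steinberg 2D feasibility condition ($d_i \le 1/2$ or $h_i \le h_B/2$) and go to the Steinberg container of volumetric capacity $W_{\mathrm{Stein}} h_B / 3$. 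Adapting the argument of \cref{lem:stackandstein} with the roles of height and width interchanged, and carefully balancing the Stack's width budget against the Steinberg's volumetric capacity via a case distinction on $v_{2t}$ and $h_B$, one obtains a packed subset of profit at least $(1-O(\epsilon))\tfrac{3}{10}\opttwot$. Conditions (3) and (4) hold since we pick exactly one of the three options.

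Combining, and using $\max(a,b) \ge (a+b)/2$, the profit gained in $B$ is at least $\max\{\tfrac{3}{10}\opttwot, \tfrac{3}{10}\optthreet, \tfrac{1}{3}\optlt\} \ge \max\{\tfrac{3}{20}(\opttwot+\optthreet), \tfrac{1}{3}\optlt\}$. Added to the $I_1$-contribution, this yields the stated lower bound on $p(T)$. The principal technical challenge is establishing the $\tfrac{3}{10}$ factor in the $S_2$/$S_3$-option: when $v_{2t}$ is close to its upper bound $3/4 - v_1$ and $h_B$ is close to $1/2$, neither the Stack alone nor the Steinberg alone attains this factor, so a delicate two-container argument combined with a sorted-by-density item selection (and possibly an auxiliary $3$-slicing of $\OPT \cap S_2$ analogous to the $S$-option) is required.
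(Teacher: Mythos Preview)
Your handling of the $I_1$ part and the $S$-option matches the paper exactly. The gap is in the $S_2$/$S_3$-option: you aim for a per-coordinate factor of $\tfrac{3}{10}\opttwot$ (and likewise $\tfrac{3}{10}\optthreet$), then combine via $\max(a,b)\ge(a+b)/2$. That per-coordinate target is not attainable from the ingredients you list. Take $v_1=1/6$, $v_3$ just above $1/4$, and $v_2$ close to $7/12$ (allowed, since only $v_2<3/4-v_1$ is forced). Then $h_B\approx 1/2$, and the density-sorted prefix packed via the width-direction analog of \cref{lem:stackandsteincombine} has volume at most $h_B/4=1/8$, giving profit $\approx (1/8)/(7/12)\cdot\opttwot=\tfrac{3}{14}\opttwot<\tfrac{3}{10}\opttwot$. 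Your suggested auxiliary $3$-slicing of $\OPT\cap S_2$ does not rescue this: a slice fitting in a $1\times 1\times \tfrac12$ box can have volume up to $\tfrac12$, far exceeding the $h_B/4$ budget required to repack it into a Stack plus Steinberg pair, so you would have to sub-select again and lose the same factor.

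The paper avoids this by \emph{not} seeking a per-coordinate constant. It keeps the volume-dependent bounds $\tfrac{H}{4v_2}\opttwot$ and $\tfrac{H}{4v_3}\optthreet$ and combines them with the mediant inequality $\max(x/u,\,y/w)\ge (x+y)/(u+w)$, obtaining
\[
\max\Bigl\{\tfrac{H}{4v_2}\opttwot,\ \tfrac{H}{4v_3}\optthreet\Bigr\}\ \ge\ \tfrac{H}{4(v_2+v_3)}(\opttwot+\optthreet)\ \ge\ \tfrac{H}{4(1-v_1)}(\opttwot+\optthreet).
\]
Now the single constraint $v_2+v_3\le 1-v_1$ does all the work: with $H\ge 1-3v_1$ one checks that $(1-3v_1)/(4(1-v_1))$ is minimized over $v_1\in(2\eps,1/6]$ at $v_1=1/6$, where it equals exactly $3/20$. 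So the right combination step is mediant-in-the-denominators, not average-in-the-numerators; once you make that change (and drop the unattainable $3/10$ claim), the rest of your outline goes through.
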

\begin{proof}
    Using \Cref{lem:stackandstein}, we first obtain a container packing of a subset of $\text{OPT}\cap I_1$ into a Stack and a Steinberg container that has a profit of at least $(1-O(\epsilon))(\frac{3}{4}\optonel+\optones)$. It is guaranteed that the sum of the heights of the two containers does not exceed $1/2$ -- when $v_1 \in (2\epsilon, 1/6]$, the sum is at most $3v_1 \le 1/2$, and when $v_1 \le 2\epsilon$, the sum is $16\epsilon \le 1/2$, for sufficiently small $\epsilon$. The empty space above the Stack and Steinberg containers forms a cuboidal box of height, say $H$ (which is a multiple of $\epsilon^2$), in which we shall pack items from one of the sets $\text{OPT}\cap S_2$, $\text{OPT}\cap S_3$ or $\text{OPT}\cap S$. 

    We focus on the packing of $\text{OPT}\cap S_2$. We sort the items in non-increasing order of their profit to volume ratio, and pick a maximal prefix $S'_2$ whose volume does not exceed $H/4$. Note that $H/4 \le 1/4\le v_2$, and therefore $v(S'_2) \ge \frac{H}{4}-\mu > (1-\epsilon)\frac{H}{4}$, since $H\ge 1/2$. Using an analog of \Cref{lem:stackandsteincombine}, a subset of $S'_2$ with profit at least $(1-\epsilon)p(S'_2)$ can be packed into a Stack and a Steinberg container, both having depth 1, height $H$, and widths integral multiples of $\epsilon^2$. The profit thus obtained is at least
    \[ \frac{(1-\epsilon)^2 H/4}{v_2}\opttwot \ge (1-O(\epsilon))\frac{H}{4v_2}\opttwot.\]

    Doing an analogous procedure for the items in $\text{OPT}\cap S_3$, we would similarly obtain a profit of at least $(1-O(\epsilon))\frac{H}{4v_3}\optthreet$. The best out of these two packings will have a profit of at least
    \[(1-O(\epsilon))\frac{H}{4(v_2+v_3)}(\opttwot + \optthreet) \ge (1-O(\epsilon))\frac{H}{4(1-v_1)}(\opttwot + \optthreet).\]

    Now when $v_1 > 2\epsilon$, \Cref{lem:stackandstein} ensures that $H\ge 1-3v_1$, and thus the best of the two above packings has a profit of at least
    \[ (1-O(\epsilon))\frac{1-3v_1}{4(1-v_1)}(\opttwot + \optthreet) \ge \left(\frac{3}{20}-O(\epsilon)\right)(\opttwot + \optthreet).\]
    In the other case when $v_1 \le 2\epsilon$, we have $H \ge 1-16\epsilon$, and the obtained profit is at least 
    \begin{align*}
        (1-O(\epsilon))\frac{1-16\epsilon}{4(1-v_1)}(\opttwot + \optthreet) &\ge \left(\frac{1}{4}-O(\epsilon)\right)(\opttwot + \optthreet)\\ 
        &> \left(\frac{3}{20}-O(\epsilon)\right)(\opttwot + \optthreet).
    \end{align*}

    Finally, we show that it is also possible to pack a subset of $\text{OPT}\cap S$ having profit of at least $\optlt/3$ above the Stack and Steinberg containers for $\text{OPT}\cap I_1$. Taking the best out of three packings for $\text{OPT}\cap S_2, \text{OPT}\cap S_3$ and $\text{OPT}\cap S$ gives the desired profit guarantee of the theorem.  Consider the packing of $\text{OPT}\cap S$. We draw a horizontal plane at a height of $1/2$ from the bottom face of the knapsack. This partitions the items of $\text{OPT}\cap S$ into three groups - those that lie above the plane, those that lie below the plane, and those that are intersected by the plane. Clearly, one of these groups will have a profit of at least $\optlt/3$, and we are done.
\end{proof}

\subsection{Fifth lower bound on \texorpdfstring{$\optgs$}{optgs}}
\label{sec:fifthlb}
In the previous subsection, we repacked items from $S_2, S_3$ or $S$ together with items in $I_1$. 
Now we consider items in $(I_2\setminus S_2)\cup (I_3\setminus S_3)\cup (L\setminus S)$ in OPT. As all these items have heights of more than 1/2, they can not be stacked on top of each other. So, we can ignore their heights, and effectively obtain a 2D Knapsack instance. 
Now we use a container-based algorithm for 2DK from \cite{khan2021approximation} to obtain a container packing of these items. 

\begin{theorem} [\cite{khan2021approximation}]
\label{thm:cont2dk}
Let $S$ denote a set of items (2D rectangles) that can be feasibly packed into a knapsack and let $\eps \in (0,1)$ be any small constant.
Then there exists a subset $S' \subseteq S$ such that $p(S') \ge (1/2-\eps) \cdot p(S)$. The items in $S'$ are packed into the knapsack in containers. Further, the number of containers formed is $O_{\eps}(1)$ and their widths and heights belong to a set whose cardinality is $\text{poly}(|S|)$ and
moreover, this set can be computed in time $\text{poly}(|S|)$.
\end{theorem}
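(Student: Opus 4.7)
The plan is to prove this 2D container packing structure theorem via a classification-and-shifting approach that is standard in the 2D Knapsack literature. Fix any feasible packing of $S$ into the unit square. Let $\delta$ be a sufficiently small constant depending on $\eps$. I would classify items by dimensions: \emph{large} (both width and height $> \delta$), \emph{horizontal} (width $> \delta$, height $\le \delta$), \emph{vertical} (height $> \delta$, width $\le \delta$), and \emph{small} (both $\le \delta$). Since each large item has area at least $\delta^2$, any feasible packing contains at most $1/\delta^2 = O_\eps(1)$ of them.

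First, I would apply a shifting argument on the threshold: using a chain of scales $\delta_1 = \eps, \delta_{j+1} = \delta_j^{c}$ for a suitable constant $c$, pigeonhole over $\Theta(1/\eps)$ scales gives a choice of $\delta$ such that items whose width or height lies in the thin ``medium'' band $[\delta^{c}, \delta)$ have total profit at most $\eps \cdot p(S)$. These medium items are discarded. Each surviving large item is placed in its own container whose width and height exactly match the item, contributing $O_\eps(1)$ containers with dimensions drawn from $\{(w_i,h_i): i \in S\}$, a polynomial-sized set.

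To handle horizontal, vertical and small items, I would slice the knapsack by horizontal and vertical lines through the edges of all large items. Each resulting cell contains items from at most two of the three remaining classes because of geometric blocking. Within each cell, the horizontal items can be consolidated into $O_\eps(1)$ \emph{horizontal containers} whose widths equal the cell width and whose heights are partial sums of item heights; analogously for vertical items; the small items are gathered via an \NFDH-style shelf argument into $O_\eps(1)$ \emph{area containers}. The factor $(1/2 - \eps)$ enters here: when in a single cell both horizontal-style and vertical-style containers are required, consolidating one orientation forces deleting the other (they would geometrically conflict once rounded). A standard averaging argument over the two choices (keep horizontal + small, or keep vertical + small) shows that the better of the two retains at least half the profit of these classes; together with the essentially-free large items this yields $(1/2 - \eps)p(S)$.

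The main obstacle I anticipate is guaranteeing that the container dimensions come from a $\mathrm{poly}(|S|)$-sized set, computable in polynomial time. Each container boundary is ultimately placed either at the edge of a large item, or at a partial sum of a constant-length sequence of item widths/heights inside a cell; enumerating such sums for all $O_\eps(1)$ cells and all $O_\eps(1)$ container positions gives a polynomial family. Making this precise requires a careful rounding step: heights of horizontal containers must be snapped to multiples of an item height (losing $O(\eps)$ profit per container via a further shifting argument on the smallest-profit horizontal strip), and similarly for widths of vertical containers. Showing that this rounding both respects feasibility and keeps the dimension set polynomial is the delicate part; once done, the rest of the theorem follows.
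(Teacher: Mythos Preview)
The paper does not prove this theorem; it is quoted verbatim as a black-box result from \cite{khan2021approximation} and used only to derive \Cref{thm:2dknapsack}. There is therefore no ``paper's own proof'' to compare your proposal against.

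That said, your outline is in the right spirit for how such 2D container-packing lemmas are typically established (medium-item removal via a geometric sequence of thresholds, $O_\eps(1)$ large items each in its own container, grid induced by large-item edges, consolidation of skewed items into stacked containers and small items into area containers). Two points in your sketch would need tightening before it becomes a proof. First, the assertion that each grid cell contains items from ``at most two of the three remaining classes because of geometric blocking'' is not true in general: a cell bounded only by large-item edges can simultaneously contain horizontal, vertical, and small items. Second, the source of the $1/2$ loss is not quite the local per-cell dichotomy you describe; in the standard arguments (Jansen--Zhang, and the refinements in \cite{2dks-Lpacking,khan2021approximation}) the factor $1/2$ arises from a global choice---roughly, either retain the horizontal/long items plus small, or the vertical/tall items plus small, across the whole knapsack---combined with a strip-removal step to create room for the consolidated containers. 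Your per-cell version would in principle give a stronger bound but does not actually hold, because consolidating horizontal items in one cell can conflict with vertical items in an adjacent cell sharing a boundary. If you rework the $1/2$ step as a single global orientation choice, the rest of your plan (including the polynomial dimension set via sums of $O_\eps(1)$ item dimensions and rounding to item-height multiples) goes through essentially as you wrote it.
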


This gives us our final lower bound on $\alg$.

\begin{restatable}{theorem}{twodkthm}
\label{thm:2dknapsack}
    Consider the packing of $\text{OPT}\cap ((I_2\cup I_3\cup L)\setminus (S_2\cup S_3\cup S))$. Then there is a guessable \container packing into $O_{\mu}(1)$ Stack and Area containers each of height 1, having profit at least $\left(\frac{1}{2}-O(\epsilon)\right)(\opttwoh+\optthreeh+\optlh)$.
\end{restatable}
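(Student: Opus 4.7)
The plan is to reduce the statement to a two-dimensional knapsack instance and invoke \Cref{thm:cont2dk}. The key observation is that every item in the set $T := \text{OPT}\cap((I_2\cup I_3\cup L)\setminus (S_2\cup S_3\cup S))$ has height strictly greater than $1/2$, so no two items of $T$ can share any vertical column in the knapsack: if their base projections overlapped, both items would have to intersect the horizontal plane at height $1/2$, contradicting non-overlap in $3$D. Hence projecting each $i\in T$ to the axis-aligned rectangle $w_i\times d_i$ on the bottom face of the knapsack yields a feasible packing of the rectangles with profits $\{p_i\}$ into the unit square, whose total profit equals $\opttwoh+\optthreeh+\optlh$.

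The next step will be to apply \Cref{thm:cont2dk} to this 2D instance to obtain a subset $T'\subseteq T$ with $p(T')\ge (1/2-\eps)(\opttwoh+\optthreeh+\optlh)$, packed into $O_\eps(1)$ 2D containers whose widths and heights come from a polynomial-sized set. I then lift each 2D container to a 3D container of height $1$ with the same base $w_C\times d_C$. A 2D container that holds a single rectangle, or one that stacks rectangles side by side along one of the two base axes, is lifted to a 3D Stack container stacking along width or depth respectively: the 2D capacity constraint (total width or total depth of packed items at most $w_C$ or $d_C$) agrees exactly with the size function $f_C(i)=w_i$ or $f_C(i)=d_i$ of the Stack container from \Cref{sec:cont-class}, and the required condition $h_i\le 1=h_C$ holds trivially. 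A 2D \NFDH-based small-item container lifts to a 3D Area container on the bottom face, whose capacity $w_Cd_C$ and size function $f_C(i)=w_id_i$ match the 2D area constraint, and the small-dimension hypothesis is inherited from the 2D side.

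The main thing to verify is that the containers produced by the proof of \Cref{thm:cont2dk} can indeed be identified with the Stack and Area containers as defined in \Cref{sec:cont-class} after lifting, and that guessability is preserved: the number of containers remains $O_\eps(1)=O_\mu(1)$ since $\mu$ is a function of $\eps$, and the base dimensions inherit their polynomial-size property from the 2D construction. Once this is checked, we obtain a guessable container packing into $O_\mu(1)$ Stack and Area containers of height $1$ whose profit is at least $\left(\frac{1}{2}-O(\epsilon)\right)(\opttwoh+\optthreeh+\optlh)$, as claimed.
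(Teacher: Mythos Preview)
Your proposal is correct and follows essentially the same approach as the paper: the paper also observes that items with height exceeding $1/2$ cannot be stacked, projects to a 2D knapsack instance, and invokes \Cref{thm:cont2dk} to obtain the container packing. You have supplied the details of the lifting from 2D to 3D (identifying 2D single-item and stacking containers with 3D Stack containers along width or depth, and 2D area-based containers with 3D Area containers on the bottom face) that the paper leaves implicit.
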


\subsection{Bounding \texorpdfstring{$\optgs$}{optgs}}
\label{subsec:bound}
Equipped with all the lower bounds of the preceding subsections, we establish \Cref{thm:mainthm} by an intricate case analysis depending on the volumes of various sets of items in the optimal packing. 


\mainlemma*

We first state the bounds on $\alg$ that we obtained by our restructuring procedures in the previous subsections.

\structuredpackingalgo*

\begin{corollary}
\label{thm:big&largealgo}
    $\alg \ge (1-O(\epsilon))(\opt_L + \max \{\optonel, \opttwol, \optthreel\})$.
\end{corollary}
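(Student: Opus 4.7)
The plan is to derive this corollary directly from the result of Section~\ref{sec:secondlb} (Theorem \texttt{ptasforcombine}) together with a symmetry argument across the three coordinate axes. Recall that Theorem \texttt{ptasforcombine} shows that by discarding at most an $\epsilon$-fraction of $\optonel$ in profit, the items of $\OPT \cap (I_{1\ell} \cup L)$ can be repacked as a guessable container packing (every item of $L$ going into its own Stack container, and items of $I_{1\ell}$ into $(1/\mu)^{O(1/\epsilon)}$ Stack containers obtained from discretized maximal regions of the strips). In particular this yields
\[
\optgs \;\ge\; \opt_L + (1-\epsilon)\optonel \;\ge\; (1-\epsilon)(\opt_L + \optonel).
\]

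Next I would observe that the restructuring in Section~\ref{sec:secondlb} relies only on three structural facts about $I_{1\ell}$ relative to the knapsack: (i) items in $I_{1\ell}$ have a single ``flat'' dimension (here, height $\le \mu$) while both of the other two dimensions exceed $1/2$, so the items must be packed in disjoint slabs perpendicular to the flat axis; (ii) the items of $L$ are cubeish, with all three side lengths at least $\mu$, so they number at most $1/\mu^3$ and individually fit in their own Stack container; and (iii) positions of items can be discretized to a polynomial-sized set using the coordinates derived from items of $L$. All three properties are invariant under permutation of the coordinate axes. Applying the same restructuring with the roles of the width and height (resp.\ depth and height) axes swapped, I would obtain the completely analogous statements
\[
\optgs \ge (1-\epsilon)(\opt_L + \opttwol)
\qquad\text{and}\qquad
\optgs \ge (1-\epsilon)(\opt_L + \optthreel).
\]

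Taking the maximum of the three inequalities immediately yields
\[
\optgs \;\ge\; (1-\epsilon)\bigl(\opt_L + \max\{\optonel,\opttwol,\optthreel\}\bigr)
\;=\; (1-O(\epsilon))\bigl(\opt_L + \max\{\optonel,\opttwol,\optthreel\}\bigr),
\]
which is the claimed bound.

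The only real subtlety (and the main obstacle, such as it is) is verifying cleanly that the symmetry argument goes through. Concretely, in the rotated instance one must check that items in $I_{2\ell}$ (resp.\ $I_{3\ell}$) satisfy the hypotheses played by $I_{1\ell}$ in Theorem~\texttt{ptasforcombine}: namely, that they can be organized into slabs perpendicular to their small axis, that items of $L$ pierce these slabs analogously to the original proof, and that the discretization of face-positions in Lemma~\ref{lem:discretepos} carries over. This is a straightforward relabeling of coordinates, so I would state it briefly as a remark invoking the symmetry of the knapsack cube and of the item classes $I_{1\ell}, I_{2\ell}, I_{3\ell}$ under the corresponding axis permutation, and then conclude.
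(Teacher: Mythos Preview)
Your proposal is correct and follows exactly the intended approach: the paper states this corollary without proof in Section~\ref{subsec:bound}, deriving it directly from the restructuring theorem of Section~\ref{sec:secondlb} (Theorem~\texttt{ptasforcombine}) together with the axis-symmetry argument you describe. Your handling of the symmetry verification is appropriate and nothing further is needed.
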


\begin{corollary}
\label{cor:packingI1}
    The following statements hold.
    \begin{itemize}
        \item $\alg \ge (1-O(\epsilon))\displaystyle\frac{\optones}{\max\{3\vones,1\}}$.
        \item If $v_1 \le 1/3$, then $\displaystyle\alg \ge (1-O(\epsilon))\left(\frac{3}{4}\optonel + \optones\right)$.
        \item If $v_1 \le 1/4$, then $\alg \ge (1-O(\epsilon))\opt_1$.
    \end{itemize}
    Similar lower bounds also hold for $\alg$ with $(\vones,v_1,\optonel,\optones,\opt_1)$ replaced by \newline \noindent $(\vtwos,v_2,\opttwol,\opttwos,\opt_2)$ or $(\vthrees,v_3,\optthreel,\optthrees,\opt_3)$.
\end{corollary}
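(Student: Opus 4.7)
The plan is to derive all three displayed inequalities as an immediate consequence of Theorem \ref{thm:packingI1}, and then to obtain the analogous statements for $I_2$ and $I_3$ by a symmetry argument on the three coordinate axes. The key observation is that $\alg = \optgs$ is, by definition, the maximum profit achievable by any guessable container packing, so any concrete guessable container packing produced by the structural arguments of the previous subsections is a valid lower bound on $\alg$.

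First I will invoke Theorem \ref{thm:packingI1} directly. That theorem constructs (regardless of the value of $v_1$) a packing of a subset of $\OPT \cap I_1$ into one Stack container for items of $I_{1\ell}$ and one Steinberg container for items of $I_{1s}$, each having an $1\times 1$ base and height an integer multiple of $\epsilon^2$. Since both container types belong to our list of guessable container types (their dimensions come from a polynomial-sized set because the heights are multiples of $\epsilon^2$), this constitutes a guessable container packing. Let $P$ denote its profit. Theorem \ref{thm:packingI1} gives us exactly $P \ge (1-O(\epsilon))\optones/\max\{3\vones,1\}$ in general, $P \ge (1-O(\epsilon))(\tfrac{3}{4}\optonel + \optones)$ when $v_1 \le 1/3$, and $P \ge (1-O(\epsilon))\opt_1$ when $v_1 \le 1/4$. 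Since $\alg = \optgs \ge P$, the three claimed inequalities for the $I_1$ quantities follow.

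Next I will establish the analogous bounds for $I_2$ and $I_3$ by symmetry. Recall that items in $I_2$ have small widths (in place of small heights) and items in $I_3$ have small depths; the whole framework of Sections \ref{sec:prelim}--\ref{sec:structurallemma} is symmetric under permutation of the three axes of the unit cube. Concretely, to handle $I_2$ I will rotate the knapsack so that the width axis plays the role previously played by the height axis; items of $I_{2\ell}$ (wide and tall footprint in the $d$-$h$ plane, small width) then correspond to $I_{1\ell}$, and items of $I_{2s}$ correspond to $I_{1s}$. The Stack-along-height container used in the proof of Theorem \ref{thm:packingI1} becomes a Stack-along-width container (type $(1b)$ in Figure \ref{fig:large-stack-containers}), and the Steinberg container for layers stacked along the height becomes a Steinberg container with layers stacked along the width (type $(4c)$). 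Both remain guessable, so applying the rotated version of Theorem \ref{thm:packingI1} yields the three bounds with $(\vones,v_1,\optonel,\optones,\opt_1)$ replaced by $(\vtwos,v_2,\opttwol,\opttwos,\opt_2)$. The case of $I_3$ is handled by the same rotation argument with the depth axis playing the role of the height axis.

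There is no substantive obstacle here: the whole content of the corollary is packaged inside Theorem \ref{thm:packingI1}, and the only thing to verify carefully is that the Stack and Steinberg containers of the other two orientations are indeed in our catalogue of guessable container types and that the analogues of Lemmas \ref{lem:stackandstein} and \ref{lem:stackandsteincombine} are axis-symmetric (which they are, since their proofs use only \Cref{lem:steinberg} and a one-dimensional stacking argument, both of which are orientation-agnostic). Hence the proof will be a short paragraph per bullet, combined with an explicit remark on the symmetry of the axes.
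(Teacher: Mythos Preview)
Your proposal is correct and matches the paper's approach: Corollary~\ref{cor:packingI1} is stated in the paper without proof precisely because it is an immediate consequence of Theorem~\ref{thm:packingI1} (the theorem already asserts the packing is into \emph{guessable} Stack and Steinberg containers), together with the obvious symmetry of the three axes. Your observation that the Stack and Steinberg containers in the other orientations (types $(1b),(1c),(4b),(4c)$) remain in the catalogue of guessable containers is exactly the point needed for the $I_2,I_3$ analogues.
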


\begin{corollary}
\label{cor:combinepacking}
    $\displaystyle\optgs \ge (1-O(\epsilon))\left(\frac{3}{4}\optonel+\optones+\max\left\{\frac{3}{20}(\opttwot+\optthreet),\frac{1}{3}\optlt\right\}\right)$.
\end{corollary}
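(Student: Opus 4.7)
The plan is to read this corollary directly off the theorem proved in Section~\ref{sec:fourthlb} (the fourth lower bound). That theorem exhibits, under the hypotheses $v_1\le 1/6$ and $v_2,v_3>1/4$, a concrete packing of a subset $T$ of $\text{OPT}\cap (I_1\cup S_2\cup S_3\cup S)$ whose profit is $(1-O(\epsilon))X$, where $X$ denotes the right-hand side of the corollary. Items (1)--(5) in the theorem's conclusion already certify that this packing is \emph{guessable} in the sense of Section~\ref{subsec:cp}: it uses only $O_\epsilon(1)$ containers of the five types (Stack/Steinberg with $1\times 1$ base and heights integer multiples of $\epsilon^2$; Stack/Steinberg with depth $1$ and widths/heights integer multiples of $\epsilon^2$; Stack containers whose dimensions equal input side lengths for singletons in $T\cap S$). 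Hence the container dimensions lie in a polynomial-sized set and the profit $p(T)$ is a valid lower bound on $\optgs$, giving $\optgs\ge(1-O(\epsilon))X$.

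What remains is to explain why the absence of the hypotheses $v_1\le 1/6$ and $v_2,v_3>1/4$ in the statement of the corollary is harmless. The corollary is consumed in the case analysis of Section~\ref{subsec:bound}: in the branch where these volume conditions hold, the corollary is invoked; in the complementary branches, the earlier lower bounds on $\optgs$ (namely Corollary~\ref{thm:structuredpackingalgo}, the $(\opt_L+\max\{\optonel,\opttwol,\optthreel\})$-bound, and the three parts of Corollary~\ref{cor:packingI1}) will dominate and take over. I would therefore either (i) restate the corollary with its two preconditions made explicit, or (ii) extend the Section~\ref{sec:fourthlb} theorem to cover all volume regimes. For the extension, the case $\min\{v_2,v_3\}\le 1/4$ is easy: if $v_2\le 1/4\le H/4$, we can pack \emph{all} of $\text{OPT}\cap S_2$ in the Steinberg container above the $I_1$-containers (no prefix is needed), obtaining the full profit $\opttwot\ge\frac{3}{20}(\opttwot+\optthreet)$, with the rest of the construction unchanged.

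The main obstacle, and the only interesting one, is the case $v_1>1/6$, where the Stack$+$Steinberg pair for $I_1$ produced by Lemma~\ref{lem:stackandstein} may overflow the available height $1/2$. I would handle this by first subsampling $\text{OPT}\cap I_1$ to a maximum-profit-density prefix of volume at most $1/6$ and then feeding this prefix into the theorem's construction unchanged. To verify that this subsample still carries a $(1-O(\epsilon))(\tfrac{3}{4}\optonel+\optones)$ fraction of profit, I would use the structural inequality $v_1\ge\tfrac{1}{4}\honel+\vones$ (which follows from the definitions of $I_{1\ell}$ and $I_{1s}$) to bound the profit-to-volume ratio loss. Beyond this calculation the derivation is pure bookkeeping on top of the Section~\ref{sec:fourthlb} theorem.
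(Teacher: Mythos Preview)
Your primary reading is correct and matches the paper: Corollary~\ref{cor:combinepacking} is simply the conclusion of the theorem of Section~\ref{sec:fourthlb} (the packing there is guessable by items (1)--(5)), and the paper invokes it only inside Claim~\ref{cla:139by29}, where the active branch of the case analysis already enforces $v_1\le 1/6$ and $v_2,v_3>1/3>1/4$. So your option~(i)---treating the hypotheses as implicit and only using the corollary where they hold---is exactly what the paper does; no further argument is given or needed.

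Your option~(ii), by contrast, does not go through as written. For $v_2\le 1/4$: the chain ``$v_2\le 1/4\le H/4$'' is the wrong way around (always $H\le 1$, hence $H/4\le 1/4$), and even if you did pack all of $\text{OPT}\cap S_2$ and recover $\opttwot$, the claimed inequality $\opttwot\ge\tfrac{3}{20}(\opttwot+\optthreet)$ fails whenever $\opttwot<\tfrac{3}{17}\optthreet$. For $v_1>1/6$: a profit-density prefix of $\text{OPT}\cap I_1$ of volume $1/6$ carries profit only $\ge \tfrac{1/6}{v_1}\,\opt_1$, which for $v_1$ near $1/3$ is about $\tfrac12\opt_1$; feeding this into Lemma~\ref{lem:stackandstein} then yields roughly $\tfrac{3}{8}\optonel$ rather than $\tfrac{3}{4}\optonel$, and the inequality $v_1\ge\tfrac14\honel+\vones$ does not close that gap. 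Since the corollary is never consumed outside the regime $v_1\le 1/6$, $v_2,v_3>1/4$, option~(i) is the right resolution and is all the paper needs.
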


\begin{corollary}
\label{cor:2dknapsackcontainer}
    $\displaystyle\optgs \ge \left(\frac{1}{2}-O(\epsilon)\right)(\opttwoh+\optthreeh+\optlh)$.
\end{corollary}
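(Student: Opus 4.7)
The corollary follows immediately from Theorem \ref{thm:2dknapsack}, since $\optgs$ is defined as the maximum profit of any guessable container packing, and since $\opttwoh + \optthreeh + \optlh = p(T)$ for $T := \text{OPT} \cap \bigl((I_2 \cup I_3 \cup L) \setminus (S_2 \cup S_3 \cup S)\bigr)$. Hence my plan is to establish Theorem \ref{thm:2dknapsack}. By construction every item in $T$ has height strictly greater than $1/2$, so no two items of $T$ can share a vertical column in any feasible packing. Projecting each item of $T$ onto the floor of the knapsack therefore yields a non-overlapping packing of the base rectangles (of dimensions $w_i \times d_i$) inside the unit square, producing a 2D Knapsack instance on these base rectangles with inherited profits.

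The key step is to apply the container-based 2D Knapsack result (Theorem \ref{thm:cont2dk}) to this projected instance. This yields a subset $T' \subseteq T$ with $p(T') \ge (1/2 - \epsilon)\, p(T)$ whose base rectangles pack into $O_\mu(1)$ rectangular 2D containers with side lengths drawn from a polynomially-sized, efficiently enumerable set. Each such 2D container carries an internal packing which (by the construction in \cite{khan2021approximation}) is of one of three kinds: (a) a single large rectangle, (b) rectangles narrow in one base dimension stacked along that dimension, or (c) rectangles small in both base dimensions arranged by 2D NFDH.

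The next step is to lift each 2D container to a 3D cuboidal container of height exactly $1$ (extending vertically from the floor) and to place each item of $T'$ inside the lifted container at the horizontal coordinates from the 2D packing, resting on the knapsack floor. Heights fit trivially since every item has $h_i \le 1$, and horizontal non-overlap is inherited from the 2D packing, so this gives a feasible 3D packing of $T'$. I then identify each lifted container with a container type from Section \ref{sec:cont-class}: type (a) becomes a Stack container $(1a)$ holding one item; type (b) becomes a Stack container $(1b)$ or $(1c)$ depending on the stacking axis, with the items' heights simply filling the full vertical extent; and type (c) becomes an Area container $(2c)$ whose algorithm $\AM_C$ runs 2D NFDH on the bottom face while ignoring the height dimension. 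In every case the capacity inequality $\sum_{i \in T' \cap C} f_C(i) \le \mathtt{cap}(C)$ is inherited directly from the corresponding 2D packing guarantee.

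The main obstacle is verifying guessability of this lifted packing. The number of containers is $O_\mu(1)$ from Theorem \ref{thm:cont2dk}, and since every container has height exactly $1$ the full 3D dimension set inherits the polynomial size of the 2D dimension set. Moreover, checking whether a guessed collection of $O_\mu(1)$ lifted containers fits non-overlappingly inside the unit cube reduces, since they all rest on the floor and share the full vertical extent, to the polynomial-time 2D feasibility check already available in \cite{khan2021approximation}. Together these facts certify that the lifted packing is a valid guessable container packing, which witnesses $\optgs \ge p(T') \ge \bigl(\tfrac12 - O(\epsilon)\bigr)(\opttwoh + \optthreeh + \optlh)$ as required.
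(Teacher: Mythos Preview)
Your proposal is correct and follows essentially the same approach as the paper: project the tall items onto the base to obtain a 2D Knapsack instance, invoke the 2D container result of \cite{khan2021approximation} (Theorem~\ref{thm:cont2dk}), and lift the resulting 2D containers back to height-$1$ Stack and Area containers in 3D. The paper's own treatment is terser---it simply observes that items with height exceeding $1/2$ cannot overlap vertically and then appeals to Theorem~\ref{thm:cont2dk}---but your more explicit description of the lifting and the type identification (Stack vs.\ Area) matches the container types the paper claims in Theorem~\ref{thm:2dknapsack}.
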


We are now ready to prove \Cref{thm:mainthm}. We divide the proof into several cases. 

\subsubsection{At most one out of \texorpdfstring{$v_1,v_2,v_3$}{v1, v2, v3} exceeds \texorpdfstring{$1/3$}{1/3}}
W.l.o.g.~assume that $v_2,v_3 \le 1/3$.

\begin{claim}
\label{cla:9by2}
    $\displaystyle\opt \le \left(\frac{9}{2}+O(\epsilon)\right)\alg$.
\end{claim}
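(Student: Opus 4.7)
The plan is to derive the bound by taking a weighted linear combination of the lower bounds on $\alg$ established in the preceding subsections, with the weights summing to exactly $9/2$. The hypothesis of the claim gives (w.l.o.g.) $v_2, v_3 \le 1/3$, which is precisely what is needed to invoke both the $I_2$- and $I_3$-parts of \Cref{cor:packingI1}, yielding $\alg \ge (3/4)\opttwol + \opttwos$ and $\alg \ge (3/4)\optthreel + \optthrees$. Importantly, no assumption on $v_1$ is needed, so the same combination will work whether or not $v_1>1/3$; in particular I would not need to split into subcases.

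Concretely, I would use the following six lower bounds, each up to $(1-O(\eps))$-factors: $\alg \ge \opt_L + \optonel$, $\alg \ge \opt_L + \opttwol$, and $\alg \ge \opt_L + \optthreel$ from \Cref{thm:big&largealgo}; $\alg \ge (3/4)\opttwol + \opttwos$ and $\alg \ge (3/4)\optthreel + \optthrees$ from \Cref{cor:packingI1}; and $\alg \ge (2/3)\opt_1 = (2/3)(\optonel+\optones)$ from \Cref{thm:structuredpackingalgo}. I would then add these with weights $1/2,\ 1/4,\ 1/4,\ 1,\ 1,\ 3/2$ respectively, whose total is $9/2$.

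A direct check of the coefficient of each profit term on the resulting right-hand side shows that each is at least $1$: $\opt_L$ receives $1/2+1/4+1/4 = 1$; $\optonel$ receives $1/2 + (2/3)(3/2) = 3/2$; $\optones$ receives $(2/3)(3/2) = 1$; $\opttwol$ receives $1/4 + 3/4 = 1$; $\optthreel$ receives $1/4 + 3/4 = 1$; and each of $\opttwos,\optthrees$ receives $1$. Hence $(9/2+O(\eps))\alg \ge \opt_L + \opt_1 + \opt_2 + \opt_3 = \opt$, which is the desired bound. The only nontrivial step is exhibiting a valid system of weights; once these are in hand, the remainder is a one-line verification, and no new structural argument is required.
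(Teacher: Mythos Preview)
Your proof is correct and follows essentially the same approach as the paper: both take a linear combination of the same three corollaries (\Cref{thm:structuredpackingalgo}, \Cref{thm:big&largealgo}, \Cref{cor:packingI1}) with total weight $9/2$, distributed identically across the three sources ($3/2$, $1$, and $1+1$, respectively). The only cosmetic difference is that the paper writes the contribution from \Cref{thm:big&largealgo} as the single inequality $4\alg \ge (1-O(\epsilon))(4\opt_L + \opttwol + \optthreel)$, effectively dropping the $\optonel$ term in two of the four copies, whereas you keep it via the weight-$1/2$ use of $\alg \ge \opt_L + \optonel$; this gives you a coefficient of $3/2$ on $\optonel$ versus the paper's $1$, but the final bound is the same.
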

\begin{proof}
    From \Cref{thm:structuredpackingalgo}, we get 
    \begin{equation}
    \label{eqn:2a}
        6\alg \ge (4-O(\epsilon))\opt_1.
    \end{equation}

    From \Cref{thm:big&largealgo}, we get
    \begin{equation}
    \label{eqn:2b}
        4\alg \ge (1-O(\epsilon))(4\opt_L + \opttwol+\optthreel).
    \end{equation}

    Finally, from \Cref{cor:packingI1}, we have the following.
    \begin{eqnarray}
        4\alg &\ge& (1-O(\epsilon))(3\opttwol + 4\opttwos).    \label{eqn:2c}\\
        4\alg &\ge& (1-O(\epsilon))(3\optthreel + 4\optthrees).    \label{eqn:2d}
    \end{eqnarray}

    Adding \eqref{eqn:2a}, \eqref{eqn:2b}, \eqref{eqn:2c} and \eqref{eqn:2d}, we obtain
    \[ 18\alg \ge (1-O(\epsilon))(4\opt_1 + 4\opt_2 + 4\opt_3 + 4\opt_L) = (4-O(\epsilon))\opt,\]
    from which the claim follows.
\end{proof}

\subsubsection{Two out of \texorpdfstring{$v_1,v_2,v_3$}{v1, v2, v3} exceed \texorpdfstring{$1/3$}{1/3}}
W.l.o.g.~we assume $v_1\le1/3$ and $v_2,v_3>1/3$. We further subdivide into two cases depending on the values of $\vtwos$ and $\vthrees$.

\paragraph{At most one out of $\vtwos,\vthrees$ exceeds $1/3$}
W.l.o.g.~we assume $\vtwos \le 1/3$.

\begin{claim}
\label{cla:37by8}
    $\displaystyle\opt \le \left(\frac{37}{8}+O(\epsilon)\right)\alg$.
\end{claim}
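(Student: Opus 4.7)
The plan is to establish the bound by taking a suitable nonnegative linear combination of the lower bounds on $\alg$ already proved in \Cref{cor:packingI1}, \Cref{thm:structuredpackingalgo}, and \Cref{thm:big&largealgo}. In this case the inventory of available bounds is
\begin{enumerate}
    \item $\alg \ge (1-O(\epsilon))\left(\tfrac{3}{4}\optonel + \optones\right)$, available because $v_1 \le 1/3$;
    \item $\alg \ge (2/3-O(\epsilon))\opt_2$ and $\alg \ge (2/3-O(\epsilon))\opt_3$ (from \Cref{thm:structuredpackingalgo});
    \item $\alg \ge (1-O(\epsilon))\opttwos$, because $\vtwos \le 1/3$ makes the denominator $\max\{3\vtwos,1\}=1$;
    \item the three bounds $\alg \ge (1-O(\epsilon))(\opt_L + \optonel)$, $\alg \ge (1-O(\epsilon))(\opt_L + \opttwol)$, $\alg \ge (1-O(\epsilon))(\opt_L + \optthreel)$.
\end{enumerate}
Crucially, because $v_3>1/3$ and $\vthrees$ could be arbitrarily close to $1$, no useful bound of the form $\alg \ge c \cdot \optthrees$ with $c > 2/3$ is available; the contribution from $\optthrees$ must therefore come entirely from $\alg \ge (2/3)\opt_3$, which forces the multiplier on that inequality to be at least $3/2$.

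With this rigidity identified, I would set up the dual LP (minimize $\sum \lambda_i$ subject to the coefficient of each of $\optonel,\optones,\opttwol,\opttwos,\optthreel,\optthrees,\opt_L$ on the combined RHS being at least $1$) and read off the optimal multipliers. A short calculation shows the optimum is $37/8$, attained by weights
\[
1,\ \tfrac{3}{8},\ \tfrac{3}{4},\ \tfrac{3}{2},\ \tfrac{1}{4},\ \tfrac{3}{4}
\]
on the bounds in the order (i), (ii) for $\opt_2$, (iii), (ii) for $\opt_3$, (iv) with $\optonel$, (iv) with $\opttwol$, respectively; no weight is needed on the third variant of (iv) since $\optthreel$ is already fully covered by the weight $3/2$ on $(2/3)\opt_3$.

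Summing these scaled inequalities gives $\tfrac{37}{8}\alg \ge (1-O(\epsilon))\cdot S$, where $S$ is the RHS. I would verify coefficient-by-coefficient that $S=\opt$: for example $\optonel$ gets $\tfrac{3}{4}\cdot 1 + \tfrac{1}{4}\cdot 1 = 1$, $\opttwol$ gets $\tfrac{3}{8}\cdot \tfrac{2}{3}+ \tfrac{3}{4}\cdot 1 = 1$, $\opttwos$ gets $\tfrac{3}{8}\cdot \tfrac{2}{3}+ \tfrac{3}{4}\cdot 1 = 1$, $\optthreel$ and $\optthrees$ each get $\tfrac{3}{2}\cdot \tfrac{2}{3} = 1$, and $\opt_L$ gets $\tfrac{1}{4}+\tfrac{3}{4}=1$. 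Rearranging yields $\opt \le (37/8+O(\epsilon))\alg$, completing the claim.

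The only real obstacle is identifying the right multipliers, and this is handled systematically via the dual LP; once the weights are in hand the verification is a routine bookkeeping check, entirely analogous to the proof of \Cref{cla:9by2}.
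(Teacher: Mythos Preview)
Your proposal is correct and uses exactly the same linear combination as the paper's proof: the paper simply scales all of your multipliers by $8$ to clear fractions, writing the four ingredients as $8\alg \ge (1-O(\epsilon))(6\optonel+8\optones)$, $15\alg \ge (2-O(\epsilon))(\opt_2+4\opt_3)$, $6\alg \ge (6-O(\epsilon))\opttwos$, and $8\alg \ge (1-O(\epsilon))(8\opt_L+2\optonel+6\opttwol)$, and then adds to get $37\alg \ge (8-O(\epsilon))\opt$. Your framing via the dual LP is a helpful way to motivate the choice of weights, which the paper presents without explanation.
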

\begin{proof}
    From \Cref{thm:structuredpackingalgo}, we obtain
    \begin{equation}
    \label{eqn:3a}
        15\alg \ge (2-O(\epsilon))(\opt_2+4\opt_3).
    \end{equation}

    From \Cref{thm:big&largealgo}, we get
    \begin{equation}
    \label{eqn:3b}
        8\alg \ge (1-O(\epsilon))(8\opt_L + 2\optonel + 6\opttwol).
    \end{equation}

    Finally, \Cref{cor:packingI1} gives the following.
    \begin{eqnarray}
        8\alg &\ge& (1-O(\epsilon))(6\optonel + 8\optones)  \label{eqn:3c}\\
        6\alg &\ge& (6-O(\epsilon))\opttwos  \label{eqn:3d}
    \end{eqnarray}

    Adding \eqref{eqn:3a}, \eqref{eqn:3b}, \eqref{eqn:3c} and \eqref{eqn:3d}, we obtain
    \[ 37\alg \ge (1-O(\epsilon))(8\opt_1 + 8\opt_2 + 8\opt_3 + 8\opt_L) = (8-O(\epsilon))\opt\]
    and the claim follows.
\end{proof}

\paragraph{Both $\vtwos,\vthrees$ exceed $1/3$}
We further subdivide into three cases depending on the value of $v_1$.

\begin{enumerate}
    \item $v_1 > 1/4$: We show the following lemma for this case.

    \begin{lemma}
    \label{lem:optgslb}
        $\displaystyle\optgs \ge \left(\frac{4}{9}-O(\epsilon)\right)(\opttwos+\optthrees)$.
    \end{lemma}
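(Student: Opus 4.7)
The plan is to combine the volume-based lower bound from \Cref{cor:packingI1} for $\opt_{1s}$-type items (applied to both $I_2$ and $I_3$) with the volume budget constraint forced by $v_1 > 1/4$. The hypotheses give us $\vtwos, \vthrees > 1/3$, which means the ``$\max\{3\vtwos,1\}$'' and ``$\max\{3\vthrees,1\}$'' in \Cref{cor:packingI1} both simplify to the volume-proportional branch.

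Concretely, I would first invoke \Cref{cor:packingI1} (applied to $I_2$ and $I_3$ in place of $I_1$) to get
\[
    \optgs \;\ge\; (1-O(\epsilon))\,\frac{\opttwos}{3\vtwos}, \qquad
    \optgs \;\ge\; (1-O(\epsilon))\,\frac{\optthrees}{3\vthrees}.
\]
Rearranging each inequality to bound $\opttwos$ and $\optthrees$ from above in terms of $\optgs$ and adding yields
\[
    \opttwos + \optthrees \;\le\; \bigl(1+O(\epsilon)\bigr)\cdot 3(\vtwos + \vthrees)\,\optgs.
\]
Next I would use the volume budget. Since $v_1 + v_2 + v_3 \le 1$ and $v_1 > 1/4$, we have $v_2 + v_3 < 3/4$; since $\vtwos \le v_2$ and $\vthrees \le v_3$, this forces $\vtwos + \vthrees < 3/4$. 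Substituting gives
\[
    \opttwos + \optthrees \;\le\; \bigl(1+O(\epsilon)\bigr)\cdot 3 \cdot \tfrac{3}{4}\,\optgs \;=\; \left(\tfrac{9}{4}+O(\epsilon)\right)\optgs,
\]
which rearranges to the claimed bound $\optgs \ge \left(\tfrac{4}{9}-O(\epsilon)\right)(\opttwos + \optthrees)$.

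This case is essentially mechanical once the right bounds are chosen, so I do not anticipate a real obstacle; the only delicate point is to make sure the branch $3\vtwos$ (not $1$) is activated in the ``$\max$'' of \Cref{cor:packingI1}, and similarly for $\vthrees$, which is precisely why the sub-case hypothesis $\vtwos, \vthrees > 1/3$ is singled out here.
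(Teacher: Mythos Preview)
Your proposal is correct and follows essentially the same approach as the paper: apply \Cref{cor:packingI1} to both $I_2$ and $I_3$ (using $\vtwos,\vthrees>1/3$ to activate the $3v$ branch of the $\max$), then combine with $\vtwos+\vthrees\le 1-v_1\le 3/4$. The paper phrases the combination via the mediant inequality $\max\{a/b,c/d\}\ge (a+c)/(b+d)$ rather than rearranging and adding, but this is a cosmetic difference only.
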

    \begin{proof}
        Since $\vtwos, \vthrees > 1/3$, \Cref{cor:packingI1} implies that $\alg \ge (1-O(\epsilon))\frac{\opttwos}{3\vtwos}$ and $\alg \ge (1-O(\epsilon))\frac{\optthrees}{3\vthrees}$, respectively. Hence,
        \begin{align*}
        \alg \ge& (1-O(\epsilon))\cdot \max\left\{\frac{\opttwos}{3\vtwos},\frac{\optthrees}{3\vthrees}\right\} \ge (1-O(\epsilon))\frac{\opttwos+\optthrees}{3(\vtwos+\vthrees)} \\ \ge& \left(\frac{4}{9}-O(\epsilon)\right)(\opttwos+\optthrees),
        \end{align*}
        where the last inequality follows since $\vtwos+\vthrees \le 1-v_1\le 3/4$.
    \end{proof}

    \begin{claim}
    \label{cla:151by32}
        $\displaystyle\opt \le \left(\frac{151}{32}+O(\epsilon)\right)\alg$.
    \end{claim}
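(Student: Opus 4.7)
The plan is to prove this bound via a weighted linear combination of the lower bounds on $\alg$ that are valid in the current subcase, namely $v_1 \in (1/4, 1/3]$, $v_2, v_3 > 1/3$, and $\vtwos, \vthrees > 1/3$. Under these hypotheses the following seven bounds all hold (all up to a $(1-O(\eps))$ factor): from \Cref{cor:packingI1} applied to $I_1$ (using $v_1 \le 1/3$) we get $\alg \ge \tfrac{3}{4}\optonel+\optones$; from \Cref{thm:structuredpackingalgo} applied to $I_2$ and $I_3$ we get $\alg \ge \tfrac{2}{3}(\opttwol+\opttwos)$ and $\alg \ge \tfrac{2}{3}(\optthreel+\optthrees)$; from \Cref{thm:big&largealgo} we get three bounds $\alg \ge \opt_L+\optonel$, $\alg \ge \opt_L+\opttwol$, and $\alg \ge \opt_L+\optthreel$; and finally from \Cref{lem:optgslb} (applicable since $\vtwos, \vthrees > 1/3$) we get $\alg \ge \tfrac{4}{9}(\opttwos+\optthrees)$.

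The approach is to take a convex combination of these seven inequalities with carefully chosen coefficients so that every term $\optonel, \optones, \opttwol, \opttwos, \optthreel, \optthrees, \opt_L$ accumulates a coefficient of exactly $1$, while the sum of the coefficients on $\alg$ is $151/32$. Specifically, I will use coefficient $1$ on the $I_1$-bound $\tfrac{3}{4}\optonel+\optones$, coefficient $\tfrac{15}{16}$ on each of the two structured bounds $\tfrac{2}{3}\opt_2$ and $\tfrac{2}{3}\opt_3$, coefficient $\tfrac{3}{8}$ on each of $\opt_L+\opttwol$ and $\opt_L+\optthreel$, coefficient $\tfrac{1}{4}$ on $\opt_L+\optonel$, and coefficient $\tfrac{27}{32}$ on $\tfrac{4}{9}(\opttwos+\optthrees)$.

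The verification is then a straightforward bookkeeping check: for $\optonel$, the coefficient is $1\cdot\tfrac{3}{4}+\tfrac{1}{4}=1$; for $\optones$ it is $1$; for $\opttwol$ it is $\tfrac{15}{16}\cdot\tfrac{2}{3}+\tfrac{3}{8}=\tfrac{5}{8}+\tfrac{3}{8}=1$; for $\opttwos$ it is $\tfrac{15}{16}\cdot\tfrac{2}{3}+\tfrac{27}{32}\cdot\tfrac{4}{9}=\tfrac{5}{8}+\tfrac{3}{8}=1$; by symmetry $\optthreel$ and $\optthrees$ also receive coefficient $1$; and $\opt_L$ receives $\tfrac{3}{8}+\tfrac{3}{8}+\tfrac{1}{4}=1$. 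Summing the coefficients on $\alg$ gives $1+\tfrac{15}{16}+\tfrac{15}{16}+\tfrac{3}{8}+\tfrac{3}{8}+\tfrac{1}{4}+\tfrac{27}{32}=\tfrac{151}{32}$, so adding all seven weighted inequalities yields $\tfrac{151}{32}\alg \ge (1-O(\eps))\,\opt$, which is equivalent to the claim after absorbing the error into the additive $O(\eps)$ term.

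There is no real obstacle here; the entire content of the proof is the LP-style choice of weights, which was discovered by solving the associated linear program whose primal objective is to express $\opt$ as a nonnegative combination of the seven available inequalities with minimum total weight on $\alg$. Once the weights are exhibited, the verification is purely arithmetic. The only thing to double-check is that all seven inequalities are indeed legitimately available under the current case hypothesis $v_1 \in (1/4, 1/3]$, $\vtwos, \vthrees > 1/3$ — in particular, that the $v_1 \le 1/3$ branch of \Cref{cor:packingI1} applies and that \Cref{lem:optgslb} applies — which it does.
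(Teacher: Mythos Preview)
Your proof is correct and essentially identical to the paper's: both take the same weighted linear combination of the same four lower bounds (\Cref{thm:structuredpackingalgo}, \Cref{thm:big&largealgo}, \Cref{cor:packingI1}, \Cref{lem:optgslb}) with the same effective weights, only differing cosmetically in that you split \Cref{thm:big&largealgo} into three separate inequalities with weights $\tfrac{1}{4},\tfrac{3}{8},\tfrac{3}{8}$ whereas the paper merges them into a single line, and you normalize so that the right-hand side is $\opt$ rather than $32\,\opt$.
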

    \begin{proof}
        From \Cref{thm:structuredpackingalgo}, we get
        \begin{equation}
        \label{eqn:4a}
            60\alg \ge (20-O(\epsilon))(\opt_2 + \opt_3).
        \end{equation}

        From \Cref{thm:big&largealgo}, we get
        \begin{equation}
        \label{eqn:4b}
            32\alg \ge (1-O(\epsilon))(32\opt_L + 8\optonel + 12\opttwol + 12\optthreel).
        \end{equation}

        From \Cref{cor:packingI1}, we get
        \begin{equation}
        \label{eqn:4c}
            32\alg \ge (1-O(\epsilon))(24\optonel + 32\optones).
        \end{equation}

        Finally from \Cref{lem:optgslb}, we obtain
        \begin{equation}
        \label{eqn:4d}
            27\alg \ge (12-O(\epsilon))(\opttwos + \optthrees).
        \end{equation}

        Adding \eqref{eqn:4a}, \eqref{eqn:4b}, \eqref{eqn:4c} and \eqref{eqn:4d}, we have that
        \[ 151\alg \ge (1-O(\epsilon))(32\opt_1 + 32\opt_2 + 32\opt_3 + 32\opt_L) = (32-O(\epsilon))\opt,\]
        and are done.
    \end{proof}


    



    \item $v_1 \in (1/6, 1/4]:$ Similar to the previous case, we first show the following lower bound on $\alg$.

    \begin{lemma}
    \label{lem:optgslbcase2}
        $\displaystyle\alg \ge \left(\frac{2}{5}-O(\epsilon)\right)(\opttwos+\optthrees)$.
    \end{lemma}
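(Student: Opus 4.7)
The plan is to mirror the argument from \Cref{lem:optgslb}, with the only difference coming from a tighter volume constraint available in the present case. Specifically, I would invoke the first bullet of \Cref{cor:packingI1} (and its analog for $I_3$) applied to the sets $\text{OPT}\cap I_{2s}$ and $\text{OPT}\cap I_{3s}$: since $\vtwos, \vthrees > 1/3 > 1/3$, both denominators in those bounds are the $3\vtwos$ and $3\vthrees$ terms, giving
\[
\alg \ge (1-O(\epsilon))\frac{\opttwos}{3\vtwos}, \qquad \alg \ge (1-O(\epsilon))\frac{\optthrees}{3\vthrees}.
\]

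Next, I would take the maximum of the two inequalities and use the standard fact that $\max\{a/x, b/y\} \ge (a+b)/(x+y)$ for nonnegative reals, yielding
\[
\alg \ge (1-O(\epsilon)) \frac{\opttwos+\optthrees}{3(\vtwos+\vthrees)}.
\]
The crucial step is to bound the denominator. Since the three volumes $v_1, v_2, v_3$ sum to at most $1$ (the knapsack's volume) and we are in the subcase $v_1 > 1/6$, we get $v_2 + v_3 \le 5/6$, and therefore $\vtwos + \vthrees \le v_2 + v_3 \le 5/6$. Substituting this into the previous display,
\[
\alg \ge (1-O(\epsilon))\frac{\opttwos+\optthrees}{3 \cdot (5/6)} = \left(\frac{2}{5}-O(\epsilon)\right)(\opttwos+\optthrees),
\]
which is exactly the claimed bound.

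There is no real obstacle here: the proof is a direct replay of \Cref{lem:optgslb} with the improved volume slack $v_1 > 1/6$ (rather than $v_1 > 1/4$) feeding through to a better denominator ($5/6$ instead of $3/4$), which strengthens the constant from $4/9$ to $2/5$. The only thing to double-check is that the hypotheses of \Cref{cor:packingI1} are met, i.e.~that it is indeed the first bullet (and not the second or third) that applies --- but this is immediate from $\vtwos, \vthrees > 1/3$.
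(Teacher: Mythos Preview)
Your proof is correct and essentially identical to the paper's own argument: both invoke the first bullet of \Cref{cor:packingI1} (for $I_2$ and $I_3$), use the mediant-type inequality $\max\{a/x,b/y\}\ge (a+b)/(x+y)$, and then bound $\vtwos+\vthrees\le 1-v_1\le 5/6$ from $v_1>1/6$. One small slip in your closing remark: going from $4/9$ to $2/5$ is a \emph{weaker} constant (since $2/5<4/9$), not a stronger one---the smaller volume slack $v_1>1/6$ gives a looser denominator bound here than in \Cref{lem:optgslb}.
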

    \begin{proof}
        Notice that since $v_1>1/6$, we have $\vtwos+\vthrees \le 1-v_1\le 5/6$. Then, similar to the proof of \Cref{lem:optgslb}, we obtain
        \begin{align*}
            \alg \ge& (1-O(\epsilon))\cdot \max\left\{\frac{\opttwos}{3\vtwos},\frac{\optthrees}{3\vthrees}\right\} \ge (1-O(\epsilon))\frac{\opttwos+\optthrees}{3(\vtwos+\vthrees)}\\ \ge& \left(\frac{2}{5}-O(\epsilon)\right)(\opttwos+\optthrees).
        \end{align*}
    \end{proof}

    \begin{claim}
    \label{cla:19by4}
        $\displaystyle\opt \le \left(\frac{19}{4}+O(\epsilon)\right)\alg$.
    \end{claim}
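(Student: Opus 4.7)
The plan is to prove \Cref{cla:19by4} by taking a non-negative linear combination of six lower bounds on $\alg$, in exactly the same LP-style spirit as \Cref{cla:9by2}, \Cref{cla:37by8}, and \Cref{cla:151by32}.

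Since we are in the sub-case $v_1 \le 1/4$, the third bullet of \Cref{cor:packingI1} supplies the strong per-set bound $4\alg \ge (4-O(\epsilon))\opt_1$. From \Cref{thm:big&largealgo} we take the two inequalities $2\alg \ge (2-O(\epsilon))(\opt_L+\opttwol)$ and $2\alg \ge (2-O(\epsilon))(\opt_L+\optthreel)$. \Cref{thm:structuredpackingalgo} provides $3\alg \ge (2-O(\epsilon))\opt_2$ and $3\alg \ge (2-O(\epsilon))\opt_3$. Finally, the just-established \Cref{lem:optgslbcase2} yields $5\alg \ge (2-O(\epsilon))(\opttwos+\optthrees)$.

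Adding the six inequalities, the left-hand side sums to $(4+2+2+3+3+5)\alg = 19\alg$. Expanding $\opt_2 = \opttwol+\opttwos$ and $\opt_3 = \optthreel+\optthrees$ on the right, each of the six quantities $\opt_1,\opt_L,\opttwol,\opttwos,\optthreel,\optthrees$ ends up with total coefficient exactly $4-O(\epsilon)$, so the right-hand side equals $(4-O(\epsilon))\opt$. Hence $19\alg \ge (4-O(\epsilon))\opt$, which is the desired bound.

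The only subtle point is that the range $v_1 \in (1/6, 1/4]$ sits strictly above the threshold $v_1 \le 1/6$ required by the mixed-packing bound of \Cref{cor:combinepacking}, so that bound is unavailable in this sub-case. Fortunately, the strong per-set bound $\alg \ge (1-O(\epsilon))\opt_1$ (which needs only $v_1 \le 1/4$), combined with the $\frac{2}{5}$-bound on $\opttwos+\optthrees$ from \Cref{lem:optgslbcase2} (which exploits $v_1 > 1/6$ via $\vtwos+\vthrees \le 1-v_1 \le 5/6$), is enough to push the ratio from the trivial $11/2$ one would get by naively combining the per-set bounds down to $19/4$. Hence there is no genuine obstacle beyond verifying that the six natural coefficients $(4,2,2,2,2,2)$ simultaneously match the six distinct profit types on the right — which they do exactly.
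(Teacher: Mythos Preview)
Your proof is correct and essentially identical to the paper's: you split the paper's combined inequalities \eqref{eqn:5a} and \eqref{eqn:5b} into their separate components (two from \Cref{thm:big&largealgo}, two from \Cref{thm:structuredpackingalgo}), but the resulting linear combination is exactly the same, summing to $19\alg \ge (4-O(\epsilon))\opt$. The only quibble is the phrase ``the six natural coefficients $(4,2,2,2,2,2)$'' in your closing commentary, which is unclear (every profit type ends up with total coefficient $4$, not those values), but this does not affect the argument.
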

    \begin{proof}
        From \Cref{thm:structuredpackingalgo}, we have
        \begin{equation}
        \label{eqn:5a}
            6\alg \ge (2-O(\epsilon))(\opt_2 + \opt_3).
        \end{equation}

        Next, from \Cref{thm:big&largealgo}, we get
        \begin{equation}
        \label{eqn:5b}
            4\alg \ge (1-O(\epsilon))(4\opt_L + 2\opttwol + 2\optthreel). 
        \end{equation}

        From \Cref{cor:packingI1}, we get
        \begin{equation}
        \label{eqn:5c}
            4\alg \ge (4-O(\epsilon))\opt_1.
        \end{equation}

        Finally, from \Cref{lem:optgslbcase2}, we get
        \begin{equation}
        \label{eqn:5d}
            5\alg \ge (2-O(\epsilon))(\opttwos + \optthrees).
        \end{equation}

        Adding \eqref{eqn:5a}, \eqref{eqn:5b}, \eqref{eqn:5c} and \eqref{eqn:5d}, we obtain
        \[ 19\alg \ge (1-O(\epsilon))(4\opt_1 + 4\opt_2 + 4\opt_3 + 4\opt_L) = (4-O(\epsilon))\opt,\]
        and are done.        
    \end{proof}




    \item $v_1 \le 1/6:$ In this case, we have the following lower bound on $\alg$.


    \begin{claim}
    \label{cla:139by29}
        $\displaystyle\opt \le \left(\frac{139}{29}+O(\epsilon)\right)\alg$.
    \end{claim}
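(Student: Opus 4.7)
My plan is to derive the bound by taking a specific non-negative linear combination of six of the lower bounds on $\alg$ established in the preceding corollaries. A crucial enabler is that, since $v_1 \le 1/6$ and $v_2, v_3 > 1/4$, the combined-packing bound of \Cref{cor:combinepacking} is available in both its branches, i.e., $\alg \ge (1-O(\eps))\left(\frac{3}{4}\optonel + \optones + \frac{3}{20}(\opttwot + \optthreet)\right)$ and $\alg \ge (1-O(\eps))\left(\frac{3}{4}\optonel + \optones + \frac{1}{3}\optlt\right)$. Combined with the 2DK bound $\alg \ge (\frac{1}{2}-O(\eps))(\opttwoh + \optthreeh + \optlh)$ from \Cref{cor:2dknapsackcontainer}, these cover $\opt_2$, $\opt_3$, $\opt_L$ via their ``tall/thin'' decomposition (namely $\opt_2 = \opttwoh + \opttwot$, $\opt_3 = \optthreeh + \optthreet$, $\opt_L = \optlh + \optlt$), complementing the ``large/small base-area'' decomposition used by the structured-packing bound (\Cref{thm:structuredpackingalgo}, giving $\alg \ge (\frac{2}{3}-O(\eps))\opt_i$ for $i \in \{2,3\}$) and by the variant $\alg \ge (1-O(\eps))(\opt_L + \optonel)$ of \Cref{thm:big&largealgo}.

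Concretely, I would multiply each structured-packing bound (on $\opt_2$ and $\opt_3$) by $39$, the bound $\alg \ge \opt_L + \optonel$ by $26$, the first branch of the combined-packing bound by $20$, the second branch by $9$, and the 2DK bound by $6$, so that the total multiplier on the left-hand side becomes $2\cdot 39 + 26 + 20 + 9 + 6 = 139$. Gathering coefficients on the right-hand side and using the identities for $\opt_2$, $\opt_3$, $\opt_L$ mentioned above, each of $\opt_2$, $\opt_3$, $\opt_L$ accumulates coefficient $26 + 3 = 29$; the coefficient of $\optones$ is $20 + 9 = 29$; and the coefficient of $\optonel$ is $26 + 15 + \frac{27}{4} = \frac{191}{4} \ge 29$. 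Summing the six inequalities therefore gives $139 \alg \ge (29 - O(\eps))\opt$, i.e., $\opt \le \left(\frac{139}{29} + O(\eps)\right)\alg$.

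The hard part will be identifying these specific multipliers, which is essentially a small linear-programming exercise that balances coverage of $\opt_2$, $\opt_3$, $\opt_L$ across their two different decompositions. Interestingly, at the LP optimum the bounds $\alg \ge \opt_1$ and $\alg \ge (\opttwos+\optthrees)/(3(1-v_1))$ from \Cref{cor:packingI1}, as well as the big+large variants involving $\opttwol$ and $\optthreel$, all receive multiplier zero: the combined-packing bound alone supplies sufficient coverage of $\optones$ (and partial coverage of $\optonel$ via the $(3/4)\optonel$ term), while pairing combined-packing with the 2DK bound fully covers the tall portion of $\opt_2$, $\opt_3$, $\opt_L$, rendering the other bounds redundant. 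Verifying that the six chosen inequalities indeed yield each required coefficient is then a routine arithmetic check.
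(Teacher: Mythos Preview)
Your proposal is correct and is essentially the paper's own argument, just presented with the two branches of \Cref{cor:combinepacking} kept separate and all multipliers scaled down by a factor of four: the paper's weights $(312,104,116,24)$ on its four displayed inequalities correspond exactly to your $(2\cdot 39, 26, 20+9, 6)$ after scaling, since its single ``$116\alg$'' line is itself the combination of the two branches with weights $80$ and $36$. The only cosmetic difference is that the paper weakens the big-plus-large bound to make the $\optonel$ coefficient come out exactly, whereas you keep the full $\alg \ge \opt_L + \optonel$ and end up with harmless slack $191/4 \ge 29$ on that term.
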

    \begin{proof}
        From \Cref{thm:structuredpackingalgo}, we get
        \begin{equation}
        \label{eqn:6a}
            312\alg \ge (104-O(\epsilon))(\opt_2 + \opt_3).
        \end{equation}

        From \Cref{thm:big&largealgo}, we have
        \begin{equation}
        \label{eqn:6b}
            104\alg \ge (1-O(\epsilon))(75\opt_L + 29\optonel).
        \end{equation}

        Again, from \Cref{cor:combinepacking}, we have 
        \begin{equation}
        \label{eqn:6d}
            116\alg \ge (1-O(\epsilon))(87\optonel + 116\optones + 12(\opttwot + \optthreet)+ 12\optlt).
        \end{equation}

        Finally, from \Cref{cor:2dknapsackcontainer}, we get
        \begin{equation}
        \label{eqn:6e}
            24\alg \ge (1-O(\epsilon))(12\optlh + 12\opttwoh + 12\optthreeh).
        \end{equation}

        Adding \eqref{eqn:6a}, \eqref{eqn:6b}, \eqref{eqn:6d} and \eqref{eqn:6e}, we get
        \[ 556\alg \ge (1-O(\epsilon))(116\opt_1 + 116\opt_2 + 116\opt_3 + 116\opt_L) = (116-O(\epsilon))\opt,\]
        and the claim follows.        
    \end{proof}

    From Claims \ref{cla:9by2}, \ref{cla:37by8}, \ref{cla:151by32}, \ref{cla:19by4}, \ref{cla:139by29}, we obtain an worst case approximation ratio of $\left(\frac{139}{29}+O(\epsilon)\right)$. This bound cannot be improved by the current analysis, since we have the following tight case.

    $\displaystyle\opt_1 = \frac{23}{139}\opt, \opt_2=\opt_3 = \frac{87}{278}\opt$, $\displaystyle\opt_L = \frac{29}{139}\opt$,
    
    $\displaystyle\optonel = 0$, $\displaystyle\optones = \frac{23}{139}\opt$,
    
    $\displaystyle\opttwol = 0$, $\displaystyle\opttwos = \frac{87}{278}\opt$,
    
    $\optthreel = 0$, $\displaystyle\optthrees = \frac{87}{278}\opt$,
    
    $\opttwot = 0$, $\displaystyle\opttwoh = \frac{87}{278}\opt$,
    
    $\displaystyle\optthreet = \frac{40}{139}\opt$, $\displaystyle\optthreeh = \frac{7}{278}\opt$,
    
    $\displaystyle\optlt = \frac{18}{139}\opt$, $\displaystyle\optlh = \frac{11}{139}\opt$

\end{enumerate}

\section{Special cases of \tdk}
In this section, we give improved approximation algorithms for \tdk~for two special cases. The first case is when all input items have the same profit, and therefore the problem reduces to packing the maximum number of items into the knapsack. The second interesting case is when the profit of an item is equal to its volume. The goal in this case is to maximize the total packed volume inside the knapsack.

\subsection{Cardinality case}
\label{sec:nonrot-cardinality}
We show the following result.

\begin{theorem}
    \label{thm:3dkc}
   For any constant $\eps>0$, there is a polynomial-time $(17/4 +\eps)$-approximation algorithm for \tdk~when all items have the same profit. 
\end{theorem}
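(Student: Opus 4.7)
\emph{Proof plan.}
We adapt the framework of \cref{sec:structurallemma} to the cardinality setting, in which every item has unit profit so profits and cardinalities coincide. The key observation specific to this setting is that $|L \cap \OPT| \le 1/\mu^3$, because each item in $L$ has all three side lengths at least $\mu$ and hence volume at least $\mu^3$; consequently $\opt_L \le 1/\mu^3 = O_\mu(1)$. If $\opt$ exceeds a sufficiently large constant $C(\mu,\epsilon)$, then $\opt_L \le \epsilon\,\opt$ and is absorbed into the error term; otherwise $\opt \le C(\mu,\epsilon)$ and we solve the instance exactly by brute-force enumeration over all subsets of size at most $C(\mu,\epsilon)$, using a constant-size geometric feasibility check of the type enabled by the position discretization of \cref{lem:discretepos}. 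Hence we may henceforth assume $\opt_L \le \epsilon\,\opt$ and effectively $\opt \approx \opt_1+\opt_2+\opt_3$.

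We then reuse the five lower bounds on $\optgs$ from \cref{sec:structurallemma}, which hold verbatim because their derivations are profit-oblivious. In the cardinality regime, \cref{cor:packingI1} simplifies: since $v_{1s}\le 1$, it yields the uniform inequality $\alg \ge (1-O(\epsilon))\optones/3$ (and analogously for $i=2,3$), in addition to the stronger conditional bounds $(3/4)\optonel + \optones$ when $v_1 \le 1/3$ and $\opt_1$ when $v_1 \le 1/4$. Combined with \cref{thm:structuredpackingalgo,thm:big&largealgo,cor:combinepacking,cor:2dknapsackcontainer}, these form the toolkit, and \cref{thm:ptasboxpacking} converts any lower bound on $\optgs$ into a polynomial-time approximation of matching ratio.

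We then redo the case analysis of \cref{subsec:bound}, partitioning on how many of $v_1,v_2,v_3$ exceed $1/4$ (and further $1/3$). When at least two of the $v_i$ are at most $1/4$, the conditional bound $\alg \ge \opt_i$ applies to those two indices, and combining with \cref{thm:structuredpackingalgo} on the third index gives a ratio at most $7/2$. When exactly one $v_i$ is at most $1/4$, a similar but slightly weaker combination gives a ratio at most $4$. When all three $v_i$ exceed $1/4$ but every $v_i \le 1/3$, summing the bounds $(3/4)\opt_{i\ell}+\opt_{is}$ over the three dimensions also yields a ratio at most $4$. Otherwise some $v_i$ exceeds $1/3$ (at most two can, since $v_1+v_2+v_3\le 1$), and the target $17/4$ emerges only after balancing \cref{thm:structuredpackingalgo,thm:big&largealgo,cor:combinepacking,cor:2dknapsackcontainer} with carefully chosen multipliers.

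The main obstacle is precisely this last subcase, where one or two $v_i$ exceed $1/3$ and most profit is concentrated on the small items $\optones,\opttwos,\optthrees$, so that the per-index bound $\alg \ge \opt_{is}/(3v_{is})$ degrades to at best $\opt_{is}/3$. Here one appeals to the height-refined partition $\opttwos = \opttwot + \opttwoh$ and $\optthrees = \optthreet + \optthreeh$ together with \cref{cor:combinepacking} (which recovers a $3/20$-fraction of $\opttwot+\optthreet$) and \cref{cor:2dknapsackcontainer} (which recovers a half of $\opttwoh+\optthreeh$). Because the previously tight distribution $\opt_L = (29/139)\opt$ of the general analysis is unattainable when $\opt_L$ is capped by a constant, the underlying linear program over $\opt_{i\ell},\opttwot,\opttwoh,\optthreet,\optthreeh$ tightens, and the worst-case multiplier drops from $139/29$ to $17/4$.
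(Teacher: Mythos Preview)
Your plan is over-engineered and leaves the decisive step unproved. The paper's argument is far simpler and uses only three of the five lower bounds---no case split on how many $v_i$ exceed $1/4$, no appeal to \cref{cor:combinepacking} or \cref{cor:2dknapsackcontainer}, and no LP balancing.

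Here is the actual argument. Once $\opt_L\le\epsilon\,\opt$ (handled exactly as you say), observe that since $v_1+v_2+v_3\le 1$, at least one of them is at most $1/3$; relabel so that $v_1\le 1/3$. Then:
\begin{itemize}
\item From \cref{thm:structuredpackingalgo}, $\alg\ge(2/3-O(\epsilon))\opt_2$ and $\alg\ge(2/3-O(\epsilon))\opt_3$, hence $12\,\alg\ge(4-O(\epsilon))(\opt_2+\opt_3)$.
\item From \cref{thm:big&largealgo}, $\alg\ge(1-O(\epsilon))\optonel$.
\item From \cref{cor:packingI1} (second bullet, applicable since $v_1\le 1/3$), $4\,\alg\ge(1-O(\epsilon))(3\optonel+4\optones)$.
\end{itemize}
Adding these three inequalities gives $17\,\alg\ge(4-O(\epsilon))(\opt_1+\opt_2+\opt_3)\ge(4-O(\epsilon))\opt$, which is exactly the claim.

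By contrast, your plan branches into four volume cases and, in the one you yourself identify as the bottleneck (all $v_i>1/4$ with some $v_i>1/3$), you merely assert that ``the target $17/4$ emerges only after balancing \ldots\ with carefully chosen multipliers'' and that the LP ``tightens'' once $\opt_L$ is negligible. That is precisely the content to be proved, and you never exhibit the multipliers or verify the bound. Moreover, \cref{cor:combinepacking} carries the hypothesis $v_1\le 1/6$, so invoking it in your last case would require yet another subcase split; it is simply not needed here. Your first three cases do yield ratios $\le 17/4$, but the whole apparatus is unnecessary: the single uniform combination above covers every instance at once.
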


Since each item in $L$ has a volume of at least $1/\mu^3$, the number of such items in OPT is bounded by $1/\mu^3$. Hence if OPT has at least $\frac{1}{\epsilon\mu^3}$ items, we can discard all items of $L$ by losing a profit of at most $\epsilon \opt$.

\begin{lemma}
\label{lem:largedelete}
    If $|\OPT|> \frac{1}{\epsilon\mu^3}$, the items of $\OPT\cap L$ have a profit of at most $\epsilon \opt$. 
\end{lemma}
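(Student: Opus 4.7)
The plan is straightforward and relies on two observations: a volumetric upper bound on the number of items of $L$ that can fit in the knapsack, and the fact that in the cardinality case profit and count coincide.

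First I would normalize the profits so that every item has profit $1$; this is without loss of generality in the cardinality case, since the approximation ratio is invariant under a common positive scaling of all profits. Under this normalization, $\opt = |\OPT|$ and for any $T \subseteq I$, $p(T) = |T|$.

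Next I would bound $|\OPT \cap L|$ by a volume argument. By the classification in \cref{sec:structurallemma}, every item $i \in L$ satisfies $w_i, d_i, h_i > \mu$, so $v(i) > \mu^3$. Since the items of $\OPT \cap L$ are packed into the unit knapsack non-overlappingly, their total volume is at most $1$, which yields
\[
|\OPT \cap L| \cdot \mu^3 \;<\; \sum_{i \in \OPT \cap L} v(i) \;\le\; 1,
\]
and hence $|\OPT \cap L| \le 1/\mu^3$.

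Finally I would combine the two observations. Using the hypothesis $|\OPT| > 1/(\epsilon \mu^3)$, we get
\[
p(\OPT \cap L) \;=\; |\OPT \cap L| \;\le\; \frac{1}{\mu^3} \;<\; \epsilon \cdot |\OPT| \;=\; \epsilon \cdot \opt,
\]
which is the claimed bound. There is no real obstacle here; the only subtle point is the reduction to unit profits at the start, which makes the equivalence between cardinality and profit transparent.
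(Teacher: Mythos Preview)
Your proof is correct and matches the paper's own argument essentially verbatim: the paper also bounds $|\OPT\cap L|\le 1/\mu^3$ by the volume of items in $L$ and then uses $|\OPT|>1/(\epsilon\mu^3)$ together with unit profits to conclude $p(\OPT\cap L)\le \epsilon\opt$.
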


As before, we let $\alg$ denote the maximum profit of a guessable container packing. We show the following bound on $\alg$ which, together with \Cref{thm:ptasboxpacking}, completes the proof of \Cref{thm:3dkc}.

\begin{lemma}
    $\displaystyle\opt \le \left(\frac{17}{4}+\epsilon\right)\alg$.
\end{lemma}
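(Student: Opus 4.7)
The plan is to adapt the case analysis of Section~\ref{subsec:bound} to the cardinality setting, exploiting that $L$-items contribute negligibly. First, if $|\OPT|\le 1/(\epsilon\mu^3)$, then $\opt$ is a constant $O_\epsilon(1)$; a guessable container packing of profit exactly $\opt$ is exhibited by placing each item of $\OPT$ into its own Stack container of matching dimensions (yielding $O_\epsilon(1)$ containers whose dimensions lie in a polynomial-size set), so $\alg\ge\opt$ and the desired bound is immediate. Otherwise $|\OPT|>1/(\epsilon\mu^3)$, and Lemma~\ref{lem:largedelete} gives $\opt_L\le\epsilon\opt$, so it suffices to bound $\opt_1+\opt_2+\opt_3$ in terms of $\alg$ up to an $O(\epsilon)\opt$ additive slack.

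Next, for each $i\in\{1,2,3\}$ I would derive a bound on $\opt_i$ stratified by the value of $v_i$. If $v_i\le 1/4$, Corollary~\ref{cor:packingI1} gives $\opt_i\le(1+O(\epsilon))\alg$. If $v_i\in(1/4,1/3]$, Corollary~\ref{cor:packingI1} yields $(3/4)\opt_{i\ell}+\opt_{is}\le(1+O(\epsilon))\alg$; combined with $\opt_{i\ell}\le(1+O(\epsilon))\alg$, which follows from Corollary~\ref{thm:big&largealgo} after absorbing the negligible $\opt_L$ term, we obtain
\[
\opt_i \;=\; \opt_{i\ell}+\opt_{is} \;\le\; (1+O(\epsilon))\alg + \tfrac14\opt_{i\ell} \;\le\; \left(\tfrac{5}{4}+O(\epsilon)\right)\alg.
\]
If $v_i>1/3$, the structural bound of Corollary~\ref{thm:structuredpackingalgo} gives $\opt_i\le(\tfrac{3}{2}+O(\epsilon))\alg$.

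The third step is a short case split on how many of $v_1,v_2,v_3$ fall into each band. Since $v_1+v_2+v_3\le 1$, at most two of the $v_i$ can exceed $1/3$, and a direct enumeration of the finitely many configurations shows that the extremal one has two $v_i$'s exceeding $1/3$ with the remaining one in $(1/4,1/3]$, giving $\opt_1+\opt_2+\opt_3 \le 2\cdot\tfrac{3}{2}\alg + \tfrac{5}{4}\alg = \tfrac{17}{4}\alg$; every other configuration yields at most $4\alg$. Combining with $\opt_L\le\epsilon\opt$ and absorbing the slack produces the desired $\opt\le(\tfrac{17}{4}+\epsilon)\alg$.

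The main obstacle, and the source of the improvement over the $139/29$ of the general analysis, is the middle case above: strengthening the per-group bound from the structural $(3/2)\alg$ to $(5/4)\alg$ by combining Corollaries~\ref{cor:packingI1} and~\ref{thm:big&largealgo}. The Stack-container bound $\alg\gtrsim\opt_{i\ell}$ provided by the latter is only usable here because cardinality makes the $\opt_L$ term in it negligible; this is precisely where the cardinality hypothesis is spent. Identifying this as the right combination and verifying that the extremal configuration tightens to exactly $17/4$ form the substantive content of the argument.
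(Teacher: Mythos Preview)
Your proof is correct and uses essentially the same ingredients as the paper: the structural bound $\alg\ge(2/3-O(\epsilon))\opt_i$ from Corollary~\ref{thm:structuredpackingalgo}, the bound $\alg\ge(1-O(\epsilon))\opt_{i\ell}$ from Corollary~\ref{thm:big&largealgo} (which holds directly, without needing to absorb $\opt_L$), and the $(3/4)\opt_{i\ell}+\opt_{is}$ bound from Corollary~\ref{cor:packingI1} when $v_i\le 1/3$. The paper's version is more streamlined: rather than splitting into three volume bands per index and enumerating configurations, it simply picks one index $i$ with $v_i\le 1/3$ (which always exists), applies the $(5/4)$-type bound to that index, and the $(3/2)$ bound to the other two, arriving at $17/4$ in one shot; your finer case split is unnecessary because even when some $v_i\le 1/4$, the $(5/4)$ bound still applies to it and the other two are still covered by $(3/2)$.
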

\begin{proof}
    If $|\OPT|\le \frac{1}{\epsilon\mu^3}$, we create \kvnnew{a Stack container for each item in $\OPT$}. Otherwise, using \Cref{lem:largedelete}, we have $\opt_L \le \epsilon\opt$, and thus $\opt_1+\opt_2+\opt_3 \ge (1-\epsilon)\opt$. From \Cref{thm:structuredpackingalgo}, we have 
    \begin{equation}
    \label{eqn:cardA}
        12\alg \ge (4-O(\epsilon))(\opt_2 + \opt_3).
    \end{equation}
     Again from \Cref{thm:big&largealgo}, we get 
     \begin{equation}
     \label{eqn:cardB}
         \alg \ge (1-O(\epsilon))\optonel.
     \end{equation}

     Now, one of $v_1, v_2$ or $v_3$ must be at most $1/3$; wlog assume that $v_1 \le 1/3$. Then from \Cref{cor:packingI1}, we have that
     \begin{equation}
     \label{eqn:cardC}
         4\alg \ge (1-O(\epsilon))(3\optonel + 4\optones).
     \end{equation}
    Adding \eqref{eqn:cardA}, \eqref{eqn:cardB} and \eqref{eqn:cardC}, we obtain 
    \[ 17\alg \ge (1-O(\epsilon))(4\opt_1+4\opt_2+4\opt_3)\ge (4-O(\epsilon))\opt,\]
    which completes the proof of the lemma.
\end{proof}

\mt{We further note that this approximation ratio cannot be improved by the current analysis. A tight instance is given by $\opt_1=\frac{5}{17}\opt,$ $\opt_2=\opt_3=\frac{6}{17}\opt$ with $\opt_{1\ell}=\opt_{2\ell}=\opt_{3\ell}=\frac{4}{17}\opt,$ $\opt_{1s}=\frac{1}{17}\opt$ and $\opt_{2s}=\opt_{3s}=\frac{2}{17}\opt.$}


\subsection{Uniform profit-density case}
\label{sec:nonrot-prof-eq-vol}
We show the following result.

\begin{theorem}
    \label{thm:3dkp}
   For any constant $\eps>0$, there is a polynomial-time $(4 +\eps)$-approximation algorithm for \tdk~when the profit of an item is equal to its volume. 
\end{theorem}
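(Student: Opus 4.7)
The plan is to establish $\opt \le (4+\eps)\alg$; combined with \Cref{thm:ptasboxpacking} this yields the claimed approximation. Three features of the uniform profit-density setting drive the proof: the identities $\opt_i = v_i$ and $\opt_L = v_L$ (so in particular $\optonel = v_{1\ell}$, $\optones = \vones$, and similarly for $i=2,3$); the global volume constraint $v_L + v_1 + v_2 + v_3 \le 1$; and the lower bounds already proved in \Cref{subsec:bound}. Throughout I will sum \emph{four} of those lower bounds (one per copy of $\alg$) whose right-hand sides add to at least $\opt$.

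First, I would handle the clean case $v_1, v_2, v_3 \le 1/3$. Set $i^* := \arg\max_i v_{i\ell}$ and $V := v_{1\ell}+v_{2\ell}+v_{3\ell}$. Summing the \Cref{thm:big&largealgo} bound $\alg \ge v_L + v_{i^*\ell}$ with the second clause of \Cref{cor:packingI1} (which in this regime reads $\alg \ge \tfrac{3}{4}v_{i\ell} + v_{is} = v_i - v_{i\ell}/4$) applied for each $i \in \{1,2,3\}$ yields
\[ 4\alg \ge v_L + v_{i^*\ell} + \sum_{i=1}^{3}(v_i - v_{i\ell}/4) = \opt + (v_{i^*\ell} - V/4) \ge \opt,\]
since $v_{i^*\ell} \ge V/3 \ge V/4$ by the max-versus-average inequality.

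Next, suppose some $v_i > 1/3$; WLOG $v_1 > 1/3$. If $\vones > 1/3$, then \Cref{cor:packingI1}'s first clause gives $\alg \ge 1/3$, whence $\opt \le 1 \le 3\alg$. Otherwise $\vones \le 1/3$, and combining $\alg \ge v_L + v_{1\ell}$ (\Cref{thm:big&largealgo}) with $\alg \ge \vones$ (\Cref{cor:packingI1}) yields the key inequality $2\alg \ge v_L + v_1$. When in addition $v_2, v_3 \le 1/4$, the third clause of \Cref{cor:packingI1} gives $\alg \ge v_2$ and $\alg \ge v_3$, so summing these four lower bounds yields $4\alg \ge \opt$ immediately.

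The main obstacle is the residual subcase where $v_1 > 1/3$ and at least one of $v_2, v_3$ lies in $(1/4, 1/3]$: the naive four-bound sum then falls short by $(v_{2\ell}+v_{3\ell})/4$. I plan to close this shortfall by combining three additional ingredients: (i) \Cref{thm:big&largealgo} applied with the best index in $\{2,3\}$, treated as an independent second lower bound on $\alg$ (since the packings of $L \cup I_{i\ell}$ for different $i$ are separate valid guessable container packings); (ii) the \Cref{cor:2dknapsackcontainer} inequality, which -- since $I_{i\ell}$-items for $i\in\{2,3\}$ necessarily have height $>1/2$ and so contribute to $\opttwoh + \optthreeh$ -- implies $\alg \ge (v_{2\ell}+v_{3\ell})/2$; and (iii) the volume constraint, which under $v_1 > 1/3$ forces $v_2 + v_3 + v_L < 2/3$, and tightens further when multiple $v_i$ exceed $1/3$. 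Balancing these -- in the spirit of the case analysis of \Cref{subsec:bound}, but substantially simpler because $\opt_i = v_i$ -- closes the gap to give $4\alg \ge (1-O(\eps))\opt$. The ratio $4$ is essentially tight, as witnessed by the instance $v_L = v_1 = v_2 = v_3 = 1/4$ with every item of each $I_i$ lying in $I_{is}$.
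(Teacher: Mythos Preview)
Your Case~1 argument (all $v_i \le 1/3$) is correct and elegant, but the paper's proof sidesteps all of this case analysis: it splits instead at $1/4$. If every $v_i \le 1/4$, the third clause of \Cref{cor:packingI1} gives $\alg \ge v_i$ for each $i$ and \Cref{thm:big&largealgo} gives $\alg \ge v_L$, so summing four bounds yields $4\alg \ge \opt$ immediately. If some $v_i > 1/4$ (say $v_1$), the paper takes a maximal subset of $\OPT\cap I_1$ of volume at most $1/4$, which has volume at least $1/4 - \mu$, and packs it via \Cref{lem:stackandsteincombine}; this gives $\alg \ge (1-\eps)(1/4-\mu)$, so $\opt \le 1 \le (4+O(\eps))\alg$. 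No further sub-cases are needed.

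Your residual sub-case (where $v_1 > 1/3$, $\vones \le 1/3$, and some $v_j \in (1/4,1/3]$) is a genuine gap, and the ingredients you list (\Cref{thm:big&largealgo} for $i\in\{2,3\}$, \Cref{cor:2dknapsackcontainer}, and the volume constraint) are provably insufficient. Take $v_L = 0$, $v_1 = 4/9$ with $v_{1\ell} = \vones = 2/9$, and $v_2 = v_3 = 5/18$ with $v_{2\ell} = v_{3\ell} = 2/9$, $v_{2s} = v_{3s} = 1/18$ (and all of $I_{2s},I_{3s}$ having height $\le 1/2$, so $\opttwoh = v_{2\ell}$, $\optthreeh = v_{3\ell}$). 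Then $\opt = 1$, but every bound you invoke evaluates to at most $2/9$: \Cref{thm:big&largealgo} gives $v_L + \max_i v_{i\ell} = 2/9$; the first clause of \Cref{cor:packingI1} gives $\vones = 2/9$; the second clause for $i=2,3$ gives $\tfrac{3}{4}\cdot\tfrac{2}{9} + \tfrac{1}{18} = 2/9$; and \Cref{cor:2dknapsackcontainer} gives $\tfrac{1}{2}(v_{2\ell}+v_{3\ell}) = 2/9$. The volume constraint $v_L + v_2 + v_3 = 5/9 < 2/3$ is satisfied but does not tighten anything. So your bounds yield only $\opt \le (9/2)\alg$ here. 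To rescue the argument you would need an additional bound you did not list --- either \Cref{thm:structuredpackingalgo}, which gives $\alg \ge \tfrac{2}{3}v_1 = 8/27 > 1/4$ in this instance, or the paper's direct packing argument.
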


Similar to the previous subsection, we show the following bound on $\alg$, which, combined with \Cref{thm:ptasboxpacking}, proves the above theorem.

\begin{lemma}
\label{lem:3dkvolume}
    $\opt \le (4+O(\epsilon))\alg$.
\end{lemma}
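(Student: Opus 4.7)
The plan is to do a case analysis on where the profit of OPT is concentrated, exploiting the crucial identities $\opt_i = v_i$ for $i\in\{1,2,3\}$ and $\opt_L = v_L$, together with $\opt = v_L+v_1+v_2+v_3 \le 1$. Thanks to these identities, many of the relative bounds from Corollary~\ref{cor:packingI1} become absolute bounds on the volume that we can actually pack, and it suffices to exhibit a guessable container packing of profit at least $\opt/4 - O(\eps)$ in each case. Since in each case $\alg$ is lower-bounded by a positive constant (or directly dominates $\opt$), the $O(\eps)$ additive slack will translate into the target $(4+O(\eps))$-multiplicative approximation.

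First I would dispatch the easy case $\opt_L \ge \opt/4$: since every item of $L$ has all side lengths exceeding $\mu$, we have $|\OPT\cap L|\le 1/\mu^3$, so placing each such item in its own Stack container of matching dimensions yields a guessable container packing of profit $\opt_L \ge \opt/4$. Otherwise $\opt_L < \opt/4$, so $\opt_1+\opt_2+\opt_3 > 3\opt/4$, and I would further split according to the largest $v_i$. If $v_1,v_2,v_3 \le 1/4$, the third bullet of Corollary~\ref{cor:packingI1} gives $\alg \ge (1-O(\eps))\opt_i$ for each $i$, and taking the maximum yields
\[
\alg \ge (1-O(\eps))\max_i\opt_i \ge (1-O(\eps))\frac{\opt_1+\opt_2+\opt_3}{3} > (1-O(\eps))\frac{\opt}{4}.
\]

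The delicate case is when some $v_i > 1/4$; WLOG $v_1 > 1/4$. Here we cannot pack all of $\OPT\cap I_1$, but we can still pack a subset of volume close to $1/4$. Since every item of $I_1$ has height at most $\mu$ and thus volume at most $\mu$, a greedy procedure extracts a subset $T\subseteq \OPT\cap I_1$ with $v(T)\in[1/4-\mu,\,1/4]$. I would then apply Lemma~\ref{lem:stackandsteincombine} with the whole knapsack as the box $B$: the hypotheses $v(T)\le v(B)/4$ and $h_i\le\mu\le\eps^4$ both hold, so the lemma partitions the unit cube into a Stack container for $T_\ell$ and a Steinberg container for $T_s$, packing a subset of $T$ with profit at least $(1-\eps)p(T)=(1-\eps)v(T)\ge 1/4-O(\eps)$. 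Combined with $\opt\le 1$, this gives $\alg \ge 1/4-O(\eps) \ge \opt/4-O(\eps)$; since $\alg$ is a positive constant here, the additive $O(\eps)$ slack is at most an $O(\eps)\alg$ multiplicative slack, yielding $\opt \le (4+O(\eps))\alg$.

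The main obstacle is this last case, where items of $I_{1\ell}$ (which can only be stacked vertically because both their width and depth exceed $1/2$) must share the knapsack with items of $I_{1s}$ (which need a Steinberg-style layered packing). Lemma~\ref{lem:stackandsteincombine} is precisely the tool designed for this combined packing, trading off the two families through a horizontal Stack-on-top-of-Steinberg split whose proportions are driven by $v(T_s)$; this is exactly the flexibility needed to extract almost the full $v(T) \approx 1/4$ volume as profit.
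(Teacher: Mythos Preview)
Your proof is correct and follows essentially the same approach as the paper. The only cosmetic difference is that you first peel off the case $\opt_L \ge \opt/4$, whereas the paper keeps $\opt_L$ inside the ``all $v_i\le 1/4$'' case by using the bound $\alg \ge \opt_L$ together with the three bounds $\alg \ge (1-O(\eps))\opt_i$ and summing; your extra case is harmless but unnecessary, and the crucial step---extracting a volume-$\approx 1/4$ subset of $\OPT\cap I_1$ when some $v_i>1/4$ and invoking Lemma~\ref{lem:stackandsteincombine}---is identical to the paper's Case~2.
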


We divide the proof of the lemma into two cases depending on the values of $v_1, v_2$ and $v_3$.

\subsubsection{None of \texorpdfstring{$v_1,v_2,v_3$}{v1, v2, v3} exceed \texorpdfstring{$1/4$}{1/4}}
In this case, \Cref{cor:packingI1} gives us $\alg \ge (1-O(\epsilon))\opt_i$, for $i\in [3]$. Also from \Cref{thm:big&largealgo}, we have $\alg \ge (1-O(\epsilon))\opt_L$. Hence,
\[ 4\alg \ge (1-O(\epsilon))(\opt_1 + \opt_2 + \opt_3 + \opt_L) = (1-O(\epsilon))\opt\]
and we are done.

\subsubsection{At least one of \texorpdfstring{$v_1,v_2,v_3$}{v1, v2, v3} exceeds \texorpdfstring{$1/4$}{1/4}}
W.l.o.g assume that $v_1 > 1/4$. We consider a maximal subset of $\OPT\cap I_1$ whose volume does not exceed $1/4$. Since the volume of each item in $\OPT\cap I_1$ is bounded by $\mu$, the volume of this subset is at least $\frac{1}{4}-\mu$. These items can be packed into a Stack and a Steinberg container using \Cref{lem:stackandsteincombine} by losing only an $\epsilon$-fraction of volume. The total volume packed is therefore at least $(1-\epsilon)\left(\frac{1}{4}-\mu\right) \ge \frac{1}{4}-O(\epsilon)$, from which \Cref{lem:3dkvolume} directly follows using the fact that $\opt\le 1$.  

\mt{Once again, the approximation ratio of $(4+O(\eps))$ cannot be improved by the current analysis. A corresponding instance is given by $\opt_1=\opt_2=\opt_3=\opt_L=\frac{1}{4}\opt$.}



\section{3D Knapsack with Rotations}
\label{sec:tdkr}
In this section, we present improved approximation algorithms for 3DK when the items are allowed to be rotated by $90$ degrees about the axes.
The previous best approximation ratio was $5+\eps$ by \cite{3d-knapsack}. We improve this to $30/7+\eps\approx4.286+\eps$.
We also give improved approximation ratios for the cardinality case ($24/7+\eps$; see \cref{sec:tdkrc}) as well as the special case of packing maximum volume ($3+\eps$; see \cref{sec:tdkrp}).
We follow the same theme as in the setup of non-rotations.
\subsection{\texorpdfstring{\LContainers}{L-Containers}}
\label{sec:Lcontainer}
For the case of rotations, we will define a new type of container called an \emph{\LContainer} in the same way we defined other types of containers in \cref{sec:cont-class}. 

\textbf{\LContainer.} An \LContainer{} $C$ of dimensions $w_C\times d_C\times h_C$ must satisfy the condition that $w_C\ge h_C$.
The capacity of $C$, $\mathtt{cap}(C)$, is given by $w_Ch_C-w_C^2/4$ (see \cref{pack-sheets} below).
Each item $i$ packed inside $C$ must have some orientation with width at least $w_C/2$, depth at least $d_C/2$, and height at most $\eps h_C$. 
If these conditions are satisfied, then we set $f_C(i)\coloneqq w_ih_i$ and $\infty$ otherwise.

\begin{remark}
    In the above definition of \LContainer, we note that the capacity is negative if $h_C<w_C/4$. We will however ensure that the \LContainers{} that we use do not have this property.
\end{remark}

Let $C$ be an \LContainer{} and let $T$ be a set of items such that $\sum_{i\in T}f_C(i)\le \mathtt{cap}(C)$.
The packing algorithm $\mathcal A_C$ to pack set $T$ in container $C$ is based on the following lemma.
\begin{lemma}
\label{pack-sheets}
Consider a rectangular region $R$ of length (horizontal dimension) $\ell$ and breadth (vertical dimension) $b$, such that $\ell\ge b$.
Let $S$ be a set of rectangles such that each rectangle has a length at least $\ell/2$ and breadth at most $\delta b$, where $\delta<1$ is a small constant. Suppose that the total area of rectangles in $S$ is at most $\ell b-\ell^2/4-3\delta b^2$.
Then, if rotations are allowed, we can pack the entire set $S$ in the specified rectangular region $R$.
\end{lemma}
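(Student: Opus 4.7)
The plan is to pack the items into an explicit L-shaped subregion of $R$ of area $\ell b - \ell^2/4$, obtained by removing a corner rectangle of dimensions $w' \times h'$ with $w' h' = \ell^2/4$ (for instance, an $\ell/2 \times \ell/2$ square when $\ell \le 2b$). The L decomposes into a bottom strip of size $\ell \times h_1$ for items packed in their natural orientation (long side horizontal) and a side strip of width $w_1$ along the full height $b$ for items packed rotated (long side vertical), with $w_1 \ge \ell/2$ and $h_1 \ge b - \ell/2$ so that every admissible item can in principle be placed in either region.

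The argument proceeds in two cases depending on the aspect ratio. When $\ell > 2b$, the area bound satisfies $\ell b - \ell^2/4 - 3\delta b^2 \le \ell b/2 - 3\delta b^2$, so Steinberg's theorem (\cref{lem:stein}) applies directly to $S$ with every item in its natural orientation: since $s_{\max} \le \delta b < b/2$, the overshoot factor $(2 s_{\max} - b)_+$ vanishes, and the condition $2a(S) \le \ell b$ is satisfied, yielding a feasible packing of the whole set. For the main case $\ell \le 2b$, I partition $S$ greedily by processing items in order of decreasing length, assigning each to a rotated set $S_R$ as long as $\sum_{S_R} s_i$ stays within $\delta b$ of the target strip width $w_1$; the remaining items form the horizontal set $S_H$. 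The items of $S_R$, rotated so the long side is vertical, are packed side by side in the side strip: each has width $s_i \le \delta b$ and height $\ell_i \in [\ell/2, b]$, so $\sum_{S_R} s_i \le w_1$ and $\max \ell_i \le b$ ensure they fit. The items of $S_H$ are packed in the bottom strip using Steinberg's theorem; since $s_{\max} \le \delta b \ll h_1$, the overshoot term again vanishes and we only need $2 A_H \le \ell h_1$.

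The main obstacle is proving that the greedy partition produces sets satisfying all these constraints simultaneously. Concretely, one must show that the area budget $\ell b - \ell^2/4 - 3\delta b^2$ is enough to cover both the L-shape area $\ell b - \ell^2/4$ and the Steinberg packing slack of roughly $\delta b^2$ in each of the two strips plus a boundary loss from the greedy partition, which together account for the $3\delta b^2$ term. The length lower bound $\ell_i \ge \ell/2$ is critical here: each item placed in $S_R$ contributes at least $(\ell/2) s_i$ to $A_R$, so a saturated $S_R$ (with $\sum s_i \approx w_1 \ge \ell/2$) forces $A_R$ to be large enough that the residual $A_H = A - A_R$ satisfies $2 A_H \le \ell h_1$, enabling Steinberg's theorem in the bottom strip and closing the argument.
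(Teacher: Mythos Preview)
Your approach for $\ell > 2b$ works (and coincides with the paper's easy case $b \le \ell/2$), but the argument for the main case $\ell \le 2b$ has a real gap. The central claim---that after saturating $S_R$ with $\sum_{S_R} s_i \approx w_1 = \ell/2$, the residual area satisfies $2A_H \le \ell h_1$---is false. Your only lower bound on $A_R$ comes from $\ell_i \ge \ell/2$, giving $A_R \gtrsim \ell^2/4$; hence $A_H$ can be as large as $\ell b - \ell^2/2 - O(\delta b^2)$, whereas Steinberg in the $\ell \times (b - \ell/2)$ bottom strip permits only $A_H \le (\ell b - \ell^2/2)/2$. Concretely, take $\ell = b$ and let every rectangle have dimensions $\ell/2 \times \delta b$ with total area $3b^2/4 - 3\delta b^2$: your greedy fills $S_R$ with area about $b^2/4$, leaving $A_H \approx b^2/2$, but Steinberg in the $b \times (b/2)$ strip allows at most $b^2/4$. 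Steinberg loses a factor of two in area, and the $3\delta b^2$ slack in the hypothesis cannot absorb a $\Theta(b^2)$ deficit. A second, independent issue: you process items in \emph{decreasing} length and place the longest ones into $S_R$, but any item with $\ell_i > b$ (which can occur whenever $\ell > b$) cannot be rotated to fit inside a strip of height $b$.

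The paper avoids both traps with a single direct construction: sort by decreasing length, stack items horizontally from the bottom-left until the height budget $b$ is exhausted (so long items, which cannot be rotated, are always placed first and unrotated), then rotate the leftovers---which are then shown to have length at most $b$---and place them side by side from the top-right. There is no Steinberg step; instead, assuming some item $i^*$ fails to fit, a geometric area count over the already-packed items shows the packed area strictly exceeds $\ell b - \ell^2/4 - 3\delta b^2$, via the inequality $bx + \ell y + y^2 - 2xy > \ell b - \ell^2/4$ for $x \in (\ell/2, \ell]$, $y \in (\ell/2, b]$. The two stacks interlock tightly enough to fill the L-shape with no factor-two loss, which is exactly what your Steinberg-based scheme cannot achieve.
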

\begin{proof}
We do away with the case when $b\le \ell/2$ because \mt{it is} easier. In this case, if we just stack up the rectangles on top of each other,
the total breadth is at most $b$. This is because the total area of the rectangles is at most $\ell b-\ell^2/4$ and
since the length of each rectangle is at least $\ell/2$; so the breadth of the stack is at most $2b-\ell/2\le b$ (since $\ell \ge 2b$).
Hence, the stack fits in the rectangular region $R$.

From now on, we assume that $b>\ell/2$. To pack $S$, we first sort the rectangles in non-increasing order of lengths. 
Then we start at the bottom left corner of $R$ and keep stacking them
one on top of each other to the maximum possible extent. These are shown as light grey rectangles in \cref{fig:stack-up}.
\begin{figure}[ht]
\begin{center}
\includesvg[width=0.4\textwidth,]{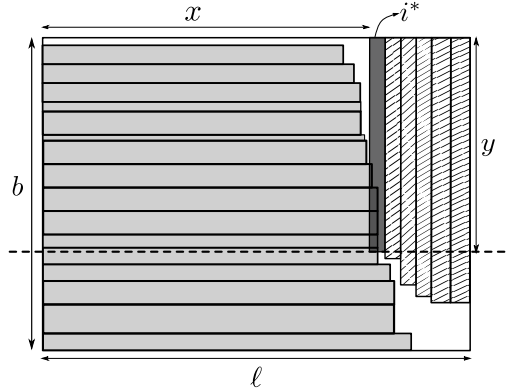}
\end{center}
\caption{The light grey rectangles are packed first. Then, the remaining rectangles (shaded) are rotated and packed from the right. The dark grey rectangle is the first one we could not pack (assuming contradiction).}
\label{fig:stack-up}
\end{figure}

If we exhaust the entire set $S$ by this procedure, then we are done. Otherwise, note that the first rectangle in the unpacked set must have a length of at most $b$. (Otherwise,
the total area of the stacked up rectangles is strictly more than $b(b-\delta b)=(1-\delta)b^2\ge \ell b-\ell^2/4-3\delta b^2$ (since $(b-\ell/2)^2\ge 0$), which is a contradiction.)
So, we rotate these remaining rectangles and align them vertically in the remaining space, starting from the top right corner of $R$, to the maximum extent possible.
These rectangles are denoted as hatched rectangles in \cref{fig:stack-up}.

We claim that we exhaust the
entire set $S$ in this manner. Suppose not. Let $i^*$, the dark grey rectangle in \cref{fig:stack-up}, denote the
first rectangle that we could not pack in this manner.
Let $x$ denote the horizontal distance from the left side of $R$ to the left side of $i^*$, if $i^*$ were packed. Also, let $y$
be the length of $i^*$ before rotating.
As shown in \cref{fig:stack-up}, we also draw a horizontal dashed line passing through the bottom edge of $i^*$, if it were packed.

The light grey rectangle intersecting the dashed line and all the light grey rectangles below it have a length of at least $x$.
So, the total area of these rectangles is at least $x(b-y)$. Every light grey rectangle above the dashed line has
length at least $y$ (since they have larger length compared to $i^*$)
and their total breadth is at least $y-2\delta b$. Hence, this amounts to an area of at least $y(y-2\delta b)$.
Finally, the total area of shaded rectangles is at least $y(\ell-x-\delta b)$. Therefore, the total packed area of all the rectangles
is at least
\begin{align}
x(b-y)+y(y-2\delta b)+y(\ell-x-\delta b)=bx+\ell y+y^2-2xy-3y\delta b.\label{eq:multi-fn}
\end{align}
\begin{claim}
\label{claim:ineq-x-y}
For $x\in(\ell/2,\ell]$ and $y\in(\ell/2,b]$, we have $bx+\ell y+y^2-2xy> b\ell-\frac{\ell^2}{4}$.
\end{claim}
\begin{proof}
Let $f(s,t)=bs+\ell t+t^2-2st$. Then
\begin{align*}
\frac{\partial f(s,t)}{\partial t}=\ell+2t-2s\quad\text{ and }\quad \frac{\partial^2 f(s,t)}{\partial t^2}=2.
\end{align*}
For $s\in(\ell/2,\ell]$ and $t\in(\ell/2,b]$, we have that $\partial f(s,t)/\partial t>2\ell-2b\ge  0$.
Hence, for a fixed $s$ and varying $t$, the function is strictly increasing since the partial derivative
of $f$ with respect to $t$ is always positive. So, the minimum of the function $f(s,t)$ in range
$[\ell/2,\ell]\times[\ell/2,b]$ occurs when $t=\ell/2$. But when $t=\ell/2$, we have $f(s,t)=3\ell^2/4-(\ell-b) s\ge 3\ell^2/4-(\ell-b)\ell=b\ell-\ell^2/4$.
\end{proof}
Thus, using the above claim and \cref{eq:multi-fn}, the total area packed is strictly more than $b\ell-\ell^2/4-3\delta b^2$ but this is a contradiction since the total area of
rectangles in $S$ is at most $b\ell-\ell^2/4-3\delta b^2$.
\end{proof}
Owing to the above lemma, the packing algorithm $\mathcal A_C$ to pack a set of items $T$ satisfying $\sum_{i\in T}f_C(i)\le \mathtt{cap}(C)$ can be devised. See \cref{alg:lcontainer} for a pseudocode.
\begin{algorithm}[h]
    \caption{$\mathcal A_C$ for an \LContainer{} given an input set $T$ satisfying $f_C(T)\le \mathtt{cap}(C)$}
    \begin{algorithmic}[1]
        \State Sort the items of $T$ in non-increasing order of profit/front area.
        \State Pick the largest prefix $T'$ whose total front area does not exceed $w_C h_C-h_C^2/4-3\eps h_C^2$.
        \State Sort the items in $T'$ in non-increasing order of widths.
        \State Start stacking them up one over the other to the maximum extent possible. (See \cref{fig:l-container-2}.)
        \State Rotate the remaining items (if any) so that the width and height are interchanged.
        \State Pack these remaining items starting from the right most face and top right corner of $C$. (See \cref{fig:l-container-2}.)
    \end{algorithmic}
    \label{alg:lcontainer}
\end{algorithm}

\begin{figure}
    \centering
    \includesvg[width=0.5\linewidth]{img/L-container.svg}
    \caption{An \LContainer}
    \label{fig:l-container-2}
\end{figure}

We will now show that $\mathcal A_C$ packs at least $(1-O(\eps))p(T)$.
Let us assume that $T'\subset T$, as otherwise, we are done. Since $T'$ is the largest prefix of $T$ with front area at most
$w_C h_C-h_C^2/4-3\eps h_C^2$, and since each item in $T$ has front area at most $\eps w_Ch_C$,
we obtain that the front area of $T'$ is at least $w_C h_C-h_C^2/4-3\eps h_C^2-\eps w_Ch_C$. Hence, we obtain that
\begin{align*}
    p(T')&\ge \left(\frac{\text{front area of }T'}{\text{front area of } T}\right)p(T)\\
        &\ge \left(\frac{w_C h_C-h_C^2/4-3\eps h_C^2-\eps w_Ch_C}{w_C h_C-h_C^2/4}\right)p(T)\\
        &\ge\left(1-\eps\left(\frac{3h_C^2+h_Cw_C}{w_Ch_C-h_C^2/4}\right)\right)p(T)\\
        &= \left(1-\eps\left(\frac{3h_C+w_C}{w_C-h_C/4}\right)\right)p(T').
\end{align*}
Since $w_C\ge h_C$, it can be verified that $(3h_C+w_C)/(w_C-h_C/4)\le 8$. Hence, we obtain that $p(T')\ge (1-8\eps)p(T)$.
\subsection{A packing lemma for rotations (Proof of \texorpdfstring{\cref{lem:3DRVolPack}}{Lemma 2.6})}
\label{sec:3dr-vol-pack}
Based on \cref{pack-sheets}, we show a general packing lemma in case of rotations that shows us how to pack a constant fraction
of volume of items, that are small in at least one dimension.

\tDRVolPack*
\begin{proof}
    Since rotations are allowed, we first orient each item $i\in T$ such that $w_i\le w,d_i\le w,h_i\le \eps^2 w$.
We first partition the set $T$ into two sets as follows.
\begin{itemize}
	\item $(T_{\ell})$ Items of width at least $w/2$, depth at least $w/2$ (recall that they have height at most $\eps^2 w$).
	\item {$(T_s)$ All other items (these items have height at most $\eps^2 w$
	and either width at most $w/2$ or depth at most $w/2$).}
\end{itemize}
We can pack the set $T_s$ in layers, as given by \cref{lem:steinberg-layers-height}.
We will also increase the height of the resultant packing by $2\eps w$
and then round it down to the nearest integral multiple of $\eps^2 w$.
This step is to make sure that height of the packing is discretized and also ensure that
the packing forms a Steinberg container.
Let this height be $h_s$. By \cref{lem:steinberg-layers-height}, we have that
\begin{align}
    h_s\le \left(\frac{3}{w^2}\right)v(T_s)+4\eps^2 w+2\eps w\le \left(\frac{3}{w^2}\right)v(T_s)+6\eps w.\label{eq:hs-upper-bound}
\end{align}
Depending on the value of $v(T_s)$, we consider two cases to pack $T_{\ell}$.

\textbf{Case 1: $\displaystyle v(T_s)\ge \left(\frac16-3\eps\right)v(B)$.}

In this case, we just stack up the items of $T_\ell$ above the packing of $T_s$ and show that the entire packing fits inside the box $B$.
Let $h_\ell$ denote the total height of items in $T_\ell$.
Since each item in $T_\ell$ has a base area strictly more than $w^2/4$,
\begin{align*}
    h_{\ell}&\le \frac{4}{w^2}v(T_\ell)\\
    &\le \frac{4}{w^2}\left(\left(\frac{7}{24}-5\eps\right)v(B)-v(T_s)\right).
\end{align*}
This implies that
\begin{align*}
    h_{\ell}+h_s&\le \frac{4}{w^2}\left(\left(\frac{7}{24}-5\eps\right)v(B)-v(T_s)\right)+\left(\frac{3}{w^2}\right)v(T_s)+6\eps w\\
            &\le \left(\frac76-20\eps\right)w+6\eps w-\frac{v(T_s)}{w^2}\\
            &\le \left(\frac76-20\eps\right)w+6\eps w-\left(\frac{1}{6}-3\eps\right)w\\
            &\le \left(1-11\eps\right)w.
\end{align*}
The packing of $T_{\ell}$ will be a Stack container.
We also round the height of this container up to the closest multiple of $\eps^2 w$. This will make the height of the entire packing at most
$\left(1-10\eps\right)w$.
Hence, the entire set $T$ can be packed into a Steinberg container and a Stack container.

\textbf{Case 2: $\displaystyle v(T_s)< \left(\frac16-3\eps\right)v(B)$.}

In this case, we use \cref{pack-sheets} to pack $T_\ell$ in the space above $T_s$.
As before, we will reserve a space of height $\eps w$ so that the height of the packing of $T_{\ell}$
can be rounded up to the closest multiple of $\eps^2 w$. This leaves a space of height 
$h'_{\ell}\coloneqq w-h_s-\eps w$ above the packing of $T_s$. We will show that $T_{\ell}$ can be packed here.
We know $v(T_\ell)\le \left(\frac{7}{24}-5\eps\right)v(B)-v(T_s)$.
Let us define the \emph{front area} of the set $T_\ell$ as $\sum_{i\in T_\ell}w_ih_i$.
Since the depth of each item in $T_\ell$ is at least $w/2$, we obtain that the total front area of the set $T_\ell$
is at most 
\begin{align*}
    A\coloneqq \frac{1}{w}\left(\frac{7}{12}-10\eps\right)v(B)-\frac{2}{w}v(T_s)=w^2\left(\frac{7}{12}-10\eps-\frac{2}{w^3}v(T_s)\right).
\end{align*}
We neglect the depth dimension of all the items in $T_\ell$ and treat them as a set of rectangles $R_\ell$. Similarly,
we neglect the depth dimension of the space of height $h'_{\ell}$ above the packing of $T_s$ and treat it as a rectangular region $\mathcal R$
of dimensions $w\times h'_{\ell}$.
We would now like to use \cref{pack-sheets} to pack $R_{\ell}$ in $\mathcal R$ as an \LContainer.
Let us look at the conditions of the lemma that need to be satisfied.
First, we have that $w\ge h'_{\ell}$. Then, each rectangle in $R_\ell$ has length at least $w/2$, which is ensured by the definition of $T_{\ell}$.
However, we only have that each rectangle in $R_{\ell}$ has breadth at most $\eps^2 w$, but we want it \mt{to }be at most $\eps' h'_{\ell}$ for some small $\eps'$.
To obtain this condition, we note that since $v(T_s)\le (\frac16-3\eps)v(B)$, by \cref{eq:hs-upper-bound}, it must be the case that
\begin{align}
    h'_{\ell}&=w-h_s-\eps w\nonumber\\    
        &\ge w-\left(\frac{3}{w^2}\right)v(T_s)-7\eps w\label{eq:hl-lower-bound}\\
        &\ge w-\left(\frac12-9\eps\right)w-7\eps w\nonumber\\
        &\ge \frac{w}{2}.\nonumber
\end{align}
Hence, we obtain that every rectangle in $R_{\ell}$ has breadth at most $(2\eps^2)h'_{\ell}$.
Finally, we need to show that the total area of $R_{\ell}$ is at most $wh'_{\ell}-w^2/4-3(2\eps^2)(h'_{\ell})^2$.
Indeed, we have
\begin{align*}
    h'_{\ell}w-\frac14w^2-6\eps^2(h_{\ell}')^2 &= (w-6\eps^2h_{\ell}')h_{\ell}'-\frac{w^2}{4}\\
    &\ge (w-6\eps^2w)(w-h_s-\eps w)-\frac{w^2}{4}\\
    &\ge (1-6\eps^2)(1-\eps)w^2-\frac{w^2}{4}-w h_s(1-6\eps^2)\\
    &\ge (1-7\eps)w^2-\frac{w^2}{4}-wh_s\\
    &\ge \left(\frac34-7\eps\right)w^2-w\left(\frac{3}{w^2}v(T_s)+6\eps w\right)\\
    &= \left(\frac34-7\eps\right)w^2-w^2\left(\frac{3}{w^3}v(T_s)+6\eps\right)\\
    &\ge \left(\frac34-13\eps-\frac{3}{w^3}v(T_s)\right)w^2.
\end{align*}
Since we have $v(T_s)\le (1/6-3\eps)w^3$, it can be verified that 
\begin{align*}
\left(\frac34-13\eps-\frac{3}{w^3}v(T_s)\right)\ge \left(\frac{7}{12}-10\eps-\frac{2}{w^3}v(T_s)\right).
\end{align*}
This implies that $wh'_{\ell}-w^2/4-3(2\eps^2)(h'_{\ell})^2\ge A$.
This shows that $T_\ell$ can be packed in an \LContainer{} of unit width and unit depth in the region above $T_s$. We use the reserved space of dimensions $1\times1\times\eps w$
to make sure that the height of this \LContainer{} is an integral multiple of $\eps^2 w$. Hence, we conclude that in this case, the set $T$ can be packed into a Steinberg container
and an \LContainer{}, whose heights are integral multiples of $\eps^2 w$.

As for the runtime, it is equal to the sum of times to pack $T_s$ and $T_\ell$. From \cref{lem:steinberg-layers-height}, we see that the time to pack $T_s$ is at most
$O(n\log^2 n/\log \log n)$. Packing $T_\ell$ takes $O(n\log n)$ time as we need to sort them in non-increasing order of heights and stack some items up and then arrange the remaining items side by side.
\end{proof}

\subsection{A PTAS for computing the optimal guessable container packing}
\begin{lemma}
\label{lem:gapwithrotation}
    There is an algorithm that returns a packing of profit at least~$(1-\epsilon)\alg$, when rotations of items are allowed.
\end{lemma}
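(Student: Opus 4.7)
The plan is to closely mirror the proof of \cref{thm:ptasboxpacking}, treating the six axis-aligned orientations of each item as additional degrees of freedom when defining sizes in the GAP instance. As before, I would enumerate in polynomial time all ways to place a constant (depending on $\eps$) number of guessable containers inside the knapsack; for each such configuration I solve one GAP instance whose one-dimensional bins correspond to the chosen containers.

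The only substantive modification is the size function. For each item $i$ and container $C$ (of any of the types from \cref{sec:cont-class} or the new \LContainer{} of \cref{sec:Lcontainer}) I set
\[
f_C^{\mathrm{rot}}(i)\;:=\;\min_{o}\,f_C(i^{(o)}),
\]
where $o$ ranges over the (at most) six orientations of $i$, $i^{(o)}$ is the correspondingly rotated cuboid, and $f_C$ is evaluated exactly as in the non-rotation case (returning $\infty$ when $i^{(o)}$ fails the geometric preconditions of $C$). I also record a minimizing orientation $o^*_{i,C}$. The capacities $\mathtt{cap}(C)$, the number of containers, and the polynomial size of the universe of container types are unaffected, so the resulting GAP instance still meets the hypotheses of the PTAS for GAP with a constant number of knapsacks. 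After solving the GAP, I orient each item $i$ assigned to a container $C$ by $o^*_{i,C}$ and invoke $\AM_C$; the choice of $o^*_{i,C}$ ensures that $i^{(o^*)}$ satisfies the preconditions of $\AM_C$, so $\AM_C$ packs a $(1-O(\eps))$-fraction of the assigned profit, and the argument of \cref{thm:ptasboxpacking} transfers.

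The main conceptual subtlety is the \LContainer{}: its packing algorithm $\mathcal A_C$ itself further rotates certain items within the $wh$-plane after their initial orientation is fixed (see \cref{alg:lcontainer} and \cref{pack-sheets}). I would verify compatibility with the capacity bookkeeping by noting that $f_C(i)=w_ih_i$ is manifestly invariant under a $90^\circ$ in-plane swap of width and height; hence the in-plane rotations performed by $\mathcal A_C$ do not alter the consumed capacity, and the analysis of \cref{pack-sheets} continues to apply. A minor technical point is that an item may admit several valid orientations for a single container with differing values of $f_C(i^{(o)})$; taking the minimum is safe because the GAP feasibility constraint $\sum_{i\in T}f_C(i)\le \mathtt{cap}(C)$ becomes easier to satisfy when sizes shrink, and $\AM_C$'s guarantee depends only on this constraint being met together with the geometric preconditions. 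Combining the $(1-\eps)$ loss from the GAP PTAS with the $(1-O(\eps))$ loss inside each container yields (after rescaling $\eps$) the claimed packing of profit at least $(1-\eps)\alg$.
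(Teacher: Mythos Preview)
Your proposal is correct and follows essentially the same approach as the paper: enumerate the $O_\eps(1)$ containers, build a GAP instance where each item's size in a container is the minimum of $f_C$ over its (at most six) feasible orientations, solve the GAP via the PTAS, and then pack each container with $\AM_C$ using the recorded minimizing orientation. Your extra remark about the \LContainer{}'s internal in-plane rotation being compatible with the capacity bookkeeping (since $w_ih_i$ is invariant under that swap) is a nice sanity check that the paper leaves implicit.
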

\begin{proof}
    We guess the types and positions of the $O_{\epsilon}(1)$ containers in the most profitable container packing, and then define an instance of GAP as in the proof of \Cref{thm:ptasboxpacking}. The capacity of each container is set similarly as in the proof of \Cref{thm:ptasboxpacking}, and the profit of an item for any container is set as the item's actual profit. The items' sizes for the containers are also defined similarly, except that we need to account for the rotations of the items. Note that there are {\em six}  possible orientations for each item. We call an orientation to be \textit{feasible} for a container if it respects the container constraints as stated in \cref{sec:cont-class}. At a high level, while setting an item's size for a container, we look at all feasible orientations of the item, and choose the ``best'' one. This is described in more detail below.

    For a Stack container that stacks items along the height, an item's size is set as the minimum height among all feasible orientations of the item. Similarly, for an Area container that packs items using \NFDH along the bottom face of the container, the size of an item is set as the minimum base area of a feasible orientation, if one exists, and $\infty$ otherwise. For a Volume container, we set the size equal to the item's volume if there is a feasible orientation (where the dimensions of the item do not exceed an $\epsilon$-fraction of the container's dimensions) of the item, and $\infty$ otherwise. Similarly for Steinberg containers, an item's size is set equal to its volume, if there exist feasible orientations of the item inside the containers, and $\infty$ otherwise.
    For an \LContainer{}, the size of an item $i$ is the minimum front area (width$\times$height) among all feasible orientations of the item. From here on, we can construct the instance of GAP with the relevant functions $f_C$ for each possible container and solve it as we did in \cref{thm:ptasboxpacking}.
\end{proof}
\subsection{A \texorpdfstring{$(30/7+\eps)$}{(30/7 + ε)}-approximation algorithm for the general case}
\label{sec:rot-general-case}
We obtain the following result for \tdk~with rotations. 
\begin{theorem}
    \label{thm:3dkrot}
   For any constant $\eps>0$, there is a polynomial-time $(30/7+\eps)$-approximation algorithm for \tdkr. 
\end{theorem}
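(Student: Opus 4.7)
My plan is to follow the same template as the proof of \cref{thm:mainthm} in \cref{sec:structurallemma}, but invoke \cref{lem:gapwithrotation} in place of \cref{thm:ptasboxpacking}. Since that PTAS reduces the problem to bounding $\opt/\optgs$, it suffices to establish $\opt \le \left(\frac{30}{7}+O(\epsilon)\right)\optgs$ in the rotation model.

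The approach is to re-derive, in the presence of rotations, the five lower bounds on $\optgs$ developed in \cref{sec:firstlb,sec:secondlb,sec:thirdlb,sec:fourthlb,sec:fifthlb}. Most of these go through with only cosmetic edits, because allowing rotations only enlarges the set of feasible container assignments. Two strengthenings are genuinely rotation-specific. First, \cref{lem:3DRVolPack} improves the volume threshold $1/4$ used in \cref{lem:stackandsteincombine} to $7/24$, which yields a strengthened analogue of \cref{cor:packingI1}: a high-profit subset of $\text{OPT}\cap(I_1\cup I_2\cup I_3)$ with volume at most $7/24-O(\epsilon)$ can be packed into a Steinberg container together with either a Stack container or an \LContainer{}. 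Second, the \LContainer{} of \cref{sec:Lcontainer} enables a tighter combined packing of ``flat'' items (analogous to those in $I_{i\ell}$) alongside ``slab'' items (analogous to those in $I_{is}$) inside a single container, by mixing a stacked strip with a rotated strip via the front-area inequality of \cref{pack-sheets}; this strengthens the purely Steinberg-based combined packing of \cref{lem:stackandstein} and hence \cref{cor:combinepacking}.

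With these refined bounds in hand, I would run the case analysis of \cref{subsec:bound} on the volumes $v_1,v_2,v_3$: at most one of them exceeds $1/3$; exactly two do; or all three do, subdivided further on $v_1$ as in \cref{sec:fourthlb}. Each subcase gives rise to a small system of linear inequalities relating $\alg$ to $\opt_L$, the $\optil$-terms, and the $\opttwot,\optthreet,\optlt,\opttwoh,\optthreeh,\optlh$-terms. Because of the improved $7/24$ constant and the \LContainer{} gain, each subcase yields a strictly better ratio than its non-rotation counterpart. Taking the worst subcase is expected to produce $30/7+O(\epsilon)$, with a corresponding tight-instance certificate analogous to the one exhibited after \cref{cla:139by29}.

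The main obstacle will be setting up the \LContainer{}-based combined packing bound correctly. Unlike \cref{lem:stackandstein}, which produces a Stack container and a Steinberg container of fixed separate heights, an \LContainer{} packs two structurally different item populations inside the \emph{same} container, so the capacity $w_Ch_C-w_C^2/4$ must be budgeted between them and the dimensions of the container must be discretized (to integer multiples of $\epsilon^2$) so that the resulting packing remains guessable. Once this accounting is done for each of the subcases, the final LP in the profit shares is maximized at a balance between $\opt_L$, the volume-saturated $\opt_1+\opt_2+\opt_3$, and the \LContainer-fitted portion of items with two large dimensions, which yields the ratio $30/7$.
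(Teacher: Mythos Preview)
Your proposal is considerably more elaborate than what the paper actually does, and the core derivation of the constant $30/7$ is missing. The paper does \emph{not} re-run the five lower bounds and the case analysis on $v_1,v_2,v_3$ from \cref{subsec:bound}. Instead it uses only two simple lower bounds and a one-line case split on $\opt_L$. The key observation you are missing is that, once rotations are allowed, every item in $I_1\cup I_2\cup I_3$ can be reoriented so that its height is at most $\mu$; hence all of $\OPT_1\cup\OPT_2\cup\OPT_3$ can be treated as a single pool of ``flat'' items. Feeding this pool into \cref{lem:3DRVolPack} (via a density-sorted prefix of volume $\le 7/24-O(\epsilon)$) directly gives $\optgs\ge(7/24-O(\epsilon))(\opt_1+\opt_2+\opt_3)$. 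Separately, a straightforward combination of the $\opt_L+\max\{\opt_{i\ell}\}$ bound and the $\tfrac13(\opt_{1s}+\opt_{2s}+\opt_{3s})$ Steinberg bound yields $\optgs\ge(1-O(\epsilon))(\opt/6+\opt_L/3)$. The final argument is then: if $\opt_L/3+\opt/6\ge 7\opt/30$ the second bound finishes; otherwise $\opt_L<\opt/5$, so $\opt_1+\opt_2+\opt_3>4\opt/5$ and the first bound gives $\optgs\ge(7/24)(4/5)\opt=7\opt/30$.

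Your plan to redo the $v_1,v_2,v_3$ case analysis is therefore unnecessary, and more importantly your sketch never explains where $30/7$ comes from---you simply assert that ``taking the worst subcase is expected to produce $30/7$''. That is a gap: without the unified treatment of $\OPT_1\cup\OPT_2\cup\OPT_3$ afforded by rotations, the per-dimension case split would give you worse constants (just as it did in the non-rotation case), and nothing in your outline isolates the balance point $\opt_L=\opt/5$ that actually produces $30/7$. Your discussion of the \LContainer{} as a device to ``strengthen \cref{cor:combinepacking}'' also misplaces its role: in the paper the \LContainer{} is used \emph{inside} \texttt{3DR-Vol-Pack} to attain the $7/24$ volume threshold of \cref{lem:3DRVolPack}, not as a separate structural lemma in the case analysis.
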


We will follow the same approach as the one we used in the setup when rotations were not allowed:
We obtain several lower bounds on $\optgs$ (the maximum profit of a guessable container packing) and then combine them.
Since \cref{lem:gapwithrotation} shows us how to pack a profit of at least $(1-\eps)\optgs$, it suffices to show a lower bound on $\optgs$ in terms of $\opt$.
One main difference is that we develop a lower bound based on \LContainers. 

As usual, let $I$ denote the input set, and let $\mu$ be a constant much smaller than $\eps$ (setting $\mu=\eps^{2^{1/\eps^2}}$ suffices).
The set $L$ contains items that have all three dimensions more than $\mu$.
Let $\OPT$ denote an optimal packing as well as the set of items in this optimal packing. 
Let $\OPT_L\coloneqq \OPT\cap L$.
Let $\OPT_1$ denote the set of items of $\OPT\setminus \OPT_L$ which are oriented in $\OPT$
such that their height is at most $\mu$. Let $\OPT_2$ denote the set of items of $\OPT\setminus (\OPT_L\cup\OPT_1)$ which are oriented in $\OPT$
such that their width is at most $\mu$ in $\OPT$. Finally, let $\OPT_3\coloneqq\OPT\setminus(\OPT_L\cup\OPT_1\cup\OPT_2)$.
Let $\opt_L=p(\OPT_L)$ and $\opt_i\coloneqq p(\OPT_i)$ for each $i\in\{1,2,3\}$.

Next, we discuss the lower bound on $\optgs$ based on \LContainers{}.
\subsubsection{Lower bound on \texorpdfstring{$\optgs$}{optgs} based on \texorpdfstring{\LContainers}{L-Containers}}
\label{sec:7by24guarantee}

\begin{lemma}
\label{lem:rotationalgo}
    We have that $\optgs\ge \left(\frac{7}{24}-O(\epsilon)\right)(\opt_1+\opt_2+\opt_3)$, when rotations of items are allowed.
\end{lemma}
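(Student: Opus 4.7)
The plan is to apply the packing subroutine \cref{lem:3DRVolPack} (\texttt{3DR-Vol-Pack}) to a carefully chosen high-density subset of $\OPT_1\cup\OPT_2\cup\OPT_3$. The key observation is that every item in $\OPT_1\cup\OPT_2\cup\OPT_3$ has at least one side of length at most $\mu$ (the height for items in $\OPT_1$, the width for items in $\OPT_2$, and the depth for items in $\OPT_3$). Since rotations are allowed, each such item admits an orientation in which its height is at most $\mu\le \eps^2$, while its width and depth are at most $1$. Thus, each of these items satisfies the precondition of \cref{lem:3DRVolPack} with $w=1$ and the knapsack playing the role of the cubical box $B$. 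In particular, each such item has volume at most $\mu$.

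Let $V\coloneqq v(\OPT_1\cup\OPT_2\cup\OPT_3)\le 1$. If $V\le \tfrac{7}{24}-5\eps$, then we simply apply \cref{lem:3DRVolPack} directly to the whole set $\OPT_1\cup\OPT_2\cup\OPT_3$ (after reorienting items as above) to pack all of them into a Steinberg container together with either a Stack container or an \LContainer{} (depending on which case of \cref{lem:3DRVolPack} applies). This yields the full profit $\opt_1+\opt_2+\opt_3\ge \tfrac{7}{24}(\opt_1+\opt_2+\opt_3)$.

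Otherwise, we sort the items of $\OPT_1\cup\OPT_2\cup\OPT_3$ in non-increasing order of profit-to-volume ratio and pick the largest prefix $T'$ with $v(T')\le \tfrac{7}{24}-5\eps$. Since each item has volume at most $\mu$, we have $v(T')\ge \tfrac{7}{24}-5\eps-\mu\ge \tfrac{7}{24}-O(\eps)$. Because $T'$ consists of items with the largest profit density,
\[
 p(T')\ge v(T')\cdot\frac{\opt_1+\opt_2+\opt_3}{V}\ge \left(\frac{7}{24}-O(\eps)\right)(\opt_1+\opt_2+\opt_3),
\]
where the last inequality uses $V\le 1$. Now \cref{lem:3DRVolPack} applies to $T'$ and packs it into a Steinberg container plus either a Stack container or an \LContainer{}, both with an $1\times 1$ base and heights that are integer multiples of $\eps^2$.

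The resulting packing uses $O(1)$ containers whose dimensions lie in a set of size $O_\eps(1)$, so it is a guessable container packing. Hence $\optgs\ge p(T')\ge \left(\tfrac{7}{24}-O(\eps)\right)(\opt_1+\opt_2+\opt_3)$, as required. The only nontrivial step is the invocation of \cref{lem:3DRVolPack}; everything else is a routine greedy/density argument, and no additional obstacle is expected since the height and container-dimension conditions (in particular, $h_C\ge w_C/4$ for the \LContainer{}, which in Case~2 of the proof of \cref{lem:3DRVolPack} was verified to satisfy $h'_\ell\ge w/2$) are guaranteed by that lemma itself.
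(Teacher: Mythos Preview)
Your proof is correct and follows essentially the same approach as the paper: reorient every item of $\OPT_1\cup\OPT_2\cup\OPT_3$ so that its height is at most $\mu$, greedily select a maximal high-density prefix of volume at most $\tfrac{7}{24}-O(\eps)$, and invoke \cref{lem:3DRVolPack} to pack it into two guessable containers. The only cosmetic differences are that the paper sorts all of $I\setminus L$ (rather than just the items appearing in $\OPT$) and uses $\sqrt{\mu}$ instead of $\eps$ as the accuracy parameter when applying \cref{lem:3DRVolPack}; neither change affects the argument or the resulting bound.
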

\begin{proof}
    Recall that any item in $\OPT_1\cup \OPT_2\cup \OPT_3=I\setminus L$ has at least one of the dimensions not exceeding $\mu$.
    However, since rotations are allowed, we can assume that the items of $I\setminus L$ have height at most $\mu$.
    We sort the items of $I\setminus L$ in non-increasing order of their profit densities (the profit/volume ratio), and
    pick the largest prefix $P$ whose total volume does not exceed $7/24-5\sqrt{\mu}$. Since the volume of each item
    in $I\setminus L$ is at most $\mu$, we obtain that $v(P)\ge \min\{7/24-6\sqrt{\mu},v(I\setminus L)\}$.
    Since $P$ denotes the subset of $I\setminus L$ of highest profit density whose volume is at least $v(P)$, it implies that
    \begin{align*}
        p(P)&\ge \min\left\{\opt_1+\opt_2+\opt_3,\frac{v(P)}{v(\OPT_1\cup\OPT_2\cup\OPT_3)}(\opt_1+\opt_2+\opt_3)\right\}\\
            &\ge (7/24-6\sqrt{\mu})(\opt_1+\opt_2+\opt_3).
    \end{align*} 
    The last inequality follows since $v(\OPT_1\cup\OPT_2\cup\OPT_3)\le 1$ and $v(P) \ge \min\{v(I\setminus L),7/24 -6\sqrt{\mu}\}$.
    To conclude, \cref{lem:3DRVolPack} ensures that the set $P$ can be packed into a knapsack in two containers (one Steinberg container and one Stack container or \LContainer),
    both with unit width and unit depth, and heights as integral multiples of $\mu$.
\end{proof}

\subsubsection{Other lower bounds on \texorpdfstring{$\optgs$}{optgs}}
\label{sec:appxratio}
We can obtain more lower bounds on $\optgs$ in the case of rotations by performing similar restructurings as the ones in the case of non-rotations.
We divide $\OPT_1$ into two sets $\OPT_{1\ell}$ and $\OPT_{1s}$, where $\OPT_{1\ell}$ contains items in $\OPT_1$ which have both width and depth more than $1/2$, and $\OPT_{1s}=\OPT_1\setminus \OPT_{1\ell}$.
Define $\opt_{1\ell}\coloneqq p(\OPT_{1\ell})$ and $\opt_{1s}\coloneqq p(\OPT_{1s})$, so that $\opt_{1\ell} + \opt_{1s} = \opt_1$. Similarly, we define $\OPT_{2\ell}$ as the set of items in $\OPT_2$ with both width and height more than $1/2$, and $\OPT_{2s}$ as $\OPT_2\setminus \OPT_{2\ell}$.
Finally, we define $\OPT_{3\ell}$ as the set of items in $\OPT_3$ with both depth and height more than $1/2$, and $\OPT_{3s}$ as $\OPT_3\setminus \OPT_{3\ell}$.
The profits $\opt_{2\ell}, \opt_{2s}$ and $\opt_{3\ell}, \opt_{3s}$ are also defined appropriately.

We perform two different restructurings of the optimal packing into guessable container packings. 

\begin{claim}
    \label{claim:rot-number-theory}
    We have $\optgs\ge (1-O(\eps))(\opt_L+\max\{\opt_{1\ell},\opt_{2\ell},\opt_{3\ell}\})$.
\end{claim}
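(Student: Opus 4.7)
The plan is to mimic the restructuring of \cref{sec:secondlb}, applied to $\OPT_L \cup \OPT_{i\ell}$ where $i \in \{1,2,3\}$ is the index maximizing $\opt_{i\ell}$; without loss of generality, assume $i=1$. The key observation is that even though rotations are allowed, we can freeze each item of $\OPT_L\cup\OPT_{1\ell}$ to the specific orientation it has in $\OPT$. Under this frozen orientation, items in $\OPT_{1\ell}$ have both width and depth exceeding $1/2$ and height at most $\mu$, so no two can overlap in the $z$-direction, and items of $\OPT_L$ (of which there are at most $1/\mu^3$) have all dimensions more than $\mu$. With orientations fixed, the geometric structure is identical to that of the non-rotation case.

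Concretely, I would proceed in the following steps. First, apply the analogue of \cref{lem:discretepos} by pushing each item maximally left and then forward, so that the face-positions of items in $\OPT_L\cup\OPT_{1\ell}$ come from a polynomial-sized set (the argument is unchanged: each such position is the sum of at most $1/\mu$ widths or depths from $L$ plus at most one $I_{1\ell}$-item). Second, take $T \subseteq \OPT_{1\ell}$ to be the $k$ most profitable items in $\OPT_{1\ell}$ (with $k$ chosen via the lemma of \cite{jansen1999improved}) and cut horizontally through the top and bottom faces of all items in $T\cup \OPT_L$; this creates $O(1/\mu^3)k$ strips and discards only $O(1/\mu^3)$ items of $\OPT_{1\ell}\setminus T$. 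Third, within each strip, apply the maximal-region argument of \cref{lem:boundingStackcontainer} to collapse the remaining items of $\OPT_{1\ell}$ into $O(1/\mu^8)$ Stack containers whose widths and depths come from a polynomial-sized set. Fourth, apply the height-discretization of \cref{lem:discard} at a cost of $O(1/\mu^{11})k$ additional discards. Finally, wrap each item of $\OPT_L$ (with its frozen orientation) inside its own singleton Stack container, giving a total of $O(1/\mu^{11})$ guessable Stack containers.

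The profit accounting goes through exactly as before: the lemma of \cite{jansen1999improved} with $c=O(1/\mu^{11})$ shows that, for a suitable choice of $k$, the total profit of discarded items is at most $\eps\, \opt_{1\ell}$, while all items of $\OPT_L$ are preserved; we can handle the edge case $|\OPT_{1\ell}| < (1/\mu)^{\Omega(1/\eps)}$ by creating one Stack container per $I_{1\ell}$-item. Thus $\optgs \ge (1-O(\eps))(\opt_L + \opt_{1\ell})$. The only point that needs verification is that a Stack container, in the rotation setting, is legitimately able to pack items of fixed orientation: this follows from the GAP formulation in \cref{lem:gapwithrotation}, where the size of an item for a Stack container is the minimum height over feasible orientations, so any orientation that fits (in particular the one from $\OPT$) is admissible. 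The main ``obstacle'' is thus merely bookkeeping: since rotations only expand the set of feasible orientations per container, none of the discretization or counting arguments from \cref{sec:secondlb} are weakened, and the case $i\in \{2,3\}$ is handled by the symmetric argument after rotating the coordinate frame.
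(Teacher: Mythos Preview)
Your proposal is correct and follows essentially the same approach as the paper, which simply says to apply the restructuring of \cref{sec:secondlb} to $\OPT_{1\ell}\cup\OPT_L$ with orientations frozen from $\OPT$; your write-up just spells out the steps and adds the useful sanity check that the rotation-setting GAP formulation (\cref{lem:gapwithrotation}) still admits the frozen orientation. One minor slip: the total number of Stack containers is $(1/\mu)^{O(1/\eps)}$ rather than $O(1/\mu^{11})$, since the number of strips scales with $k$; this does not affect correctness as it is still $O_\eps(1)$.
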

\begin{proof}
    We focus on the packing restricted to $\OPT_{1\ell}\cup\OPT_L$ in the optimal packing $\OPT$.
    We can perform the same restructuring procedure outlined in \Cref{sec:secondlb}. Thus, by \cref{thm:big&largealgo}, we obtain the claim.
\end{proof}

\begin{claim}
    \label{claim:rot-stein-gs}
    We have $\displaystyle \optgs\ge \left(\frac13-O(\eps)\right)(\opt_{1s}+\opt_{2s}+\opt_{3s})$.
\end{claim}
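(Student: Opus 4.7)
The plan is to construct a guessable container packing consisting of a single Steinberg container equal to the whole unit cube, filled with a high-profit subset of $\OPT_{1s}\cup \OPT_{2s}\cup \OPT_{3s}$ (possibly with reorientations differing from those in $\OPT$). The crucial observation is that an item in $\OPT_{is}$ has, by definition, one ``small'' dimension of length at most $\mu$ (the dimension that placed the item in class $\OPT_i$) and, since it does not belong to $\OPT_{i\ell}$, at least one of the remaining two dimensions is at most $1/2$. Exploiting the freedom to rotate by $90^\circ$ about the three axes, I reorient each such item so that the $\le\mu$ dimension becomes the height and the $\le 1/2$ dimension becomes the new width (or depth). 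After this reorientation, every item simultaneously satisfies $h_i\le \mu\le\eps$ and (\,$w_i\le 1/2$ or $d_i\le 1/2$\,), which is exactly the hypothesis of \texttt{3D-Vol-Pack} from \Cref{lem:steinberg} with the ambient box $B$ being the unit cube.

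Next, I sort the reoriented items of $\OPT_{1s}\cup \OPT_{2s}\cup \OPT_{3s}$ in non-increasing order of profit/volume ratio and greedily take the largest prefix $P$ whose total volume does not exceed $(1/3-2\eps)$. By \Cref{lem:steinberg}, $P$ packs entirely into a Steinberg container of dimensions $1\times 1\times 1$, yielding a valid single-container guessable packing and hence a lower bound on $\optgs$.

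For the profit bound, let $V := v(\OPT_{1s}\cup \OPT_{2s}\cup \OPT_{3s})$. If $V\le 1/3-2\eps$ then $P$ consists of all the items and contributes profit exactly $\opt_{1s}+\opt_{2s}+\opt_{3s}$. Otherwise, the maximality of $P$ together with the fact that each individual item has volume at most $\mu\le\eps$ forces $v(P)\ge 1/3-3\eps$. Since $P$ is a profit-density-maximising prefix and $V\le 1$, a standard averaging argument gives
\[
p(P) \;\ge\; \frac{v(P)}{V}\bigl(\opt_{1s}+\opt_{2s}+\opt_{3s}\bigr) \;\ge\; \left(\tfrac{1}{3}-O(\eps)\right)\bigl(\opt_{1s}+\opt_{2s}+\opt_{3s}\bigr),
\]
which establishes the claim.

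The only delicate step is the reorientation: one must verify that for each of the three categories $\OPT_{1s}$, $\OPT_{2s}$, $\OPT_{3s}$ the two defining conditions (one dimension $\le\mu$, another dimension $\le 1/2$) can be simultaneously aligned with the height and width (or depth) axes of the knapsack. This follows directly from the fact that rotations by $90^\circ$ about the coordinate axes realise all $3!=6$ orderings of the item's three side lengths, so any ordered pair of distinct item dimensions can be mapped to the (height, width) pair. Everything else is a direct application of the packing guarantee of \texttt{3D-Vol-Pack} stated in \Cref{lem:steinberg}.
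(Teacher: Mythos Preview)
Your proof is correct and follows essentially the same approach as the paper: reorient each item in $\OPT_{1s}\cup\OPT_{2s}\cup\OPT_{3s}$ so that the $\le\mu$ dimension becomes the height and some $\le 1/2$ dimension becomes the width or depth, then take a maximal profit-density prefix of volume at most roughly $1/3$ and pack it into a single unit Steinberg container via \Cref{lem:steinberg}. The paper uses the threshold $\tfrac{1}{3}-2\mu$ rather than your $\tfrac{1}{3}-2\eps$, and is less explicit about the reorientation step, but these are cosmetic differences.
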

\begin{proof}
    Let $\OPT_s\coloneqq \OPT_{1s}\cup \OPT_{2s}\cup\OPT_{3s}$. In other words, $\OPT_s$ is the set of items in the optimal packing that have one dimension at most
$\mu$, and at least one of the other two dimensions not exceeding $1/2$. Thus $p(\OPT_s) = \opt_{1s}+\opt_{2s}+\opt_{3s}$. We sort items of $\OPT_s$ in non-increasing order
of profit density (profit/volume ratio) and select the maximum prefix whose volume is at most $\frac{1}{3}-2\mu$. Using \cref{lem:steinberg}, these selected items can be packed into a 
Steinberg container of height $1$. If by doing this, we are able to pack all items of $T$, we trivially obtain a profit of 
$\opt_{1s}+\opt_{2s}+\opt_{3s}$. Otherwise, the total volume of packed items is at least $\frac{1}{3}-3\mu$, and the profit obtained is at least 
\begin{align*}
    \frac{\frac{1}{3}-3\mu}{v(\OPT_s)}(\opt_{1s}+\opt_{2s}+\opt_{3s}) \ge \left(\frac{1}{3}-O(\epsilon)\right)(\opt_{1s}+\opt_{2s}+\opt_{3s}),
\end{align*}
where the last inequality follows since $\mu$ is at most $\eps$.
\end{proof}
We are ready to state the final lower bound on $\optgs$.
\begin{lemma}
\label{lem:containerrotation}
    We have that $\alg \ge (1-O(\epsilon))({\opt}/{6}+{\opt_L}/{3})$.
\end{lemma}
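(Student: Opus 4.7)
The plan is to combine two previously established lower bounds on $\optgs$ (recall that $\alg$ and $\optgs$ are both macros for $\mathtt{opt}_{\mathrm{gs}}$) via a simple convex combination. Write $S_\ell \coloneqq \opt_{1\ell} + \opt_{2\ell} + \opt_{3\ell}$ and $S_s \coloneqq \opt_{1s} + \opt_{2s} + \opt_{3s}$, so that $\opt_1 + \opt_2 + \opt_3 = S_\ell + S_s$ and hence $\opt = \opt_L + S_\ell + S_s$.

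First, I would invoke \Cref{claim:rot-number-theory} to obtain $\optgs \ge (1-O(\eps))\paren{\opt_L + \max\{\opt_{1\ell},\opt_{2\ell},\opt_{3\ell}\}}$, and then lower-bound the maximum of three nonnegative reals by their average to get
\[ \optgs \ge (1-O(\eps))\paren{\opt_L + \frac{S_\ell}{3}}. \]
Next, \Cref{claim:rot-stein-gs} directly yields
\[ \optgs \ge (1-O(\eps))\cdot \frac{S_s}{3}. \]

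Averaging these two bounds with equal weight $1/2$ gives
\[ \optgs \ge (1-O(\eps))\paren{\frac{\opt_L}{2} + \frac{S_\ell + S_s}{6}} = (1-O(\eps))\paren{\frac{\opt_L}{2} + \frac{\opt - \opt_L}{6}} = (1-O(\eps))\paren{\frac{\opt}{6} + \frac{\opt_L}{3}}, \]
which is exactly the claimed bound.

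There is essentially no obstacle here---the proof is a one-line convex combination of existing guarantees. The only conceptual points are (i) recognizing that the max-over-three factor in \Cref{claim:rot-number-theory} can be averaged down to a clean $S_\ell/3$, and (ii) choosing equal weights $1/2$ so that the $\opt_L$ coefficient becomes exactly $1/2$ before the algebraic simplification using $S_\ell + S_s = \opt - \opt_L$ converts it to $\opt/6 + \opt_L/3$.
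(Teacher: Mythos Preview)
Your proposal is correct and essentially identical to the paper's proof: the paper also invokes \Cref{claim:rot-number-theory}, replaces the maximum by the average, invokes \Cref{claim:rot-stein-gs}, and averages the two bounds with equal weight to arrive at $(1-O(\eps))(\opt_L/2 + (\opt-\opt_L)/6) = (1-O(\eps))(\opt/6 + \opt_L/3)$.
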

\begin{proof}
    We combine \cref{claim:rot-number-theory} and \cref{claim:rot-stein-gs}. By \cref{claim:rot-number-theory}, we have
    \begin{align*}
        \optgs&\ge (1-O(\eps))(\opt_L+\max\{\opt_{1\ell},\opt_{2\ell},\opt_{3\ell}\})\\
                &\ge (1-O(\eps))\left(\opt_L+\frac{\opt_{1\ell}+\opt_{2\ell}+\opt_{3\ell}}{3}\right).
    \end{align*}
    By \cref{claim:rot-stein-gs}, we have
    \begin{align*}
        \optgs\ge \left(\frac13-O(\eps)\right)(\opt_{1s}+\opt_{2s}+\opt_{3s}).
    \end{align*}
    Adding both the above equations, we obtain
    \begin{align*}
        \optgs&\ge(1-O(\eps))\left(\frac{\opt_L}{2}+\frac{\opt_{1\ell}+\opt_{1s}+\opt_{2\ell}+\opt_{2s}+\opt_{3\ell}+\opt_{3s}}{6}\right)\\
                &=(1-O(\eps))\left(\frac{\opt_L}{2}+\frac{\opt_1+\opt_2+\opt_3}{6}\right)\\
                &=(1-O(\eps))\left(\frac{\opt_L}{2}+\frac{\opt-\opt_L}{6}\right)\\
                &=(1-O(\eps))\left(\frac{\opt_L}{3}+\frac{\opt}{6}\right).
    \end{align*}    
    That ends the proof.
\end{proof}
\subsubsection{Final approximation ratio}

The following lemma, combined with \Cref{lem:gapwithrotation}, establishes \Cref{thm:3dkrot}.

\begin{lemma}
    There exists a guessable container packing of profit at least $\left(\frac{7}{30}-O(\eps)\right)\opt$, where the set of containers belongs to one of the following types.
    \begin{itemize}
        \item All \kvnnew{Stack} containers.
        \item One Stack container and one Steinberg container.
        \item One \LContainer{} and one Steinberg container.
    \end{itemize}
\end{lemma}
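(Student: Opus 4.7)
The plan is to combine the two lower bounds on $\optgs$ already proved in the section --- Lemma \ref{lem:containerrotation}, which gives $\optgs \ge (1-O(\eps))(\opt/6 + \opt_L/3)$, and Lemma \ref{lem:rotationalgo}, which gives $\optgs \ge (7/24 - O(\eps))(\opt - \opt_L)$ --- and observe that the worst-case trade-off between them yields exactly $7/30$. Concretely, I will parametrize by $\alpha := \opt_L/\opt$. The first bound then reads $\optgs/\opt \ge \alpha/3 + 1/6$, which is increasing in $\alpha$, and the second reads $\optgs/\opt \ge (7/24)(1-\alpha)$, which is decreasing. Setting them equal gives $8\alpha + 4 = 7 - 7\alpha$, i.e.\ $\alpha = 1/5$, at which point both sides evaluate to $7/30$. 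Therefore, for $\alpha \ge 1/5$ the first bound is at least $7/30 \cdot \opt$, and for $\alpha \le 1/5$ the second bound is at least $7/30 \cdot \opt$, so at least one of the two lemmas already guarantees profit $(7/30 - O(\eps))\opt$.

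The next step is to trace each of the dominant bounds back to an explicit packing whose container structure belongs to one of the three allowed types. In the regime $\alpha \ge 1/5$, Lemma \ref{lem:containerrotation} was obtained by averaging the bounds of Claim \ref{claim:rot-number-theory} (which constructs an all-Stack container packing of profit $\opt_L + \max\{\opt_{1\ell},\opt_{2\ell},\opt_{3\ell}\}$) and Claim \ref{claim:rot-stein-gs} (which constructs a single Steinberg container packing of profit $(1/3)(\opt_{1s}+\opt_{2s}+\opt_{3s})$). Since the maximum of two numbers is at least their average, at least one of these two packings already achieves profit $\opt_L/3 + \opt/6 \ge (7/30 - O(\eps))\opt$. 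Both packings are of the permitted form: the first is "all Stack containers" and the second is "one Steinberg container" (a degenerate instance of the "one Stack and one Steinberg" type). In the regime $\alpha < 1/5$, Lemma \ref{lem:rotationalgo} is dominant, and its witnessing packing --- produced by Lemma \ref{lem:3DRVolPack} --- consists of one Steinberg container together with either a Stack container or an \LContainer, which are the second and third permitted types in the statement.

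The proof is essentially a clean case split on $\alpha$, and no single step should be a serious obstacle. The one place that requires a little care is the regime $\alpha \ge 1/5$: Lemma \ref{lem:containerrotation} is phrased as an averaged bound, so I must not conflate it with the existence of a single packing that simultaneously realizes both Claim \ref{claim:rot-number-theory} and Claim \ref{claim:rot-stein-gs}. Instead, I will invoke the max-vs-average inequality $\max\{a,b\} \ge (a+b)/2$ to conclude that at least one of the two constituent packings individually attains $(7/30 - O(\eps))\opt$, thereby keeping the realized packing within the three allowed container-type templates.
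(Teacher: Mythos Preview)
Your proposal is correct and takes essentially the same approach as the paper: a case split at $\opt_L/\opt = 1/5$, invoking Lemma~\ref{lem:containerrotation} when $\opt_L \ge \opt/5$ and Lemma~\ref{lem:rotationalgo} when $\opt_L < \opt/5$. If anything, you are more careful than the paper's own proof, which simply writes ``we are done'' after citing Lemma~\ref{lem:containerrotation} without explicitly unpacking (via $\max\{a,b\}\ge(a+b)/2$) which of the underlying packings from Claims~\ref{claim:rot-number-theory} and~\ref{claim:rot-stein-gs} actually realizes the bound and matches one of the three listed container templates.
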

\begin{proof}
    By \cref{lem:containerrotation}, if $\opt_L/3+\opt/6\ge 7\opt/30$, then we are done.
    So, assume now that $\opt_L/3+\opt/6< 7\opt/30$. This implies that $\opt_L< \opt/5$.
    Recall that \cref{lem:rotationalgo} gives us the lower bound of $\optgs\ge (\frac{7}{24}-O(\epsilon))(\opt_1+\opt_2+\opt_3)$.
    This in turn implies that $\optgs\ge (7/24-O(\epsilon))(\opt-\opt_L)\ge (7/30-O(\eps))\opt$. 
\end{proof}
\mt{This approximation ratio is also tight by our current analysis. One such instance is given by $\opt_1=\frac{11}{15}\opt, \opt_2=\opt_3=\frac{1}{30}\opt, \opt_L=\frac{1}{5}\opt$. The respective partitions of $\opt_i$ are given by $\opt_{1\ell}=\opt_{2\ell}=\opt_{3\ell}=\frac{1}{30}\opt$ and thus $\opt_{1s}=\frac{7}{10}\opt$ and $\opt_{2s}=\opt_{3s}=0$.}

\subsection{Cardinality case}
\label{sec:tdkrc}
Based on \cref{lem:rotationalgo}, we can design a $(24/7+\eps)$-approximation algorithm in the cardinality case when rotations are allowed.

\begin{theorem}
    \label{thm:r3dkce}
   For any constant $\eps>0$, there is a polynomial-time $(24/7 +\eps)$-approximation algorithm for \tdkr~when all items have the same profit.  
\end{theorem}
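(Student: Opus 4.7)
The plan is to imitate the argument used for the non-rotation cardinality case (\cref{thm:3dkc}) but now using the much stronger single-shot bound provided by \cref{lem:rotationalgo}. Since \cref{lem:gapwithrotation} shows that the best guessable container packing can be found up to a $(1-\eps)$ factor, it suffices to establish that $\opt \le (24/7+O(\eps))\optgs$.

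First, I would dispose of the contribution of $L$. Each item in $L$ has volume at least $\mu^3$, so $|\OPT\cap L| \le 1/\mu^3$. In the cardinality case all items carry the same profit, so the fraction $\opt_L/\opt$ equals $|\OPT\cap L|/|\OPT|$. I split on the size of $|\OPT|$: if $|\OPT| \le 1/(\eps\mu^3)$, then the entire optimal set has constant size, and I would simply create a dedicated Stack container for each item in $\OPT$, giving a guessable container packing of profit $\opt$. Otherwise $|\OPT| > 1/(\eps\mu^3)$, in which case $\opt_L \le \eps\opt$ and hence $\opt_1+\opt_2+\opt_3 \ge (1-\eps)\opt$.

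In this remaining branch, I apply \cref{lem:rotationalgo} directly:
\[
\optgs \ge \left(\tfrac{7}{24}-O(\eps)\right)(\opt_1+\opt_2+\opt_3) \ge \left(\tfrac{7}{24}-O(\eps)\right)(1-\eps)\opt,
\]
which rearranges to $\opt \le (24/7+O(\eps))\optgs$. Combined with \cref{lem:gapwithrotation}, this yields the desired polynomial-time $(24/7+\eps)$-approximation after rescaling $\eps$ at the end.

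The main technical content has already been absorbed into \cref{lem:rotationalgo} (whose proof rests on \cref{lem:3DRVolPack} and the new \LContainer{} machinery of \cref{sec:Lcontainer}), so I do not anticipate any real obstacle here. The only subtlety is the reduction in the first step, which uses the fact that in the cardinality regime $\opt_L$ is either negligible (when $|\OPT|$ is large in $\mu$) or the whole instance is effectively of constant size and can be packed item-by-item into Stack containers.
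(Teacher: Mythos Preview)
Your proposal is correct and follows essentially the same approach as the paper: split on whether $|\OPT|$ is bounded by a constant (in which case one Stack container per item suffices) or large enough that $\opt_L$ is negligible, and in the latter case invoke \cref{lem:rotationalgo} directly. The only cosmetic difference is the threshold you use ($1/(\eps\mu^3)$ versus the paper's $1/\mu^4$), which is immaterial.
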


In order to prove \Cref{thm:r3dkce}, we show the following lower bound on $\alg$, which together with \Cref{lem:gapwithrotation} completes the proof.

\begin{lemma}
    There exists a guessable container packing of profit at least $\left(\frac{7}{24}-O(\eps)\right)\opt$ when all items have the same profit, where the set of containers belongs to one of the following types.
    \begin{itemize}
        \item All \kvnnew{Stack} containers.
        \item One Stack container and one Steinberg container.
        \item One \LContainer{} and one Steinberg container.
    \end{itemize}
\end{lemma}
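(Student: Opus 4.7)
The plan is to mimic the cardinality argument used in the non-rotation setting (\cref{sec:nonrot-cardinality}), but replace the appeal to the structural bounds based on $\opt_1,\opt_2,\opt_3$ separately by the single stronger \cref{lem:rotationalgo}, which already delivers the target ratio $7/24$ directly from $\opt_1+\opt_2+\opt_3$ using only a Steinberg container paired with either a Stack container or an \LContainer{}. The only thing standing in the way is the profit $\opt_L$ of the large items, which \cref{lem:rotationalgo} does not account for.

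First I would dispose of $\opt_L$ exactly as in \cref{lem:largedelete}. Since every item in $L$ has all three sides greater than $\mu$, at most $1/\mu^3$ items of $L$ can appear in any feasible packing, so $|\OPT\cap L|\le 1/\mu^3$. In the cardinality case every item carries the same profit, hence if $|\OPT|>1/(\eps\mu^3)$ we immediately get $\opt_L\le \eps\cdot \opt$. In the complementary regime $|\OPT|\le 1/(\eps\mu^3)$, $|\OPT|$ is a constant depending only on $\eps$ and $\mu$, so we simply create one singleton Stack container per item of $\OPT$ (after orienting it along the shortest height), obtaining a guessable container packing of profit exactly $\opt$ of the first allowed type (all Stack containers). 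Observe that each of these containers has dimensions drawn from the polynomially many item dimensions, so the packing is guessable.

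In the remaining case we have $\opt_1+\opt_2+\opt_3\ge (1-\eps)\opt$. Plugging this into \cref{lem:rotationalgo} yields
\begin{align*}
\optgs\ \ge\ \left(\frac{7}{24}-O(\eps)\right)(\opt_1+\opt_2+\opt_3)\ \ge\ \left(\frac{7}{24}-O(\eps)\right)(1-\eps)\opt\ =\ \left(\frac{7}{24}-O(\eps)\right)\opt.
\end{align*}
The container packing guaranteed by \cref{lem:rotationalgo} is produced by \cref{lem:3DRVolPack}, whose two cases (depending on whether $v(T_s)$ exceeds the threshold $1/6-3\eps$) yield exactly either a Stack container together with a Steinberg container, or an \LContainer{} together with a Steinberg container, both of whose heights are integer multiples of $\mu$ and whose bases are the unit square. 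Hence the resulting packing falls in one of the three admissible configurations.

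The proof is essentially routine once the cardinality-based deletion of $L$ is in place; the only subtle point is making sure that the container packing delivered by \cref{lem:3DRVolPack} is guessable, but this already follows from the fact that its container dimensions are determined by the rounded heights ($O(1/\mu)$ distinct values) together with the unit base, so no new structural argument is required.
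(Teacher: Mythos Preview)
Your proposal is correct and follows essentially the same approach as the paper: split on whether $|\OPT|$ is small (pack each item in its own Stack container) or large (so $\opt_L$ is negligible, then apply \cref{lem:rotationalgo}). The only cosmetic difference is the threshold---the paper uses $1/\mu^4$ where you use $1/(\eps\mu^3)$---but both work identically.
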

\begin{proof}
    We first consider the case when the number of items in $\OPT$ is at most $1/\mu^4$.
    Then, each item can be packed in a unique \kvnnew{Stack} container.

    Next, we consider the case when $\opt$ is at least $1/\mu^4$. Note that since $\opt_L\le 1/\mu^3$, we must have that $\opt_1+\opt_2+\opt_3\ge (1-\mu)\opt$.
    Thus, using \cref{lem:rotationalgo}, we obtain a guessable container packing of profit at least $(7/24-O(\eps))(\opt_1+\opt_2+\opt_3)$, which is at least $(7/24-O(\eps))\opt$.
    \cref{lem:rotationalgo} also ensures that this is a container packing of two containers (one Steinberg and one Stack/\LContainer{}).
\end{proof}
\mt{We note that this approximation ratio is tight once again. An instance where we  achieve the approximation ratio of $24/7+O(\eps)$ is given by $\opt_1=\opt,\opt_2=\opt_3=\opt_L=0$, with $\opt_{1\ell}=\frac{7}{24}\opt$ and $\opt_{1s}=\frac{17}{24}\opt$.}
\subsection{Uniform profit-density case}
\label{sec:tdkrp}
In this section, we show the following result.

\begin{restatable}{theorem}{profitequalsvolrot}
    \label{thm:r3dkrp}
   For any constant $\eps>0$, there is a polynomial-time $(3 +\eps)$-approximation algorithm for \tdkr~when the profit of an item is equal to its volume.  
\end{restatable}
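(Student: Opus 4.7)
The plan is to exhibit a guessable container packing of profit at least $(1/3 - O(\eps))\opt$; the theorem then follows from \cref{lem:gapwithrotation}. The crucial quantitative fact is that $\opt \le 1$, since profit equals volume and the knapsack is a unit cube, which keeps the target $\opt/3$ below the relevant container capacities.

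I would partition $\opt = \opt_L + \alpha + \beta$ with $\alpha := \opt_{1\ell} + \opt_{2\ell} + \opt_{3\ell}$ and $\beta := \opt_{1s} + \opt_{2s} + \opt_{3s}$ (notation from \cref{sec:appxratio}), and then prove three lower bounds on $\optgs$. First, $\optgs \ge \opt_L$ by placing each item of $\OPT\cap L$ in its own Stack container (at most $1/\mu^3 = O_\eps(1)$ of them). Second, $\optgs \ge \min\{\beta, 1/3 - O(\eps)\}$ by running the Steinberg packing from the proof of \cref{claim:rot-stein-gs}; since profit equals volume, the selected prefix has profit equal to its volume, so the multiplicative $(1/3)\beta$ bound of that claim strengthens to the ``min'' form above. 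Third, $\optgs \ge \min\{\alpha, 3/8 - O(\eps)\}$ by packing the items of $\OPT_{1\ell}\cup\OPT_{2\ell}\cup\OPT_{3\ell}$ into a single unit-cube \LContainer.

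Assuming the three bounds, the rest is immediate by pigeonhole: since $\opt_L + \alpha + \beta = \opt$, at least one summand is $\ge \opt/3$, and because $\opt/3 \le 1/3 < 3/8$, the corresponding bound yields $\optgs \ge \opt/3 - O(\eps)$, completing the proof.

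The main obstacle is establishing the \LContainer~bound. Every item of $\OPT_{1\ell}\cup\OPT_{2\ell}\cup\OPT_{3\ell}$ has exactly one dimension at most $\mu$ and the other two strictly exceeding $1/2$; rotating so that the small dimension becomes the height makes the item feasible for a unit-cube \LContainer~(width, depth $\ge 1/2$, height $\le \eps$). The container's front-area capacity is $w_Ch_C - w_C^2/4 = 3/4$ (up to an $O(\eps)$ safety term), and each packable item satisfies $f_C(i) = w_i h_i = v_i/d_i < 2v_i$ because $d_i > 1/2$. If $\alpha \le 3/8$, every item fits and $\mathcal A_C$ packs volume $\alpha$. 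Otherwise, $\mathcal A_C$ sorts by $v_i/f_C(i) = d_i$ in non-increasing order and fills the capacity with items of depth at least $1/2$, giving packed volume at least $(3/4)\cdot (1/2) - O(\eps) = 3/8 - O(\eps)$; the $O(\mu)$ slack from the last partially fitting item is absorbed into $O(\eps)$. With this bound in place, everything else is bookkeeping.
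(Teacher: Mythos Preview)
Your proposal is correct and matches the paper's proof essentially one-to-one. Your sets corresponding to $\alpha$ and $\beta$ coincide (after rotation) with the paper's $I'$ and $I''$; the paper likewise handles $\opt_L$ by singleton Stack containers, $I'$ by the \LContainer/\cref{pack-sheets} argument (yielding the same $3/8-O(\eps)$ front-area bound via depth $>1/2$), and $I''$ by a single Steinberg container, then finishes with the same three-way pigeonhole using $\opt\le 1$.
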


As before, we let $\alg$ denote the maximum profit of a guessable container packing. We show the following lemma, which coupled with \Cref{lem:gapwithrotation} gives us the above theorem.

\begin{lemma}
    $\opt \le (3+O(\epsilon))\alg$.
\end{lemma}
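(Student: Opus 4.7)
Since every item's profit equals its volume, $\opt = v(\OPT) \le 1$, so it suffices to exhibit a guessable container packing of profit at least $\opt/3 - O(\eps)$; \cref{lem:gapwithrotation} then delivers the $(3+\eps)$-approximation. The plan is a pigeonhole on three disjoint volume groups, each of which is packable by a single suitably chosen container.

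For every item of $\OPT \setminus L$, I would first rotate so that its smallest dimension (which is at most $\mu$, since each such item has some side $\le \mu$) becomes the height. Partition the rotated items into $B := \{i \in \OPT \setminus L : w_i > 1/2 \text{ and } d_i > 1/2\}$ and $S := (\OPT \setminus L) \setminus B$, and let $A := \OPT \cap L$. The identity $v(A) + v(B) + v(S) = \opt$ forces $\max\{v(A), v(B), v(S)\} \ge \opt/3$. I would then construct three candidate guessable container packings, each trivially guessable since each uses $O_\eps(1)$ containers with dimensions drawn from a polynomial-size family: (i) for $A$, singleton Stack containers matching item dimensions, placed as in $\OPT$, with profit $v(A)$; (ii) for $S$, a single unit Steinberg container of capacity $v(C)/3 = 1/3$ filled with a maximum-volume subset of $S$, which by $\AM_C$ achieves profit at least $\min\{v(S), 1/3\} - O(\eps)$; (iii) for $B$, a single unit \LContainer{} of front-area capacity $w_C h_C - w_C^2/4 = 3/4$ filled with a subset of $B$ chosen greedily in decreasing order of density $d_i = v_i/f_C(i)$, which by $\AM_C$ from \cref{alg:lcontainer} achieves profit at least $\min\{v(B), 3/8\} - O(\eps)$. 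The $3/8$ estimate in (iii) uses that every item in $B$ has $d_i > 1/2$, so when the greedy saturates the front-area capacity near $3/4$ the packed volume is at least $(1/2)(3/4)$.

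Finally, $\opt \le 1$ yields $\opt/3 \le 1/3 < 3/8$, so both thresholds $1/3 - O(\eps)$ and $3/8 - O(\eps)$ exceed $\opt/3 - O(\eps)$. Whichever of $v(A), v(B), v(S)$ reaches $\opt/3$ thus certifies $\alg \ge \opt/3 - O(\eps)$, which completes the proof. The main subtlety to be verified is the volume bound in candidate (iii): the strict inequality $d_i > 1/2$ for every item of $B$ is precisely what lifts the \LContainer{}'s achievable volume from its front-area capacity $3/4$ down to $3/8$, comfortably above the worst case $\opt/3 \le 1/3$.
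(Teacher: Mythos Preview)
Your proof is correct and follows essentially the same approach as the paper: partition $\OPT$ into $L$-items, items with both large width and depth (packed via an \LContainer{}/\cref{pack-sheets}), and the remaining small-in-two-dimensions items (packed via a Steinberg container), then use that one of the three groups carries at least $\opt/3$ volume together with the thresholds $3/8$ and $1/3$ exceeding $\opt/3\le 1/3$. The paper phrases the final step as $\alg\ge\min\{1/3-O(\mu),(\opt_L+\opt'+\opt'')/3\}$ rather than an explicit pigeonhole, and orients items so that $w_i\ge d_i\ge h_i$ (which forces $d_i\le 1/2$ for your set $S$), but these are cosmetic differences.
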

\begin{proof}
    Trivially, we have $\alg \ge \opt_L$, since we can create a \kvnnew{Stack} container for every item in $L$. For brevity of notation, let $\text{OPT}_s = \text{OPT}\cap (I\setminus L)$. W.l.o.g.~assume that the items of $\text{OPT}_s$ are oriented such that $w_i \ge d_i \ge h_i$ holds for each $i\in \text{OPT}_s$. We now classify the items in $\text{OPT}_s$ as follows: let $I' = \{i \in \text{OPT}_s \mid w_i > 1/2, d_i > 1/2, h_i \le \mu\}$ and let $I'' = \text{OPT}_s \setminus I'$. Also, we define $\opt' = p(I')$ and $\opt'' = p(I'')$. We can obtain (guessable) container packings of (subsets of) $I'$ and $I''$ as described below.

    Ignoring the depths of the items in $I'$, we obtain a set of rectangles with one dimension at most $\mu$ and the other exceeding $1/2$. By \cref{pack-sheets}, these rectangles can be packed either exhaustively or to an extent such that the total (front) area packed is at least $\frac{3}{4}-3\mu$. Since the depth of each item is more than $1/2$, the total volume (and therefore profit) packed is at least $\min\left\{\frac{3}{8}-\frac{3\mu}{2},\opt'\right\}\ge\min\left\{\frac{1}{3}-\frac{3\mu}{2},{\opt'}\right\}$. Further, this is either a Stack or an \LContainer~whose dimensions are same as those of the knapsack.
\kvnr{I am changing this from $\frac38$ to $\frac13$.}
    Next, note that since the items were oriented such that $w_i \ge d_i \ge h_i$, it follows that each item in $I''$ has depth at most $1/2$. Hence by \Cref{lem:steinberg}, we can pack a subset of $I''$ with volume (profit) at least $\min\left\{\frac{1}{3}-2\mu, \opt''\right\}$ into a Steinberg container with dimensions identical to that of the knapsack.

    From the above discussion, we get that
    \[ \alg \ge \min\left\{\frac13-2\mu,\frac{{\opt_L}+{\opt'}+{\opt''}}{3}\right\} \ge \left(\frac{1}{3}-O(\epsilon)\right)\opt,\]
    since $\opt\le 1$, and thus the lemma follows.
\end{proof}

Finally, this approximation ratio is also tight by our current analysis. An instance to show this is given by $\opt'=\opt''=\opt_L=\frac{1}{3}\opt.$

\section{Hardness for Guessable Container Packings}
\label{sec:hardnesscontainer}
In this section, we show that, when rotations are not allowed, there exist input instances for which any guessable container packing can contain at most $1/3$ factor of the optimal profit.
This construction is in the same spirit as \cite{2dks-Lpacking}, where they showed for 2D packing one cannot obtain $(2-\eps)$-approximation using rectangular regions that are either packed using $\NFDH$ or stacks.
However, we use more types of containers and show we can not still beat factor 3 by $O(1)$ number of containers. 

For ease of presentation, we assume that the knapsack has side length $N=2^{m+1}$, where $m$ is an integer. 
We define an input instance consisting of $3m$ items, which can be partitioned into three sets $\{G_i\}_{i\in[m]},\{H_i\}_{i\in[m]},\{I_i\}_{i\in[m]}$ as follows.
\begin{itemize}
    \item For each $i\in[m]$, the item $G_i$ has both width and depth $N+1-2^{i-1}$ and height $2^{i-1}$.
    \item For each $i\in[m]$, the item $H_i$ has width $2^{i-1}$ and depth $N+1-2^{i-1}$ and height $N+1-2^{i}$.
    \item For each $i\in[m]$, the item $I_i$ has width and height $N+1-2^{i}$ and depth $2^{i-1}$.
\end{itemize}
Each item has a profit exactly equal to $1$.
\begin{remark}
    \label{remark:hardness}
    Each item $G_i$ has both width and depth more than $N/2$. Each item $H_i$ has both depth and height more than $N/2$. Each item $I_i$ has both width and height more than $N/2$.
\end{remark}
The next lemma shows that the above-defined input instance can be entirely packed in a knapsack, implying that the optimal profit is $3m$.
\begin{lemma}
    \label{lem:opt-hardness}
    The above-defined input instance has an optimal profit of $3m$.
\end{lemma}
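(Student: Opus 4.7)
The plan is to prove the lemma by exhibiting an explicit \emph{nested} packing: at each level $i$, the triple $(G_i,H_i,I_i)$ fills three orthogonal "slabs" at one corner of a sub-cube, and the complement is itself a smaller sub-cube in which the next triple $(G_{i+1},H_{i+1},I_{i+1})$ will be packed in the same fashion. Since there are only $3m$ unit-profit items, once we have placed all of them the profit is exactly $3m$, which is optimal.

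Concretely, set $a_i:=2^{i-1}$ and $c_i:=2^i-1$ for $i=0,\dots,m$, and define the sub-cubes
\begin{equation*}
C_i := [c_i,N]^3, \qquad i=0,1,\dots,m,
\end{equation*}
so that $C_0$ is the knapsack and $C_i$ has side length $N-c_i = N+1-2^i = N+1-2a_i$. The key algebraic identity is that the side of $C_{i-1}$ equals $N+1-a_i$, which is exactly the width/depth of $G_i$. This is what allows the triple $(G_i,H_i,I_i)$ to tile a "staircase" inside $C_{i-1}$.

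The next step is to place $G_i,H_i,I_i$ inside $C_{i-1}$ as follows (and verify that the dimensions match by direct substitution, using $c_i=c_{i-1}+a_i$ and $(N+1-a_i)-a_i = N+1-2a_i$):
\begin{equation*}
G_i \text{ at } [c_{i-1},N]\times[c_{i-1},N]\times[c_{i-1},c_i],
\end{equation*}
\begin{equation*}
H_i \text{ at } [c_{i-1},c_i]\times[c_{i-1},N]\times[c_i,N],
\end{equation*}
\begin{equation*}
I_i \text{ at } [c_i,N]\times[c_{i-1},c_i]\times[c_i,N].
\end{equation*}
I would then check pairwise non-overlap inside $C_{i-1}$: $G_i$ is separated from the other two by the horizontal plane $z=c_i$, while $H_i$ and $I_i$ are separated by the plane $x=c_i$ (and each has one "thin" dimension equal to $a_i$ along the $x$- or $y$-axis respectively, which keeps them in the requisite boundary slab). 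A straightforward volume/region computation shows that the union of these three boxes is $C_{i-1}\setminus C_i$, so the placement leaves exactly the smaller sub-cube $C_i$ empty for the next level.

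Iterating from $i=1$ up to $i=m$, we pack all $3m$ items and are left with the (empty) residual cube $C_m$ of side $N/2+1$. Since each item has profit $1$ and there are $3m$ items in total, this packing attains profit $3m$, which matches the trivial upper bound on the optimum. Nothing in the argument is subtle beyond tracking indices; the only mild obstacle is keeping the two identities $c_{i-1}+a_i=c_i$ and $c_{i-1}+(N+1-a_i)=N$ visible, since these are what make the three boxes meet the walls of $C_{i-1}$ without gaps or overlap.
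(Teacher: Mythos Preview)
Your proof is correct and follows essentially the same nested/recursive approach as the paper: each triple $(G_i,H_i,I_i)$ occupies three orthogonal slabs of a sub-cube, leaving a smaller sub-cube for the next triple. Your version is simply more explicit, writing out the coordinates and verifying the identities $c_{i-1}+a_i=c_i$ and $c_{i-1}+(N+1-a_i)=N$ that make the construction work, whereas the paper sketches the same idea with a picture and the observation that after packing $G_i,H_i,I_i$ a cube of side $N+1-2^i$ remains.
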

\begin{proof}
    Consider $G_1,H_1,I_1$. All three items can be configured as shown in \cref{fig:3D-L}. This leaves a space of $(N-1)\times (N-1)\times (N-1)$
    in the remaining knapsack.
    We pack $G_2,H_2,I_2$ in the same configuration, leaving a space of dimensions $(N-1-2)\times (N-1-2)\times (N-1-2)$.
    We follow the same procedure for the remaining items as well. After packing $G_i,H_i,I_i$, a space of $(N+1-2^i)\times (N+1-2^i)\times (N+1-2^i)$ is left in the knapsack.
    Since, $2^{m+1}=N$, all the items in $\{G_i\}_{i\in[m]}\cup\{H_i\}_{i\in[m]}\cup\{I_i\}_{i\in[m]}$
    can be packed in the knapsack. Such optimal packing is shown in \cref{fig:opt-hardness}.
    \begin{figure}
        \centering
        \begin{minipage}{0.45\textwidth}
            \centering
            \includesvg[width=0.9\textwidth]{img/3d-L.svg} 
            \caption{Packing $G_1,H_1,I_1$.}
            \label{fig:3D-L}
        \end{minipage}\hfill
        \begin{minipage}{0.45\textwidth}
            \centering
            \includesvg[width=0.9\textwidth]{img/hardness-opt.svg} 
            \caption{Optimal packing of the entire instance.}
            \label{fig:opt-hardness}
        \end{minipage}
    \end{figure}
\end{proof}
\begin{theorem}
\label{thm:hardness}
    For any integer $m$ and constant $\eps>0$, for any container packing of the above input instance with profit at least $(1/3+\eps)(3m)$, the number of containers must be at least $\Omega(\eps\log N)$.
\end{theorem}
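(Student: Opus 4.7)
The plan is to argue that in any guessable container packing of this hardness instance, the total profit grows only slowly with the number of containers, forcing $\Omega(\eps \log N)$ containers to reach profit $(1/3+\eps)(3m)$. I will proceed in three steps: restrict the container types that can host our items, bound the profit per Stack container, and then derive the geometric obstruction to combining many containers.

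First, I would show that every packed item must reside in a \emph{Stack} container. By \cref{remark:hardness}, each of $G_i,H_i,I_i$ has at least two dimensions strictly larger than $N/2$. Since a Volume container requires an item to be $\eps$-small in \emph{all three} container dimensions, an Area container demands $\eps$-smallness in two, and a Steinberg container needs the item to be at most $\eps h_C$ in its stacking direction \emph{and} narrow in one horizontal side, none of these container types can accept any of our items. Therefore every feasible packing in this instance places all items into Stack containers.

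Second, I would analyze a single Stack container. By symmetry it suffices to look at a height-stack container $C$ of dimensions $w_C\times d_C\times h_C$. Each $H_i$ and $I_i$ has height $N+1-2^i > N/2$, so stacking two of them in height already overflows $h_C \le N$; hence $C$ can host at most one $H$-item and at most one $I$-item. Multiple $G$-items may fit provided that their heights $2^{i-1}$ sum to at most $h_C - (\text{heights of any } H,I \text{ present})$, and that $w_C,d_C \ge N+1-2^{i_{\min}-1}$ for the smallest $G$-index present. A direct computation then shows that the maximum profit achievable by any single Stack container is $m+1$ (e.g.\ all of $G_1,\dots,G_m$ together with $H_m$, which already forces $h_C=N$, filling the whole knapsack vertically). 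Analogous statements hold for width- and depth-stack containers.

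The third and crucial step is to establish a marginal bound: in \emph{any} valid packing with $k$ non-overlapping Stack containers, the total profit satisfies $P \le m + O(k)$. The idea is that once one container is populated with an entire family (contributing the baseline $m$), any additional profit requires further containers, each of which must accommodate items whose widths or depths are close to $N$ and whose heights exceed $N/2$. Using the non-overlap constraint inside the $N\times N\times N$ knapsack, one shows that if container $C$ gains a new $H$- or $I$-item beyond the baseline, it must take up a footprint of area $\Omega(N^2)$ and a height of $\Omega(N)$, so it cannot coexist with too many similar ``large'' containers. Iteratively removing the least profitable container and re-optimizing yields $P \le m + ck$ for some absolute constant $c$. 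Combined with the hypothesis $P \ge (1/3+\eps)(3m) = m(1+3\eps)$, this forces $k \ge (3/c)\,\eps m = \Omega(\eps \log N)$, since $N=2^{m+1}$.

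The main obstacle is making the geometric marginal analysis in the third step precise. Because containers can range from very flat ($G$-like footprints) to very tall ($H$- or $I$-like), and because the knapsack volume can be shared among many asymmetric boxes, the case analysis for ``each additional container adds only $O(1)$ items'' is delicate. I expect the cleanest route is to charge each packed item beyond the baseline to a distinct container and, for each family, exploit the identity that two items $H_i$ and $I_j$ with $i,j$ not close to $m$ cannot simultaneously coexist outside the baseline container due to the footprint/height arithmetic computed in step two.
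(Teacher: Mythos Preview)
Your first step is correct and matches the paper: every container must be a Stack container. Your second step (a single Stack container holds at most $m+1$ items) is also correct, but it is a detour --- the paper never uses it, and indeed a per-container upper bound of $m+1$ is far too weak to drive a marginal analysis, since one container alone already attains the baseline $m$.

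The gap is exactly where you locate it: step three. The mechanisms you propose (volume/footprint charging, ``iteratively removing the least profitable container and re-optimizing'') are not worked out and do not obviously succeed. The footprint argument fails because thin $G$-stacks have tiny volume and can sit beside many tall $H$- or $I$-containers; the iterative-removal idea is circular as stated. What is missing is a concrete combinatorial object to count.

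The paper supplies it: for each index $i$, look at which of the three \emph{pairs} $(G_i,H_i)$, $(H_i,I_i)$, $(G_i,I_i)$ are both packed, and let $s_1,s_2,s_3$ be the respective counts. A one-line check gives that the total number of packed items is at most $m+2(s_1+s_2+s_3)$, since an index contributing $0,1,2,3$ items contributes $0,0,1,3$ pairs. The real work is then to bound $s_1+s_2+s_3$ by $3c$, where $c$ is the number of containers. The key arithmetic observation is: if both $G_i$ and $H_i$ are packed (anywhere), then the container holding $G_i$ cannot also hold any $G_j$ with $j>i$, because the heights $2^{i-1}+(N+1-2^i)+2^{j-1}>N$ and the width/depth constraints force $G_i,H_i$ (or their containers) to be aligned along the height. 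Symmetric claims hold for the other two pair types. This lets you map each packed pair to a container (the container of $G_i$ for $(G_i,H_i)$ and $(G_i,I_i)$, the container of $H_i$ for $(H_i,I_i)$) so that at most three pairs map to any single container; hence $s_1+s_2+s_3\le 3c$ and the total profit is at most $m+6c$, which immediately gives $c\ge \eps m/2=\Omega(\eps\log N)$.

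So the target inequality $P\le m+O(k)$ you wrote down is right, but the route to it is a pair-to-container charging based on the exact identity $2^{i-1}+(N+1-2^i)+2^{j-1}>N$, not a volume or footprint argument.
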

To show the above theorem, intuitively, we need to show that as we try to pack more profit, an average container can not hold many items.
To do this, we make use of the fact that items $G_i, H_i, I_i$ are skewed relative to each other. More formally, we show the following lemma.
\begin{lemma}
\label{lem:hardness}
    Consider a container packing $\mathcal C$ of the above input instance. Let $s_1$ denote the number of indices $i$ such that the pair $(G_i,H_i)$ is packed.
    Let $s_2$ denote the number of indices $i$ such that the pair $(H_i,I_i)$ is packed.
    Let $s_3$ denote the number of indices $i$ such that the pair $(G_i,I_i)$ is packed.
    Then, the number of containers is at least $(s_1+s_2+s_3)/3$.
\end{lemma}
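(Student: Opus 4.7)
First, I will observe that for sufficiently small $\eps<1/2$, every input item must lie in a Stack container. By \Cref{remark:hardness}, each input item has at least two of its three dimensions exceeding $N/2$, which violates the smallness requirements (i.e.\ the $\le \eps$-fraction or $\le 1/2$-factor constraints along enough directions) of Area, Volume, Steinberg, or \LContainer{}s defined in \Cref{sec:cont-class,sec:Lcontainer}. Hence the entire analysis reduces to understanding Stack containers.

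Next, I characterize the contents of a Stack container. In a height-stacked container, each $H_i$ has height $N+1-2^i\ge N/2+1$, and likewise for $I_i$, so the stack bound $h_C\le N$ forbids two such items; the container therefore holds at most one item from $\{H_i,I_i:i\in[m]\}$ and cannot contain both an $H$- and an $I$-item. Symmetric reasoning applies to width- and depth-stacked containers (using the fact that $G,I$ items are wide and $G,H$ items are deep, respectively). Consequently, every container's items span at most two of the three classes $\{G,H,I\}$, and whenever all three of $G_i,H_i,I_i$ are packed, they occupy at least two distinct containers.

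Finally, I conclude by a charging argument. For each pair contributing to some $s_j$, I charge it to the container(s) holding its constituent items. Each container admits at most one ``in-pair'' (a pair both of whose items live in that container) by the two-class restriction above. For ``out-pairs,'' the partner item resides in another container with severely constrained dimensions (for instance, any container holding $H_i$ must have depth $\ge N+1-2^{i-1}$ and height $\ge N+1-2^i$), which bounds how many out-pairs can accumulate at a single container. Carefully combining the in-pair and out-pair contributions yields $s_1+s_2+s_3\le 3c$, establishing the lemma.

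The main obstacle is making the per-container load bound precise. A height-stacked container can hold all $m$ of the $G$-items, each potentially paired with a partner $H_i$ or $I_i$ in another container, so naive charging does not immediately give a per-container cap of $3$. The argument must exploit that these partner containers are geometrically disjoint inside the unit knapsack and that each partner imposes its own dimensional lower bounds, so that the collection of partner containers is numerous enough to realize $c\ge (s_1+s_2+s_3)/3$.
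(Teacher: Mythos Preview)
Your opening step (all containers must be Stack containers) matches the paper, and your ``at most two classes per container'' observation is correct. But the proof breaks down precisely where you say it does, and your proposed fix does not close the gap. You acknowledge that a single height-stacked container can hold all of $G_1,\dots,G_m$, each potentially paired with an $H_i$ or $I_i$ elsewhere, and you then assert that ``geometric disjointness of partner containers'' will force enough containers. This is not carried out, and it is not clear it can be: for example, all of $H_1,\dots,H_m$ can sit in a single width-stacked container (their widths sum to $2^m-1<N/2$), so the partners need not be spread across many containers. You would have to argue that such a partner container cannot coexist with the original container in the knapsack, and do so for every configuration --- a genuinely different and harder route than what the paper does.

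The paper's argument avoids this entirely by choosing the right charging map and proving a local exclusion property. It maps $(G_i,H_i)\mapsto$ container of $G_i$, $(H_i,I_i)\mapsto$ container of $H_i$, and $(G_i,I_i)\mapsto$ container of $G_i$, and then shows (Claims \ref{lem:ghg}--\ref{lem:gig}) that whenever both members of such a pair are packed \emph{anywhere}, the mapped container cannot contain a later item of the same letter: e.g.\ if $G_i,H_i$ are both packed then no $G_j$ with $j>i$ lies in the container of $G_i$. The proof of this uses that the two containers of $G_i$ and $H_i$ must both meet some vertical line (their widths and depths each sum to more than $N$), so their heights sum to at most $N$, which rules out $G_i,G_j$ being height-stacked together with $H_i$ present elsewhere. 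This exclusion immediately gives that at most one pair from each of the three families $\mathcal P_{GH},\mathcal P_{HI},\mathcal P_{GI}$ maps to any given container, hence at most three pairs per container, i.e.\ $s_1+s_2+s_3\le 3c$. You are missing this exclusion step, which is the heart of the argument.
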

We prove this lemma through a series of claims. From now on, let us consider an arbitrary container packing $\mathcal C$ of a subset of the above input instance.
\begin{claim}
    Every container in $\mathcal C$ \kvnnew{must} be a Stack container.
\end{claim}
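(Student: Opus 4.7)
The plan is to show that for each item $G_i, H_i, I_i$, the function $f_C(\cdot)$ associated with any non-Stack container type evaluates to $\infty$, so no such container can receive the item. Then any container in $\mathcal{C}$ that actually packs at least one item must be a Stack container (and empty containers can be discarded without loss).

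First, I would recall the key structural property from \cref{remark:hardness}: each $G_i$ has $w_{G_i}, d_{G_i} > N/2$; each $H_i$ has $d_{H_i}, h_{H_i} > N/2$; each $I_i$ has $w_{I_i}, h_{I_i} > N/2$. Also, every container $C$ used inside the knapsack satisfies $w_C, d_C, h_C \le N$.

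Next, I would rule out Volume and Area containers. A Volume container requires $w_i \le \eps w_C$, $d_i \le \eps d_C$, and $h_i \le \eps h_C$. Since every item has some dimension strictly larger than $N/2$, and $\eps w_C, \eps d_C, \eps h_C \le \eps N < N/2$ for $\eps < 1/2$, the $f_C(\cdot)$ value is $\infty$. An Area container, regardless of the face it uses for \NFDH, enforces two of the three conditions $w_i \le \eps w_C$, $d_i \le \eps d_C$, $h_i \le \eps h_C$. A case analysis on the three item families shows that each family has at least two dimensions exceeding $N/2$, so at least one of the two $\eps$-conditions must fail. Hence no Area container can receive any item either.

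Then I would rule out Steinberg containers. A Steinberg container stacking layers along one axis requires (i) the item's extent along the stacking axis to be at most $\eps$ times the container's corresponding side, and (ii) the item to be at most half of the container's side in \emph{one} of the two remaining axes. For the axis of type~(i), the same argument as above handles any item whose corresponding dimension is $> N/2$. For items where the stacking dimension is small (say height $\mu$ for $G_i$), condition (ii) becomes the bottleneck: $G_i$ has both width and depth strictly greater than $N/2 \ge w_C/2$ and $N/2 \ge d_C/2$, so neither $w_{G_i} \le w_C/2$ nor $d_{G_i} \le d_C/2$ holds. A symmetric check for $H_i$ and $I_i$ (for whichever of the three Steinberg orientations $4a, 4b, 4c$), using the two dimensions of each that exceed $N/2$, shows that condition (ii) fails in every case, so $f_C(\cdot) = \infty$.

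The argument is essentially bookkeeping over three item families and three non-Stack container subtypes each; the only mildly nontrivial part is matching the ``small'' dimension of each item (for instance, the small height of $G_i$, the small width of $H_i$, the small depth of $I_i$) against the Steinberg orientation that would otherwise be compatible with condition~(i), and then reading off that the two remaining dimensions of the item both exceed $N/2$, which simultaneously kills condition~(ii). Having eliminated Volume, Area, and Steinberg containers for every item, I conclude that any container holding an item must be a Stack container, completing the proof.
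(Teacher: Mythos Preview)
Your proof is correct and follows essentially the same approach as the paper: the paper's proof is the single sentence ``Due to \cref{remark:hardness}, there cannot be an Area or a Volume container or a Steinberg container in $\mathcal C$,'' and your argument is precisely the expanded case analysis underlying that sentence. One small slip: you write ``say height $\mu$ for $G_i$,'' but in this hardness instance the heights of the $G_i$ are $2^{i-1}$, not $\mu$; this does not affect the logic, since your condition~(ii) argument goes through for every $G_i$ regardless of its height.
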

\begin{proof}
    Due to \cref{remark:hardness}, there cannot be an Area or a Volume container or a Steinberg container in $\mathcal C$. 
\end{proof}
    
    
    
Now, we show three claims that will give us \cref{lem:hardness}. These claims and their proofs essentially follow the same theme, except that they correspond to different dimensions.
\begin{claim}
\label{lem:ghg}
    Consider any $i\in[m]$. Suppose both $G_i$ and $H_i$ are packed in $\mathcal C$, possibly in different containers. Then, no $G_j$ with $j>i$ can be packed in the container where $G_i$ is packed.
\end{claim}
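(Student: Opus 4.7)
The plan is a direct packing argument based on the dimensions that the relevant Stack containers must have. By the preceding claim, every container in $\mathcal C$ is a Stack container. Let $C_G$ denote the container holding $G_i$ and, toward contradiction, assume $G_j$ (with $j>i$) is also placed in $C_G$. Let $C_H$ denote the container holding $H_i$ (which may or may not coincide with $C_G$).

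First I would pin down the stacking axis of $C_G$. Both $G_i$ and $G_j$ have width and depth equal to $N+1-2^{i-1}$ and $N+1-2^{j-1}$ respectively, and both of these exceed $N/2$ (since $2^{i-1},2^{j-1}\le 2^{m-1}=N/4$). Therefore stacking along width or depth would force the corresponding dimension of $C_G$ to exceed the knapsack side $N$. Hence $C_G$ must stack along height, which forces the minimum dimensions
\begin{align*}
    w_{C_G},\,d_{C_G}\;\ge\;N+1-2^{i-1},\qquad h_{C_G}\;\ge\;2^{i-1}+2^{j-1}.
\end{align*}
Regardless of its stacking direction, $C_H$ must satisfy $w_{C_H}\ge 2^{i-1}$, $d_{C_H}\ge N+1-2^{i-1}$, and $h_{C_H}\ge N+1-2^i$, coming from the dimensions of $H_i$ itself.

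Next I would handle the main case $C_G\neq C_H$. The two containers are axis-aligned and interior-disjoint inside the $N\times N\times N$ knapsack, so they must be separated along at least one coordinate axis. Along the depth axis, both containers have depth $>N/2$, so depth separation is impossible. Along the width axis, the total required is at least $(N+1-2^{i-1})+2^{i-1}=N+1>N$. Along the height axis, the total is at least $(2^{i-1}+2^{j-1})+(N+1-2^i)=N+1+(2^{j-1}-2^{i-1})\ge N+2>N$, using $j>i$ so that $2^{j-1}\ge 2^i=2\cdot 2^{i-1}$. All three axes fail, yielding the contradiction.

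Finally, for the degenerate case $C_G=C_H$, a single Stack container would have to accommodate $G_i$, $G_j$, and $H_i$ simultaneously. A quick case split on the stacking direction closes this: stacking along width or depth already exceeds $N$ as before, while stacking along height gives $h_{C_G}\ge 2^{i-1}+2^{j-1}+(N+1-2^i)=N+1+(2^{j-1}-2^{i-1})>N$. So this case is impossible as well, completing the proof. The only real bookkeeping challenge is writing the dimension inequalities cleanly in each subcase; each one reduces to substituting the item sizes and applying $2^{i-1},2^{j-1}\le N/4$ together with $j>i$.
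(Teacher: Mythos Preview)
Your proof is correct and follows essentially the same approach as the paper: split on whether $G_i$ and $H_i$ lie in the same container, pin down that $G_i,G_j$ (and, in the same-container case, also $H_i$) must be stacked along height, and then derive a contradiction from the height sum $2^{i-1}+2^{j-1}+(N+1-2^i)>N$. Your phrasing of the different-container case via ``separation along some coordinate axis'' is exactly the contrapositive of the paper's observation that a vertical line must pierce both containers.
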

\begin{proof}
    We first consider the case when $G_i,H_i$ are packed in the same container.
    We already know that this must be a Stack container. We further claim that $G_i$ and $H_i$ are stacked along the height; hence, any other items in the container must also be stacked along the height.
    Indeed, from \cref{remark:hardness}, we know that the depths of both $G_i,H_i$ are greater than $N/2$; so, they can not be stacked along the depth. The sum of their widths is given by $(N+1-2^{i-1})+(2^{i-1})=N+1$; so, they can not be stacked along the width either. Hence, we conclude that $G_i,H_i$ must be packed along their heights. However, the sum of heights of $G_i,H_i,G_j$ is given by $(2^{i-1})+(N+1-2^{i})+(2^{j-1})>N$.

    Now, consider the case when $G_i$ and $H_i$ are packed in different containers.
    By \cref{remark:hardness}, the sum of depths of both $G_i,H_i$ is greater than $N$. Similarly, the sum of widths of both containers must be at least $(N+1-2^{i-1})+2^{i-1}>N$.
    So, there must exist a line along the height that passes through both the containers containing $G_i,H_i$.
    Now, assume for the sake of contradiction that $G_j$ lies in the same container as $G_i$. \cref{remark:hardness} tells us that $G_i,G_j$ must be stacked along height.
    Hence, we conclude that the sum of heights of both the containers containing $G_i,H_i$ must be at least the sum of heights of $G_i,H_i,G_j$, which we showed to be greater than $N$ in the previous paragraph,
    and this is a contradiction.
\end{proof}
\begin{claim}
\label{lem:hih}
    Consider any $i\in[m]$. Suppose both $H_i$ and $I_i$ packed in $\mathcal C$. Then, no $H_j$ with $j>i$ can be packed in the container where $H_i$ is packed.
\end{claim}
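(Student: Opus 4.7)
The plan is to mirror the structure of \cref{lem:ghg} by exploiting the symmetry between the triples $(G_i, H_i, I_i)$: there the roles of the height direction and the $(w,d)$-plane will now be played by the width direction and the $(d,h)$-plane. Observe that by \cref{remark:hardness}, the pair $H_i, I_i$ shares height as the common ``large'' dimension (both have height $N+1-2^i > N/2$), whereas they shared depth in the previous claim. The remaining widths $w_{H_i} = 2^{i-1}$ and $w_{I_i} = N+1-2^i$ on the one hand, and depths $d_{H_i} = N+1-2^{i-1}$ and $d_{I_i} = 2^{i-1}$ on the other, are the asymmetric dimensions to track.

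First I will handle the case where $H_i$ and $I_i$ lie in the same Stack container $C$. Since both items have height $> N/2$, $C$ cannot stack along height; since their depths sum to $N+1 > N$, it cannot stack along depth either. So $C$ must stack along width. Now $H_j$ (with $j > i$) also has both depth and height exceeding $N/2$, so if $H_j \in C$, the three items $H_i, I_i, H_j$ are forced to be side-by-side along width. Their total width is $2^{i-1} + (N+1-2^i) + 2^{j-1} = N+1 - 2^{i-1} + 2^{j-1}$, which exceeds $N$ since $2^{j-1} \ge 2^i > 2^{i-1}$; contradiction.

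Next, I will handle the case where $H_i \in C_H$ and $I_i \in C_I$ are in distinct containers. Both $C_H$ and $C_I$ must have height at least $N+1-2^i > N/2$, so their height-projections must overlap. Further, $\text{depth}(C_H) \ge N+1-2^{i-1}$ and $\text{depth}(C_I) \ge 2^{i-1}$, so the sum of depths strictly exceeds $N$, forcing the depth-projections to overlap as well. Hence $C_H$ and $C_I$ must be separated purely along the width axis. Now suppose for contradiction that $H_j \in C_H$ with $j > i$. Arguing as in the same-container case (both $H_i$ and $H_j$ have large depth and height), $C_H$ must stack $H_i$ and $H_j$ along width, so $\text{width}(C_H) \ge 2^{i-1} + 2^{j-1}$, while $\text{width}(C_I) \ge N+1-2^i$. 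Their sum is at least $N+1-2^{i-1}+2^{j-1} > N$, violating width-separation inside the knapsack.

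The main obstacle (such as it is) is keeping track of which pair of dimensions must overlap for $C_H$ and $C_I$: unlike in \cref{lem:ghg}, it is now the height and depth projections that are forced to intersect, rather than width and depth. Once this bookkeeping is carried out, the two inequalities $N+1-2^{i-1}+2^{j-1} > N$ and $2^{j-1} > 2^{i-1}$ (both automatic from $j > i$) finish the argument in both cases.
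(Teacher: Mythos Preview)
Your proof is correct and follows essentially the same approach as the paper: both handle the same-container and different-container cases separately, use the dimension constraints from \cref{remark:hardness} to force stacking along the width direction, and derive a contradiction from the width sum $2^{i-1}+(N+1-2^{i})+2^{j-1}>N$. Your phrasing in terms of overlapping projections and width-separation is equivalent to the paper's ``line along the width passing through both containers'' argument.
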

\begin{proof}
    We first consider the case when $H_i,I_i$ are packed in the same container.
    We know that this is a Stack container. We further show that $H_i$ and $I_i$ are stacked along the width.
    Indeed, from \cref{remark:hardness}, we know that the heights of both $H_i,I_i$ are greater than $N/2$; so, they can not be stacked along the heights. The sum of their depths is given by $(N+1-2^{i-1})+(2^{i-1})=N+1$; so, they can not be stacked along their depths. Hence, $H_i,I_i$ must be packed along their widths. However, the sum of widths of $H_i,I_i,H_j$ is given by $(2^{i-1})+(N+1-2^{i})+(2^{j-1})>N$.

    Now, consider the case when $H_i$ and $I_i$ are packed in different containers.
    By \cref{remark:hardness}, the sum of heights of both $H_i,I_i$ is greater than $N$. Similarly, the sum of depths of both containers must be at least $(N+1-2^{i-1})+2^{i-1}>N$.
    So, there must exist a line along the width that passes through both the containers containing $H_i,I_i$.
    Now, assume for the sake of contradiction that $H_j$ lies in the same container as $H_i$. \cref{remark:hardness} tells us that $H_i,H_j$ must be stacked along width.
    Hence, we conclude that the sum of widths of both the containers containing $H_i,I_i$ must be at least the sum of heights of $H_i,I_i,H_j$, which we showed to be greater than $N$,
    which is a contradiction.
\end{proof}
\begin{claim}
\label{lem:gig}
    Consider any $i\in[m]$. Suppose both $G_i$ and $I_i$ are packed in $\mathcal C$. Then, no $G_j$ with $j>i$ can be packed in the same container containing $G_i$.
\end{claim}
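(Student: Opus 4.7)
The plan is to mirror the two-case analysis used in \cref{lem:ghg} and \cref{lem:hih}, splitting on whether $G_i$ and $I_i$ lie in the same container of $\mathcal{C}$ or in distinct ones. First, I would determine along which axis $G_i$ and $I_i$ could possibly be stacked by adding the corresponding side lengths: the widths sum to $(N+1-2^{i-1})+(N+1-2^i)=2N+2-3\cdot 2^{i-1}>N$ (since $2^{i-1}\le 2^{m-1}=N/4$), the depths sum to $(N+1-2^{i-1})+2^{i-1}=N+1>N$, whereas the heights sum to $2^{i-1}+(N+1-2^i)=N+1-2^{i-1}\le N$. So in the same-container case the only admissible stacking direction is along the height, forcing every item in that container to be stacked along the height. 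Adding a $G_j$ with $j>i$ would then demand a total height of at least $2^{i-1}+(N+1-2^i)+2^{j-1}=N+1+(2^{j-1}-2^{i-1})>N$, which exceeds the knapsack side and gives a contradiction.

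Next, I would handle the case when $G_i$ and $I_i$ lie in different containers by a projection argument identical in flavour to the previous two claims. Both items have widths exceeding $N/2$, so the width projections of the two containers overlap; the depths of $G_i$ and $I_i$ sum to $N+1>N$, so the depth projections of the containers must also overlap (and therefore the containers holding $G_i,I_i$ have widths summing to more than $N$ and depths summing to at least $N+1$). Consequently there exists a line parallel to the height axis that pierces both containers, and the two pierced segments are disjoint subintervals of $[0,N]$.

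If, for the sake of contradiction, a $G_j$ with $j>i$ lay in the same container as $G_i$, then \cref{remark:hardness} forces $G_i$ and $G_j$ to be stacked along the height of that container (both have width and depth more than $N/2$). Hence the container containing $G_i$ contributes a height of at least $2^{i-1}+2^{j-1}$ to the pierced line, while the container containing $I_i$ contributes at least $N+1-2^i$; summing these gives $N+1+(2^{j-1}-2^{i-1})>N$, contradicting disjointness inside $[0,N]$. The main technical obstacle is simply to keep the bookkeeping straight: one must verify in each place that the relevant sides are indeed greater than $N/2$ so that \cref{remark:hardness} is applicable, and that $3\cdot 2^{i-1}<N+2$ for $i\le m$; both follow from $N=2^{m+1}$ and are the only arithmetic inputs needed.
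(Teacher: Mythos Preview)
Your proposal is correct and follows essentially the same two-case analysis as the paper's proof. The only cosmetic difference is that in the same-container case you compute the width sum $2N+2-3\cdot 2^{i-1}>N$ explicitly, whereas the paper simply invokes \cref{remark:hardness} (both $G_i$ and $I_i$ have width exceeding $N/2$) to rule out stacking along the width; otherwise the arguments are identical.
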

\begin{proof}
    We first consider the case when $G_i,I_i$ are packed in the same container.
    We already know that this must be a Stack container. We further claim that $G_i$ and $I_i$ are stacked along the height; hence, any other items in the container must also be stacked along the height.
    Indeed, from \cref{remark:hardness}, we know that the widths of both $G_i,I_i$ are greater than $N/2$; so, they can not be stacked along the width. The sum of their depths is given by $(N+1-2^{i-1})+(2^{i-1})=N+1$; so, they can not be stacked along the depth either. Hence, we conclude that $G_i,I_i$ must be packed along their heights. However, the sum of heights of $G_i,I_i,G_j$ is given by $(2^{i-1})+(N+1-2^{i})+(2^{j-1})>N$.

    Now, consider the case when $G_i$ and $I_i$ are packed in different containers.
    By \cref{remark:hardness}, the sum of widths of both $G_i,I_i$ is greater than $N$. Similarly, the sum of depths of both containers must be at least $(N+1-2^{i-1})+2^{i-1}>N$.
    So, there must exist a line along the height that passes through both the containers containing $G_i,I_i$.
    Now, assume for the sake of contradiction that $G_j$ lies in the same container as $G_i$. \cref{remark:hardness} tells us that $G_i,G_j$ must be stacked along height.
    Hence, we conclude that the sum of heights of both the containers containing $G_i,I_i$ must be at least the sum of heights of $G_i,I_i,G_j$, which we showed to be greater than $N$,
    and this a contradiction.
\end{proof}
Now, let us define three sets
\begin{align*}
    &\mathcal P_{GH}=\{(G_i,H_i):i\in[m]\text{ and }G_i,H_i\text{ are packed in $\mathcal C$}\}\\
    &\mathcal P_{HI}=\{(H_i,I_i):i\in[m]\text{ and }H_i,I_i\text{ are packed in $\mathcal C$}\}\\
    &\mathcal P_{GI}=\{(G_i,I_i):i\in[m]\text{ and }G_i,I_i\text{ are packed in $\mathcal C$}\}
\end{align*}
In light of the above three claims, we can define a mapping $\pi:\mathcal P_{GH}\cup \mathcal P_{HI}\cup \mathcal P_{GI}\to \mathcal C$ as follows.
\begin{align*}
    \pi(G_i,H_i) &= \text{the container in which }G_i\text{ is packed,}\\
    \pi(H_i,I_i) &= \text{the container in which }H_i\text{ is packed,}\\
    \pi(G_i,I_i) &= \text{the container in which }G_i\text{ is packed.}
\end{align*}
\begin{claim}
\label{claim:mapping}
    The mapping $\pi$ is such that for any container $C$ in $\mathcal C$, at most three tuples in $\mathcal P_{GH}\cup \mathcal P_{HI}\cup \mathcal P_{GI}$ are mapped to $C$.
\end{claim}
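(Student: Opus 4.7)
The plan is to prove the claim by bounding, for a fixed container $C \in \mathcal C$, the number of tuples of each of the three types ($\mathcal P_{GH}$, $\mathcal P_{HI}$, $\mathcal P_{GI}$) that are mapped to $C$ by $\pi$, and showing that each type contributes at most one. Summing the three contributions then yields the bound of three.

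First I would handle the tuples from $\mathcal P_{GH}$ mapped to $C$. Such tuples are exactly those $(G_i, H_i) \in \mathcal P_{GH}$ with $G_i$ packed in $C$. Suppose for contradiction that two distinct indices $i < j$ satisfy both $(G_i,H_i), (G_j,H_j) \in \mathcal P_{GH}$ with $G_i, G_j \in C$. Since $(G_i,H_i) \in \mathcal P_{GH}$, by definition both $G_i$ and $H_i$ are packed in $\mathcal C$, so \Cref{lem:ghg} applies and forbids any $G_j$ with $j > i$ from being placed in the container containing $G_i$, namely $C$. This contradicts $G_j \in C$. Hence at most one tuple of $\mathcal P_{GH}$ maps to $C$.

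The arguments for $\mathcal P_{HI}$ and $\mathcal P_{GI}$ are entirely analogous. For $\mathcal P_{HI}$, tuples mapped to $C$ are those $(H_i, I_i)$ with $H_i \in C$; if two indices $i < j$ both worked, applying \Cref{lem:hih} to $(H_i, I_i)$ would contradict $H_j \in C$. For $\mathcal P_{GI}$, tuples mapped to $C$ are those $(G_i, I_i)$ with $G_i \in C$; the same reasoning using \Cref{lem:gig} gives a contradiction if two such indices exist. Thus, $C$ receives at most one tuple from each of $\mathcal P_{GH}$, $\mathcal P_{HI}$, and $\mathcal P_{GI}$, giving a total of at most three tuples mapped to $C$.

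I do not anticipate a real obstacle here; all the geometric work has already been done in \Cref{lem:ghg,lem:hih,lem:gig}, and the argument is a direct bookkeeping deduction from those three claims together with the definition of $\pi$. The only point to be careful about is that for the $\mathcal P_{GH}$ and $\mathcal P_{GI}$ types, both map to the container of the $G$-item, so one might wonder whether they interact; but since we are bounding each type separately (and each by one), the bounds simply add, and the claimed bound of three is obtained regardless of whether the same $G_i$ participates in both a $\mathcal P_{GH}$ and a $\mathcal P_{GI}$ tuple.
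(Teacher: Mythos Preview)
Your proposal is correct and takes essentially the same approach as the paper: both arguments show that at most one tuple from each of $\mathcal P_{GH}$, $\mathcal P_{HI}$, $\mathcal P_{GI}$ can map to a given container $C$, invoking \Cref{lem:ghg,lem:hih,lem:gig} respectively to rule out two tuples of the same type. The paper phrases this via contradiction plus pigeonhole (four tuples force two of the same type), whereas you bound each type directly, but the substance is identical.
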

\begin{proof}
    For the sake of contradiction, assume that there is a container $C$ in $\mathcal C$ such that there are four tuples in $\mathcal P_{GH}\cup \mathcal P_{HI}\cup \mathcal P_{GI}$ that are mapped to $C$.
    Then, one of the following cases must hold.
    \begin{itemize}
        \item There exist $i,j\in[m]$ such that $i\le j$ and $(G_i,H_i),(G_j,H_j)\in \mathcal P_{GH}$ and $\pi(G_i,H_i)=\pi(G_j,H_j)$.
        \item There exist $i,j\in[m]$ such that $i\le j$ and $(H_i,I_i),(H_j,I_j)\in \mathcal P_{GH}$ and $\pi(H_i,I_i)=\pi(H_j,I_j)$.
        \item There exist $i,j\in[m]$ such that $i\le j$ and $(G_i,I_i),(G_j,I_j)\in \mathcal P_{GH}$ and $\pi(G_i,I_i)=\pi(G_j,I_j)$.
    \end{itemize}
    However, each of the above cases must lead to a contradiction. 
    In the first case, we have $\pi(G_i,H_i)=\pi(G_j,H_j)$, it must be the case that $G_i,G_j$ are packed in the same container.
    This, however, leads to a contradiction to \cref{lem:ghg} because $G_i$ and $H_i$ are packed in the packing $\mathcal C$. The remaining cases follow analogously.
\end{proof}
\begin{proof}[Proof of \cref{lem:hardness}]
    Follows from \cref{claim:mapping}.
\end{proof}
\begin{proof}[Proof of \cref{thm:hardness}]
    Define an input instance containing $3m$ items as at the beginning of the section. 
    Let $s_1$ denote the number of indices $i$ such that the pair $(G_i,H_i)$ is packed.
    Let $s_2$ denote the number of indices $i$ such that the pair $(H_i,I_i)$ is packed.
    Let $s_3$ denote the number of indices $i$ such that the pair $(G_i,I_i)$ is packed.
    Then by \cref{lem:hardness}, the number of containers is at least $(s_1+s_2+s_3)/3$. Further, notice that
    the total number of items that can be packed is at most $m+2(s_1+s_2+s_3)$. Therefore, any container packing with $c$ containers can hold at most $m+6c$ items. For this quantity to be at least
    $(1/3+\eps)(3m)$, we require $c$ to be at least $m\eps/2$ which is $\Omega(\eps\log N)$.
\end{proof}
    

\section{Acknowledgments}
We are grateful to Lars Pr\"{a}del for helping us with some details of the 3D Strip Packing result.
We are also thankful to Marvin Lira for initial discussions.
Debajyoti and Sreenivas are grateful to the Google PhD Fellowship Program. Arindam Khan's research is supported in part by Google India Research Award, SERB Core Research Grant
(CRG/2022/001176) on “Optimization under Intractability and Uncertainty”, Ittiam Systems CSR grant, and the Walmart
Center for Tech Excellence at IISc (CSR Grant WMGT-23-0001). 
Klaus Jansen's and Malte Tutas' research has been supported by the Deutsche Forschungsgemeinschaft(DFG) - Project DFG JA 612 / 25-2 ``Fine-grained complexity and algorithms for scheduling and packing''.
\bibliography{ref}
\appendix

\section{Approximation Algorithms}
\label{sec:approx}

In this section, we discuss definitions related to approximation algorithms and approximation schemes.

For a maximization problem $\Pi$, an algorithm $A$ is an $\alpha$-approximation algorithm ($\alpha>1$), if for all input instances $I$ of $\Pi$, we have  $A(I)\ge \frac{1}{\alpha} \text{OPT}(I)$. 
A maximization problem $\Pi$ admits a polynomial-time approximation scheme (PTAS) if, for every constant $\eps>0$, there exists a $(1+\eps)$-approximation algorithm with running time $n^{f(1/\eps)}$, where $f$ is a function that depends only on $1/\eps$.
If the runtime of the PTAS is ${f(1/\eps)} \cdot n^{O(1)},$ where the exponent of $n$ is independent of $1/\eps,$ then we call it to be an efficient PTAS (or EPTAS). 
Furthermore, if the running time of the PTAS is $(1/\eps\cdot n)^{O(1)}$ it is referred to as a fully polynomial-time approximation scheme (FPTAS). 

When admitting some additional additive error in the approximation factor, we refer to algorithms as asymptotic approximation algorithms. For a maximization problem $\Pi$, an algorithm $A$ is an $\alpha$-asymptotic approximation algorithm ($\alpha>1$), if for all input instances $I$ of $\Pi$, we have  $A(I)\ge \frac{1}{\alpha} \text{OPT}(I)- o(\text{OPT}(I))$. 
Similar to non-asymptotic approximation algorithms, there exist asymptotic approximation schemes as well. A maximization problem $\Pi$ admits an asymptotic polynomial time approximation scheme (APTAS) if for every constant $\eps>0$, there exists a $(1+\eps)$-approximation algorithm with running time $n^{f(1/\eps)}$ where $f$ is a function that depends only on $1/\eps$. Furthermore, there are equivalent definitions of asymptotic efficient PTAS (AEPTAS) and asymptotic FPTAS (AFPTAS).

For the sake of completeness, we define the notion of an approximation algorithm for a minimization problem as well. 
For a minimization problem $\Pi$, an algorithm $A$ is an $\alpha$-approximation algorithm ($\alpha>1$), if for all input instances $I$ of $\Pi$, we have  $A(I) \le \alpha \text{OPT}(I)$. All further definitions follow analogously.


\section{Generalized Assignment Problem}
\label{subsec:GAP}
The Maximum Generalized Assignment Problem (GAP) is a combinatorial problem capturing the objectives of different assignment problems. In GAP, we are given a set of $k$ knapsacks with (possibly different) capacity constraints and a set of $n$ items that have sizes and profits that are possibly different depending on the knapsack they are placed in. The goal is to pack a subset of items into the bins such that the profit is maximized. Formally, when item $i$ is packed into knapsack $j$, it requires a size of $s_{i,j}\in \mathbb{Q}$ and has a profit of $p_{i,j}\in \mathbb{Q}.$ Knapsack $j$ has capacity $C_j \in \mathbb{Q}$.

In its most general case, GAP is known to be APX-hard, i.e., there does not exist a PTAS for it unless $P=NP$~\cite{ChekuriK05}. The best know approximation algorithms have an approximation ratio of $(1-1/e+\eps)$~\cite{FeigeV06,FleischerGMS11}. A special case of GAP is known as the Multiple Knapsack Problem (MKP). GAP becomes MKP when all profits and sizes are the same across all bins, i.e., $p_{i,j}=p_{i,k}$ and $s_{i,j}=s_{i,k}$ for all bins $j,k.$  The Multiple Knapsack Problem does admit a PTAS~\cite{ChekuriK05}. 

In this work, we require a PTAS for GAP with a constant number of knapsacks. This PTAS has been formally defined in~\cite{2dks-Lpacking} by extending previously known techniques.

\akr{added GAP theorem here.}
\begin{lemma}[\cite{2dks-Lpacking}]
    \label{lem:gap}
    For any fixed constant $\eps>0$, there exists an algorithm for Generalized Assignment Problem with $k$ knapsacks that runs in time $O((\frac{1+\eps}{\eps})^k n^{k/\eps^2+k+1})$ and returns a solution with profit at least $(1-3\eps)\opt$, where $\opt$ is the profit of the optimal solution. 
    
\end{lemma}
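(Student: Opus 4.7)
The plan is to adapt the classical enumerate-then-LP-round framework for Knapsack-type problems to the multi-knapsack GAP setting, exploiting the fact that the number of knapsacks $k$ is a fixed constant. First, guess the value of $\opt$ up to a multiplicative $(1+\eps)$-factor by trying all powers of $(1+\eps)$ in the range $[p_{\max}, n\cdot p_{\max}]$, where $p_{\max}\coloneqq \max_{i,j} p_{i,j}\le \opt$; call this guess $V$. Classify an item–knapsack pair $(i,j)$ as \emph{heavy} if $p_{i,j}\ge \eps V/k$, and \emph{light} otherwise. Since $\opt \le (1+\eps)V$, the optimal solution contains at most $O(k/\eps)$ heavy placements in total.

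Next, enumerate candidate heavy-item configurations. For each knapsack separately, group heavy items into $O(1/\eps)$ geometric profit classes and guess, for each class, the integer number of items from that class placed in the knapsack, together with the specific identities; this yields $n^{O(k/\eps^2)}$ configurations, matching the $n^{k/\eps^2}$ factor in the runtime. For each feasible heavy configuration (those for which the heavy items fit in their assigned knapsacks), compute residual capacities $\mathrm{res}_j$. The remaining light items are then assigned by solving the LP relaxation
\begin{equation*}
\max \sum_{i,j} p_{i,j} x_{i,j} \quad \text{s.t.} \quad \sum_j x_{i,j}\le 1 \ \forall i,\ \ \sum_i s_{i,j} x_{i,j}\le \mathrm{res}_j \ \forall j,\ \ x_{i,j}\in[0,1].
\end{equation*}
A basic feasible solution has at most $k$ fractional variables, because the only non-box constraints are the $k$ knapsack capacities. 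Rounding each fractional $x_{i,j}$ down to $0$ loses at most $k\cdot(\eps V/k)=\eps V\le \eps \opt$ in profit, since each light item has profit strictly less than $\eps V/k$ on every knapsack. Combining with the $(1+\eps)$-error from guessing $V$ yields a final profit of at least $(1-3\eps)\opt$.

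The main technical obstacles are twofold. First, one must bound the enumeration precisely to achieve the exponent $k/\eps^2+k+1$: a naive enumeration over which $O(k/\eps)$ items are heavy costs $n^{O(k/\eps)}$, so the refined profit-layered counting is needed to trade an extra $1/\eps$ factor in the exponent for a polynomial overhead of $n^k$ coming from distributing items among the $k$ knapsacks within each class. The factor $((1+\eps)/\eps)^k$ absorbs the discretized guesses of per-knapsack residual capacities used during enumeration. Second, one must verify the basic-feasibility bound carefully: standard LP theory ensures at most (number of non-box constraints)-many fractional variables in a basic solution, which combined with the lightness threshold gives the tight $\eps V$ loss in a single rounding step.
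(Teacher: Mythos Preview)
The paper does not give its own proof of this lemma; it is quoted from \cite{2dks-Lpacking} without argument, so there is no in-paper proof to compare against. Your overall scheme --- guess $\opt$ to within a factor $(1+\eps)$, enumerate the $O(k/\eps)$ heavy placements, solve the residual LP for light items, and round a basic feasible solution --- is the standard approach and matches what the cited reference does.

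There is one genuine gap in your LP rounding step. You assert that a basic feasible solution has at most $k$ fractional variables ``because the only non-box constraints are the $k$ knapsack capacities,'' but the $n$ assignment constraints $\sum_j x_{i,j}\le 1$ that you yourself wrote down are also non-box constraints. With $n+k$ such constraints, a basic solution can have up to $n+k$ positive variables, and your stated reason does not yield the bound you need. The correct argument counts tight constraints (equivalently, uses that the support of a basic feasible solution of this transportation-type LP is a forest on the bipartite item--knapsack graph) to conclude that at most $k$ \emph{items} are fractionally assigned; dropping those items loses at most $k\cdot(\eps V/k)=\eps V$, and the rest of your accounting then goes through. Your enumeration bookkeeping to hit the precise exponent $k/\eps^2+k+1$ is also left unfinished; note, however, that a direct enumeration of the at most $O(k/\eps)$ heavy placements already gives exponent $O(k/\eps)$, which is smaller, so the running time in the statement is not tight and you need not work hard to match it.
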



\section{Next-Fit-Decreasing-Height}
\label{subsec:nfdh}
Next-Fit-Decreasing-Height (\NFDH) is one of the oldest defined heuristics for packing rectangles in a bigger rectangular box, 
developed by Coffman, Garey, Johnson, and Tarjan \cite{coffman1980performance} in 1980.
It follows a simple procedure: We first sort the set of rectangles $I$ in non-increasing order of height. Then, we place the rectangles next to each other horizontally starting at the bottom left of the box.  We repeat this placing of rectangles until the next item would exceed the border of the strip. At this point, we place this item above the first item we placed and generate a new level. Then, we continue to place rectangles on this level. We repeat this procedure until all rectangles are placed or until we can not open a new level.

\NFDH can be adapted to $3$ dimensions as follows. Let $B$ denote a (three-dimensional) box that we would like to pack a given set of (three-dimensional) items into.
We first sort the items in the order of non-increasing heights.
Then, we pick the largest possible prefix of this order. We ignore the height of items in this prefix,
view them as rectangles,
and then pack them using (two-dimensional) \NFDH algorithm on the base of the box $B$.
This forms the first layer of our packing. Next, we shift the base of
$B$ along the height so that it now lies on the tallest item in the first layer. We continue packing the remaining items using (two-dimensional) \NFDH until we are not able to pack any other items or 
until the item set is exhausted. We refer to this algorithm as \tNFDH. See \cref{fig:3d-nfdh} for a picture.

\begin{figure}
    \centering
    \includesvg[width=0.7\linewidth]{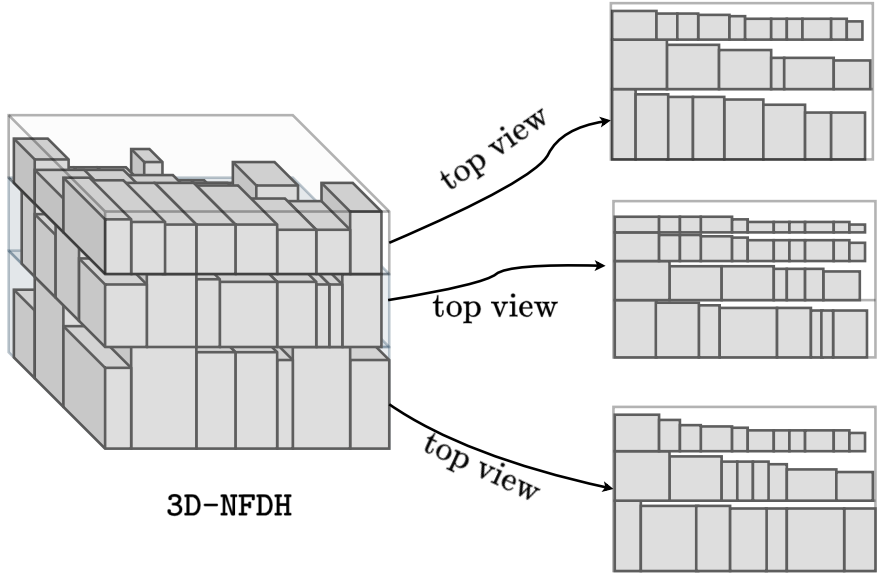}
    \caption{\tNFDH with three layers. Note how the heights of the layers keep decreasing. Further, it can be observed that each layer is obtained by (two-dimensional) \NFDH.}
    \label{fig:3d-nfdh}
\end{figure}

We are now ready to prove \cref{lem:3d-hfdh} from \cref{sec:subroutines}.
\threednfdh*
\begin{proof}
    Assume for the sake of contradiction that the set $T$ could not be packed in $B$ using \tNFDH. We first sort the items in $T$ in decreasing order of heights.
    Let $T'\subset T$ be the largest prefix that can be packed in $B$ using \tNFDH.
    Let $k$ be the number of layers we use to pack the set $T'$.
    (As a side note, $k\ge \ceil{1/\eps}$.)
    Further, let $h_j$ (resp. $h'_j$) denote the minimum (resp. maximum) height of an item in the $j^{\textrm{th}}$ layer.
    Let $\tilde h$ the height of the first item from the set, $T\setminus T'$, that could not be packed.
    Then, we have that
    \begin{align*}
        h'_1+h'_2+\dots+h'_k+\tilde{h}>h_B.
    \end{align*}
    Further, by \cref{lem:2d-hfdh}, we know that in each layer, the total base area covered is at least $(1-2\eps)w_Bd_B$.
    Since the height of each item in the $j^{\text{th}}$ layer is at least $h_j$, we have that the total volume packed is at least
    \begin{align*}
        (1-2\eps)w_Bd_B(h_1+h_2+\dots+h_{k-1}+h_k).
    \end{align*}
    Since we ordered the items in non-increasing order of height, we obtain that
    \begin{align*}
        v(T')&\ge(1-2\eps)w_Bd_B(h'_2+h'_3+\dots+h'_k+\tilde{h})\\
        &>(1-2\eps)w_Bd_B(h_B-h'_1)\\
        &\ge (1-2\eps)(1-\eps)w_Bd_Bh_B\\
        &\ge (1-3\eps)w_Bd_Bh_B\\
        &\ge v(T),
    \end{align*}
    which is a contradiction. \kvn{Hence, the entire set $T$ can be packed using \tNFDH{}. As for the runtime, observe that after each layer is packed,
    the items that will be packed in the next layer can be obtained using a binary search, which takes $O(\log n)$ time. Since each NFDH step takes $O(n_j\log n)$ time,
    where $n_j$ denotes the number of items in the $j^{\text{th}}$ layer, we obtain the specified runtime guarantee of $O(n\log^2 n)$ as well.}
\end{proof}

\end{document}